\documentclass[aps,prxquantum,twocolumn,superscriptaddress,longbibliography,reprint]{revtex4-2}

\usepackage{amsmath}
\usepackage{amssymb}
\usepackage{amsfonts}
\usepackage{amsthm}
\usepackage{mathtools}
\usepackage{graphicx}
\usepackage[dvipsnames]{xcolor}
\usepackage{physics}
\usepackage{bm}
\usepackage{dsfont}
\usepackage[normalem]{ulem}
\usepackage{enumitem}
\usepackage{multirow}
\usepackage{thmtools,thm-restate}
\usepackage[export]{adjustbox}
\usepackage{times}

\makeatletter
\AtBeginDocument{%
    \newwrite\bibnotes
    \def\bibnotesext{Notes.bib}
    \immediate\openout\bibnotes=\jobname\bibnotesext
    \immediate\write\bibnotes{@CONTROL{REVTEX42Control}}
    \immediate\write\bibnotes{@CONTROL{%
    apsrev42Control,author="08",editor="1",pages="1",title="0",year="1"}}
     \if@filesw
     \immediate\write\@auxout{\string\citation{apsrev42Control}}%
    \fi
}%
\makeatother

\def\Supp{\mathsf{Supp}}

\usepackage{algpseudocode}

\newcounter{protocol}
\renewcommand{\theprotocol}{\arabic{protocol}}
{%
  \refstepcounter{protocol}%
  \par\medskip\noindent
  \textbf{Protocol~\theprotocol%
  \ifx\relax#1\relax\else\ (\textit{#1})\fi.}%
  \par\smallskip\noindent\rule{\columnwidth}{0.4pt}\par\smallskip
}%
{%
  \smallskip\noindent\rule{\columnwidth}{0.4pt}\par\medskip
}

\newtheorem{theorem}{Theorem}

\newtheorem{lemma}{Lemma}
\newtheorem{corollary}{Corollary}
\newtheorem{proposition}{Proposition}

\newtheorem{remark}{Remark}

\newcommand{\sgn}{\operatorname{sgn}}
\newcommand{\id}{\mathds{1}}

\definecolor{myrefcolor}{rgb}{0.067,0.5,0.5}
\usepackage[
    breaklinks,
    pdftex,
    colorlinks=true,
    linkcolor=myrefcolor,
    citecolor=myrefcolor,
    urlcolor=myrefcolor
]{hyperref}

\begin{document}

\title{Sudden death of entanglement, rebirth of magic}

\author{Chenfeng Cao}
\email{chenfeng.cao@fu-berlin.de}
\affiliation{Dahlem Center for Complex Quantum Systems, Freie Universit\"{a}t Berlin, 14195 Berlin, Germany}
\affiliation{HK Institute of Quantum Science $\&$ Technology, The University of Hong Kong, Hong Kong, China}

\date{\today}

\begin{abstract}
Local Markovian noise cannot bring entanglement back, but 
it can bring magic back. Unlike separability, stabilizer 
membership is not preserved by local channels, allowing 
dissipation to push states out of the stabilizer polytope as 
well as in. Under local amplitude damping, the $n$-qubit GHZ 
family $\alpha\ket{0^n}+\beta\ket{1^n}$ ($0<\alpha<\beta$) loses 
its magic at a lower damping strength $\gamma_-$ and regains it at 
a higher one $\gamma_+$, while entanglement is 
irreversibly lost at $\gamma_e$. This magic--entanglement complementarity, $\gamma_e+\gamma_+=1$ 
for every $n$, reflects a system--environment duality of amplitude
damping. Within real phase-covariant Markovian semigroups the
phenomenon is mapped out in full: zero-temperature rebirth occurs
if and only if $T_2>T_1$, unital dynamics produce no rebirth, and
sufficiently weak thermal excitation confines rebirth to a finite
magic island ending in a second sudden death. For small $\alpha$, the reborn magic resides in a fully separable 
state with all proper marginals stabilizer, yet parity-syndrome 
extraction concentrates it onto a single qubit for magic-state
distillation, without loss of expected robustness and with
optimal per-register yield $\alpha^2/2$. Local dissipation further divides pure stabilizer 
states into magic-generators and magic-insulators: at two 
qubits, the Bell state $\ket{\Phi^+}$ generates magic 
immediately, while its Bell-state partner $\ket{\Psi^+}$ 
remains stabilizer. Together, magic and entanglement reveal a symmetry invisible to either alone.
\end{abstract}

\maketitle
\section{Introduction}

Irreversibility under local Markovian noise is resource-dependent.
For entanglement, once a state is separable, every subsequent local
completely positive trace-preserving (CPTP) map keeps it separable,
so entanglement sudden death under amplitude damping is
irreversible~\cite{Zyczkowski2001Dynamics,Yu2009Sudden,Aolita2015Open}.
The canonical amplitude-damping instance is the Yu--Eberly family
$\alpha\ket{00}+\beta\ket{11}$~\cite{Yu2004Finite,Yu2006Quantum},
realized in photonic
experiments~\cite{Almeida2007Environment,Salles2008Experimental}.
Magic (nonstabilizerness)~\cite{Veitch2014The,
Howard2014Contextuality,Howard2017Application, Liu2022Many} is the non-Clifford
resource promoting stabilizer
circuits~\cite{Gottesman1999The,Aaronson2004Improved} to universal
quantum computation~\cite{Bravyi2005Universal}. By definition, a
non-stabilizer-preserving channel maps some stabilizer state
outside the stabilizer polytope, hence generates
magic~\cite{Veitch2014The,Seddon2019Quantifying,Wang2019Quantifying}.
Amplitude damping is an elementary example: on one qubit it maps
$\ket{+}$ to a state with Bloch vector
$(\sqrt{1-\gamma},0,\gamma)$, outside the stabilizer octahedron since
$\sqrt{1-\gamma}+\gamma>1$ for $0<\gamma<1$. The stronger question
here is whether, for a multipartite state, magic can disappear at
finite time and return along a deterministic local Markovian
trajectory, possibly after full separability sets in.

Magic can return in this way because the free operations of magic 
and entanglement are asymmetric: stabilizer operations may be 
entangling, while local CPTP maps preserve separability without 
necessarily preserving stabilizer 
membership~\cite{Seddon2019Quantifying, Wang2019Quantifying}. 
Writing the stabilizer polytope as $\mathcal S$, trajectories
$\rho(\gamma)$ driven by local dissipation can exit $\mathcal S$
and, since they terminate at the vertex
$\ket{0^n}\!\bra{0^n}\in\mathcal S$, must eventually re-enter it. Whether this happens depends on trajectory shape: an affine segment ending in
$\mathcal S$ can enter $\mathcal S$ only once, since its
intersection with $\mathcal S$ is a single interval containing
$\gamma=1$. Under local amplitude damping, generic inputs give
curved trajectories.

For $0<\alpha<\beta$, the state 
$\alpha\ket{0^n}+\beta\ket{1^n}$ loses its magic on entering 
$\mathcal S$ at a lower threshold $\gamma_-^{(n)}$ and regains it 
on exiting at a higher threshold $\gamma_+^{(n)}>\gamma_-^{(n)}$, 
while entanglement is irreversibly lost at $\gamma_e^{(n)}$. The 
magic-rebirth and entanglement-death thresholds are reflected about 
$\gamma=1/2$, $\gamma_e^{(n)}+\gamma_+^{(n)}=1$, for every $n$, with 
$\gamma_e^{(n)}$ the common bipartite-negativity death threshold 
and the onset of full separability. This $\gamma\leftrightarrow 1-\gamma$ reflection has a channel-level reading via the canonical Stinespring dilation of amplitude damping. In the small-$\alpha$ regime, $\gamma_+^{(n)}>\gamma_e^{(n)}$ and 
the reborn magic becomes entirely nonlocal, carried by the joint 
state despite full separability and despite all proper marginals 
being stabilizer. This nonlocal magic is nevertheless extractable. 
Parity-syndrome measurement followed by Clifford decoding concentrates it onto a single qubit, which is twirled onto a standard distillation axis. Passive Markovian damping followed by 
parity-syndrome extraction can thus convert the reborn joint 
magic into single-qubit magic-state inputs for standard 
distillation protocols~\cite{Bravyi2005Universal, 
Reichardt2005Quantum, Meier2013Magic}. At fixed input amplitude, 
increasing $n$ moves the successful decoded branch deeper into the 
single-qubit distillable region.

The mirror identity is a property of pure amplitude damping, but
the rebirth phenomenon has a wider, sharply delimited domain. We
chart it within real phase-covariant Markovian semigroups, treating
Hamiltonian phase twists as controlled deformations.
Zero-temperature rebirth survives
concurrent dephasing if and only if $T_2>T_1$, no unital
phase-covariant dynamics produce it, and
a thermal bath replaces the asymptotic reborn branch by a finite
\emph{magic island} that ends in a second sudden death and closes
at a critical excited-state population $\vartheta_c$. Rebirth is
thus confined to cold, relaxation-dominated environments.

The trajectories studied below separate into three damping behaviors
of magic relative to $\mathcal S$. \emph{Re-entrant} trajectories
undergo magic death followed by rebirth, entering and exiting
$\mathcal S$. Examples include the all-$n$ GHZ-type
family~\cite{Greenberger1990Bell} and a non-GHZ-$X$
vacuum--anti-$W$ slice. \emph{Death-only}
trajectories enter $\mathcal S$ at finite damping and remain inside
thereafter, exemplified by the anti-$W$ edges $\ket{D_3^2}$ and $\ket{D_4^3}$. \emph{Endpoint-only} trajectories meet $\mathcal S$ only at the final vertex: nonstabilizer real generalized-$W$
inputs~\cite{Dur2000Three} are an affine instance, and Haar-random
inputs are curved endpoint-only with probability at least $1-2^{-n}$.

Under homogeneous local amplitude damping, pure stabilizer inputs 
split into \emph{magic-generators} and \emph{magic-insulators}. The two-qubit case already separates the two classes. The Bell states $\ket{\Phi^+}$ and $\ket{\Psi^+}$ are locally Clifford equivalent and initially equally entangled, and 
under amplitude damping their entanglement decays smoothly, yet 
$\ket{\Phi^+}$ is a magic-generator that exits $\mathcal S$ for 
every $\gamma\in(0,1)$ while $\ket{\Psi^+}$ is a magic-insulator 
that remains inside throughout. A constant-Hamming-weight support test extends this dichotomy to all pure stabilizer states. CSS stabilizer states with mixed-weight support are magic-generators, 
while magic-insulators are super-exponentially rare in $n$.

The magic death-and-rebirth effect described here is deterministic,
Markovian, and exactly solvable for all $n$, distinct from
trajectory-resolved magic transitions in monitored
circuits~\cite{Niroula2024Phase,Tarabunga2025Magic,Scocco2026Rise},
magic spreading in random unitary circuits~\cite{Turkeshi2025Magic},
quantitative magic dynamics under amplitude damping in encoded noisy
circuits~\cite{Trigueros2025Nonstabilizerness}, and magic dynamics
in driven-dissipative spin chains~\cite{Sticlet2025Nonstabilizerness}.
We establish the threshold identity
$\gamma_e^{(n)}+\gamma_+^{(n)}=1$ for every $n\ge2$, characterize
the channel class that produces the rebirth, and show that the same
dissipation that drives full separability can leave behind heralded,
extractable magic.

The paper is organized as follows. Section~\ref{sec:setup} sets up
the model and solves the GHZ-class trajectory.
Section~\ref{sec:mirror} establishes the magic--entanglement
reflection and traces it to the complementary channel of amplitude
damping. Section~\ref{sec:channelcond} works out the channel
conditions for rebirth, Sec.~\ref{sec:extract_main} converts the
reborn magic into distillation inputs, Sec.~\ref{sec:other} surveys
the remaining trajectory geometries,
Sec.~\ref{sec:classification} classifies stabilizer inputs, and
Sec.~\ref{sec:discussion} concludes.
Appendixes~\ref{sec:preliminaries}--\ref{sec:AD_generators} collect
notation, complete proofs, and extensions.

\begin{figure}[t]
\centering
\includegraphics[width=\columnwidth]{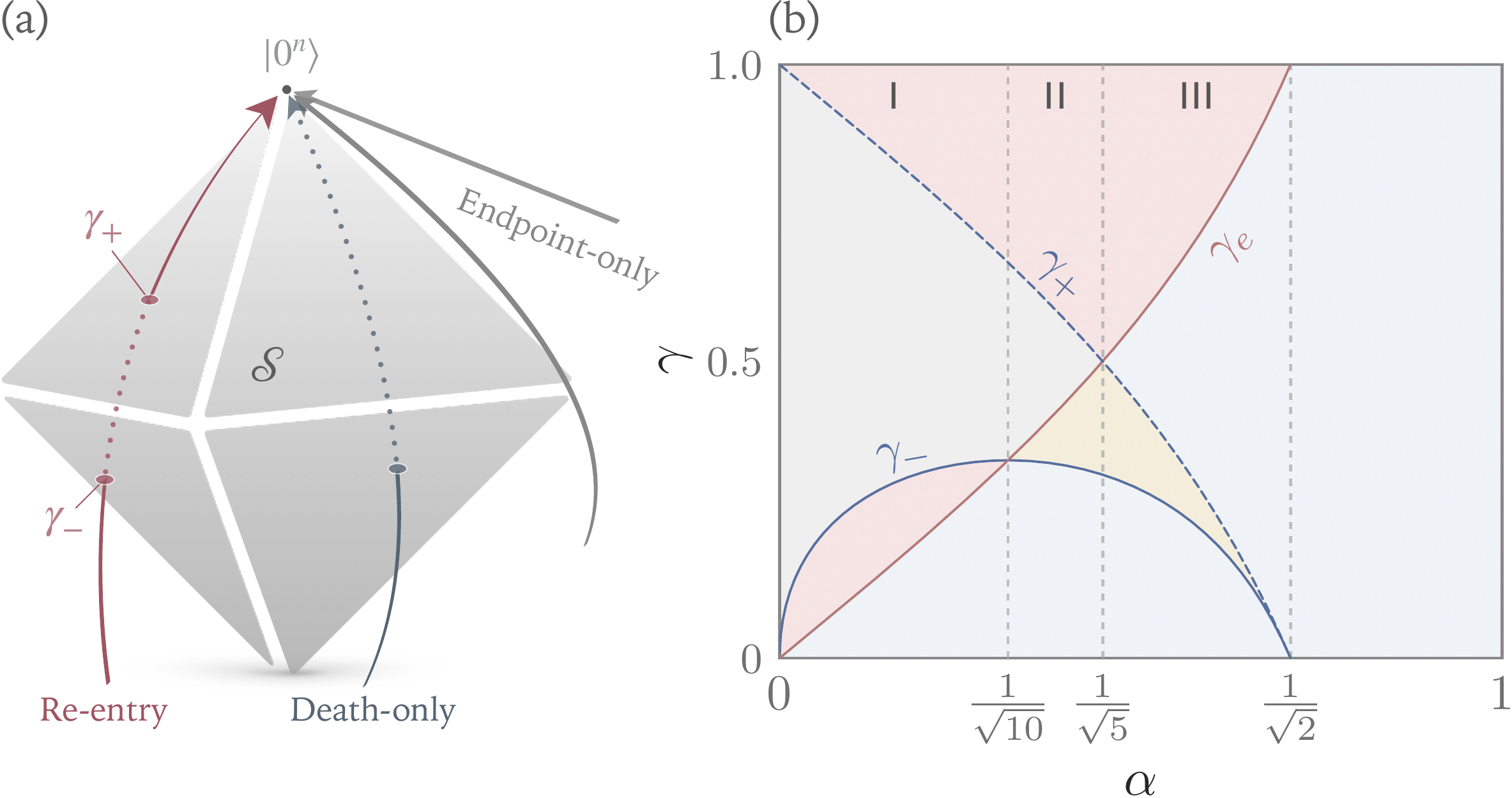}
\caption{%
\textbf{Trajectory classes and resource phase diagram.}
\textbf{(a)}~Schematic stabilizer-polytope trajectories under 
local amplitude damping, all ending at the stabilizer vertex 
$\ket{0^n}$. Red denotes magic death and rebirth, equivalently entry into and exit from $\mathcal S$, realized by the GHZ-$X$ family solved
in the main text and by a non-GHZ-$X$ affine-plane slice solved
in Appendix~\ref{sec:Dicke}. Blue denotes finite death without 
rebirth, realized by anti-$W$ edges $\ket{D_3^2}$ and $\ket{D_4^3}$. Gray denotes endpoint-only trajectories, with affine examples given by 
nonstabilizer real generalized-$W$ inputs and curved examples such as 
$\ket{D_4^2}$ and Haar-random inputs (probability at least 
$1-2^{-n}$).
\textbf{(b)}~Yu--Eberly phase diagram for 
$\alpha\ket{00}+\beta\ket{11}$: magic and entangled (blue),
stabilizer and entangled (gold), stabilizer and separable (gray),
magic and separable (red). Curves show $\gamma_-$ (solid blue),
$\gamma_+$ (dashed blue), and $\gamma_e$ (solid red). Vertical
lines mark $\alpha=1/\!\sqrt{10}$ and $1/\!\sqrt{5}$, separating
the three regimes of Corollary~\ref{cor:phase}.
\label{fig:overview}}
\end{figure}

\section{Setup and GHZ-class re-entry}
\label{sec:setup}

We consider $n$-qubit local amplitude damping (AD)
$\mathcal{E}_\gamma^{\otimes n}$, the prototypical non-unital
qubit channel, with single-qubit Kraus operators
$E_0=\ket{0}\!\bra{0}+\sqrt{1-\gamma}\,\ket{1}\!\bra{1}$ and
$E_1=\sqrt{\gamma}\,\ket{0}\!\bra{1}$, where 
$\gamma=1-e^{-\kappa t}\in[0,1]$ is the integrated damping at 
relaxation rate $\kappa$. The joint attractor $\ket{0^n}$ is a 
vertex of the stabilizer polytope~$\mathcal S$
(Fig.~\ref{fig:overview}(a)). Engineered dissipation can act as a computational resource and
state-engineering tool~\cite{Verstraete2009Quantum}, and has been
implemented on superconducting platforms~\cite{Mi2024Stable}.

Let
$\alpha,\beta>0$ with $\alpha^2+\beta^2=1$, and set
$r\equiv\alpha/\beta$. Consider the Yu--Eberly section
$\ket{\psi_n}=\alpha\ket{0^n}+\beta\ket{1^n}$ of the $n$-qubit
GHZ-$X$ manifold. Under local amplitude damping,
\begin{equation}
\label{eq:rho_n}
\rho_n(\gamma)
=
\sum_{k=0}^{n}P_k\,D_k
+
c\bigl(\ket{0^n}\!\bra{1^n}+\mathrm{h.c.}\bigr),
\end{equation}
where $D_k$ is the uniform mixture over weight-$k$
computational-basis states,
$P_k = \binom{n}{k}\beta^2(1{-}\gamma)^k\gamma^{n-k}$ for
$1\le k\le n{-}1$,
$P_0 = \alpha^2+\beta^2\gamma^n$,
$P_n = \beta^2(1{-}\gamma)^n$, and
$c = \alpha\beta(1{-}\gamma)^{n/2}\ge 0$.

A single pairwise inequality, used repeatedly below, constrains
membership in $\mathcal S$.

\begin{lemma}[Pair-coherence obstruction]
\label{lem:pair_obstruction}
For every $\rho\in\mathcal S$ and every pair of computational-basis
strings $x,y$,
\begin{equation}
\label{eq:pair_obstruction}
|\rho_{xy}|
\le
\min(\rho_{xx},\rho_{yy}) .
\end{equation}
\end{lemma}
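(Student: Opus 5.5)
The plan is to use convexity: the inequality holds for every pure stabilizer state, and then it extends to the whole stabilizer polytope $\mathcal S$, which is the convex hull of the pure stabilizer states.

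\emph{Step 1: pure stabilizer states.} A pure $n$-qubit stabilizer state $\ket{s}$ has the standard form (Dehaene--De~Moor)
\begin{equation*}
\ket{s}=|A|^{-1/2}\sum_{x\in A} i^{\ell(x)}(-1)^{q(x)}\ket{x},
\end{equation*}
where $A$ is an affine subspace of $\mathbb F_2^n$, and $\ell$ and $q$ are linear and quadratic forms. Every nonzero amplitude therefore has the same modulus $|A|^{-1/2}$. For basis strings $x,y$ there are two cases.
\begin{itemize}
\item If both $x,y\in A$, then $|\rho_{xy}|=|A|^{-1}=\rho_{xx}=\rho_{yy}$.
\item Otherwise $\rho_{xy}=0$.
\end{itemize}
In either case $|\rho_{xy}|\le\min(\rho_{xx},\rho_{yy})$, with equality whenever the coherence is nonzero. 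For one qubit this can be checked directly on the six octahedron vertices.

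\emph{Step 2: convexity.} Write $\rho=\sum_j p_j\,\sigma_j$ with each $\sigma_j$ a pure stabilizer projector. The triangle inequality gives
\begin{equation*}
|\rho_{xy}|\le\sum_j p_j|(\sigma_j)_{xy}|\le\sum_j p_j\min\bigl((\sigma_j)_{xx},(\sigma_j)_{yy}\bigr).
\end{equation*}
The minimum is concave, so the right-hand side is at most $\min\bigl(\sum_j p_j(\sigma_j)_{xx},\sum_j p_j(\sigma_j)_{yy}\bigr)=\min(\rho_{xx},\rho_{yy})$. The case $x=y$ is trivial.

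\emph{Main obstacle.} The only nontrivial input is the flat-amplitude structure of stabilizer states in Step~1. I would cite the known normal form rather than re-derive it. If a self-contained argument is wanted, I would instead show directly that the computational-basis support of $\ket{s}$ is an affine subspace $A$ with uniform amplitude modulus $|A|^{-1/2}$. The route is the $Z$-type subgroup of the stabilizer group: it fixes the support as an affine subspace, and the $X$-parts of the remaining generators act transitively on it up to phases.
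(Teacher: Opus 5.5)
Your proposal is correct and follows essentially the same route as the paper: the equal-modulus amplitudes of a pure stabilizer state on its support give $|uv^*|\le\min(|u|^2,|v|^2)$ for each component, and convexity of $\mathcal S$ transfers this bound to mixtures. Your use of the triangle inequality together with $\sum_j p_j\min(a_j,b_j)\le\min\bigl(\sum_j p_j a_j,\sum_j p_j b_j\bigr)$ simply spells out the paper's one-line remark that ``convex combinations inherit the bound.''
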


\emph{Proof.}---A pure stabilizer state has equal-modulus amplitudes
on its computational-basis support~\cite{Dehaene2003Clifford}, so its
amplitudes $u,v$ at $x,y$ obey $|uv^*|\le\min(|u|^2,|v|^2)$. Convex
combinations inherit the bound.\qed

\begin{theorem}[GHZ-class re-entry]
\label{thm:GHZ}
For all $n\ge 2$,
\begin{equation}
\label{eq:GHZ_membership}
\rho_n(\gamma)\in\mathcal S
\quad\Longleftrightarrow\quad
P_0\ge c\ \text{ and }\ P_n\ge c.
\end{equation}
For $r<1$, the trajectory enters $\mathcal S$ at $\gamma_-^{(n)}$ 
and exits at $\gamma_+^{(n)}$, with $\gamma_-^{(n)}$ the
unique solution of $P_0=c$ and
\begin{equation}
\label{eq:gp_n}
\gamma_+^{(n)} = 1-r^{2/n}.
\end{equation}
For $\gamma\in[0,1)$, magic is absent on
$[\gamma_-^{(n)},\gamma_+^{(n)}]$ and present on 
$[0,\gamma_-^{(n)})\cup(\gamma_+^{(n)},1)$. For $r>1$, the trajectory is
outside $\mathcal S$ for every $\gamma\in[0,1)$. At $r=1$, the input
is itself stabilizer, but damping generates magic immediately, so
$\rho_n(\gamma)\notin\mathcal S$ for every $0<\gamma<1$. Every
trajectory ends at the vertex
$\rho_n(1)=\ket{0^n}\!\bra{0^n}\in\mathcal S$.
\end{theorem}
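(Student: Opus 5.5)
The plan is to prove the membership criterion \eqref{eq:GHZ_membership} first, in both directions, and then read off every threshold statement from two scalar inequalities.

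\emph{Necessity.} I apply Lemma~\ref{lem:pair_obstruction} to the pair $x=0^n$, $y=1^n$. In \eqref{eq:rho_n} the matrix entry $\rho_{0^n1^n}$ equals $c\ge0$. The diagonal entries are $\rho_{0^n0^n}=P_0$ and $\rho_{1^n1^n}=P_n$, since $D_0=\ket{0^n}\!\bra{0^n}$ and $D_n=\ket{1^n}\!\bra{1^n}$. The lemma then gives $c\le\min(P_0,P_n)$ for any $\rho_n(\gamma)\in\mathcal S$.

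\emph{Sufficiency.} I exhibit an explicit convex decomposition into stabilizer states. Let $\ket{\mathrm{GHZ}^+}=(\ket{0^n}+\ket{1^n})/\sqrt2$, which is a stabilizer state. Then I write
\begin{equation*}
\rho_n(\gamma)=2c\,\ket{\mathrm{GHZ}^+}\!\bra{\mathrm{GHZ}^+}+(P_0-c)\,D_0+(P_n-c)\,D_n+\sum_{k=1}^{n-1}P_k D_k .
\end{equation*}
The first term supplies the coherence $c$ together with diagonal weight $c$ on each of $0^n$ and $1^n$. Each $D_k$ is a uniform mixture of computational-basis states, which are stabilizer. All weights are nonnegative exactly when $P_0\ge c$ and $P_n\ge c$. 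They sum to $\sum_k P_k=1$, since $2c-2c$ cancels. This is therefore a convex combination of stabilizer states, so $\rho_n(\gamma)\in\mathcal S$. I expect this decomposition to be the only genuinely structural step. Once the GHZ stabilizer state is chosen as the vehicle for the coherence, the rest is routine.

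\emph{Thresholds.} Fix $\gamma\in[0,1)$ and use $c=\alpha\beta(1-\gamma)^{n/2}$.

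The condition $P_n\ge c$ reads $\beta^2(1-\gamma)^n\ge\alpha\beta(1-\gamma)^{n/2}$. This is equivalent to $(1-\gamma)^{n/2}\ge r$, i.e.\ $\gamma\le 1-r^{2/n}=\gamma_+^{(n)}$ when $r<1$. It is impossible for $r>1$. For $r=1$ it holds only at $\gamma=0$.

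The condition $P_0\ge c$ is $f(\gamma)\ge0$, where $f(\gamma)=\alpha^2+\beta^2\gamma^n-\alpha\beta(1-\gamma)^{n/2}$. The function $f$ is strictly increasing, with $f(0)=\alpha(\alpha-\beta)$ and $f(1)=1$. For $r<1$ we have $f(0)<0$, so $f$ has a unique root $\gamma_-^{(n)}$, and $f\ge0$ exactly on $[\gamma_-^{(n)},1]$. At $\gamma_+^{(n)}$ one has $(1-\gamma)^{n/2}=r$, hence $c=\alpha^2$ and $f(\gamma_+^{(n)})=\beta^2(\gamma_+^{(n)})^n>0$. This gives $\gamma_-^{(n)}<\gamma_+^{(n)}$, so the stabilizer set on $[0,1)$ is exactly $[\gamma_-^{(n)},\gamma_+^{(n)}]$ and magic is present on its complement.

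For $r>1$ the $P_n$ condition already fails everywhere on $[0,1)$. For $r=1$ the input is the stabilizer state $\ket{\mathrm{GHZ}^+}$, but the $P_n$ condition fails for every $0<\gamma<1$, so damping generates magic immediately. Finally, at $\gamma=1$ we get $c=0$, $P_k=0$ for $k\ge1$ and $P_0=1$, so $\rho_n(1)=\ket{0^n}\!\bra{0^n}\in\mathcal S$.
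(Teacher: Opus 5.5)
Your proposal is correct and follows the paper's own route. Necessity comes from the pair-coherence obstruction on the endpoint pair $(0^n,1^n)$, and sufficiency from the same decomposition with weight $2c$ on $\ket{\mathrm{GHZ}_n^+}$. The thresholds come from solving $P_n=c$ and from the strict monotonicity of $f_n=P_0-c$, with $f_n(0)<0<f_n(\gamma_+^{(n)})=\beta^2(\gamma_+^{(n)})^n$, exactly as in Corollary~\ref{cor:GHZ_AD}.
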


\emph{Proof sketch.}---Necessity of Eq.~\eqref{eq:GHZ_membership} is
Lemma~\ref{lem:pair_obstruction} applied to the endpoint pair
$(0^n,1^n)$. For sufficiency, when $c\le\min(P_0,P_n)$,
\begin{align}
\rho_n(\gamma)
={}&2c\,\ket{\mathrm{GHZ}_n^+}\!\bra{\mathrm{GHZ}_n^+}
+\sum_{k=1}^{n-1}P_k\,D_k
\notag\\
&+(P_0-c)\ket{0^n}\!\bra{0^n}
+(P_n-c)\ket{1^n}\!\bra{1^n},
\end{align}
with $\ket{\mathrm{GHZ}_n^+}=(\ket{0^n}+\ket{1^n})/\sqrt2$, a convex
combination of stabilizer states. The threshold structure follows
from strict monotonicity of $P_0-c$ in $\gamma$;
Appendix~\ref{sec:GHZX} treats the full real GHZ-$X$ manifold and
supplies the primal--dual certificate behind
Eq.~\eqref{eq:RoM_GHZ}.\qed

The critical amplitude
separating re-entrant from persistent-magic behavior is 
$\alpha^*=1/\sqrt 2$, independent of $n$. For fixed 
$0<\alpha<1/\sqrt2$, the reflected pair $(\gamma_e^{(n)},\gamma_+^{(n)})$ 
separates at large $n$, with $\gamma_+^{(n)}\to 0$ and 
$\gamma_e^{(n)}\to 1$, while the stabilizer window 
$\Delta_n=\gamma_+^{(n)}-\gamma_-^{(n)}$ collapses
super-exponentially (Proposition~\ref{prop:asymptotics} below).

This membership criterion extends to all real GHZ-$X$ states 
$\rho = \sum_x p_x \ket{x}\!\bra{x} 
+ c(\ket{0^n}\!\bra{1^n}+\mathrm{h.c.})$ with $c\in\mathbb R$,
yielding
$\rho\in\mathcal S\Leftrightarrow |c|\le\min(p_{0^n},p_{1^n})$
(Theorem~\ref{thm:GHZX_cross_section}). It therefore covers nonuniform damping and fixed-strength dephased
amplitude damping. The decomposition also yields the closed-form
robustness of magic~\cite{Howard2017Application, Heinrich2019Robustness, 
Seddon2021Quantifying}
\begin{equation}
\label{eq:RoM_GHZ}
\mathcal R(\rho_n(\gamma))
=
1+2\max\{0,c-P_0,c-P_n\}.
\end{equation}
Here $\mathcal R$ is the signed-decomposition robustness, normalized 
so that $\mathcal R=1$ on $\mathcal S$.

\begin{proposition}[Asymptotics of the stabilizer window]
\label{prop:asymptotics}
For fixed $0<r<1$, the stabilizer window 
$\Delta_n:=\gamma_+^{(n)}-\gamma_-^{(n)}$ decays faster than any 
exponential in $n$:
\begin{equation}
\label{eq:deadwindow_bound}
\Delta_n
\;\le\;
\frac{2}{n}\,r^{2/n-2}\bigl(\gamma_+^{(n)}\bigr)^n.
\end{equation}
Explicitly, with $\tau:=\ln(1/r)$, $\Delta_n = O[(2\tau)^n/n^{n+1}]$ (Fig.~\ref{fig:deadwindow}).
\end{proposition}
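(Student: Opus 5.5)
The plan is to compare the two threshold equations directly and use a mean-value argument on the single monotone function $f(\gamma):=P_0-c=\alpha^2+\beta^2\gamma^n-\alpha\beta(1-\gamma)^{n/2}$. By Theorem~\ref{thm:GHZ}, $\gamma_-^{(n)}$ is the unique zero of $f$ on $[0,1]$. The strict monotonicity used there is immediate, since $f'(\gamma)=n\beta^2\gamma^{n-1}+\tfrac{n}{2}\alpha\beta(1-\gamma)^{n/2-1}>0$ on $(0,1)$. The upper threshold comes from $P_n=c$, i.e.\ $\beta^2(1-\gamma)^n=\alpha\beta(1-\gamma)^{n/2}$, so $(1-\gamma_+^{(n)})^{n/2}=r$.

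The first step is to evaluate $f$ at $\gamma_+^{(n)}$. There $c=\alpha\beta(1-\gamma_+^{(n)})^{n/2}=\alpha\beta r=\alpha^2$, so the constant term cancels and
\begin{equation*}
f(\gamma_+^{(n)})=\beta^2\bigl(\gamma_+^{(n)}\bigr)^n>0 .
\end{equation*}
This also confirms $\gamma_-^{(n)}<\gamma_+^{(n)}$.

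Next I apply the mean value theorem on $[\gamma_-^{(n)},\gamma_+^{(n)}]$, using $f(\gamma_-^{(n)})=0$. This gives $\beta^2(\gamma_+^{(n)})^n=f'(\xi)\,\Delta_n$ for some $\xi$ in the window. For a lower bound on $f'$ I drop the nonnegative term $n\beta^2\xi^{n-1}$ and keep $\tfrac n2\alpha\beta(1-\xi)^{n/2-1}$. For $n\ge2$ this is nonincreasing in $\xi$, so its minimum over the window is attained at $\xi=\gamma_+^{(n)}$, where it equals $\tfrac n2\alpha\beta\, r^{1-2/n}$. Dividing, and using $\beta/\alpha=1/r$, yields
\begin{equation*}
\Delta_n\le\frac{2}{n}\,r^{2/n-2}\bigl(\gamma_+^{(n)}\bigr)^n ,
\end{equation*}
which is Eq.~\eqref{eq:deadwindow_bound}. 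The case $n=2$ is harmless because the exponent $n/2-1$ is then zero and the derivative is constant.

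For the explicit asymptotics I write $\gamma_+^{(n)}=1-e^{-2\tau/n}\le 2\tau/n$ and bound $r^{2/n-2}\le r^{-2}$, since $r<1$. This gives $\Delta_n\le 2r^{-2}(2\tau)^n/n^{n+1}=O[(2\tau)^n/n^{n+1}]$, which is super-exponentially small because $n^{-n}$ beats any $C^n$.

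The only delicate point is the choice of which term of $f'$ to keep. The $\gamma^{n-1}$ term is useless near small $\gamma$, so the bound has to come from the coherence term. One must also check that its monotonicity direction puts the worst case at the right endpoint $\gamma_+^{(n)}$, where it is controlled exactly by $r$. Beyond that the argument is routine.
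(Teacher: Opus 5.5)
Your proposal is correct and follows the paper's proof essentially step for step: you apply the mean value theorem to $f_n=P_0-c$ on $[\gamma_-^{(n)},\gamma_+^{(n)}]$, use $f_n(\gamma_+^{(n)})=\beta^2(\gamma_+^{(n)})^n$, and bound the derivative below by the coherence term evaluated at $\gamma_+^{(n)}$, which equals $\tfrac n2\alpha\beta\,r^{1-2/n}$. The asymptotic step via $1-r^{2/n}\le 2\tau/n$ is also the paper's, and your explicit bound $r^{2/n-2}\le r^{-2}$ simply makes the constant in the $O[\cdot]$ statement explicit.
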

\begin{proof}
Set $f_n(\gamma):=\alpha^2+\beta^2\gamma^n-\alpha\beta(1-\gamma)^{n/2}$.
Then $f_n(\gamma_-^{(n)})=0$ and
$f_n(\gamma_+^{(n)})=\beta^2(\gamma_+^{(n)})^n$, while
$f_n'(\gamma)=n\beta^2\gamma^{n-1}+\frac n2\,\alpha\beta(1-\gamma)^{n/2-1}$.
On $[\gamma_-^{(n)},\gamma_+^{(n)}]$, $(1-\gamma)^{n/2-1}\ge r^{1-2/n}$
(from $\gamma\le \gamma_+^{(n)}=1-r^{2/n}$), so
$f_n'(\gamma)\ge \frac n2\,\alpha\beta\,r^{1-2/n}$. The mean value
theorem gives the displayed bound, and 
$\gamma_+^{(n)}=1-r^{2/n}\le 2\tau/n$ yields the asymptotic.
\end{proof}

\begin{figure}[t]
\centering
\includegraphics[width=0.85\columnwidth]{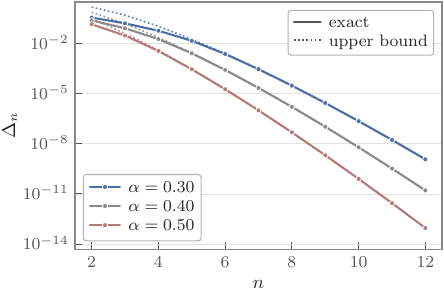}
\caption{%
\textbf{Super-exponential decay of the stabilizer window.}
Window width $\Delta_n=\gamma_+^{(n)}-\gamma_-^{(n)}$ (solid
lines with markers) against the analytical upper
bound~\eqref{eq:deadwindow_bound} (dotted) for three initial
amplitudes spanning the re-entrant regime. At all three amplitudes
$\Delta_n$ decays faster than any exponential in $n$, and the
bound tracks the true width within $\sim 10\%$ for
$n\gtrsim 6$.}
\label{fig:deadwindow}
\end{figure}

\section{A mirror between entanglement and magic}
\label{sec:mirror}

The only off-diagonal of
$\rho_n(\gamma)$ is the GHZ coherence $c$. Partial transposition
across $A|B$ couples only the $2\times 2$ block
$\{\ket{1^A 0^B},\ket{0^A 1^B}\}$, with negativity vanishing at
$\gamma=r^{2/n}$ independently of the bipartition. The state is fully
separable at that point~\cite{Hein2005Entanglement, Aolita2008Scaling,
Dur2000Classification, Kay2011Optimal}
(Proposition~\ref{prop:alln_complementarity}). We denote by $\gamma_e^{(n)}=r^{2/n}$ the common Yu--Eberly 
bipartite-negativity threshold and the onset of full separability 
(genuine multipartite entanglement can vanish strictly earlier for 
$n\ge 3$; Remark~\ref{rem:GME}). Combined with 
$\gamma_+^{(n)}=1-r^{2/n}$, this yields, in the re-entrant regime, 
for all $n \ge 2$,
\begin{equation}
\label{eq:complementarity}
\gamma_e^{(n)} + \gamma_+^{(n)} = 1.
\end{equation}
The amplitude-damped trajectory thus exhibits a sudden death of 
magic at $\gamma_-^{(n)}$, a sudden death of entanglement at 
$\gamma_e^{(n)}$, and a rebirth of magic at $\gamma_+^{(n)}$, 
with the rebirth and entanglement-death thresholds reflected 
about $\gamma=1/2$. This symmetry has a channel-level reading in the canonical Stinespring dilation of amplitude damping~\cite{Nielsen2010Quantum}: the environment at damping strength 
$\gamma$ is itself an amplitude-damped version of the input at 
complementary strength $1-\gamma$,
\begin{equation}
\label{eq:stinespring_duality}
\rho_E(\gamma)=\rho_S(1-\gamma),
\end{equation}
equivalently $\mathcal E_\gamma^c=\mathcal E_{1-\gamma}$. The
identity is elementary: writing the dilation as
$V_\gamma=\sum_{\mu=0,1}E_\mu\otimes\ket{\mu}_E$, the environment
output has matrix elements
$\bra{\mu}\rho_E\ket{\nu}=\Tr[E_\mu\rho\,E_\nu^\dagger]$, which
reproduce the amplitude-damping action at strength $1-\gamma$ once
the environment basis is identified with the computational basis
(Appendix~\ref{sec:stinespring}). On the GHZ-$X$ trajectory, this
involution exchanges the 
entanglement-death condition $\beta^2\gamma^n=\alpha^2$ with the 
magic-rebirth condition $\beta^2(1-\gamma)^n=\alpha^2$. At the witness level the identity has a mirror form: the
system-side facet ratio $c^2/P_n^2$ coincides, for every
bipartition, with the environment-side ratio of the squared GHZ
coherence to the diagonal product of the partial-transpose block,
so the stabilizer facet $c=P_n$ and the environment PPT minor are
crossed together (Appendix~\ref{sec:stinespring}).

\begin{proposition}[All-$n$ complementarity]
\label{prop:alln_complementarity}
Under local AD on $\ket{\psi_n}=\alpha\ket{0^n}+\beta\ket{1^n}$ with
$r<1$: \emph{(i)} the partial-transpose negativity across any
bipartition vanishes at the common threshold
$\gamma_e^{(n)}=r^{2/n}$; \emph{(ii)} $\rho_n(\gamma_e^{(n)})$ is
fully separable; \emph{(iii)} $\gamma_+^{(n)}=1-r^{2/n}=1-\gamma_e^{(n)}$
from Theorem~\ref{thm:GHZ}, hence
$\gamma_e^{(n)}+\gamma_+^{(n)}=1$ for all $n\ge 2$.
\end{proposition}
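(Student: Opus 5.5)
Part (i) is a direct computation on the partial transpose of the $X$-shaped state \eqref{eq:rho_n}. Fix a bipartition $A|B$ with $|A|=m$ and $1\le m\le n-1$. The only off-diagonal entries of $\rho_n(\gamma)$ are $c\ket{0^n}\!\bra{1^n}$ and its conjugate. Transposing on $A$ maps $\ket{0^A0^B}\!\bra{1^A1^B}$ to $\ket{1^A0^B}\!\bra{0^A1^B}$. The partial transpose is therefore diagonal except for the $2\times2$ block on $\{\ket{1^A0^B},\ket{0^A1^B}\}$, which has diagonal entries equal to the populations of those two basis strings and off-diagonal entry $c$. The population of any single weight-$k$ string is $P_k/\binom nk=\beta^2(1-\gamma)^k\gamma^{n-k}$. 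The two diagonal entries are therefore $\beta^2(1-\gamma)^m\gamma^{n-m}$ and $\beta^2(1-\gamma)^{n-m}\gamma^m$, with product $\beta^4(1-\gamma)^n\gamma^n$. The block has a negative eigenvalue if and only if $c^2>\beta^4(1-\gamma)^n\gamma^n$, that is, $\alpha^2\beta^2(1-\gamma)^n>\beta^4(1-\gamma)^n\gamma^n$, which reduces to $\gamma^n<r^2$. So the negativity vanishes exactly at $\gamma=r^{2/n}$ for every $m$. Since the condition is strict, it stays zero for all larger $\gamma$.

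For (ii) I will write an explicit fully separable decomposition at $\gamma_e=r^{2/n}$, where $c=\beta^2\sqrt{(1-\gamma)^n\gamma^n}$. Set $p=\gamma$ and $q=1-\gamma$, and take product states $\ket{\phi_\theta}=\bigotimes_j(\sqrt{p}\ket0+e^{i\theta_j}\sqrt q\ket1)$ with random phases $\theta_j$ constrained by $\sum_j\theta_j\equiv0 \pmod{2\pi}$. For example, take $\theta_1,\dots,\theta_{n-1}$ uniform and $\theta_n$ fixed by the constraint. Averaging $\ket{\phi_\theta}\!\bra{\phi_\theta}$ kills every off-diagonal $\ket x\!\bra y$ except $x=0^n$, $y=1^n$ and its conjugate. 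The surviving coherence is $\sqrt{p^nq^n}$, and the diagonal is $p^{n-k}q^k$ on each weight-$k$ string.

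Multiplying by $\beta^2$ reproduces every population of $\rho_n$ except at $0^n$: there the average gives $\beta^2\gamma^n$, whereas $\rho_n$ has $P_0=\alpha^2+\beta^2\gamma^n$. It also reproduces the coherence $c$ exactly at $\gamma_e$. The leftover $\alpha^2\ket{0^n}\!\bra{0^n}$ is a product state, so $\rho_n(\gamma_e)$ is a convex combination of product states. The total weight is $\beta^2(p+q)^n+\alpha^2=1$, as required.

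The main thing to check is that the phase average really suppresses all other cross terms. A term $\ket x\!\bra y$ carries the phase $e^{i\sum_j\theta_j(x_j-y_j)}$. This phase survives the constrained average only if $x_j-y_j$ is the same for every $j$, which forces $x=y$ or $\{x,y\}=\{0^n,1^n\}$. The constrained distribution can also be replaced by a finite uniform average over $\theta_j\in\{2\pi l/N\}$.

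For (iii) I simply invoke Eq.~\eqref{eq:gp_n} of Theorem~\ref{thm:GHZ}: $\gamma_+^{(n)}=1-r^{2/n}=1-\gamma_e^{(n)}$. Adding this to the result of (i) gives the identity for all $n\ge2$. The obstacle, if any, is (ii), specifically the bookkeeping in the separable construction. Once the phase-randomized product ansatz is chosen, that bookkeeping is routine.
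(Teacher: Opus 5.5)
Your proposal is correct and follows the paper's proof essentially step for step. It uses the same $m$-independent $2\times2$ partial-transpose block on $\{\ket{1^A0^B},\ket{0^A1^B}\}$ leading to the condition $\gamma^n<r^2$, the same phase-constrained average of the product states $\bigotimes_j(\sqrt{\gamma}\ket0+e^{i\theta_j}\sqrt{1-\gamma}\ket1)$ giving $\rho_n(g)=\beta^2\tau_g+\alpha^2\ket{0^n}\!\bra{0^n}$ at $g=r^{2/n}$, and the same appeal to Theorem~\ref{thm:GHZ} for (iii). Your explicit check that the constraint $\sum_j\theta_j\equiv0$ removes every coherence except the GHZ pair fills in a step the paper only asserts; if you use the discrete phase average, take $N\ge3$, because the exponent coefficients $(x_j-y_j)-(x_n-y_n)$ can equal $\pm2$, and an $N=2$ average does not suppress those terms.
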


\emph{Proof.}---\emph{(i)} For $|A|=m$, partial transposition routes
$c$ into the $2\times 2$ block on
$\{\ket{1^A 0^B},\ket{0^A 1^B}\}$ with diagonal
$\beta^2\gamma^{n-m}(1-\gamma)^m,\,\beta^2\gamma^m(1-\gamma)^{n-m}$
and off-diagonal $c$. The block determinant
$\beta^2(1-\gamma)^n[\beta^2\gamma^n-\alpha^2]$ is $m$-independent and
vanishes at $\gamma=r^{2/n}$.

\emph{(ii)} Set $g:=r^{2/n}$, so $\alpha^2=\beta^2 g^n$. Define
\begin{equation}
\tau_g:=\!\int_{[0,2\pi)^{n-1}}\!
\frac{d\theta_1\cdots d\theta_{n-1}}{(2\pi)^{n-1}}\,
\ket{\phi(\boldsymbol\theta)}\!\bra{\phi(\boldsymbol\theta)},
\end{equation}
with $\ket{\phi(\boldsymbol\theta)}=\bigotimes_{j=1}^n(\sqrt g\ket 0
+e^{i\theta_j}\sqrt{1-g}\ket 1)$ and 
$\theta_n:=-\sum_{j=1}^{n-1}\theta_j$. Phase averaging on this constraint kills all
off-diagonals except the GHZ pair, leaving diagonal entries
$g^{n-|x|}(1-g)^{|x|}$ and
$(\tau_g)_{0^n,1^n}=[g(1-g)]^{n/2}$. Comparison then yields
$\rho_n(g)=\beta^2\tau_g+\alpha^2\ket{0^n}\!\bra{0^n}$, a convex
combination of products. Part~(iii) follows by combining 
Theorem~\ref{thm:GHZ} with $g=r^{2/n}$. \qed

\begin{remark}[Genuine multipartite entanglement versus full separability]
\label{rem:GME}
The threshold $\gamma_e^{(n)}=r^{2/n}$ used here is the common
bipartite-negativity death threshold and the onset of full
separability for the amplitude-damped GHZ-$X$ trajectory. Genuine
multipartite entanglement vanishes earlier for $n\ge 3$: applying
the $X$-state criterion of Ref.~\cite{Rafsanjani2012Genuinely} to
the $2^{n-1}-1$ non-endpoint complementary basis pairs gives
\begin{equation}
\gamma_{\mathrm{GME}}^{(n)}
=\left(\frac{r}{2^{n-1}-1}\right)^{2/n}.
\end{equation}
For $n=2$ this gives $\gamma_{\mathrm{GME}}^{(2)}=r=\gamma_e^{(2)}$,
while for $n\ge 3$ it is strictly smaller than $\gamma_e^{(n)}$. This
refines the multipartite structure but does not affect the
complementarity, which relates magic rebirth to full separability.
\end{remark}

For $n=2$, this is the canonical Yu--Eberly Family~A~\cite{Yu2006Quantum}. Write $\rho_A(\gamma)$ for the amplitude-damped state. The membership condition 
on the symmetric $X$-slice reads
\begin{equation}
\label{eq:slice}
\rho_A(\gamma)\in\mathcal S
\quad\Longleftrightarrow\quad
2|\langle ZI\rangle|+2|\langle XX\rangle|\le 1+\langle ZZ\rangle,
\end{equation}
requiring only three Pauli expectations. In the 
re-entrant regime $r<1$, the thresholds evaluate in closed form: 
$\gamma_-^{(2)} = [\sqrt{\alpha(4\beta{-}3\alpha)} - \alpha]/(2\beta)$, 
$\gamma_+^{(2)}=1-r$, and the Yu--Eberly concurrence sudden-death 
threshold $\gamma_e^{(2)}=r$~\cite{Yu2006Quantum}, in accord with 
Eq.~\eqref{eq:complementarity}. A Regime-II example is shown in 
Fig.~\ref{fig:traj}(a). The threshold identity is in fact the 
boundary value of a stronger, pointwise resource mirror: on the 
entire reborn branch, 
\begin{equation}
\label{eq:RoMA_full}
\mathcal R_A(\gamma)-1
=
\frac{1-\gamma}{\gamma}\,\mathcal C(1-\gamma),
\qquad
\gamma\in(\gamma_+^{(2)},1),
\end{equation}
where $\mathcal C$ is the concurrence~\cite{Wootters1998Entanglement}. 
The magic robustness at damping $\gamma$ is thus identified with the entanglement of the Stinespring-mirrored 
state at $1-\gamma$ (Fig.~\ref{fig:traj}(b)).

\begin{figure}[t]
\centering
\includegraphics[width=\columnwidth]{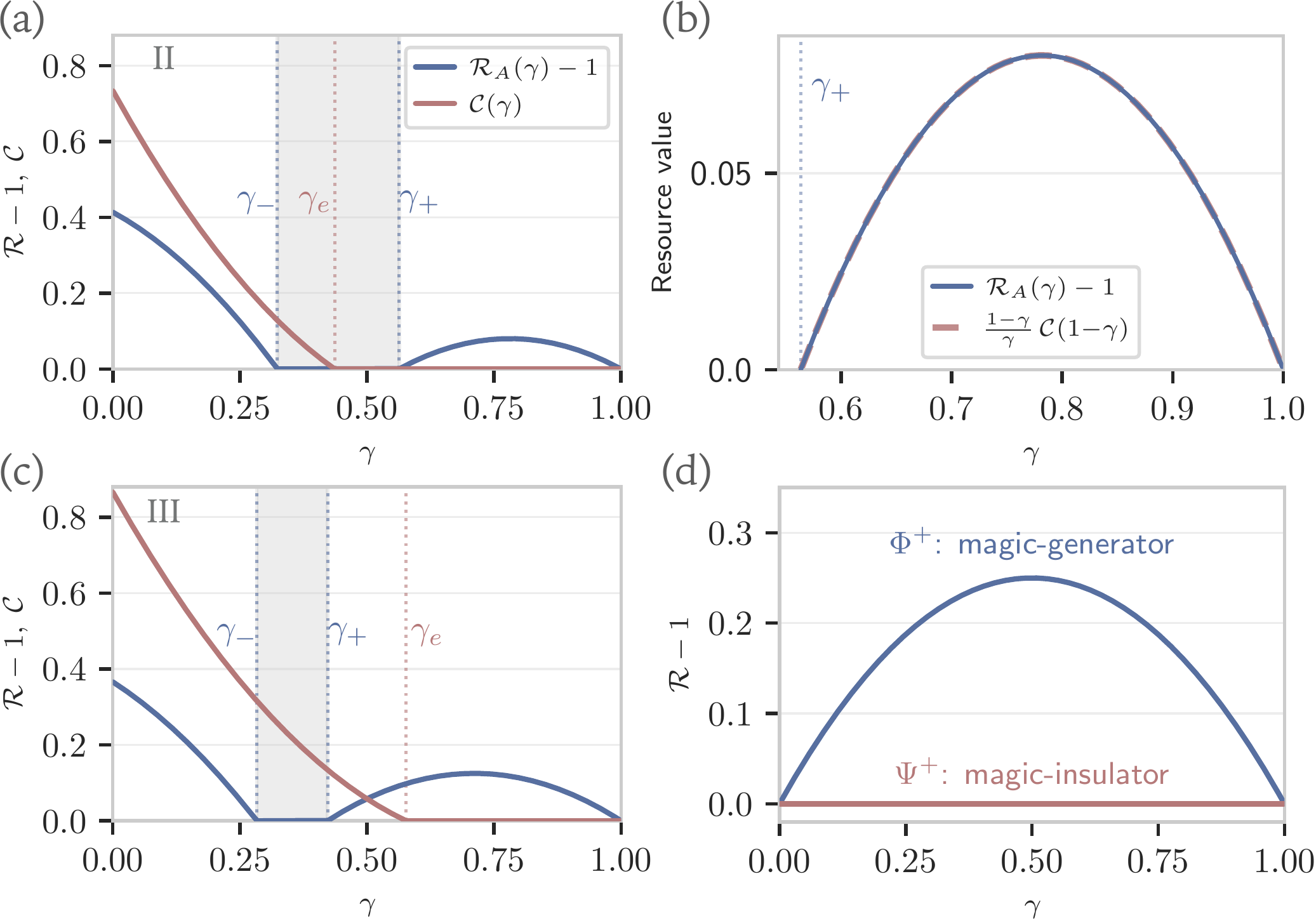}
\caption{%
\textbf{Magic--entanglement complementarity and Bell-state splitting.}
\textbf{(a)}~Family-A trajectory in Regime~II ($\alpha=0.40$). 
Magic (blue) dies at $\gamma_-^{(2)}$ and is reborn at 
$\gamma_+^{(2)}$, after entanglement (red) vanishes at 
$\gamma_e^{(2)}$. Gray shading marks the stabilizer window.
\textbf{(b)}~Trajectory-level complementarity for the same 
parameters as in panel~(a): $\mathcal{R}_A(\gamma)-1$ (blue line) 
coincides with the damping-reflected concurrence 
$\frac{1-\gamma}{\gamma}\mathcal{C}(1-\gamma)$ (red markers).
\textbf{(c)}~Family-A trajectory in Regime~III ($\alpha=0.50$):
magic dies and is reborn while the state is still entangled.
\textbf{(d)}~Bell-state splitting: $\ket{\Phi^+}$ is a 
magic-generator, $\ket{\Psi^+}$ a magic-insulator.}
\label{fig:traj}
\end{figure}

\begin{corollary}[Resource phase diagram]
\label{cor:phase}
For every $n\ge 2$ and $0<r<1$, the trajectory exhibits three 
ordering regimes of $\gamma_-^{(n)}, \gamma_e^{(n)}, \gamma_+^{(n)}$, 
separated by two amplitude boundaries $\alpha_1^{(n)} = 1/\sqrt{1+(1+2^{2/n})^n}$ and 
$\alpha_2^{(n)} = 1/\sqrt{1+2^n}$.
At 
$n=2$, these reduce to $\alpha_1^{(2)}=1/\sqrt{10}$ and 
$\alpha_2^{(2)}=1/\sqrt 5$:
\begin{align*}
&\alpha<1/\sqrt{10}:
&& \gamma_e^{(2)} < \gamma_-^{(2)} < \gamma_+^{(2)},\\
&1/\sqrt{10}<\alpha<1/\sqrt 5:
&& \gamma_-^{(2)} < \gamma_e^{(2)} < \gamma_+^{(2)},\\
&1/\sqrt 5<\alpha<1/\sqrt 2:
&& \gamma_-^{(2)} < \gamma_+^{(2)} < \gamma_e^{(2)}.
\end{align*}
\end{corollary}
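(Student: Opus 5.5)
The plan is to compare the three thresholds pairwise, using only the closed forms $\gamma_e^{(n)}=r^{2/n}$ and $\gamma_+^{(n)}=1-r^{2/n}$ from Proposition~\ref{prop:alln_complementarity} and Theorem~\ref{thm:GHZ}, together with the ordering $\gamma_-^{(n)}<\gamma_+^{(n)}$ guaranteed by Theorem~\ref{thm:GHZ}. Throughout, set $s:=r^{2/n}\in(0,1)$, so that $\gamma_e^{(n)}=s$ and $\gamma_+^{(n)}=1-s$. The map $\alpha\mapsto r=\alpha/\sqrt{1-\alpha^2}$ is strictly increasing on $(0,1/\sqrt2)$, hence so is $\alpha\mapsto s$. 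Every boundary condition on $s$ will therefore translate into a single amplitude boundary via $\alpha^2=r^2/(1+r^2)=s^n/(1+s^n)$.

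\textbf{Comparing $\gamma_+$ with $\gamma_e$.} We have $\gamma_+^{(n)}<\gamma_e^{(n)}$ if and only if $1-s<s$, that is $s>1/2$. This is equivalent to $r^2>2^{-n}$, and hence to $\alpha^2>1/(1+2^n)=(\alpha_2^{(n)})^2$.

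\textbf{Comparing $\gamma_-$ with $\gamma_e$.} Use $f_n(\gamma)=P_0-c=\alpha^2+\beta^2\gamma^n-\alpha\beta(1-\gamma)^{n/2}$ from the proof of Proposition~\ref{prop:asymptotics}. Its derivative $f_n'$ is strictly positive on $(0,1)$, and $f_n(0)=\alpha(\alpha-\beta)<0$. Thus $\gamma_-^{(n)}$ is its unique zero, and $\gamma_-^{(n)}<\gamma_e^{(n)}$ holds if and only if $f_n(s)>0$. At $\gamma=s$ one has $\beta^2 s^n=\alpha^2$, so $P_0=2\alpha^2$. The condition $f_n(s)>0$ then becomes $2\alpha>\beta(1-s)^{n/2}$, i.e.\ $2r>(1-s)^{n/2}$. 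Substituting $r=s^{n/2}$ and taking $(2/n)$-th powers gives $2^{2/n}s>1-s$, i.e.\ $s>1/(1+2^{2/n})$. This is equivalent to $r^2>(1+2^{2/n})^{-n}$, and hence to $\alpha>\alpha_1^{(n)}=1/\sqrt{1+(1+2^{2/n})^n}$.

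\textbf{Assembling the regimes.} First, $1/(1+2^{2/n})<1/2$, equivalently $\alpha_1^{(n)}<\alpha_2^{(n)}$, because $(1+2^{2/n})^n>2^n$. Combining this with $\gamma_-^{(n)}<\gamma_+^{(n)}$ yields the three orderings:
\begin{itemize}
\item for $\alpha<\alpha_1^{(n)}$ we get $\gamma_e<\gamma_-<\gamma_+$, where $\gamma_e<\gamma_+$ follows from $s<1/2$;
\item for $\alpha_1^{(n)}<\alpha<\alpha_2^{(n)}$ we get $\gamma_-<\gamma_e<\gamma_+$;
\item for $\alpha_2^{(n)}<\alpha<1/\sqrt2$ we get $\gamma_-<\gamma_+<\gamma_e$.
\end{itemize}
Setting $n=2$ gives $\alpha_1^{(2)}=1/\sqrt{1+9}=1/\sqrt{10}$ and $\alpha_2^{(2)}=1/\sqrt5$.

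The only nontrivial step is the comparison of $\gamma_-$ with $\gamma_e$, since $\gamma_-$ has no closed form for general $n$. The main obstacle is handled by evaluating the monotone function $f_n$ at the explicit point $\gamma_e$, rather than solving $f_n(\gamma)=0$ for $\gamma_-$ directly.
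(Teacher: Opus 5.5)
Your proof is correct and follows the paper's argument: you compare $\gamma_+^{(n)}$ with $\gamma_e^{(n)}$ directly via $s=r^{2/n}$, and you place $\gamma_-^{(n)}$ relative to $\gamma_e^{(n)}$ by the sign of the strictly increasing $f_n$ at $\gamma_e^{(n)}$, where $P_0=2\alpha^2$. The only cosmetic difference is that you take $(2/n)$-th powers to reduce the sign condition to the linear inequality $(1+2^{2/n})s>1$, whereas the paper first shows by differentiation that $h_n(r)=2r-(1-r^{2/n})^{n/2}$ is increasing and then solves for its zero.
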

\noindent
Derivations of the boundaries $\alpha_1^{(n)},\alpha_2^{(n)}$ are
given in Appendix~\ref{sec:phase_boundaries}. In Regimes~I and~II
($\alpha<1/\sqrt{1+2^n}$), the entire rebirth 
interval lies after the onset of full separability~\cite{Aolita2008Scaling} 
(Corollary~\ref{cor:separable_rebirth} below). In Regime~III, 
magic dies and is reborn while the state is still entangled. By 
contrast, the complementary Yu--Eberly family 
$\alpha\ket{01}+\beta\ket{10}$ (Family~B) gives an affine trajectory 
with smoothly decaying entanglement for 
$0<\alpha,\beta<1$~\cite{Yu2006Quantum,Carvalho2004Decoherence}: 
death-and-rebirth is a feature only of curved trajectories.
Fig.~\ref{fig:overview}(b) shows the $n=2$ phase diagram, with 
Regime-II and -III trajectories in Fig.~\ref{fig:traj}(a,c). 
The reflection is not tied to homogeneous damping. For 
qubit-dependent $\gamma_i$, the entanglement-death and 
magic-rebirth conditions lift to the dual hypersurfaces 
$\prod_i\gamma_i=r^2$ and $\prod_i(1-\gamma_i)=r^2$, exchanged by
$\gamma_i\mapsto 1-\gamma_i$ (Appendix~\ref{sec:nonuniform}). More
generally, for real 
ground-state-preserving phase-covariant channels, the 
reflected-threshold identity is governed by a simple 
coherence-profile symmetry: $S(\gamma)=S(1-\gamma)$ on the 
threshold domain (Proposition~\ref{prop:phasecovariant},
Sec.~\ref{sec:channelcond}).

\begin{corollary}[Rebirth from fully separable states]
\label{cor:separable_rebirth}
If $\gamma_+^{(n)}>\gamma_e^{(n)}$ (equivalently
$r<r_2^{(n)}=2^{-n/2}$, i.e.~$\alpha<1/\sqrt{1+2^n}$; Regimes~I and~II
of Corollary~\ref{cor:phase}), the whole reborn branch
$(\gamma_+^{(n)},1)$ lies in the fully separable region, with
endpoint $\rho_n(1)=\ket{0^n}\!\bra{0^n}\in\mathcal S$. At $n=2$
this is $\alpha<1/\sqrt 5$.
\end{corollary}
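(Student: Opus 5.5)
The plan is to establish two things. First, the equivalence between $\gamma_+^{(n)}>\gamma_e^{(n)}$ and $r<2^{-n/2}$. Second, the claim that full separability, once reached at $\gamma_e^{(n)}$, persists for every $\gamma\ge\gamma_e^{(n)}$.

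\emph{Step 1: the threshold condition.} By Theorem~\ref{thm:GHZ} and Proposition~\ref{prop:alln_complementarity}, $\gamma_+^{(n)}=1-r^{2/n}$ and $\gamma_e^{(n)}=r^{2/n}$. Hence $\gamma_+^{(n)}>\gamma_e^{(n)}$ is equivalent to $r^{2/n}<1/2$, that is, $r<2^{-n/2}$. Substituting $r=\alpha/\beta$ and $\beta^2=1-\alpha^2$, this becomes $\alpha^2<2^{-n}(1-\alpha^2)$, i.e.\ $\alpha<1/\sqrt{1+2^n}=\alpha_2^{(n)}$. This matches the boundary in Corollary~\ref{cor:phase}, and at $n=2$ it gives $\alpha<1/\sqrt5$.

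\emph{Step 2: persistence of separability.} Proposition~\ref{prop:alln_complementarity}(ii) shows that $\rho_n(g)$ is fully separable at $g=\gamma_e^{(n)}$. Amplitude damping composes as a semigroup: $\mathcal E_{\gamma_2}\circ\mathcal E_{\gamma_1}=\mathcal E_{1-(1-\gamma_1)(1-\gamma_2)}$, which one checks directly from the Kraus operators, since the excited amplitude picks up $\sqrt{(1-\gamma_1)(1-\gamma_2)}$. Therefore, for any $\gamma\ge g$ we can write $\rho_n(\gamma)=\mathcal E_{\gamma'}^{\otimes n}(\rho_n(g))$ with $\gamma'=(\gamma-g)/(1-g)\in[0,1]$. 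A product of local CPTP maps sends product states to product states, so it preserves convex combinations of products. Thus $\rho_n(\gamma)$ is fully separable for all $\gamma\in[\gamma_e^{(n)},1]$.

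\emph{Step 3: conclusion.} When $\gamma_+^{(n)}>\gamma_e^{(n)}$, the interval $(\gamma_+^{(n)},1)$ lies inside $[\gamma_e^{(n)},1]$, so the entire reborn branch is fully separable. Theorem~\ref{thm:GHZ} already states that magic is present on this branch and that the endpoint is $\rho_n(1)=\ket{0^n}\!\bra{0^n}\in\mathcal S$.

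A fallback for Step 2 avoids the semigroup argument. Repeat the phase-averaged product construction of Proposition~\ref{prop:alln_complementarity}(ii) at general $\gamma\ge g$, writing $\rho_n(\gamma)=\lambda\tau_{h}+(1-\lambda)\ket{0^n}\!\bra{0^n}$ with a suitable $h$. This route needs a check that the weights are nonnegative, which would be the main obstacle. The semigroup route sidesteps that check, so the only delicate point is verifying the composition law, and that is routine.
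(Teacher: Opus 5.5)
Your proposal is correct and follows the paper's proof: the paper likewise writes $\rho_n(\gamma)=\mathcal E_\delta^{\otimes n}[\rho_n(g)]$ with $g=\gamma_e^{(n)}$ and $\delta=(\gamma-g)/(1-g)$ via the amplitude-damping semigroup law, starts from the full separability of $\rho_n(\gamma_e^{(n)})$ in Proposition~\ref{prop:alln_complementarity}(ii), and uses that local channels preserve full separability. Your Step~1 spells out the threshold algebra $r^{2/n}<1/2\Leftrightarrow\alpha<1/\sqrt{1+2^n}$, which the paper derives separately in Appendix~\ref{sec:phase_boundaries}.
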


\emph{Proof.}---For $\gamma\ge g:=\gamma_e^{(n)}$, the semigroup
composition $\mathcal E_\gamma=\mathcal E_\delta\circ\mathcal E_g$
with $\delta=(\gamma-g)/(1-g)$ gives
$\rho_n(\gamma)=\mathcal E_\delta^{\otimes n}[\rho_n(g)]$; local
channels preserve full separability.\qed

\section{Phase-covariant channel conditions}
\label{sec:channelcond}

Consider a ground-state-preserving phase-covariant single-qubit
channel parameterized by $\gamma\in[0,1]$,
\begin{equation}
\label{eq:phasecov}
\begin{aligned}
&\ket{1}\!\bra{1}\mapsto(1-\gamma)\ket{1}\!\bra{1}+\gamma\ket{0}\!\bra{0},\\
&\ket{0}\!\bra{1}\mapsto\lambda(\gamma)\ket{0}\!\bra{1},
\end{aligned}
\end{equation}
with $\lambda(\gamma)\in\mathbb R$, $\lambda(\gamma)^2\le 1-\gamma$,
and define the normalized coherence profile
\begin{equation}
\label{eq:Sprofile}
S(\gamma):=\frac{\lambda(\gamma)^2}{1-\gamma},\qquad\gamma\in[0,1).
\end{equation}
Pure AD has $\lambda=\sqrt{1-\gamma}$
($S\equiv 1$). Dephased AD at strength $p$ has
$\lambda=|1-2p|\sqrt{1-\gamma}$ (sign $1-2p$ absorbed by local
Clifford $Z$), so $S\equiv(1-2p)^2$. For $r=\alpha/\beta\in(0,1)$, suppose that the
$\ket{\psi_n}=\alpha\ket{0^n}+\beta\ket{1^n}$ trajectory under
$\mathcal E_\gamma^{\otimes n}$ has a unique bipartite-negativity
death threshold $\gamma_e^{(n)}(r)$ and a unique upper exit
$\gamma_+^{(n)}(r)$ from $\mathcal S$ through the GHZ endpoint
facet $|c|=P_n$.

\begin{proposition}[Phase-covariant complementarity]
\label{prop:phasecovariant}
For every such $r$,
\begin{equation}
\label{eq:thresholds}
\begin{aligned}
&r^{2/n}S\bigl(\gamma_e^{(n)}\bigr)=\gamma_e^{(n)},\\
&r^{2/n}S\bigl(\gamma_+^{(n)}\bigr)=1-\gamma_+^{(n)}.
\end{aligned}
\end{equation}
Consequently, $\gamma_e^{(n)}+\gamma_+^{(n)}=1$ if and only if
\begin{equation}
\label{eq:reflection}
S\bigl(\gamma_e^{(n)}\bigr)=S\bigl(1-\gamma_e^{(n)}\bigr).
\end{equation}
\end{proposition}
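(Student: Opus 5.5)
I plan to write down the damped state explicitly under the general phase-covariant channel, exactly as in Eq.~\eqref{eq:rho_n}. The channel acts on populations like amplitude damping, since it sends $\ket1\!\bra1\mapsto(1-\gamma)\ket1\!\bra1+\gamma\ket0\!\bra0$ and fixes $\ket0\!\bra0$. The diagonal weights $P_k$ are therefore unchanged: $P_0=\alpha^2+\beta^2\gamma^n$, $P_n=\beta^2(1-\gamma)^n$, and $P_k=\binom nk\beta^2(1-\gamma)^k\gamma^{n-k}$ for $1\le k\le n-1$.

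For the off-diagonals, phase covariance forces each qubit's coherence $\ket0\!\bra1$ to be multiplied by $\lambda(\gamma)$. The only coherence in the input is the GHZ one $\alpha\beta\ket{0^n}\!\bra{1^n}$, and it carries no population terms. Hence it maps to $c=\alpha\beta\lambda^n$, and no other off-diagonals are generated. Using Eq.~\eqref{eq:Sprofile}, $c^2=\alpha^2\beta^2S^n(1-\gamma)^n$.

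\emph{Entanglement threshold.} I will repeat the argument of Proposition~\ref{prop:alln_complementarity}(i). For $|A|=m$, partial transposition moves $c$ into the $2\times2$ block on $\{\ket{1^A0^B},\ket{0^A1^B}\}$. This block has diagonal product $\beta^4\gamma^n(1-\gamma)^n$, which is independent of $m$. All other blocks are diagonal and nonnegative, so the negativity vanishes exactly when
\[
c^2\le\beta^4\gamma^n(1-\gamma)^n .
\]
Substituting $c^2$ turns this into $\alpha^2S^n\le\beta^2\gamma^n$, that is, $r^{2/n}S(\gamma)\le\gamma$. By the standing uniqueness assumption, the death threshold $\gamma_e^{(n)}$ is the point where equality holds, which gives the first line of Eq.~\eqref{eq:thresholds}.

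\emph{Exit threshold.} By assumption, the exit occurs through the facet $|c|=P_n$. This is the facet supplied by Lemma~\ref{lem:pair_obstruction}, and its sufficiency is Theorem~\ref{thm:GHZX_cross_section}, which applies because $c$ is real. The condition $c^2=P_n^2$ reads
\[
\alpha^2\beta^2S^n(1-\gamma)^n=\beta^4(1-\gamma)^{2n},
\]
that is, $r^2S^n=(1-\gamma)^n$. Taking positive $n$th roots gives $r^{2/n}S(\gamma_+^{(n)})=1-\gamma_+^{(n)}$, the second line. Dividing by $1-\gamma$ is legitimate because $\gamma_+<1$.

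\emph{The equivalence.} Suppose $\gamma_e+\gamma_+=1$. Substituting $\gamma_+=1-\gamma_e$ into the second equation gives $r^{2/n}S(1-\gamma_e)=\gamma_e=r^{2/n}S(\gamma_e)$, so Eq.~\eqref{eq:reflection} holds because $r>0$.

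Conversely, suppose $S(\gamma_e)=S(1-\gamma_e)$. Then $\gamma^\ast:=1-\gamma_e$ satisfies $r^{2/n}S(\gamma^\ast)=1-\gamma^\ast$, so it solves the exit equation. Uniqueness of the upper exit then forces $\gamma^\ast=\gamma_+$.

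This is the main subtle point. Uniqueness was stated for the exit from $\mathcal S$, not for solutions of the equation, so I must check that $\gamma^\ast$ is an actual exit through the $P_n$ facet. To handle this, I would interpret the hypothesis as saying that the facet equation has a unique root in the relevant domain; the paper's phrase ``on the threshold domain'' suggests this reading. Alternatively, I would show that $\gamma^\ast$ lies beyond the entry point, which is where the $P_n$ constraint is the binding one. I expect this domain bookkeeping, rather than the algebra, to be the only real obstacle.
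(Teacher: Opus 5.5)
Your derivation of the two threshold equations (the $m$-independent partial-transpose block and the facet condition $|c|=P_n$) matches the paper's proof, and so does the forward direction $\gamma_e^{(n)}+\gamma_+^{(n)}=1\Rightarrow S(\gamma_e^{(n)})=S(1-\gamma_e^{(n)})$. The gap is in the converse, at the point you flagged yourself, and neither of your proposed fixes closes it.

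\emph{Why your fixes fail.} Reading the hypothesis as uniqueness of the root of $r^{2/n}S(\gamma)=1-\gamma$ changes the proposition rather than proving it. The stated hypothesis concerns the exit from $\mathcal S$ through the facet. Your second option, placing $\gamma^\ast$ beyond the entry point, is never actually carried out.

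\emph{What is needed.} You must check directly, at $\gamma^\ast=1-\gamma_e^{(n)}$, that the other GHZ-$X$ inequality $P_0\ge|c|$ holds. Then, by Theorem~\ref{thm:GHZX_cross_section}, the state at $\gamma^\ast$ lies in $\mathcal S$ on the facet $|c|=P_n$, rather than being a spurious algebraic root outside $\mathcal S$. Only then does the uniqueness hypothesis on the exit identify $\gamma^\ast$ with $\gamma_+^{(n)}$.

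\emph{The missing ingredient is complete positivity.}
\begin{enumerate}
\item The constraint $\lambda^2\le 1-\gamma$ gives $S\le 1$.
\item The first threshold equation then yields $\gamma_e^{(n)}=r^{2/n}S(\gamma_e^{(n)})\le r^{2/n}$, that is, $(\gamma_e^{(n)})^n\le r^2$.
\item At $\gamma^\ast$, the exit equation gives $|c|=P_n=\beta^2(\gamma_e^{(n)})^n$, while $P_0=\beta^2[r^2+(1-\gamma_e^{(n)})^n]$.
\item Hence
\[
P_0-|c|=\beta^2\bigl[r^2+(1-\gamma_e^{(n)})^n-(\gamma_e^{(n)})^n\bigr]\ge\beta^2(1-\gamma_e^{(n)})^n\ge 0 .
\]
\end{enumerate}
This certifies that $\gamma^\ast$ is a genuine point of $\mathcal S$ on the $P_n$ facet, and uniqueness then gives $\gamma_+^{(n)}=1-\gamma_e^{(n)}$. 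This is exactly how the paper completes the argument.
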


\emph{Proof.}---Under $\mathcal E_\gamma^{\otimes n}$,
$|c|=\alpha\beta|\lambda(\gamma)|^n$ and
$P_n=\beta^2(1-\gamma)^n$. Across
$|A|=m$, the partial-transpose $2\times 2$ block has determinant
$\beta^4[\gamma(1-\gamma)]^n-\alpha^2\beta^2\lambda^{2n}$ (independent
of $m$), vanishing if and only if $r^{2/n}S(\gamma)=\gamma$. The GHZ-$X$ membership condition $|c|=P_n$ reads 
$r|\lambda|^n=(1-\gamma)^n$. Raising to the $2/n$ power gives 
$r^{2/n}\lambda^2=(1-\gamma)^2$, and dividing by $(1-\gamma)$ yields 
the second equation of~\eqref{eq:thresholds}. If $S(\gamma_e^{(n)})=S(1-\gamma_e^{(n)})$, then
$r^{2/n}S(1-\gamma_e^{(n)})=r^{2/n}S(\gamma_e^{(n)})=\gamma_e^{(n)}$, 
so $\gamma=1-\gamma_e^{(n)}$ solves the second equation 
of~\eqref{eq:thresholds}. This root is physical, not merely 
algebraic: the complete-positivity condition
$\lambda^2\le 1-\gamma$ gives $S\le 1$, so the first equation forces 
$(\gamma_e^{(n)})^n\le r^2$, hence at $\gamma=1-\gamma_e^{(n)}$ the 
complementary GHZ-$X$ inequality 
$P_0-|c|=\beta^2[r^2+(1-\gamma_e^{(n)})^n-(\gamma_e^{(n)})^n]\ge 0$ 
also holds. Uniqueness then gives $\gamma_+^{(n)}=1-\gamma_e^{(n)}$, 
and the converse is identical.\qed

Real $\lambda$ cannot be dropped. A complex phase twist 
$\lambda(\gamma)=e^{-i\phi}\sqrt{1-\gamma}$ leaves $\gamma_e^{(n)}$ 
unchanged, since partial transposition sees only $|c|^2$, but moves 
the stabilizer boundary to the Clifford diamond 
$|\mathrm{Re}\,c|+|\mathrm{Im}\,c|\le\min(p_{\mathbf 0},p_{\mathbf 1})$. The reflected-threshold relation, when defined, then fails for
$n\phi\notin\frac{\pi}{2}\mathbb Z$ (Appendix~\ref{sec:pt}). For
pure AD and fixed-strength dephased AD, strict monotonicity of
$f_n(\gamma)=P_0-c$ guarantees unique thresholds throughout the
re-entrant domain (Appendix~\ref{sec:phase_boundaries}).

The identity is one piece of a broader characterization. Within
this real phase-covariant class the membership criterion of
Theorem~\ref{thm:GHZ} applies verbatim.

\begin{theorem}[Phase-covariant channel conditions for magic rebirth]
\label{thm:channels}
Let $0<r<1$.
\emph{(i)}~For zero-temperature relaxation at rate $\kappa$ with
concurrent pure dephasing at rate $\Gamma_\phi$, the coherence
factor is $\lambda=q^{\,a}$ with $q=e^{-\kappa t}$ and
$a=\tfrac12+\Gamma_\phi/\kappa$, the populations follow pure
amplitude damping, and magic rebirth occurs if and only if
\begin{equation}
\label{eq:T2T1}
a<1
\;\Longleftrightarrow\;
\Gamma_\phi<\frac{\kappa}{2}
\;\Longleftrightarrow\;
T_2>T_1.
\end{equation}
Here $T_1^{-1}=\kappa$ and
$T_2^{-1}=\kappa/2+\Gamma_\phi$, and the rebirth threshold is
$q_+^{(n)}=r^{1/[(1-a)n]}$. Since $\Gamma_\phi\ge0$ implies
$T_2\le 2T_1$, the physical rebirth window is
$T_1<T_2\le 2T_1$.
\emph{(ii)}~No unital phase-covariant channel with a real
coherence factor revives magic: complete positivity forces
$|\lambda|\le a_z:=(1+\lambda_z)/2$, so
$p_{1^n}\ge\beta^2a_z^{\,n}\ge\alpha\beta\,a_z^{\,n}\ge|c|$
throughout, and the rebirth facet is never violated.
\emph{(iii)}~At stationary excited population
$\vartheta\in(0,1/2]$, the asymptotic reborn branch closes into a
finite \emph{magic island}. The rebirth margin
$h_T:=c-p_{1^n}$, with $p_{1^n}$ the fully excited population, is
negative at both ends of the trajectory, the island is nonempty if
and only if $\max_q h_T>0$, it shrinks strictly with $\vartheta$,
and for $a<1$ it disappears at a critical population
$\vartheta_c(a,r,n)$. For $a\ge1$ the rebirth facet is never
violated at any temperature, so no island occurs. Rebirth at
finite temperature therefore ends in a second sudden death of
magic.
\end{theorem}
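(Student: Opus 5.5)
The plan is to reduce all three parts to the real GHZ-$X$ membership criterion $|c|\le\min(p_{0^n},p_{1^n})$ (Theorem~\ref{thm:GHZX_cross_section}, the real extension of Theorem~\ref{thm:GHZ}). It applies because every channel considered is phase covariant with real $\lambda$. Such channels keep $\rho_n$ in the GHZ-$X$ form, with diagonal $p_x$ fixed by the population map and the single coherence $c=\alpha\beta\lambda^n$. Rebirth then means the trajectory lies in $\mathcal S$ at some time and later violates the rebirth facet $c\le p_{1^n}$. I will therefore track the two margins $P_0-c$ and $P_n-c$ (later $h_T$) separately.

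For (i), the populations follow pure amplitude damping, so $P_n=\beta^2q^n$, $P_0=\alpha^2+\beta^2(1-q)^n$ and $c=\alpha\beta q^{an}$. The rebirth facet $c>P_n$ is equivalent to $r>q^{(1-a)n}$. If $a<1$, this holds exactly for $q<q_+^{(n)}=r^{1/[(1-a)n]}$, which is the late-time branch. To show this is a genuine rebirth rather than persistent magic, I will check that the state is in $\mathcal S$ at $q_+$. There $c=\beta^2 r^{1/(1-a)}$, while $P_0\ge\alpha^2=\beta^2r^2$. Since $a\ge\tfrac12$ gives $1/(1-a)\ge 2$, and $r<1$, we get $P_0\ge c$, so the trajectory touches $\mathcal S$ at $q_+$ and exits immediately after. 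If $a\ge1$, then $q^{an}\le q^n$, so $c\le rP_n<P_n$ for all $q$. Moreover $P_0-c$ is strictly increasing in $\gamma=1-q$ (because $P_0$ increases and $c$ decreases), so once the trajectory enters $\mathcal S$ it never leaves. The $T_1,T_2$ reformulation is then pure algebra with $T_2^{-1}=\kappa/2+\Gamma_\phi$. The bound $T_2\le 2T_1$ follows from $\Gamma_\phi\ge0$.

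For (ii), I will take the unital phase-covariant channel in the form $\ket1\!\bra1\mapsto a_z\ket1\!\bra1+(1-a_z)\ket0\!\bra0$, with $\ket0\!\bra1\mapsto\lambda\ket0\!\bra1$. Positivity of the Choi matrix restricted to its $\{\ket{00},\ket{11}\}$ block gives $|\lambda|\le(1+\lambda_z)/2=a_z$. The all-ones string survives only from $\ket{1^n}$ with weight at least $a_z^n$, so $p_{1^n}\ge\beta^2a_z^n$. Combining with $\beta>\alpha$ and $|c|=\alpha\beta|\lambda|^n\le\alpha\beta a_z^n$ gives the stated chain. Hence the rebirth facet is never crossed.

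For (iii), I will use generalized amplitude damping with stationary excited population $\vartheta$. The excited-state probabilities are $\vartheta(1-q)+q$ from $\ket1$ and $\vartheta(1-q)$ from $\ket0$, so
\begin{equation*}
h_T(q)=\alpha\beta q^{an}-\alpha^2[\vartheta(1-q)]^n-\beta^2[\vartheta(1-q)+q]^n .
\end{equation*}
The endpoints give $h_T(1)=\alpha\beta-\beta^2<0$ and $h_T(0)=-\vartheta^n<0$. By continuity the island $\{h_T>0\}$ is nonempty if and only if $\max_q h_T>0$, and it is then a finite union of open intervals strictly inside $(0,1)$. Leaving the island as $q\to0$ is the second sudden death, consistent with the thermal fixed point being a diagonal product state in $\mathcal S$. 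Strict shrinking follows because $\partial_\vartheta p_{1^n}>0$ for $q<1$: each $h_T(q)$ decreases strictly in $\vartheta$, so the superlevel sets are nested.

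For $a<1$, part (i) shows $\max h_T>0$ at $\vartheta\to0$. At $\vartheta=\tfrac12$ we have $p_{1^n}\ge\beta^2((1+q)/2)^n\ge\beta^2q^{n/2}>\alpha\beta q^{n/2}\ge c$ (using $a\ge\tfrac12$), so $\max h_T<0$. By continuity and monotonicity this defines $\vartheta_c(a,r,n)\in(0,\tfrac12)$. For $a\ge1$, $c\le\alpha\beta q^n<\beta^2q^n\le p_{1^n}$ at every $\vartheta$, so no island occurs.

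The main obstacle I expect is certifying that the island is a true rebirth, that is, that the trajectory actually enters $\mathcal S$ before reaching the island at finite temperature. This requires controlling the $P_0$-facet margin $p_{0^n}-c$, which is no longer monotone once thermal excitation is present. I would handle it by showing $p_{0^n}\ge c$ at the left edge of the island, reusing the $a\ge\tfrac12$ comparison from (i) and perturbing in $\vartheta$.
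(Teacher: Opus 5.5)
Your treatment of (i), (ii), and the $h_T$ bookkeeping in (iii) matches the paper's argument. That bookkeeping covers the endpoint values, the strict decrease in $\vartheta$, and the bound $c\le\alpha\beta q^n<\beta^2q^n\le p_{1^n}$ for $a\ge1$. Two of your additions are genuine improvements. The first is the explicit check $P_0\ge c$ at $q_+^{(n)}$ in (i). The second is the AM--GM bound at $\vartheta=\tfrac12$, which shows $\max_q h_T<0$ there and hence $\vartheta_c<\tfrac12$; the paper does not spell this out.

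The gap is the one you flagged in (iii), and your proposed remedy does not close it. Perturbing the $\vartheta=0$ comparison in $\vartheta$ only controls small $\vartheta$. The islands persist up to $\vartheta_c$, which is not small: $0.120$, $0.208$, $0.254$ at $\alpha=0.4$, $a=\tfrac12$, $n=2,4,6$. Without control of the $p_{0^n}$ facet, you have not shown that the state lies in $\mathcal S$ at either edge of the island, or at any point after it. The diagonal thermal fixed point gives membership only for sufficiently small $q$. So the sentence \emph{leaving the island is the second sudden death} is not yet proved. Likewise, your \emph{finite union of open intervals} still allows several excursions rather than one island.

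The paper closes this with three ingredients, each uniform in $\vartheta\in(0,\tfrac12]$.

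\emph{Endpoint positivity.} Positivity of the $\{\ket{0^n},\ket{1^n}\}$ block gives $c^2\le p_{0^n}p_{1^n}$. Wherever $h_T\ge0$ this yields $c^2\le p_{0^n}c$, i.e.\ $p_{0^n}\ge c$, strictly where $h_T>0$. So the state is in $\mathcal S$ at both island edges, with no perturbation needed.

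\emph{Monotone ratio.} The right quantity is the ratio $c/p_{0^n}$, not the difference. Using $dA/dq=\vartheta$, $dD/dq=-(1-\vartheta)$ and $p_{0^n}\ge\alpha^2A^n$,
\[
\frac{d}{dq}\log\frac{c}{p_{0^n}}\;\ge\;\frac{an}{q}-\frac{n\vartheta}{A}\;=\;\frac{n}{qA}\bigl[a(1-\vartheta)+\vartheta q(a-1)\bigr]\;\ge\;0,
\]
strictly for $q<1$, since $\vartheta\le\tfrac12\le a$. Hence $c/p_{0^n}$ falls monotonically in physical time from $1/r$ to $0$. The $p_{0^n}$ facet is therefore crossed exactly once, the island lies after that crossing, and the state stays in $\mathcal S$ once the island closes.

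\emph{Single interval.} With $x=-\log q$,
\[
\frac{p_{1^n}}{c}=r\vartheta^n\bigl(e^{ax}-e^{(a-1)x}\bigr)^n+r^{-1}\bigl[\vartheta e^{ax}+(1-\vartheta)e^{(a-1)x}\bigr]^n
\]
is strictly convex for $\tfrac12\le a<1$, so $\{h_T>0\}$ is a single interval.

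Together these yield the death--rebirth--second-death sequence claimed in (iii).
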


\noindent
Proofs are given in Appendix~\ref{sec:channel_conditions},
including uniqueness of the magic-death crossing throughout the
thermal family, which makes every non-rebirth trajectory
death-only.

Physically, relaxation revives magic when coherence outlives
population. The fate of the trajectory is fixed by two numbers,
the ratio $a=\tfrac12+\Gamma_\phi/\Gamma_1$, with $\Gamma_1$ the
total relaxation rate ($\Gamma_1=\kappa$ at zero temperature), and
the bath population $\vartheta$, as summarized in
Fig.~\ref{fig:channels}. At finite temperature the cold-rebirth
condition $a<1$ remains necessary, and the island additionally
requires $\vartheta<\vartheta_c$; everything else is death-only.
Representative values at $\alpha=0.4$ are $\vartheta_c=0.120$ for
$(a,n)=(1/2,2)$ and $\vartheta_c=0.208$ for $(1/2,4)$. Thus, when
the residual excited-state population is well below these critical
values, the practically binding constraint is $T_2>T_1$. The window is
accessible on present hardware. Modern transmon platforms reach
$T_2$ comparable to or exceeding $T_1$~\cite{Place2021New};
dynamical decoupling suppresses the low-frequency part of
$\Gamma_\phi$, pushing $T_2$ toward the $2T_1$ ceiling; and
engineered dissipation sets $\kappa_{\rm eng}>2\Gamma_\phi$ by
design~\cite{Verstraete2009Quantum,Mi2024Stable}.

\begin{figure}[t]
\centering
\includegraphics[width=\columnwidth]{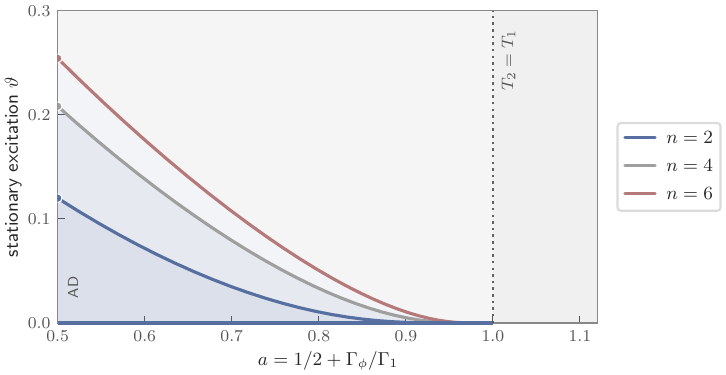}
\caption{%
\textbf{Phase-covariant channel conditions.}
Rebirth phase diagram in the plane of the dephasing exponent
\(a=1/2+\Gamma_\phi/\Gamma_1\) and the stationary excited
population \(\vartheta\), at \(\alpha=0.4\). Critical boundaries
\(\vartheta_c^{(n)}(a)\) are shown for \(n=2,4,6\). For each
register size \(n\), the finite-temperature magic island lies
below the corresponding curve and ends in a second sudden death.
The blue shading is layered, so darker blue indicates overlap of
the rebirth islands for several displayed values of \(n\). The
cold branch on \(\vartheta=0\), \(a<1\), carries asymptotic magic
rebirth. The gray region above the uppermost boundary, together
with the half-strip \(a\ge1\), has no rebirth for any displayed
\(n\). The dotted line marks \(T_2=T_1\); thus no rebirth occurs
for \(T_2\le T_1\). The boundaries are computed from
\(\max_q h_T(q;a,\vartheta,n)=0\). At pure amplitude
damping \(a=1/2\), the critical populations are
\(\vartheta_c^{(2)}=0.120\), \(\vartheta_c^{(4)}=0.208\), and
\(\vartheta_c^{(6)}=0.254\), showing that the island grows with
register size.
\label{fig:channels}}
\end{figure}

\section{Nonlocal magic and lossless extraction}
\label{sec:extract_main}

Tracing out any qubit 
removes the off-diagonal entry $\ket{0^n}\!\bra{1^n}$, so every 
proper marginal of $\rho_n(\gamma)$ is computational-basis diagonal 
and hence stabilizer. On the magic branches this dynamically realizes 
entirely nonlocal magic~\cite{Wei2024Noise,Wei2026Long}. In 
Regimes~I and~II the joint state is moreover fully separable on 
the entire reborn branch, a nonstabilizer mixture of 
product states with no entanglement of any kind.

This magic is nevertheless extractable by stabilizer operations. 
Measuring the parity stabilizers $Z_iZ_{i+1}$ and conditioning on 
the trivial syndrome projects $\rho_n(\gamma)$ onto 
$\mathrm{span}\{\ket{0^n},\ket{1^n}\}$, with success probability 
$P_0+P_n\ge\alpha^2$ uniform in $n$ and $\gamma$. Clifford decoding 
yields a single-qubit state $\tilde\rho(\gamma)$ whose 
stabilizer-octahedron membership tracks the $n$-qubit criterion. The fully separable regime 
$\alpha<1/\sqrt{1+2^n}$ requires $\alpha$ to scale with $n$, while 
at fixed $\alpha$ and $n>\log_2(\alpha^{-2}-1)$ the trajectory 
enters Regime~III, where rebirth occurs while the joint state 
remains entangled.

\begin{theorem}[Lossless extraction]
\label{thm:extract}
The decoded single-qubit state is
\begin{equation}
\label{eq:decoded}
\tilde\rho(\gamma)
=
\frac{1}{P_0{+}P_n}
\begin{pmatrix}
P_0 & c\\
c & P_n
\end{pmatrix},
\end{equation}
with Bloch coordinates 
$x=2c/(P_0{+}P_n),\,y=0,\,z=(P_0{-}P_n)/(P_0{+}P_n)$, and
\begin{equation}
\label{eq:extraction}
\rho_n(\gamma)\notin\mathcal S
\;\Longleftrightarrow\;
\tilde\rho(\gamma)\notin\mathcal O
\;\Longleftrightarrow\;
|x|+|z|>1,
\end{equation}
where $\mathcal O$ is the single-qubit stabilizer octahedron. 
Equivalently, $\tilde\rho(\gamma)\notin\mathcal O$ if and only if 
$c>\min(P_0,P_n)$. The extraction is lossless in expectation,
\begin{equation}
\label{eq:lossless}
(P_0{+}P_n)\bigl[\mathcal R(\tilde\rho){-}1\bigr]
=
\mathcal R(\rho_n(\gamma)){-}1.
\end{equation}
\end{theorem}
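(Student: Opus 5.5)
The plan is to compute the decoded state directly, compare it with the criterion of Theorem~\ref{thm:GHZ}, and then match robustness values using Eq.~\eqref{eq:RoM_GHZ} together with a one-qubit robustness formula.

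\emph{Decoded state.} The parity projector onto $\mathrm{span}\{\ket{0^n},\ket{1^n}\}$ annihilates every $D_k$ with $1\le k\le n-1$. Applied to Eq.~\eqref{eq:rho_n}, it therefore leaves the unnormalized block $P_0\ket{0^n}\!\bra{0^n}+P_n\ket{1^n}\!\bra{1^n}+c(\ket{0^n}\!\bra{1^n}+\mathrm{h.c.})$, so the success probability is $P_0+P_n$. Take the Clifford decoder to be the CNOT cascade from qubit~1 onto qubits $2,\dots,n$. It maps $\ket{0^n}\mapsto\ket{0}\ket{0^{n-1}}$ and $\ket{1^n}\mapsto\ket{1}\ket{0^{n-1}}$. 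Discarding the last $n-1$ qubits then gives Eq.~\eqref{eq:decoded}, and the Bloch coordinates follow from $x=2\,\mathrm{Re}\,\tilde\rho_{01}$ and $z=\tilde\rho_{00}-\tilde\rho_{11}$.

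\emph{Membership equivalence.} The octahedron is $\mathcal O=\{|x|+|y|+|z|\le1\}$; with $y=0$ and $c\ge0$, membership reads $2c+|P_0-P_n|\le P_0+P_n$. Since $P_0+P_n-|P_0-P_n|=2\min(P_0,P_n)$, this is exactly $c\le\min(P_0,P_n)$. By Eq.~\eqref{eq:GHZ_membership} of Theorem~\ref{thm:GHZ}, that condition is equivalent to $\rho_n(\gamma)\in\mathcal S$. Taking complements gives the chain in Eq.~\eqref{eq:extraction}.

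\emph{Lossless robustness.} The $n$-qubit side is supplied by Eq.~\eqref{eq:RoM_GHZ}, so $\mathcal R(\rho_n)-1=2\max\{0,c-P_0,c-P_n\}$. It remains to show
\[
\mathcal R(\tilde\rho)=1+\frac{2\max\{0,c-P_0,c-P_n\}}{P_0+P_n}.
\]
For a single qubit I would use the standard fact that the robustness of a state outside the octahedron equals its stabilizer norm $\|\vec r\|_1=|x|+|y|+|z|$. If I do not want to cite it, both bounds are short. For the upper bound, when $c>P_n$ and $z\ge0$ one can write $\tilde\rho=(1+s)\sigma_+-s\sigma_-$ with $\sigma_\pm$ on the facets $x+z=\pm1$ of $\mathcal O$; solving the linear equations gives $1+2s=x+z$. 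For the lower bound, the dual witness $W=\tfrac12(I+X+Z)$ satisfies $\Tr[W\sigma]\le1$ on all stabilizer states and $\Tr[W\tilde\rho]=(1+x+z)/2$. With $\tilde\rho=(1+s)\sigma_+-s\sigma_-$ this forces $1+2s\ge x+z$ whenever $x+z>1$.

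The resulting value is $|x|+|z|=(2c+|P_0-P_n|)/(P_0+P_n)$. Subtracting $1$ gives $2(c-\min(P_0,P_n))/(P_0+P_n)$ when $c>\min(P_0,P_n)$, and $0$ otherwise, since $\mathcal R=1$ on $\mathcal O$. Multiplying by $P_0+P_n$ yields Eq.~\eqref{eq:lossless}.

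The main obstacle is that Eq.~\eqref{eq:RoM_GHZ} is only quoted in the main text: its primal–dual certificate is deferred to Appendix~\ref{sec:GHZX}. If I were to verify it here, I would do so in two directions. The primal direction modifies the decomposition in the proof sketch of Theorem~\ref{thm:GHZ} by giving negative weight to the excess coherence $c-\min(P_0,P_n)$, via $\ket{\mathrm{GHZ}_n^\pm}$ and the vertex states. The dual direction lifts the single-qubit witness $W$ through the decoder to the $n$-qubit operator $\Pi\,(\ket{0^n}\!\bra{0^n}+\ket{1^n}\!\bra{1^n}+X^{\otimes n})\,\Pi/2$, where $\Pi$ is the parity projector. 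The point to check is that this lifted operator still has expectation at most $1$ on every $n$-qubit stabilizer state; I expect Lemma~\ref{lem:pair_obstruction} to supply that bound.
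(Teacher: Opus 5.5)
Your proof is correct and follows the paper's own argument: parity projection annihilates the $D_k$ with $1\le k\le n-1$, the CNOT cascade decodes onto one qubit, the octahedron condition $2c+|P_0-P_n|\le P_0+P_n$ is identified with the GHZ-$X$ criterion, and the single-qubit formula $\mathcal R=1+\max\{0,|x|+|y|+|z|-1\}$ is combined with Eq.~\eqref{eq:RoM_GHZ}. You additionally justify that single-qubit formula; note that the witness $\tfrac12(\id+X+Z)$ takes values in $[0,1]$ on $\mathcal O$, and it is the lower bound $0$ that controls the $-s\sigma_-$ term. The paper likewise simply cites Eq.~\eqref{eq:RoM_GHZ} from Theorem~\ref{thm:GHZX_cross_section}, so no further verification is required.

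One caution about your optional dual sketch. The operator $\Pi(\ket{0^n}\!\bra{0^n}+\ket{1^n}\!\bra{1^n}+X^{\otimes n})\Pi/2$ is just $\ket{\mathrm{GHZ}_n^+}\!\bra{\mathrm{GHZ}_n^+}$: the $Z$ part of $W$ has been dropped, and the operator vanishes on the failure sectors. Its value $(P_0+P_n)/2+c\le 1$ therefore certifies nothing. The witness that works is the lift of $X+Z$ on $\mathrm{span}\{\ket{0^n},\ket{1^n}\}$, extended by the identity on the complement. This is exactly the paper's $W_{s,j}$ of Eq.~\eqref{eq:GHZX_witness} with $s=+1$, $j=1$, namely $\id+(\ket{0^n}\!\bra{1^n}+\mathrm{h.c.})-2\ket{1^n}\!\bra{1^n}$, which takes the value $1+2c-2P_n$ on $\rho_n(\gamma)$.
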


\noindent
The derivation is given in Appendix~\ref{sec:extraction}.
A Pauli $X$ correction when $P_0<P_n$ sends the Bloch vector to 
$(|x|,0,|z|)$ in the first $XZ$-quadrant. For magic-state 
extraction this correction is relevant only on the early magic 
branch $[0,\gamma_-^{(n)})$. On the reborn branch 
$(\gamma_+^{(n)},1)$ one has $P_0>P_n$ automatically, so no 
correction is needed there. A Hadamard twirl then places the state on the positive $H$ axis with polarization $h(\gamma)=(|x|+|z|)/\sqrt 2$, strictly above the octahedron edge $h=1/\sqrt 2$ on the open magic branch.

The discarded failure branches are diagonal mixtures of 
computational-basis stabilizer states, so the 
identity~\eqref{eq:lossless} saturates the ensemble monotonicity 
bound for stabilizer measurement-and-postselection operations. The 
Triangle Criterion of Ref.~\cite{Liu2025A} characterizes when a 
multiqubit mixed state can be converted, by stabilizer operations 
with postselection, into a single-qubit magic state. 
Theorem~\ref{thm:extract} gives a lossless instance of such a 
reduction on the all-$n$ GHZ-$X$ manifold.

Losslessness allows the yield to be optimized in closed form. On the reborn
branch, in the collapsed variable $s:=(1-\gamma)^{n/2}$, the
expected robustness yield per register is
\begin{equation}
\label{eq:yield}
Y(s)
:=
(P_0{+}P_n)\bigl[\mathcal R(\tilde\rho){-}1\bigr]
=
2s\,(\alpha\beta-\beta^2 s),
\end{equation}
positive on $0<s<r$ and maximal at $s_*=r/2$, that is at
$q_*=(r/2)^{2/n}$, where
\begin{equation}
\label{eq:yieldmax}
Y^{\max}=\frac{\alpha^2}{2},
\end{equation}
independently of $n$: trajectories at every $n$ collapse onto one
parabola, and the optimal harvest is half the minority weight of
the input. For
$\alpha<1/\sqrt{1+2^n}$ the optimum lies in the fully separable
region, since $r^{2/n}+(r/2)^{2/n}<1$ there. The certificate is
robust in two senses. The dual witness behind
Eq.~\eqref{eq:GHZ_membership} has operator norm $\sqrt2$ and
margin $c-P_n$, equal to $\alpha^2/4$ at the yield optimum for
every $n$. And if each recorded parity bit is flipped with
probability $\epsilon$, the accepted output remains certified
nonstabilizer for all
$\epsilon<\epsilon_c=(c-P_n)/(c-P_n+1-P_0-P_n)$, which tends to
$r^2/4$ at the yield optimum, about $4.8\%$ at $\alpha=0.4$,
comfortably above the percent-level stabilizer-readout errors of
state-of-the-art surface-code
experiments~\cite{Acharya2023Suppressing,GoogleQEC2025Below}
(Appendixes~\ref{sec:yield_sm} and~\ref{sec:tolerance_sm}).

\begin{figure}[t]
\centering
\includegraphics[width=1\columnwidth]{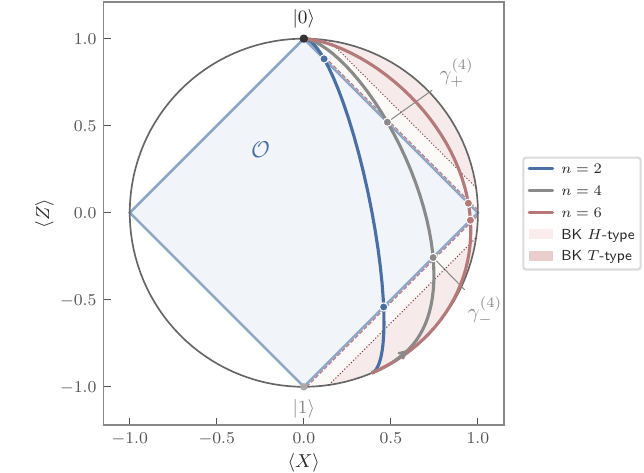}
\caption{%
\textbf{Single-qubit extraction and distillation windows.}
Decoded trajectories $\tilde\rho(\gamma)$ at $\alpha=0.20$ for 
$n=2,4,6$ in the Bloch $XZ$ plane. The rotated square is the 
stabilizer octahedron $\mathcal O$. Each trajectory enters 
$\mathcal O$ at $\gamma_-^{(n)}$ and exits at $\gamma_+^{(n)}$ 
(labels shown for $n=4$). Pink regions mark Bravyi--Kitaev sufficient 
distillation windows~\cite{Bravyi2005Universal}: $\ket{H}$-type
$x+|z|\gtrsim 1.015$ (light), and after Clifford twirling
$\ket{T}$-type $x+|z|>3/\sqrt 7$ (dark). Increasing $n$ pushes the 
reborn branch deeper into the distillable region.}
\label{fig:extraction}
\end{figure}

The extracted state admits two standard distillation routes. 
Directly, the $H$-axis normal form lies in the distillability range 
of the Bravyi--Kitaev 15-to-1 $\ket{H}$-type 
protocol~\cite{Bravyi2005Universal} away from the octahedron edge, 
with improved $H$-direction protocols~\cite{Reichardt2005Quantum} 
extending distillability close to the edge. Alternatively, after 
the same Pauli sign correction, a cyclic Clifford twirl maps the 
Bloch vector $(|x|,0,|z|)$ to the tetrahedral axis 
$(X+Y+Z)/\sqrt 3$, giving a $\ket{T}$-axis state. The 
Bravyi--Kitaev 5-to-1 $\ket{T}$-type 
protocol~\cite{Bravyi2005Universal} applies in the strongly magic 
sub-regime $|x|+|z|>3/\sqrt 7$. This scaling is visible in 
Fig.~\ref{fig:extraction} at fixed $\alpha=0.20$: the
stabilizer window is narrow by $n=6$, while the reborn decoded branch 
already reaches the $\ket{T}$-type sufficient region. In the 
large-$n$ limit on the natural scale $n\gamma=O(1)$, the reborn 
decoded branch becomes asymptotically pure, and the sign-corrected 
coordinate $x+|z|$ attains the maximum value $\sqrt 2$
(Appendix~\ref{sec:extraction}), strictly above both Bravyi--Kitaev sufficient thresholds. Every 
fixed $0<\alpha<1/\sqrt 2$ therefore enters both the $\ket{H}$- and 
$\ket{T}$-type distillable regions for sufficiently large $n$.

At the stabilizer-input boundary $\alpha=\beta=1/\sqrt2$, the same 
extraction gives a cat-state injection primitive: with an explicit 
$n$-dependent damping strength, amplitude damping followed by 
parity-syndrome extraction on the trivial outcome produces a decoded 
state approaching $\ket H\!\bra H$ super-exponentially in $n$, with 
success probability tending to $2-\sqrt2$
(Appendix~\ref{sec:extraction}). The extracted qubit 
can occupy the magic-state input slot of Clifford+$T$ fault-tolerant 
models, purifiable by standard distillation using only stabilizer 
operations. Passive amplitude damping followed by parity-syndrome 
extraction thereby turns relaxation into a heralded source of magic-state inputs, complementing active preparation.

The comparison with the simplest passive route is quantitative.
One-qubit amplitude damping acting on any stabilizer input is
capped. From $\ket+$, the optimum,
\begin{equation}
\label{eq:naive}
\mathcal R_{\rm naive}-1=\sqrt q-q\le\tfrac14,
\quad
x+z=\sqrt q+1-q\le\tfrac54,
\end{equation}
both saturated at $q=1/4$. This ceiling already lies inside the
standard $\ket T$-type sufficient window, so the collective
advantage is not the mere onset of distillability. At $r=1$ the
collective route reaches the same maximal yield $1/4$, at $u=2$ in
the variable $u:=q^{-n/2}$, but at quality $x+|z|=7/5$, and
trading yield for quality along the same curve reaches the
universal maximum $\sqrt2$ at the cat-injection tuning
$u=1+\sqrt2$, at yield $3\sqrt2-4\simeq0.243$ and success
probability $2-\sqrt2$. Both routes start from stabilizer inputs
and use only damping and stabilizer operations. At matched maximal
yield the collective output is strictly better placed, and
qualities above the one-qubit ceiling $x+z=5/4$, including the
near-$\ket H$ limit, require the collective route
(Appendix~\ref{sec:naive_sm}).

\section{Other trajectory geometries}
\label{sec:other}

Curvature alone does not 
imply rebirth. The smallest nonstabilizer anti-$W$ state 
$\ket{D_3^2}=(\ket{011}+\ket{101}+\ket{110})/\sqrt 3$ follows a 
curved trajectory with
\begin{equation}
\label{eq:antiW_threshold}
\mathcal E_\gamma^{\otimes 3}(\ket{D_3^2}\!\bra{D_3^2})\in\mathcal S
\;\Longleftrightarrow\;
\gamma\ge\frac{\sqrt 3-1}{2},
\end{equation}
so magic dies at finite damping and does not return
(Proposition~\ref{prop:antiW_three}). The next
anti-$W$ member has threshold $\gamma_*^{(4)}=1/2$
(Proposition~\ref{prop:antiW_four}), and more generally
$\mathcal E_\gamma^{\otimes n}(\ket{D_n^{n-1}}\!\bra{D_n^{n-1}})\notin\mathcal S$
for every $n\ge 4$ and $0\le\gamma<1/2$
(Corollary~\ref{cor:higher_antiW_half}).
Real generalized-$W$ inputs $\ket{\psi_W}=\sum_i w_i\ket{e_i}$ 
($w_i\ge 0$) give affine endpoint-only trajectories 
$\rho_W(\gamma)=(1-\gamma) \ket{\psi_W}\!\bra{\psi_W}
+\gamma\ket{0^n}\!\bra{0^n}$, and a row-dominance criterion in the
single-excitation sector (Lemma~\ref{lem:W_row_dominance}) shows
every nonstabilizer member stays outside $\mathcal S$ for all
$\gamma\in[0,1)$. Haar-random 
inputs are almost surely curved and are endpoint-only with
probability at least $1-2^{-n}$
(Proposition~\ref{prop:Haar_endpoint} below). Magic death followed by 
rebirth is therefore neither a typical consequence of curvature nor 
restricted to the GHZ-$X$ cross-section. Appendix~\ref{sec:Dicke} constructs a
non-GHZ-$X$ vacuum--anti-$W$ slice whose minimal member
$\alpha\ket{0^3}+\beta\ket{D_3^2}$ is realized, for
$0<g/\mu<\sqrt3/2$, as the unique ground state of the two-local
pairing Hamiltonian
$H_{\mu,g}=\mu(\hat N-2)^2-g\sum_{i<j}(\sigma_i^+\sigma_j^+ +\sigma_i^-\sigma_j^-)$,
with $\hat N$ the total excitation number and $\sigma_i^\pm$ the
local raising and lowering operators
(Proposition~\ref{prop:pairing_groundstate}).

\begin{proposition}[Haar-typical endpoint-only behavior]
\label{prop:Haar_endpoint}
Let $\ket\psi=\sum_{z\in\{0,1\}^n}a_z\ket z$ be Haar random, and 
set $\rho_\gamma=\mathcal E_\gamma^{\otimes n}(\ket\psi\!\bra\psi)$. 
Then the trajectory $\gamma\mapsto\rho_\gamma$ is non-affine 
almost surely. Moreover, with probability at least $1-2^{-n}$,
\begin{equation}
\label{eq:Haar_endpoint}
\rho_\gamma\notin\mathcal S
\qquad
\text{for every }0\le\gamma<1,
\end{equation}
while $\rho_1=\ket{0^n}\!\bra{0^n}\in\mathcal S$.
\end{proposition}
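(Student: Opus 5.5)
The plan is to apply the pair-coherence obstruction (Lemma~\ref{lem:pair_obstruction}) to pairs involving the top string $1^n$, because the matrix entries in the $1^n$ row of $\rho_\gamma$ have a very simple form. Under $\mathcal E_\gamma^{\otimes n}$, any Kraus branch that applies $E_1$ on some qubit removes the $\ket{1^n}$ component entirely. Only the all-$E_0$ branch therefore contributes to row $1^n$. That branch is $E_0^{\otimes n}=\sum_z(1-\gamma)^{|z|/2}\ket z\!\bra z$, so for every string $y$
\begin{equation*}
(\rho_\gamma)_{y,1^n}=a_y\,a_{1^n}^*\,(1-\gamma)^{(|y|+n)/2},
\qquad
(\rho_\gamma)_{1^n,1^n}=|a_{1^n}|^2(1-\gamma)^n .
\end{equation*}
Checking this formula is the one step that needs care. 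It is the only computation in the proof, and it reduces to noting that $E_1=\sqrt\gamma\ket0\!\bra1$ lowers the Hamming weight of every output string.

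Next I would compare the coherence with the top-corner population for $0\le\gamma<1$ and $a_{1^n}\neq0$:
\begin{equation*}
\frac{|(\rho_\gamma)_{y,1^n}|}{(\rho_\gamma)_{1^n,1^n}}
=\frac{|a_y|}{|a_{1^n}|}\,(1-\gamma)^{-(n-|y|)/2}
\ge\frac{|a_y|}{|a_{1^n}|}.
\end{equation*}
Suppose some $y\neq1^n$ satisfies $|a_y|>|a_{1^n}|$. Then $|(\rho_\gamma)_{y,1^n}|>(\rho_\gamma)_{1^n,1^n}\ge\min\big((\rho_\gamma)_{yy},(\rho_\gamma)_{1^n1^n}\big)$ for every $\gamma\in[0,1)$. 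Lemma~\ref{lem:pair_obstruction} then gives $\rho_\gamma\notin\mathcal S$ along the whole open trajectory. The endpoint statement is immediate: $\rho_1=\ket{0^n}\!\bra{0^n}$ is a vertex of $\mathcal S$.

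The probability bound is a symmetry argument. The law of the Haar-random amplitude vector $(a_z)$ is invariant under permutations of the computational basis. The moduli $|a_z|$ are almost surely pairwise distinct, because ties occur on a null set of the sphere. The good event above fails only when $|a_{1^n}|$ is the unique maximum among the $2^n$ exchangeable moduli, and that event has probability exactly $2^{-n}$. Hence Eq.~\eqref{eq:Haar_endpoint} holds with probability at least $1-2^{-n}$. The event $a_{1^n}=0$ is null and can be discarded.

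For non-affineness I would look at the diagonal entry $(\rho_\gamma)_{1^n,1^n}=|a_{1^n}|^2(1-\gamma)^n$. For $n\ge2$ this is a nonlinear polynomial in $\gamma$ whenever $a_{1^n}\neq0$, which holds almost surely. For $n=1$ I would use the off-diagonal entry $a_0a_1^*\sqrt{1-\gamma}$ instead, which is non-affine whenever $a_0a_1\neq0$, again almost surely. An affine trajectory would force every matrix entry to be affine in $\gamma$, so the trajectory is non-affine almost surely.
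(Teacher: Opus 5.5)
Your proof of the endpoint-only claim matches the paper's own proof. Only the no-jump branch reaches row $1^n$, which gives $(\rho_\gamma)_{y,1^n}=a_ya_{1^n}^*(1-\gamma)^{(|y|+n)/2}$ and $(\rho_\gamma)_{1^n,1^n}=|a_{1^n}|^2(1-\gamma)^n$. Lemma~\ref{lem:pair_obstruction} is then violated for every $\gamma\in[0,1)$ as soon as some $|a_y|>|a_{1^n}|$. Exchangeability of the almost surely distinct moduli bounds the bad event by $2^{-n}$. That part is complete.

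The non-affineness part proves less than the paper intends. Your test uses one matrix entry that is nonlinear in $\gamma$, so it only shows that the parametrization $\gamma\mapsto\rho_\gamma$ is not an affine function of $\gamma$. The paper uses a reparametrization-invariant notion, which is explicit in its proof. It is also what the word ``curved'' refers to, and what the introduction's remark needs (a segment ending in $\mathcal S$ can enter it only once). Under that notion the trajectory is affine if $\rho_\gamma=(1-\lambda(\gamma))\rho_0+\lambda(\gamma)\rho_1$ for some scalar function $\lambda$, i.e.\ if it runs along the chord to $\ket{0^n}\!\bra{0^n}$.

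A single nonlinear entry cannot exclude this. The entry $|a_{1^n}|^2(1-\gamma)^n$ is exactly what a chord path with $\lambda(\gamma)=1-(1-\gamma)^n$ produces. Likewise, your $n=1$ entry $a_0a_1^*\sqrt{1-\gamma}$ is compatible with $\lambda(\gamma)=1-\sqrt{1-\gamma}$. What you need is two entries that vanish at $\gamma=1$ with non-proportional $\gamma$-dependence.

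The paper takes $(\rho_\gamma)_{1^n,1^n}\propto q^n$ and $(\rho_\gamma)_{0^n,1^n}=a_{0^n}a_{1^n}^*q^{n/2}$, with $q=1-\gamma$. On a chord both would equal $1-\lambda(\gamma)$ times their $\gamma=0$ values. That forces $q^n=q^{n/2}$ for all $0<q<1$, which is false. You already have both formulas: the second is your row-$1^n$ formula at $y=0^n$. So the repair is one line, it works for all $n\ge1$ without a separate $n=1$ case, and it costs only the additional null event $a_{0^n}=0$.
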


\begin{proof}
Set $q=1-\gamma$ and $y=1^n$. Matrix elements involving the basis 
state $y$ receive only the no-jump contribution on the $y$ side, 
so for any $x\ne y$,
\begin{equation}
\label{eq:Haar_formulas}
(\rho_\gamma)_{yy}=|a_y|^2 q^n,
\qquad
(\rho_\gamma)_{xy}=a_xa_y^* q^{(|x|+n)/2}.
\end{equation}
Every stabilizer mixture satisfies the pair-coherence obstruction
of Lemma~\ref{lem:pair_obstruction}. For a Haar-random state, all amplitudes are nonzero and the 
moduli $\{|a_z|\}$ are exchangeable and nondegenerate almost surely. 
On this full-measure event, if there exists $x\ne y$ with 
$|a_x|>|a_y|$, then for every $0<q\le 1$,
\begin{equation}
|(\rho_\gamma)_{xy}|-(\rho_\gamma)_{yy}
=
|a_y|\,q^{(|x|+n)/2}[|a_x|-|a_y|q^{(n-|x|)/2}]>0,
\end{equation}
since $|x|<n$ and $q^{(n-|x|)/2}\le 1$. The obstruction is therefore 
violated and $\rho_\gamma\notin\mathcal S$. The probability that 
$|a_{1^n}|$ is the largest modulus is $2^{-n}$, giving the stated 
probability bound.

For curvedness, an affine path $\rho_\gamma=(1-\lambda(\gamma))\rho_0
+\lambda(\gamma)\rho_1$ with $\rho_1=\ket{0^n}\!\bra{0^n}$ would 
require all matrix elements vanishing at $\gamma=1$ to share a 
common factor $1-\lambda(\gamma)$. In particular,
\begin{equation}
(\rho_\gamma)_{1^n,1^n}=|a_{1^n}|^2 q^n,
\qquad
(\rho_\gamma)_{0^n,1^n}=a_{0^n}a_{1^n}^* q^{n/2}
\end{equation}
would have to be proportional, requiring $q^n=q^{n/2}$ for all 
$0<q<1$, which fails. The obstruction holds whenever 
$a_{0^n}a_{1^n}\ne 0$, an event of probability one under the Haar 
measure.
\end{proof}

The two-qubit Family-B state 
$\ket{\psi_B}=\alpha\ket{01}+\beta\ket{10}$ has
$\mathcal R_B(\gamma)=1+2(1-\gamma)
[\alpha\beta-\min(\alpha^2,\beta^2)]$
(Appendix~\ref{sec:W_membership}). The same A/B contrast extends 
to all two-term basis superpositions. A computational-basis 
superposition $\alpha\ket{x}+\beta\ket{y}$ can have a finite 
stabilizer window followed by magic rebirth only when $x$ and $y$ 
are bitwise comparable. On the differing sites, the magic dynamics
then reduces to the GHZ-$X$ case
(Remark~\ref{rem:two_term_basis_cats}).

\section{Magic-generators and magic-insulators}
\label{sec:classification}

The re-entry results above start from nonstabilizer GHZ-type inputs.
A complementary question is what the same local damping does to
states that are initially free. Since amplitude damping is not a
stabilizer-preserving operation, a pure stabilizer input need not
remain in $\mathcal S$. In fact the response is completely
support-theoretic: homogeneous local amplitude damping splits pure
stabilizer states into \emph{magic-generators}, which leave
$\mathcal S$ immediately, and \emph{magic-insulators}, which stay in
$\mathcal S$ for the whole trajectory.

The two-qubit Bell states already show that this is not an
entanglement effect. Under homogeneous local amplitude damping,
\begin{equation}
\label{eq:Bell}
\mathcal R_{\Phi^+}(\gamma)=1+\gamma(1-\gamma),
\qquad
\mathcal R_{\Psi^+}(\gamma)=1,
\end{equation}
while their concurrences decay smoothly as $(1-\gamma)^2$ and
$1-\gamma$, respectively. Thus the locally Clifford equivalent
states $\ket{\Phi^+}$ and $\ket{\Psi^+}$ have opposite magic
responses to the same local Markovian channel. The reason is that
amplitude damping is not Clifford-covariant: $\ket{\Phi^+}$ has
computational support on Hamming weights $0$ and $2$, whereas
$\ket{\Psi^+}$ has constant-weight support $\{01,10\}$.

\begin{proposition}[Stabilizer inputs under amplitude damping]
\label{prop:AD_generator_classification_main}
Let $\ket{\phi}$ be a pure $n$-qubit stabilizer state, and let
$A=\Supp(\ket{\phi})$ be its computational-basis support. Under
homogeneous local amplitude damping, $\ket{\phi}$ is a
magic-insulator if and only if all strings in $A$ have the same
Hamming weight. If $A$ contains two different Hamming weights, then
$\ket{\phi}$ is a magic-generator:
\begin{equation}
\label{eq:generator_class_main}
\mathcal E_\gamma^{\otimes n}(\ket{\phi}\!\bra{\phi})
\notin\mathcal S
\qquad
\text{for every }0<\gamma<1 .
\end{equation}
Equivalently, for pure stabilizer inputs the stabilizer-membership
set is either the full interval $[0,1]$ or the two endpoints
$\{0,1\}$; finite death--rebirth windows do not occur.
\end{proposition}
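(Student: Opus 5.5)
The plan is to treat the two directions separately. The generator direction follows from Lemma~\ref{lem:pair_obstruction} applied to a suitable pair of support strings. The insulator direction follows from an explicit decomposition of the damped state into stabilizer branches, one per Kraus string. Throughout, write $q=1-\gamma$. I use the standard structure of a pure stabilizer state $\ket\phi=\sum_{x\in A}a_x\ket x$: its support $A$ is an affine subspace of $\mathbb F_2^n$, and all amplitudes have equal modulus $|a_x|=|A|^{-1/2}$~\cite{Dehaene2003Clifford}. For a subset $T\subseteq[n]$, the $n$-fold Kraus operator of $\mathcal E_\gamma^{\otimes n}$ that jumps exactly on $T$ is
\[
K_T=\gamma^{|T|/2}\,\Bigl(\bigotimes_{i\in T}\ket0\!\bra1_i\Bigr)\otimes\Bigl(\bigotimes_{i\notin T}E_0\Bigr).
\]
It maps $\ket x$ with $x_T=1^T$ to $\gamma^{|T|/2}q^{(|x|-|T|)/2}\ket{x\oplus 1_T}$, and annihilates every other basis state.

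\emph{Magic-generator direction.} Suppose $A$ contains two different Hamming weights. Pick $x\in A$ of maximal weight $w$, and pick $y\in A$ with $|y|<w$. I compute $\rho_{xx}$ and $\rho_{yx}$ by summing $K_T\ket\phi\!\bra\phi K_T^\dagger$ over $T$.

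A branch with $T\neq\emptyset$ contributes to row or column $x$ only if some $x'\in A$ satisfies $x'\oplus1_T=x$ with $x'_T=1^T$. Such an $x'$ would have weight $w+|T|>w$, which contradicts maximality. Hence only the no-jump branch reaches column $x$, and
\[
\rho_{xx}=|A|^{-1}q^{w},\qquad |\rho_{yx}|=|A|^{-1}q^{(|y|+w)/2}.
\]
For $0<q<1$ and $|y|<w$ we have $q^{(|y|+w)/2}>q^w$, so $|\rho_{yx}|>\rho_{xx}\ge\min(\rho_{xx},\rho_{yy})$. Lemma~\ref{lem:pair_obstruction} then gives $\rho\notin\mathcal S$ for every $\gamma\in(0,1)$. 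This is the same mechanism as in Proposition~\ref{prop:Haar_endpoint}, now with the maximal-weight string playing the role of $1^n$.

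\emph{Magic-insulator direction.} Suppose every string in $A$ has weight $w$. I show that each branch $K_T\ket\phi$ is a scalar multiple of a stabilizer state; this is the step I expect to need the most care. Restricted to $A$, $K_T$ factors as the projector $\Pi_T=\prod_{i\in T}(\id-Z_i)/2$, followed by $\bigotimes_{i\in T}X_i$, followed by the diagonal factor $\bigotimes_{i\notin T}E_0$.

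1. The projector $\Pi_T$ is a Pauli-$Z$ measurement projection, so $\Pi_T\ket\phi$ is either zero or proportional to a stabilizer state. Its support is $A_T=\{x\in A:x_T=1^T\}$.
2. On $A_T$, the map $\ket x\mapsto\ket{x\oplus1_T}$ agrees with the Clifford $\bigotimes_{i\in T}X_i$, so it preserves stabilizerness.
3. The diagonal factor $\bigotimes_{i\notin T}E_0$ acts on each surviving string by the factor $q^{(|x|-|T|)/2}=q^{(w-|T|)/2}$. Because every string in $A_T$ has weight $w$, this factor is the same scalar on all of $A_T$, so it does not change the state.

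Hence $\mathcal E_\gamma^{\otimes n}(\ket\phi\!\bra\phi)=\sum_T K_T\ket\phi\!\bra\phi K_T^\dagger$ is a nonnegative combination of pure stabilizer projectors. Trace preservation makes it a convex combination, so the state lies in $\mathcal S$ for all $\gamma\in[0,1]$.

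The point to verify is that equal weight is exactly what turns the non-Clifford damping factor into a scalar on each branch. In the mixed-weight case the same factor instead reweights the amplitudes unequally, and this is what the pair obstruction detects. The final ``equivalently'' clause then follows directly from the two directions together with the endpoints: $\gamma=0$ gives $\ket\phi\in\mathcal S$, and $\gamma=1$ gives $\ket{0^n}\!\bra{0^n}\in\mathcal S$. So the membership set is either $[0,1]$ or $\{0,1\}$, and no finite death--rebirth window can occur.
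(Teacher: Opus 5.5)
Your proposal is correct and follows the paper's proof. For generators, you apply the pair-coherence obstruction at a maximal-weight support string, where only the no-jump branch contributes; for insulators, you decompose the output into Kraus branches that are stabilizer because the damping factor $q^{(w-|T|)/2}$ is constant on a fixed-weight support. Your factorization $K_T\propto(\bigotimes_{i\notin T}E_0)(\bigotimes_{i\in T}X_i)\Pi_T$ is a slightly more operational phrasing of the paper's argument via affine images of the support and transported Clifford phases.
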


\begin{figure}[b]
\centering
\includegraphics[width=0.85\columnwidth]{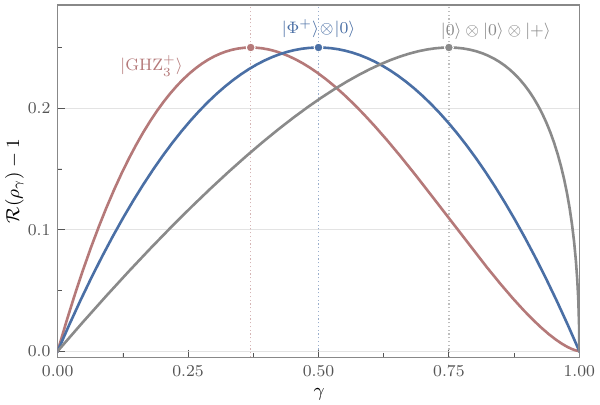}
\caption{%
\textbf{Variety of stabilizer-input magic generation.}
Robustness excess $\mathcal{R}(\rho_\gamma)-1$ for three
representative three-qubit pure stabilizer magic-generators. The
three-qubit GHZ state $(\ket{000}+\ket{111})/\sqrt2$ (red), the
partial-Bell state $(\ket{00}+\ket{11})\ket0/\sqrt2$ (gray), and
the product state $\ket{00+}$ (green) all have mixed-Hamming-weight
support and therefore leave $\mathcal S$ immediately under
amplitude damping. The peak positions
$\gamma_*\simeq0.37,0.50,0.75$ reflect distinct support structures
and agree with the closed profiles in Eq.~\eqref{eq:generator_profiles_main}.
All curves return to zero only at the relaxation endpoint
$\gamma=1$.}
\label{fig:generator_variety}
\end{figure}

The proof is short enough to expose the mechanism. Pure stabilizer
states have affine computational-basis support and equal-modulus
nonzero amplitudes. If the support has fixed Hamming weight, every
amplitude-damping Kraus branch either vanishes or maps the state to a
branch-normalized stabilizer state, so the output is a stabilizer
mixture. Conversely, if $A$ has mixed weight, choose a
maximal-weight string $y\in A$ and a lower-weight string $x\in A$.
With $q=1-\gamma$, the no-jump branch gives
\begin{equation}
\label{eq:generator_entries_main}
(\rho_\gamma)_{yy}=2^{-m}q^{|y|},
\qquad
|(\rho_\gamma)_{xy}|=2^{-m}q^{(|x|+|y|)/2},
\end{equation}
where $2^m=|A|$. Hence
$|(\rho_\gamma)_{xy}|/(\rho_\gamma)_{yy}
=q^{-(|y|-|x|)/2}>1$ for every $0<\gamma<1$. This violates
Lemma~\ref{lem:pair_obstruction}. Full details, the CSS
specialization, and counting bounds are given in
Appendix~\ref{sec:AD_generators}.

The classification is much broader than the Bell example. For CSS
stabilizer states with affine-coset support $a+C$, the test reduces
to the classical question of whether $a+C$ is a constant-weight
coset. Thus the uniform code state $|C|^{-1/2}\sum_{c\in C}\ket c$
of any nonzero linear code is a magic-generator, because its support
contains both $0^n$ and a nonzero codeword. This gives a
resource-theoretic angle on the structural mismatch between
Pauli/CSS codes and amplitude-damping noise~\cite{Leung1997Approximate,Fletcher2008Channel,Shor2011High}.
The insulator class is small: among pure stabilizer states the
insulator fraction decreases from $8/60$ at $n=2$ to
$17624/315057600\simeq 5.6\times10^{-5}$ at $n=6$, and in general
is bounded by $2^{-n^2/8+O(n)}$
(Proposition~\ref{prop:insulator_rarity}).

Magic-generators are not dynamically uniform. Three representative
three-qubit stabilizer inputs give the closed profiles
\begin{equation}
\label{eq:generator_profiles_main}
\begin{aligned}
\mathcal R_{\mathrm{GHZ}_3}(\gamma)-1
&=q^{3/2}-q^3,\\
\mathcal R_{\Phi^+\otimes 0}(\gamma)-1
&=\gamma(1-\gamma),\\
\mathcal R_{00+}(\gamma)-1
&=\sqrt q-q,
\end{aligned}
\qquad q=1-\gamma .
\end{equation}
They all vanish at the stabilizer endpoints $\gamma=0$ and
$\gamma=1$, but peak at different damping strengths. This variety is
shown in Fig.~\ref{fig:generator_variety}. The product-state curve is
also the one-qubit passive ceiling used in
Eq.~\eqref{eq:naive}, whereas the GHZ curve is the stabilizer-input
boundary of the collective cat-injection protocol.

\section{Discussion and outlook}
\label{sec:discussion}

Unlike non-Markovian entanglement
revivals or reservoir sudden birth~\cite{Bellomo2007NonMarkovian,
Lopez2008Sudden}, the return here is Markovian and non-unital,
concerning magic rather than entanglement. In Regimes~I and~II, the
whole reborn branch lies in fully separable states with stabilizer
marginals, giving a dynamical realization of many-body magic hidden
from local reductions~\cite{Liu2022Many,Wei2024Noise,Wei2026Long}.
This is distinct from the nonlocal nonstabilizerness of
Ref.~\cite{Qian2025Quantum}, defined via local-unitary minimization,
and from the stronger pure-state notion of long-range
magic~\cite{Ellison2021Symmetry,Korbany2025Long,Wei2026Long}, which
captures magic robust against shallow circuits. The Bell-state
splitting and the magic--entanglement complementarity identified
here complement recent studies of the relation between magic and
entanglement structure~\cite{Tirrito2024Quantifying,
Dowling2025Bridging}. A complementary thermodynamic perspective on bath-generated nonstabilizerness, with athermality as the operative resource, is developed in~\cite{Junior2026Trading}.

The closed-form thresholds also give an exactly solvable instance of
mixed-state stabilizer-polytope membership, whose general decision
problem requires super-exponential time~\cite{Leone2026Unbearable}.
Related tolerant-testing questions for closeness to pure stabilizer states are
addressed in~\cite{IyerLiang2024Tolerant}, while membership in reduced
stabilizer polytopes specified by limited Pauli expectation data is
NP-hard in~\cite{Varela2026Predicting}. The robustness excess
$\mathcal R-1$ is used throughout because it is a faithful
membership diagnostic on this manifold, unlike linearized
stabilizer-R\'enyi quantities (Appendix~\ref{sec:GHZX}). Inside the
stabilizer window the state is a stabilizer mixture, so stabilizer circuits acting on it remain efficiently classically
simulable~\cite{Aaronson2004Improved}. Outside the window, the 
robustness controls the classical-simulation 
overhead~\cite{Pashayan2015Estimating,
Bravyi2016Improved,Bravyi2019Simulation,Seddon2021Quantifying}. The
simulation overhead is therefore non-monotone in $\gamma$, driven by
the channel's non-unitality.

Experimentally, the two-qubit Family-A slice
$\alpha\ket{00}+\beta\ket{11}$ is the most direct target for the
magic--entanglement reflection $\gamma_e^{(n)}+\gamma_+^{(n)}=1$.
Once the $X$-slice is validated by symmetry twirling or partial
tomography, $\langle ZI\rangle$, $\langle ZZ\rangle$, and
$\langle XX\rangle$ locate the stabilizer window. At fixed amplitude,
the stabilizer window $[\gamma_-^{(n)},\gamma_+^{(n)}]$ narrows
super-exponentially with $n$, while the signed separation
$\gamma_e^{(n)}-\gamma_+^{(n)}=2r^{2/n}-1$ between the reflected
thresholds approaches $1$. Engineered-dissipation techniques, including
superconducting implementations~\cite{Mi2024Stable}, place such
trajectories within reach. The $n=2$ Regime~II window
$\gamma\in[0.324,0.564]$ at $\alpha=0.4$ corresponds to
$\kappa t\in[0.391,0.829]$, well within typical $T_1$ budgets. The channel-level requirements are equally concrete: cold rebirth
requires the semigroup condition $T_2>T_1$; thermal occupation must
stay below $\vartheta_c$; and the syndrome-readout budget
$\epsilon<\epsilon_c\approx 4.8\%$ at the yield optimum exceeds
the percent-level stabilizer-readout errors of current surface-code
experiments~\cite{Acharya2023Suppressing,GoogleQEC2025Below}.

Interior Dicke states $\ket{D_n^k}$ with $2\le k\le n-2$, treated in
Appendix~\ref{sec:Dicke}, remain nonstabilizer until $\gamma=1$.
Sharp thresholds 
for the higher anti-$W$ line beyond $n=4$, mixed stabilizer inputs 
in the interior of $\mathcal S$, noisy logical subspaces, active 
recovery, and non-CSS code spaces remain concrete open cases.
Amplitude damping followed by parity-syndrome extraction provides
a passive complement to active magic-state preparation
in fault-tolerant schemes~\cite{Bravyi2005Universal,Litinski2019A,
Gidney2024Magic}, with per-register yield $\alpha^2/2$ and output
quality beyond the one-qubit amplitude-damping ceiling.

\begin{acknowledgments}
I am grateful to Jens Eisert for the supportive research environment in Berlin and for many discussions on magic, entanglement, and noise. I thank Qi Zhao for hosting me during part of this work and for many stimulating discussions, and Zidan Wang for supporting my research at HKIQST. I am also grateful to Salvatore F.E. Oliviero, Zhenhuan Liu, Lennart Bittel, Ryotaro Suzuki, Yifan Tang, and Jue Xu for feedback on the manuscript. This work was supported by the Alexander von Humboldt Foundation.
\end{acknowledgments}

\onecolumngrid
\appendix

\section{Notation and preliminaries}
\label{sec:preliminaries}

For a bit string $x=(x_1,\dots,x_n)\in\{0,1\}^n$, we write
$\ket{x}=\ket{x_1}\otimes\cdots\otimes\ket{x_n}$ and denote its
Hamming weight by $|x|=\sum_{i=1}^n x_i$. For $k=0,1,\dots,n$, let
\begin{equation}
\label{eq:Dk}
D_k:=
\binom{n}{k}^{-1}
\sum_{|x|=k}\ket{x}\!\bra{x}
\end{equation}
be the uniform mixture over computational-basis states of weight $k$.
The corresponding pure Dicke state is
\begin{equation}
\label{eq:Dicke_state}
\ket{D_n^k}
:=
\binom{n}{k}^{-1/2}
\sum_{|x|=k}\ket{x},
\qquad
0\le k\le n,
\end{equation}
so that $D_k$ is the diagonal part of $\ket{D_n^k}\!\bra{D_n^k}$. The
cases $k=0,n$ are the computational-basis stabilizer states
$\ket{0^n},\ket{1^n}$, $k=1$ is the symmetric $W$ state, and
$k=n-1$ is the anti-$W$ edge.

The GHZ family is
\begin{equation}
\label{eq:GHZ_family}
\ket{\psi_n}
=
\alpha\ket{0^n}+\beta\ket{1^n},
\qquad
\alpha,\beta>0,\qquad
\alpha^2+\beta^2=1,
\end{equation}
with $r:=\alpha/\beta$. We also use
\begin{equation}
\label{eq:GHZpm}
\ket{\mathrm{GHZ}_n^\pm}
:=
\frac{\ket{0^n}\pm\ket{1^n}}{\sqrt{2}}.
\end{equation}
When referring to the amplitude-damped GHZ trajectory
$\rho_n(\gamma)$ of Eq.~\eqref{eq:rho_n}, we keep the notation
$P_0$, $P_k$, $P_n$, and $c$ from the main text.

The real GHZ-$X$ manifold consists of states
\begin{equation}
\label{eq:real_GHZX}
\rho
=
\sum_{x\in\{0,1\}^n}p_x\ket{x}\!\bra{x}
+
c\bigl(\ket{0^n}\!\bra{1^n}+\mathrm{h.c.}\bigr),
\qquad
p_x\ge0,\quad
\sum_x p_x=1,\quad
c\in\mathbb R,\quad
c^2\le p_{0^n}p_{1^n}.
\end{equation}

The generalized-$W$ family is
\begin{equation}
\label{eq:W_family}
\ket{\psi_W}
=
\sum_{i=1}^n w_i\ket{e_i},
\qquad
w_i\ge 0,\qquad
\sum_{i=1}^n w_i^2=1,
\end{equation}
where $\ket{e_i}$ is the single-excitation basis state with a $1$ on
site $i$ and $0$ elsewhere. Phases are fixed so that
$\alpha,\beta,w_i$, and the GHZ-$X$ coherence $c$ are real.

\smallskip\noindent
The single-qubit amplitude-damping channel $\mathcal E_\gamma$ has
Kraus operators
\begin{equation}
\label{eq:AD_Kraus}
E_0=\ket{0}\!\bra{0}+\sqrt{1-\gamma}\,\ket{1}\!\bra{1},
\qquad
E_1=\sqrt{\gamma}\,\ket{0}\!\bra{1},
\end{equation}
and acts as
\begin{equation}
\label{eq:AD_action}
\begin{aligned}
&\mathcal E_\gamma(\ket{0}\!\bra{0})=\ket{0}\!\bra{0},\\
&\mathcal E_\gamma(\ket{1}\!\bra{1})
=(1-\gamma)\ket{1}\!\bra{1}+\gamma\ket{0}\!\bra{0},\\
&\mathcal E_\gamma(\ket{0}\!\bra{1})
=\sqrt{1-\gamma}\,\ket{0}\!\bra{1}.
\end{aligned}
\end{equation}
On $n$ qubits we use $\mathcal E_\gamma^{\otimes n}$. For 
$J\subseteq\{1,\ldots,n\}$, we write $K_J$ for the Kraus branch in which the qubits in $J$ undergo the jump $E_1$ and the remaining qubits undergo $E_0$, and write $y\setminus J$ for the bit string obtained from $y$ by resetting the bits at positions in $J$ to zero. If 
$\mathrm{supp}(y):=\{i:y_i=1\}$, then on a computational-basis state 
this branch acts as
\begin{equation}
\label{eq:KJ_action}
K_J\ket{y}
=
\begin{cases}
\sqrt{\gamma^{|J|}\,(1-\gamma)^{|y|-|J|}}\;\ket{y\setminus J}
& \text{if } J\subseteq\mathrm{supp}(y),
\\
0 & \text{otherwise}.
\end{cases}
\end{equation}
Then $\mathcal E_\gamma^{\otimes n}(\rho)=\sum_J K_J\rho K_J^\dagger$. The
branches $K_J$ enter the Dicke and affine-plane analyses of
Appendix~\ref{sec:Dicke}.
We use the phase-flip convention 
$\mathcal D_p(\rho)=(1-p)\rho+pZ\rho Z$, so local dephasing of 
strength $p$ multiplies single-qubit coherences by $(1-2p)$ and 
leaves populations unchanged. Thus, for the GHZ sector,
only the absolute factor
$|1-2p|^n(1-\gamma)^{n/2}$ enters the stabilizer-polytope membership
criteria. The sign of the resulting GHZ endpoint coherence, when 
present for $p>1/2$, is absorbed by a single local Clifford $Z$ 
rotation and does not affect membership or thresholds.

For the real phase-covariant CPTP channel class used in the main
text (Sec.~\ref{sec:channelcond}) and in
Appendix~\ref{sec:phase_boundaries}, we write
\begin{equation}
\label{eq:phasecovariant}
\begin{aligned}
&\mathcal E_\gamma(\ket{0}\!\bra{0})=\ket{0}\!\bra{0},\\
&\mathcal E_\gamma(\ket{1}\!\bra{1})
=(1-\gamma)\ket{1}\!\bra{1}+\gamma\ket{0}\!\bra{0},\\
&\mathcal E_\gamma(\ket{0}\!\bra{1})
=\lambda(\gamma)\ket{0}\!\bra{1},
\end{aligned}
\end{equation}
with $\lambda(\gamma)\in\mathbb R$ and
$\lambda(\gamma)^2\le 1-\gamma$. The first line fixes the ground state.

\smallskip\noindent
The $n$-qubit stabilizer polytope is
\begin{equation}
\label{eq:stab_polytope}
\mathcal S:=
\mathrm{conv}\!\left\{
\ket{\phi}\!\bra{\phi}:\ket{\phi}\ \text{pure stabilizer}\right\}.
\end{equation}
On one qubit this reduces to the octahedron
\begin{equation}
\label{eq:octahedron}
\mathcal O
=
\left\{
\rho=\frac{1}{2}(\id+xX+yY+zZ):
|x|+|y|+|z|\le 1
\right\}.
\end{equation}

We denote by $\ket H$ and $\ket T$ the canonical single-qubit magic 
states with Bloch vectors along the $(X+Z)/\sqrt 2$ and 
$(X+Y+Z)/\sqrt 3$ axes, respectively:
\begin{equation}
\label{eq:HT_states}
\ket H\!\bra H
=
\frac{1}{2}\left(\id+\frac{X+Z}{\sqrt 2}\right),
\qquad
\ket T\!\bra T
=
\frac{1}{2}\left(\id+\frac{X+Y+Z}{\sqrt 3}\right).
\end{equation}
These are the pure target states for the standard Bravyi--Kitaev 
$\ket H$- and $\ket T$-type distillation protocols~\cite{Bravyi2005Universal}.

\smallskip\noindent
We use the signed-decomposition robustness
\begin{equation}
\label{eq:RoM_primal}
\mathcal R(\rho)
:=
\min\left\{
\sum_j |q_j|:
\rho=\sum_j q_j \sigma_j,\ \sigma_j\in\mathcal S
\right\}.
\end{equation}
Equivalently,
\begin{equation}
\label{eq:RoM_dual}
\mathcal R(\rho)
=
\max\left\{
\Tr(W\rho):
W=W^\dagger,\
|\Tr(W\sigma)|\le 1\ \forall\,\sigma\in\mathcal S
\right\}.
\end{equation}
Appendixes~\ref{sec:GHZX} and~\ref{sec:W_membership} give
matching primal decompositions and dual witnesses for the GHZ and
$W$-type trajectories.

For a bipartition $A|B$, $T_A$ denotes partial transposition on
subsystem $A$. The bipartite negativity
is~\cite{Vidal2002Negativity}
\begin{equation}
\label{eq:negativity}
\mathcal N_{A|B}(\rho)
:=
\frac{\|\rho^{T_A}\|_1-1}{2}.
\end{equation}
For the GHZ trajectories studied here, vanishing of the relevant
negativity is detected by a single $2\times2$ partial-transpose block.
For two-qubit states, $\mathcal C(\rho)$ denotes the Wootters
concurrence~\cite{Wootters1998Entanglement}.

\smallskip\noindent
Every pure $n$-qubit stabilizer state admits a computational-basis
expansion
\begin{equation}
\label{eq:affine_form}
\ket{\phi}
=
2^{-m/2}
\sum_{x\in A}
i^{\ell(x)}(-1)^{q(x)}\ket{x},
\end{equation}
where $A\subseteq\{0,1\}^n$ is an affine subspace of size $2^m$, and
$\ell,q$ are linear and quadratic forms over $\mathbb Z_2$ restricted
to $A$~\cite{Dehaene2003Clifford}. We use only the following
consequences.

\begin{enumerate}[label=(\roman*),leftmargin=2.3em]
\item All nonzero computational-basis amplitudes of a pure stabilizer
state have equal modulus.

\item The computational-basis support of a pure stabilizer state is an
affine subspace.

\item A pure stabilizer state supported on exactly two computational
basis states is an equal-modulus superposition of those two states
with relative phase in $\{\pm 1,\pm i\}$. Conversely, every state
$(\ket{x}+\omega\ket{y})/\sqrt 2$ with $x\ne y$ and
$\omega\in\{\pm 1,\pm i\}$ is a pure stabilizer state. Pauli
$X$'s and CNOTs reduce it to a single-qubit Clifford eigenstate.
The pure stabilizer states supported on $\{0^n,1^n\}$ are
$\ket{\mathrm{GHZ}_n^\pm}$ and
$(\ket{0^n}\pm i\ket{1^n})/\sqrt{2}$. Of these, only
$\ket{\mathrm{GHZ}_n^\pm}$ lie in the real GHZ-$X$ manifold.
\end{enumerate}

\begin{figure*}[t]
\centering
\includegraphics[width=18cm]{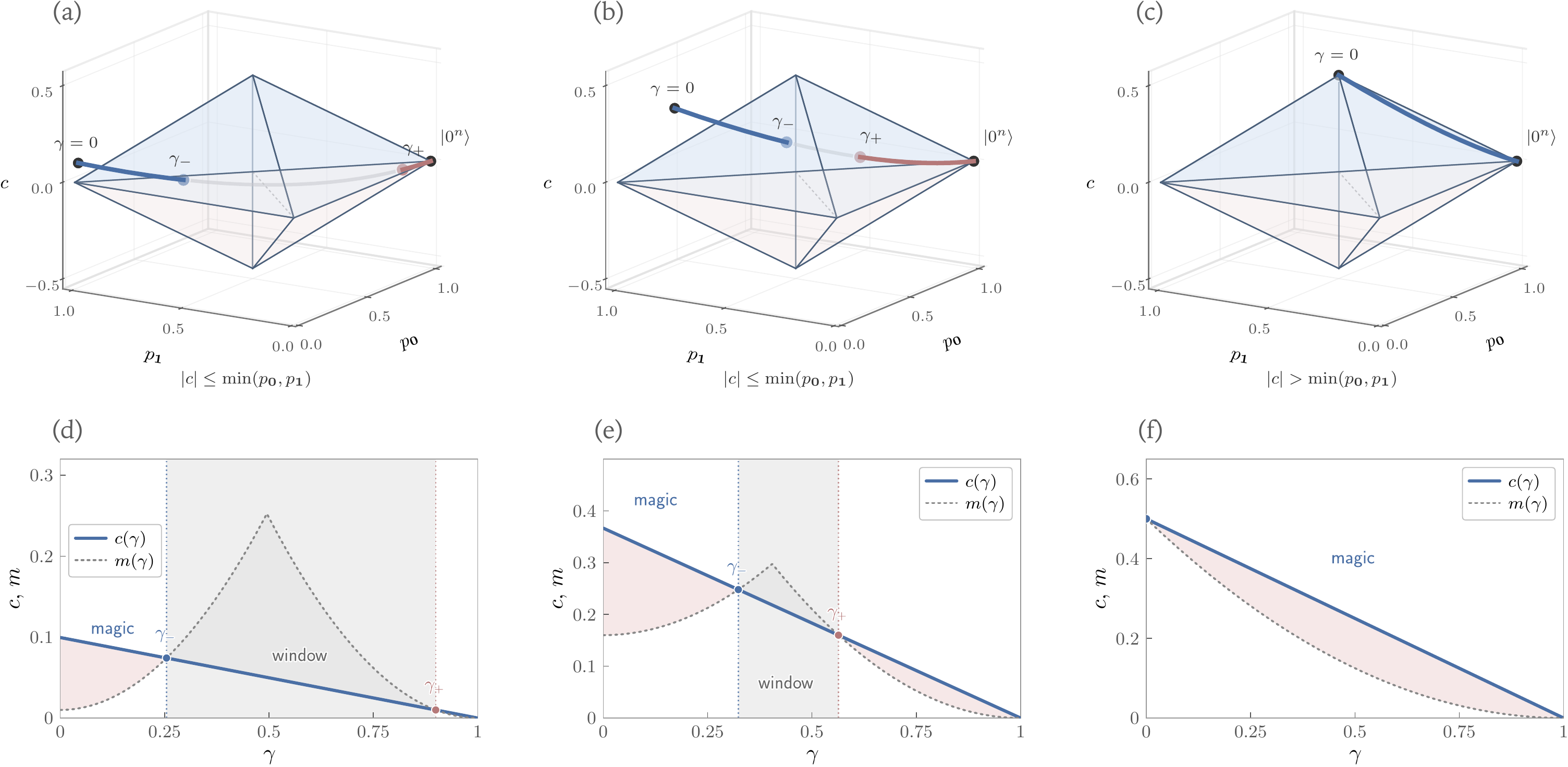}
\caption{%
\textbf{Geometry of the real GHZ-$X$ stabilizer cross-section and
scalar obstruction across the re-entry transition.}
Top row: the real GHZ-$X$ stabilizer condition
$|c|\le \min(p_{\mathbf 0}, p_{\mathbf 1})$ projects onto a 
rhombic pyramid in $(p_{\mathbf 0}, p_{\mathbf 1}, c)$ coordinates, 
shown for $n=2$ with amplitude-damped GHZ trajectories at
\textbf{(a)} $\alpha=0.1$ (deep Regime~I),
\textbf{(b)} $\alpha=0.4$ (Regime~II), and
\textbf{(c)} $\alpha=\alpha^*=1/\sqrt{2}$ (critical).
For $\alpha<\alpha^*$ the trajectory enters the pyramid through
the $c=p_{\mathbf 0}$ face at $\gamma_-$ and exits through the
$c=p_{\mathbf 1}$ face at $\gamma_+$, converging to the stabilizer
vertex $|0^n\rangle$. Trajectory color encodes the three phases
(magic, stabilizer window, reborn magic). At $\alpha=\alpha^*=1/\sqrt 2$ the trajectory starts exactly at the
upper base vertex $v_{\mathrm{Top}}=(1/2,1/2,1/2)$,
where the two upper endpoint faces meet, and immediately leaves the
stabilizer region through the $P_n=c$ face for every $\gamma>0$.
The stabilizer window collapses to this single boundary point.
Bottom row: scalar obstruction view for the same three cases,
showing the GHZ coherence $c(\gamma)$ (solid blue) against the
endpoint minimum $m(\gamma)=\min(P_0,P_n)$ (dotted gray). The two
curves intersect at $\gamma_-$ and $\gamma_+$, bracketing the
stabilizer window (gray band); red shading marks the magic region
$c>m$. As $\alpha$ approaches $\alpha^*$ from below, the stabilizer window shrinks and collapses to a single point, and
panel~(f) shows that at the critical amplitude $\alpha^*=1/\sqrt 2$
the curves $c(\gamma)$ and $\min(P_0,P_n)$ touch at $\gamma=0$ and 
again at the endpoint $\gamma=1$, with $c>\min(P_0,P_n)$ throughout 
$0<\gamma<1$.}
\label{fig:wedge_obstruction}
\end{figure*}

\section{The real GHZ-\texorpdfstring{$X$}{X} cross-section}
\label{sec:GHZX}

Write $p_{\mathbf 0}:=p_{0^n}$ and $p_{\mathbf 1}:=p_{1^n}$ for the 
two endpoint populations of a state in the real GHZ-$X$ manifold
\eqref{eq:real_GHZX}.

\begin{theorem}[Real GHZ-$X$ cross-section]
\label{thm:GHZX_cross_section}
Let
\begin{equation}
\label{eq:GHZX_state}
\rho=
\sum_x p_x\ket{x}\!\bra{x}
+
c\bigl(\ket{0^n}\!\bra{1^n}+\mathrm{h.c.}\bigr),
\qquad c\in\mathbb R,
\end{equation}
be a density operator in the real GHZ-$X$ manifold. Then
\begin{equation}
\label{eq:GHZX_membership}
\rho\in\mathcal S
\quad\Longleftrightarrow\quad
|c|\le \min(p_{\mathbf 0},p_{\mathbf 1}).
\end{equation}
Moreover its robustness of magic is
\begin{equation}
\label{eq:GHZX_RoM}
\mathcal R(\rho)
=
1+2\max\{0,\;|c|-p_{\mathbf 0},\;|c|-p_{\mathbf 1}\}.
\end{equation}
\end{theorem}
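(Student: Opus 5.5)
The plan is to prove the membership criterion and the robustness formula together, by giving for each regime a primal signed decomposition and a matching dual witness in the form~\eqref{eq:RoM_dual}. Set $s:=\sgn c$ and use $\ket{\mathrm{GHZ}_n^{s}}$ (that is, $\ket{\mathrm{GHZ}_n^+}$ if $c\ge0$ and $\ket{\mathrm{GHZ}_n^-}$ if $c<0$). Both are real pure stabilizer states by item~(iii) of Appendix~\ref{sec:preliminaries}.

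\emph{Membership.} Necessity of $|c|\le\min(p_{\mathbf 0},p_{\mathbf 1})$ is Lemma~\ref{lem:pair_obstruction} applied to the pair $(0^n,1^n)$. For sufficiency I will reuse the decomposition from the proof sketch of Theorem~\ref{thm:GHZ}, with the sign of $c$ absorbed into the choice of GHZ state:
\begin{equation*}
\rho=2|c|\,\ket{\mathrm{GHZ}_n^{s}}\!\bra{\mathrm{GHZ}_n^{s}}
+\sum_{x\ne0^n,1^n}p_x\ket{x}\!\bra{x}
+(p_{\mathbf 0}-|c|)\ket{0^n}\!\bra{0^n}+(p_{\mathbf 1}-|c|)\ket{1^n}\!\bra{1^n}.
\end{equation*}
The GHZ projector contributes $|c|$ to each endpoint population and $s|c|=c$ to the coherence, so this identity holds for every real $c$. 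When $|c|\le\min(p_{\mathbf 0},p_{\mathbf 1})$ all weights are nonnegative, and $\rho$ is a convex combination of stabilizer states.

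\emph{Robustness, upper bound.} The same identity is always a signed stabilizer decomposition. Its weights sum to $1$, so the sum of their absolute values is $1+2N$, where $N$ is the total negative weight. The condition $c^2\le p_{\mathbf 0}p_{\mathbf 1}$ forces $|c|\le\max(p_{\mathbf 0},p_{\mathbf 1})$, so at most one of $p_{\mathbf 0}-|c|$ and $p_{\mathbf 1}-|c|$ is negative. Hence $N=\max\{0,|c|-p_{\mathbf 0},|c|-p_{\mathbf 1}\}$, which gives $\mathcal R(\rho)\le1+2\max\{0,|c|-p_{\mathbf 0},|c|-p_{\mathbf 1}\}$.

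\emph{Robustness, lower bound.} Inside $\mathcal S$ the witness $W=\id$ gives $\mathcal R\ge1$. Outside $\mathcal S$, suppose without loss of generality that $|c|>p_{\mathbf 1}$; the case $|c|>p_{\mathbf 0}$ is symmetric under $0^n\leftrightarrow1^n$. I will take
\begin{equation*}
W=\id-2\ket{1^n}\!\bra{1^n}+s\bigl(\ket{0^n}\!\bra{1^n}+\ket{1^n}\!\bra{0^n}\bigr),
\end{equation*}
which gives $\Tr(W\rho)=1-2p_{\mathbf 1}+2|c|$, exactly the target value. It remains to check $|\Tr(W\phi)|\le1$ on pure stabilizer states $\phi$; convexity then extends the bound to all of $\mathcal S$. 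With endpoint amplitudes $\phi_0,\phi_1$,
\begin{equation*}
\Tr(W\phi)=1-2|\phi_1|^2+2s\,\mathrm{Re}(\phi_0\phi_1^*).
\end{equation*}
The upper bound $\Tr(W\phi)\le1$ needs $|\phi_0\phi_1|\le|\phi_1|^2$. This holds by the equal-modulus property (i) when both amplitudes are nonzero, and trivially when either vanishes. The lower bound $\Tr(W\phi)\ge-1$ needs $|\phi_1|^2+|\phi_0\phi_1|\le1$. If both amplitudes are nonzero they share a common modulus squared $2^{-m}$ with $m\ge1$, so the left side is $2^{1-m}\le1$; if $\phi_0=0$ it is $|\phi_1|^2\le1$.

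I expect the main obstacle to be finding this witness, specifically the choice of coefficient $1$ rather than $2$ on the coherence term, so that its value on $\rho$ matches the primal bound. Once $W$ is fixed, verifying it reduces to the two elementary equal-modulus inequalities above.
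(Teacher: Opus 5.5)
Your proof is correct and has the same primal–dual structure as the paper's. Necessity via the pair-coherence obstruction, the convex decomposition with $\ket{\mathrm{GHZ}_n^{s}}$, and the dual witness all match; your $W$ is exactly the paper's $W_{s,1}$.

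The one genuine difference is the primal upper bound.

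\begin{itemize}
\item \textbf{Your route.} You reuse the sufficiency identity as a signed decomposition, so the negative weight falls on $\ket{1^n}\!\bra{1^n}$ or $\ket{0^n}\!\bra{0^n}$. This needs the positivity constraint $c^2\le p_{\mathbf 0}p_{\mathbf 1}$ to guarantee that at most one endpoint weight is negative, and you correctly supply it.
\item \textbf{The paper's route.} It adds $t\,\sigma_{-s}$ (the opposite-sign GHZ projector) to push $\rho$ onto a facet $\omega$, then writes $\rho=(1+t)\omega-t\sigma_{-s}$. This also costs $1+2t$, and it does not invoke positivity of $\rho$.
\end{itemize}

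Your version is shorter, because one identity serves both for membership and for the robustness upper bound.
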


\noindent
Projected onto the $(p_{\mathbf 0}, p_{\mathbf 1}, c)$ coordinates, 
the stabilizer condition carves out a rhombic pyramid with apex 
$(0,0,0)$ and base in the plane $p_{\mathbf 0}+p_{\mathbf 1}=1$. 
Its geometry and intersection with the amplitude-damped GHZ 
trajectory are illustrated in Fig.~\ref{fig:wedge_obstruction}.

\begin{proof}
By definition of the stabilizer polytope, $\rho\in\mathcal S$ means
that $\rho$ admits a convex decomposition into pure stabilizer
states,
\begin{equation}
\rho=\sum_a \mu_a\ket{\phi_a}\!\bra{\phi_a},
\qquad
\mu_a\ge0,\qquad
\sum_a\mu_a=1.
\end{equation}
Let $u_a=\langle 0^n|\phi_a\rangle$ and $v_a=\langle 1^n|\phi_a\rangle$. The
endpoint coherence of $\rho$ is
\begin{equation}
c=\sum_a \mu_a u_a v_a^* .
\end{equation}
For any pure stabilizer state, the endpoint amplitudes are either
zero on at least one endpoint, or have equal modulus. Hence, for
each component,
\begin{equation}
|u_a v_a^*|\le |u_a|^2,
\qquad
|u_a v_a^*|\le |v_a|^2 .
\end{equation}
Therefore every stabilizer decomposition obeys
\begin{equation}
\label{eq:GHZX_necessity_both}
\begin{aligned}
|c|
&=
\left|\sum_a\mu_a u_a v_a^*\right|
\\
&\le
\sum_a\mu_a |u_a v_a^*|
\\
&\le
\sum_a\mu_a |u_a|^2
=
p_{\mathbf 0},
\\[2mm]
|c|
&\le
\sum_a\mu_a |u_a v_a^*|
\\
&\le
\sum_a\mu_a |v_a|^2
=
p_{\mathbf 1}.
\end{aligned}
\end{equation}
This proves necessity.

Conversely assume $|c|\le \min(p_{\mathbf 0},p_{\mathbf 1})$. Let
$s=\sgn(c)$ if $c\neq0$ and choose either sign if $c=0$. Define
\begin{equation}
\ket{G_s}:=\frac{\ket{0^n}+s\ket{1^n}}{\sqrt2}.
\end{equation}
Then
\begin{equation}
\label{eq:GHZX_decomp_inside}
\rho
=
2|c|\,\ket{G_s}\!\bra{G_s}
+
(p_{\mathbf 0}-|c|)\ket{0^n}\!\bra{0^n}
+
(p_{\mathbf 1}-|c|)\ket{1^n}\!\bra{1^n}
+
\sum_{x\notin\{0^n,1^n\}}p_x\ket{x}\!\bra{x}.
\end{equation}
All coefficients are non-negative and sum to one. The states
appearing in \eqref{eq:GHZX_decomp_inside} are stabilizer states,
so $\rho\in\mathcal S$. This proves
\eqref{eq:GHZX_membership}.

For $s=\pm1$ and $j\in\{0,1\}$, define
\begin{equation}
\label{eq:GHZX_witness}
W_{s,j}
=
\id
+
s\bigl(\ket{0^n}\!\bra{1^n}+\ket{1^n}\!\bra{0^n}\bigr)
-
2\ket{j^n}\!\bra{j^n}.
\end{equation}
We claim that $W_{s,j}$ is dual feasible:
\begin{equation}
\label{eq:GHZX_dual_feasible}
|\Tr(W_{s,j}\sigma)|\le1
\qquad
\forall\,\sigma\in\mathcal S.
\end{equation}
By convexity of $\mathcal S$ it is enough to check pure stabilizer
states $\ket{\phi}$. Let
$u=\langle 0^n|\phi\rangle$ and $v=\langle 1^n|\phi\rangle$. If one endpoint is
absent from the support, then
\begin{equation}
\Tr(W_{s,j}\ket{\phi}\!\bra{\phi})
=
1-2|\langle j^n|\phi\rangle|^2\in[-1,1].
\end{equation}
If both endpoints are present, then $|u|^2=|v|^2=a^2$ and 
$a^2\le 1/2$ since the support contains at least these two basis 
states with equal weight. Hence
\begin{equation}
\label{eq:GHZX_witness_bounds}
\begin{aligned}
\Tr(W_{s,j}\ket{\phi}\!\bra{\phi})
&=
1+2s\,\mathrm{Re}(uv^*)-2a^2
\\
&\le
1+2a^2-2a^2
=
1,
\\
\Tr(W_{s,j}\ket{\phi}\!\bra{\phi})
&\ge
1-2a^2-2a^2
\\
&=
1-4a^2
\ge -1 .
\end{aligned}
\end{equation}
Thus \eqref{eq:GHZX_dual_feasible} holds.

Taking $s$ such that $sc=|c|$ (arbitrary if $c=0$), Eq.~\eqref{eq:GHZX_witness} gives
\begin{equation}
\Tr(W_{s,j}\rho)
=
1+2|c|-2p_{j^n}.
\end{equation}
Together with the trivial dual feasible witness $\id$, this yields
\begin{equation}
\label{eq:GHZX_dual_lower}
\mathcal R(\rho)
\ge
1+2\max\{0,\;|c|-p_{\mathbf 0},\;|c|-p_{\mathbf 1}\}.
\end{equation}

It remains to match this lower bound. Set
\begin{equation}
m:=\min(p_{\mathbf 0},p_{\mathbf 1}),
\qquad
t:=\max\{0,|c|-m\}.
\end{equation}
If $t=0$, the convex decomposition
\eqref{eq:GHZX_decomp_inside} has cost one. Suppose $t>0$.
Then $|c|>m$. Let $s=\sgn(c)$ and set
\begin{equation}
\sigma_{-s}:=\ket{G_{-s}}\!\bra{G_{-s}},
\qquad
\omega:=\frac{\rho+t\sigma_{-s}}{1+t}.
\end{equation}
The endpoint populations and coherence of $\omega$ are
\begin{equation}
\label{eq:GHZX_omega_entries}
p_{\mathbf 0}^{(\omega)}
=
\frac{p_{\mathbf 0}+t/2}{1+t},
\qquad
p_{\mathbf 1}^{(\omega)}
=
\frac{p_{\mathbf 1}+t/2}{1+t},
\qquad
c_\omega
=
\frac{c-st/2}{1+t}.
\end{equation}
Since $c=s|c|$ and $t=|c|-m$,
\begin{equation}
\label{eq:GHZX_omega_facet}
|c_\omega|
=
\frac{|c|-t/2}{1+t}
=
\frac{m+t/2}{1+t}
=
\min\!\left\{
p_{\mathbf 0}^{(\omega)},p_{\mathbf 1}^{(\omega)}
\right\}.
\end{equation}
Thus $\omega$ lies on a facet of the real GHZ-$X$ stabilizer 
cross-section and belongs to $\mathcal S$ 
by~\eqref{eq:GHZX_membership}. The signed decomposition
\begin{equation}
\rho=(1+t)\omega-t\sigma_{-s}
\end{equation}
has cost $1+2t$, matching the dual lower 
bound~\eqref{eq:GHZX_dual_lower}. Thus the witnesses $W_{s,j}$ 
certify that no signed stabilizer decomposition can have smaller 
cost, while the construction above attains this certified cost, 
proving~\eqref{eq:GHZX_RoM}.
\end{proof}

\begin{corollary}[Amplitude-damped GHZ trajectory]
\label{cor:GHZ_AD}
For the amplitude-damped GHZ trajectory \eqref{eq:rho_n},
\begin{equation}
\label{eq:GHZ_AD_membership}
\rho_n(\gamma)\in\mathcal S
\quad\Longleftrightarrow\quad
P_0\ge c\ \text{ and }\ P_n\ge c .
\end{equation}
If $r<1$, there are two thresholds
\begin{equation}
\label{eq:GHZ_thresholds}
0<\gamma_-^{(n)}<\gamma_+^{(n)}<1,
\qquad
\gamma_+^{(n)}=1-r^{2/n},
\end{equation}
where $\gamma_-^{(n)}$ is the unique solution of $P_0=c$. For
$\gamma\in[0,1)$, the trajectory is outside $\mathcal S$ on
$[0,\gamma_-^{(n)})\cup(\gamma_+^{(n)},1)$ and inside
$\mathcal S$ on $[\gamma_-^{(n)},\gamma_+^{(n)}]$.

For $r>1$, the trajectory is outside $\mathcal S$ for all
$\gamma\in[0,1)$; for $r=1$, it starts at a stabilizer state and is
outside $\mathcal S$ for all $\gamma\in(0,1)$. In all cases
$\rho_n(1)=\ket{0^n}\!\bra{0^n}\in\mathcal S$.
The robustness along the trajectory is
\begin{equation}
\label{eq:GHZ_AD_RoM}
\mathcal R\bigl(\rho_n(\gamma)\bigr)
=
1+2\max\{0,\;c-P_0,\;c-P_n\}.
\end{equation}
\end{corollary}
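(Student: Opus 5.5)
The plan is to reduce everything to Theorem~\ref{thm:GHZX_cross_section} and then carry out a one-variable monotonicity analysis. The trajectory $\rho_n(\gamma)$ of Eq.~\eqref{eq:rho_n} lies in the real GHZ-$X$ manifold. Its diagonal is $p_{0^n}=P_0$, $p_{1^n}=P_n$, with the remaining weight-$k$ populations spread uniformly, and its endpoint coherence $c=\alpha\beta(1-\gamma)^{n/2}\ge0$ is real. Applying Theorem~\ref{thm:GHZX_cross_section} with $|c|=c$ gives the membership criterion~\eqref{eq:GHZ_AD_membership} and the robustness formula~\eqref{eq:GHZ_AD_RoM} at once. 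All that remains is to locate the sign changes of
\begin{equation*}
f_n(\gamma):=P_0-c=\alpha^2+\beta^2\gamma^n-\alpha\beta(1-\gamma)^{n/2}
\end{equation*}
and
\begin{equation*}
g_n(\gamma):=P_n-c=\beta(1-\gamma)^{n/2}\bigl[\beta(1-\gamma)^{n/2}-\alpha\bigr].
\end{equation*}

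The facet $P_n\ge c$ is handled by factorization. For $\gamma<1$, $g_n\ge0$ holds iff $(1-\gamma)^{n/2}\ge r$, that is, iff $\gamma\le 1-r^{2/n}=\gamma_+^{(n)}$. For the facet $P_0\ge c$, I would use the derivative already computed in the proof of Proposition~\ref{prop:asymptotics}:
\begin{equation*}
f_n'(\gamma)=n\beta^2\gamma^{n-1}+\tfrac n2\alpha\beta(1-\gamma)^{n/2-1}>0 \quad\text{on } (0,1).
\end{equation*}
So $f_n$ is strictly increasing, with $f_n(0)=\alpha(\alpha-\beta)$ and $f_n(1)=1>0$.

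The cases then follow.
\begin{itemize}
\item If $r<1$, then $f_n(0)<0$, and by the intermediate value theorem $f_n$ has a unique zero $\gamma_-^{(n)}\in(0,1)$. Membership holds exactly on $[\gamma_-^{(n)},\gamma_+^{(n)}]$, provided $\gamma_-^{(n)}<\gamma_+^{(n)}$. This ordering is the one point that needs a computation: evaluate at $\gamma_+^{(n)}$, where $c=P_n$, to get $f_n(\gamma_+^{(n)})=P_0-P_n=\alpha^2+\beta^2(\gamma_+^{(n)})^n-\alpha^2>0$ (using $P_n=\beta^2 r^2=\alpha^2$ there). Strict monotonicity then places $\gamma_-^{(n)}$ strictly below $\gamma_+^{(n)}$. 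Also $\gamma_+^{(n)}\in(0,1)$ because $0<r<1$.
\item If $r>1$, then $\beta(1-\gamma)^{n/2}\le\beta<\alpha$, so $g_n<0$ for every $\gamma<1$ and the state is never in $\mathcal S$.
\item If $r=1$, then $\alpha=\beta=1/\sqrt2$ and $g_n=\tfrac12(1-\gamma)^{n/2}[(1-\gamma)^{n/2}-1]$, which is negative on $(0,1)$ and zero at $\gamma=0$. The input $\ket{\mathrm{GHZ}_n^+}$ is stabilizer.
\end{itemize}
Finally, the endpoint $\rho_n(1)$ has $P_0=1$ and all other entries zero.

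No step is a real obstacle. The only points needing care are, first, the strict ordering $\gamma_-^{(n)}<\gamma_+^{(n)}$, which the evaluation $f_n(\gamma_+^{(n)})>0$ settles, and second, keeping track of the open and closed interval endpoints. The latter follows because both facets are continuous and the inequalities are non-strict at the thresholds.
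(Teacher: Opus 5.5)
Your proposal is correct and follows essentially the same route as the paper. Like the paper, you obtain the membership criterion and the robustness formula directly from Theorem~\ref{thm:GHZX_cross_section} with $p_{\mathbf 0}=P_0$, $p_{\mathbf 1}=P_n$, $c\ge0$. You then solve $P_n=c$ to get $\gamma_+^{(n)}=1-r^{2/n}$, and use strict monotonicity of $f_n=P_0-c$ together with $f_n(0)<0<f_n(\gamma_+^{(n)})=\beta^2(\gamma_+^{(n)})^n$ to place the unique $\gamma_-^{(n)}$ strictly below $\gamma_+^{(n)}$, disposing of $r>1$ and $r=1$ via $c>P_n$ on $(0,1)$.
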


\begin{proof}
For the state \eqref{eq:rho_n}, the endpoint populations in the
real GHZ-$X$ manifold are $p_{\mathbf 0}=P_0$ and
$p_{\mathbf 1}=P_n$, while $c\ge0$. Equation
\eqref{eq:GHZ_AD_membership} is therefore immediate from
Theorem~\ref{thm:GHZX_cross_section}, and
\eqref{eq:GHZ_AD_RoM} follows from
\eqref{eq:GHZX_RoM}.

It remains only to identify the threshold ordering. The condition
$P_n=c$ gives, for $\gamma<1$,
\begin{equation}
\beta^2(1-\gamma)^n
=
\alpha\beta(1-\gamma)^{n/2},
\end{equation}
or
\begin{equation}
(1-\gamma)^{n/2}=r.
\end{equation}
Thus the physical solution is
\begin{equation}
\gamma_+^{(n)}=1-r^{2/n},
\end{equation}
which lies in $(0,1)$ if and only if $r<1$.

For the other endpoint constraint define
\begin{equation}
f_n(\gamma):=P_0-c
=
\alpha^2+\beta^2\gamma^n
-\alpha\beta(1-\gamma)^{n/2}.
\end{equation}
When $r<1$, one has $f_n(0)=\alpha(\alpha-\beta)<0$, while
\begin{equation}
f_n(\gamma_+^{(n)})
=
\beta^2\bigl(\gamma_+^{(n)}\bigr)^n>0.
\end{equation}
Moreover
\begin{equation}
f_n'(\gamma)
=
n\beta^2\gamma^{n-1}
+
\frac{n}{2}\alpha\beta(1-\gamma)^{n/2-1}>0
\end{equation}
for $0<\gamma<1$. Hence $f_n$ has a unique zero
$\gamma_-^{(n)}\in(0,\gamma_+^{(n)})$. The inequalities
$P_0\ge c$ and $P_n\ge c$ then hold simultaneously on
$[\gamma_-^{(n)},\gamma_+^{(n)}]$ and fail outside it.

If $r>1$, then
\begin{equation}
\frac{c}{P_n}
=
r(1-\gamma)^{-n/2}>1
\qquad
(0\le\gamma<1),
\end{equation}
so $P_n<c$ throughout the open trajectory. For $r=1$, the equality
at $\gamma=0$ is lost immediately, and $P_n<c$ for every
$\gamma\in(0,1)$. At $\gamma=1$ both $P_n$ and $c$ vanish and the
state is the stabilizer vertex $\ket{0^n}\!\bra{0^n}$.
\end{proof}

\begin{figure*}[t]
\centering
\includegraphics[width=18cm]{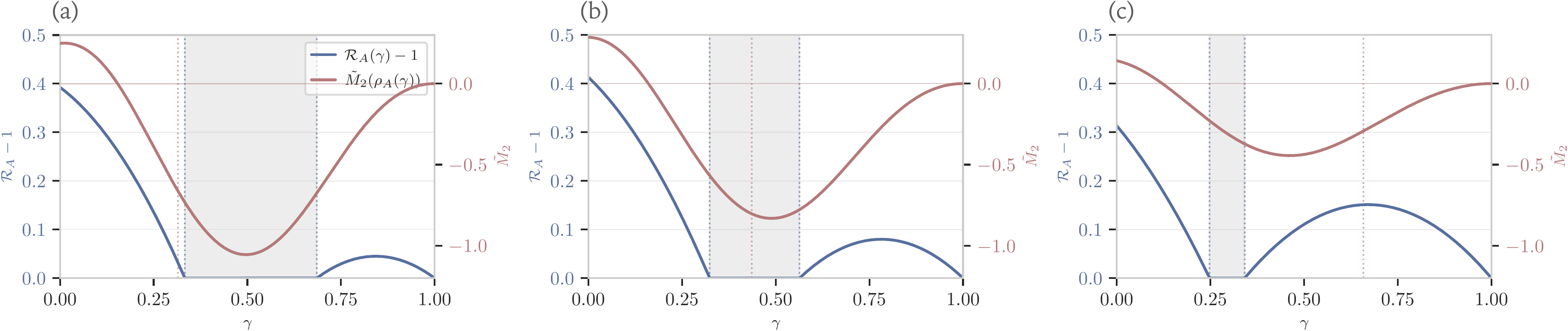}
\caption{%
\textbf{Robustness of magic and linearized stabilizer 
R\'enyi entropy on Family-A trajectories.} 
$\mathcal{R}_A(\gamma)-1$ (blue, left axis) and 
$\tilde{M}_2(\rho_A(\gamma))$ (red, right axis) under local 
amplitude damping for (a) $\alpha=0.30$ (Regime~I), 
(b) $\alpha=0.40$ (Regime~II), and (c) $\alpha=0.55$ (Regime~III). 
Gray bands mark the stabilizer window 
$[\gamma_-^{(2)},\gamma_+^{(2)}]$; dotted vertical lines mark 
$\gamma_-^{(2)},\gamma_+^{(2)}$ (blue) and $\gamma_e^{(2)}$ (red). 
$\mathcal{R}_A-1$ vanishes on this window in each regime,
consistent with its role as a membership diagnostic on the 
mixed-state GHZ-$X$ manifold; $\tilde{M}_2$ takes a different 
functional form, reflecting its construction as a linearized 
Pauli-moment quantity.}
\label{fig:RoM_vs_SRE}
\end{figure*}

\begin{remark}
\label{rem:complex_coherence}
The hypothesis $c\in\mathbb R$ is essential. Allowing $c\in\mathbb C$ 
in~\eqref{eq:GHZX_state} enlarges the manifold to the complex 
GHZ-$X$ slice; writing $c=a+ib$, the sufficient direction follows 
by decomposing the real part with $\ket{\mathrm{GHZ}_n^\pm}$ and the 
imaginary part with $(\ket{0^n}\pm i\ket{1^n})/\sqrt2$, leaving 
endpoint weights $p_{\mathbf 0}-(|a|+|b|)$ and 
$p_{\mathbf 1}-(|a|+|b|)$. For complex $c$, the
convex hull of the four stabilizer coherences on $\{\ket{0^n},
\ket{1^n}\}$, carried by $\ket{\mathrm{GHZ}_n^\pm}$ and
$(\ket{0^n}\pm i\ket{1^n})/\sqrt{2}$, is a diamond, so the real
criterion \eqref{eq:GHZX_membership} is replaced by
\begin{equation}
\label{eq:diamond_criterion}
|\mathrm{Re}\,c|+|\mathrm{Im}\,c|
\le
\min(p_{\mathbf 0},p_{\mathbf 1}).
\end{equation}
The same endpoint-support argument gives the complex criterion 
after replacing the real interval by the four-point Clifford 
diamond. For a pure stabilizer component, the endpoint phase 
satisfies $u_av_a^*\in |u_a||v_a|\{\pm1,\pm i\}$ whenever both 
endpoints are present, so convexity gives 
$|\mathrm{Re}\,c|+|\mathrm{Im}\,c|\le 
\min(p_{\mathbf 0},p_{\mathbf 1})$. Conversely, this bound is 
sufficient by decomposing $\mathrm{Re}\,c$ with 
$\ket{\mathrm{GHZ}_n^\pm}$ and $\mathrm{Im}\,c$ with 
$(\ket{0^n}\pm i\ket{1^n})/\sqrt2$. Thus the complex GHZ endpoint 
coherence is controlled by the $\ell^1$ Clifford diamond norm, not 
by $|c|$, which is why phase twists can shift magic thresholds 
while leaving entanglement thresholds unchanged.
\end{remark}

The robustness excess $\mathcal R(\rho)-1$ vanishes if and only if 
a state lies in $\mathcal S$ 
(Eqs.~\eqref{eq:RoM_primal}--\eqref{eq:RoM_dual}), giving a
faithful stabilizer-membership diagnostic on the mixed-state
GHZ-$X$ manifold. For comparison, we also evaluate the linearized stabilizer 
2-R\'enyi quantity~\cite{Leone2022Stabilizer,Haug2023Stabilizer,Bittel2026Operational}
\begin{equation}
\label{eq:M2_def}
\tilde M_2(\rho)
=
-\log\!\left[
\frac{\sum_{P\in\mathcal P_n}\langle P\rangle_\rho^4}
{d\,\Tr(\rho^2)^2}
\right],
\qquad d=2^n,
\end{equation}
with $\mathcal P_n$ the $n$-qubit Pauli group modulo phases. Unlike 
$\mathcal R-1$, this linearized Pauli-moment quantity is not a 
faithful mixed-state membership diagnostic. Faithful mixed-state 
SRE extensions require convex-roof or related 
constructions~\cite{Leone2024Stabilizer,Warmuz2025Magic}.
Fig.~\ref{fig:RoM_vs_SRE} plots both $\mathcal{R}_A-1$ and 
$\tilde{M}_2$ along the Family-A trajectory across the three 
amplitude regimes of Corollary~\ref{cor:phase}.

\section{Complementary-channel identity for amplitude damping}
\label{sec:stinespring}

The canonical (ground-state-preserving) Stinespring isometry of
$\mathcal E_\gamma$ is
\begin{equation}
\label{eq:AD_isometry}
V_\gamma:\quad
\ket{0}_S\mapsto\ket{0}_S\ket{0}_E,
\qquad
\ket{1}_S\mapsto
\sqrt{1-\gamma}\,\ket{1}_S\ket{0}_E
+\sqrt{\gamma}\,\ket{0}_S\ket{1}_E,
\end{equation}
with $\mathcal E_\gamma(\rho)=\Tr_E[V_\gamma\rho V_\gamma^\dagger]$ and
complementary channel
$\mathcal E_\gamma^c(\rho)=\Tr_S[V_\gamma\rho V_\gamma^\dagger]$.
Writing $V_\gamma=\sum_{\mu=0,1}E_\mu\otimes\ket{\mu}_E$ with the Kraus
operators of Eq.~\eqref{eq:AD_Kraus}, the complementary 
matrix elements 
$\bra\mu\mathcal E_\gamma^c(\rho)\ket\nu=\Tr[E_\mu\rho E_\nu^\dagger]$
evaluate to
\begin{equation}
\begin{aligned}
\bra{0}\mathcal E_\gamma^c(\rho)\ket{0}
&=
\Tr[E_0\rho E_0^\dagger]
=
\rho_{00}+(1-\gamma)\rho_{11},
\\
\bra{1}\mathcal E_\gamma^c(\rho)\ket{1}
&=
\Tr[E_1\rho E_1^\dagger]
=
\gamma\rho_{11},
\\
\bra{0}\mathcal E_\gamma^c(\rho)\ket{1}
&=
\Tr[E_0\rho E_1^\dagger]
=
\sqrt\gamma\,\rho_{01}.
\end{aligned}
\label{eq:comp_entries}
\end{equation}
Comparing with the action~\eqref{eq:AD_action} of 
$\mathcal E_{1-\gamma}$ on each matrix element yields the 
identification
\begin{equation}
\label{eq:comp_explicit}
\mathcal E_\gamma^c(\rho)
=
\begin{pmatrix}
\rho_{00}+(1-\gamma)\rho_{11} & \sqrt\gamma\,\rho_{01}\\
\sqrt\gamma\,\rho_{10} & \gamma\rho_{11}
\end{pmatrix}
=\mathcal E_{1-\gamma}(\rho).
\end{equation}
Identifying the environment basis $\{\ket0_E,\ket1_E\}$ with the 
system computational basis, the AD family closes under 
complementation for this canonical dilation,
\begin{equation}
\label{eq:AD_complement_identity}
\mathcal E_\gamma^c=\mathcal E_{1-\gamma},
\qquad
(\mathcal E_\gamma^{\otimes n})^c=\mathcal E_{1-\gamma}^{\otimes n}.
\end{equation}
Writing $\rho_S(\gamma):=\mathcal E_\gamma^{\otimes n}(\rho_{\mathrm{in}})$ 
and $\rho_E(\gamma)$ for the complementary output, for any input 
$\rho_{\mathrm{in}}$
\begin{equation}
\label{eq:system_environment_general}
\rho_E(\gamma)
=\mathcal E_{1-\gamma}^{\otimes n}(\rho_{\mathrm{in}})
=\rho_S(1-\gamma),
\end{equation}
the Stinespring mirror used in the main text. The identity holds for
arbitrary input, real or complex, pure or mixed.

For pure inputs, the same canonical dilation also gives an
isospectral mirror. Since the joint Stinespring output is pure,
$\rho_S(\gamma)$ and $\rho_E(\gamma)$ have the same nonzero spectrum,
while Eq.~\eqref{eq:system_environment_general} gives
$\rho_E(\gamma)=\rho_S(1-\gamma)$. Hence
\begin{equation}
\operatorname{spec}_{\neq 0}\rho_S(\gamma)
=
\operatorname{spec}_{\neq 0}\rho_S(1-\gamma).
\end{equation}
This symmetry applies to spectral quantities such as purity and
entropy. Stabilizer membership is not spectral. The threshold
reflection in the main text uses the additional GHZ-$X$
facet/minor structure below.

\paragraph{Facet--minor mirror for GHZ inputs.}
At the witness level, the threshold reflection takes a more 
precise form exchanging stabilizer and PPT inequalities. For 
nonuniform local damping, the system-output endpoint population 
and GHZ coherence are
\begin{equation}
\label{eq:nonuniform_system_quantities}
p_{\mathbf 1}^{S}=\beta^2\prod_{i=1}^n(1-\gamma_i),
\qquad
c_S=\alpha\beta\sqrt{\textstyle\prod_{i=1}^n(1-\gamma_i)}.
\end{equation}
The complementary environment state is the mirrored system 
state, with GHZ coherence
\begin{equation}
\label{eq:nonuniform_environment_coherence}
c_E=\alpha\beta\sqrt{\textstyle\prod_{i=1}^n\gamma_i},
\end{equation}
and, for every bipartition $A|B$, environment-side 
partial-transpose block diagonal product
\begin{equation}
\label{eq:nonuniform_PT_block}
a_{A|B}^{E}\,b_{A|B}^{E}
=
\beta^4\prod_{i=1}^n\gamma_i(1-\gamma_i).
\end{equation}
Therefore, for $0<\gamma_i<1$ at every site,
\begin{equation}
\label{eq:facet_minor_mirror}
\frac{c_S^2}{(p_{\mathbf 1}^{S})^2}
=
\frac{c_E^2}{a_{A|B}^{E}\,b_{A|B}^{E}}
=
\frac{r^2}{\prod_{i=1}^n(1-\gamma_i)}.
\end{equation}
Thus violation of the system-side magic-rebirth facet 
$c_S>p_{\mathbf 1}^{S}$ is equivalent to violation of the 
environment-side PPT inequality 
$c_E^2>a_{A|B}^{E}b_{A|B}^{E}$, for every bipartition. In the 
homogeneous case $\gamma_i\equiv\gamma$, the boundary 
$r^2=(1-\gamma)^n$ gives $\gamma_+^{(n)}=1-r^{2/n}$ and is the PPT 
boundary for the mirrored environment state 
$\rho_E(\gamma)=\rho_S(1-\gamma)$. The system-side 
entanglement-death boundary is $r^2=\gamma^n$, and the two are 
exchanged by $\gamma\leftrightarrow 1-\gamma$.

For the GHZ input~\eqref{eq:GHZ_family},
Eq.~\eqref{eq:system_environment_general} reproduces the GHZ-$X$ form
with populations and coherence evaluated at $1-\gamma$. Under the
real-coherence membership criterion of
Theorem~\ref{thm:GHZX_cross_section}, the magic-rebirth endpoint
condition $P_n(\gamma)=c(\gamma)$ maps to the entanglement-death
endpoint condition (vanishing of the $2\times 2$ partial-transpose
determinant) under $\gamma\leftrightarrow 1-\gamma$, yielding
$\gamma_+^{(n)}=1-\gamma_e^{(n)}$. A complex GHZ coherence replaces
the real interval by the diamond of
Remark~\ref{rem:complex_coherence}, decoupling the two endpoint
conditions and breaking the threshold reflection except along diamond
axes; the phase-dependent deficit is computed in
Appendix~\ref{sec:pt}.

For amplitude damping composed with fixed-strength dephasing of strength $p$, Eq.~\eqref{eq:AD_complement_identity} fails for the canonical dilation. With $\eta=(1-2p)^2$, the GHZ coherence becomes $c(\gamma)=\alpha\beta\,\eta^{n/2}(1-\gamma)^{n/2}$, the populations are unchanged, and
\begin{equation}
\label{eq:dAD_purity_difference}
\Tr[\rho_S(\gamma)^2]-\Tr[\rho_S(1-\gamma)^2]
=2\alpha^2\beta^2(1-\eta^n)\bigl[\gamma^n-(1-\gamma)^n\bigr],
\end{equation}
nonzero for $\eta<1$ and $\gamma\neq 1/2$. For the pure GHZ input considered here, any Stinespring output is 
pure, so its system and environment reductions have identical 
nonzero spectra. Hence, for generic $\gamma\neq1/2$, no complementary 
output of the dephased channel can be unitarily equivalent to 
$\rho_S(1-\gamma)$. Thus the AD complementary-channel identity is a dilation-level realization of a broader boundary symmetry. For fixed-strength dephased AD, the same threshold reflection 
$\gamma_e^{(n)}(\eta)+\gamma_+^{(n)}(\eta)=1$ follows instead from 
the profile condition $S(\gamma)=S(1-\gamma)$ on the threshold
domain (Appendix~\ref{sec:phase_boundaries}).

\section{Phase boundaries under amplitude damping and dephasing}
\label{sec:phase_boundaries}

For the amplitude-damped GHZ trajectory in the re-entrant branch
$r<1$,
\begin{equation}
\gamma_e^{(n)}=r^{2/n},
\qquad
\gamma_+^{(n)}=1-r^{2/n},
\end{equation}
and $\gamma_-^{(n)}$ is the unique root in $(0,\gamma_+^{(n)})$ of
\begin{equation}
\label{eq:fn}
f_n(\gamma)
:=
P_0-c
=
\alpha^2+\beta^2\gamma^n
-\alpha\beta(1-\gamma)^{n/2},
\end{equation}
with $f_n'(\gamma)>0$ on $(0,1)$ from
Corollary~\ref{cor:GHZ_AD}. The sign of $f_n$ at
$\gamma_e^{(n)}$ therefore fixes the ordering between
$\gamma_-^{(n)}$ and $\gamma_e^{(n)}$.

Substituting $\gamma_e^{(n)}=r^{2/n}$ and $\alpha=r\beta$,
\begin{align}
f_n\bigl(\gamma_e^{(n)}\bigr)
&= \alpha^2 + \beta^2\bigl(\gamma_e^{(n)}\bigr)^n 
- \alpha\beta\bigl(1-\gamma_e^{(n)}\bigr)^{n/2}\nonumber\\
&= r^2\beta^2 + r^2\beta^2 - r\beta^2\bigl(1-r^{2/n}\bigr)^{n/2}\nonumber\\
&= \beta^2 r\Bigl[2r - \bigl(1-r^{2/n}\bigr)^{n/2}\Bigr].
\end{align}
Define
\begin{equation}
h_n(r):=2r-\bigl(1-r^{2/n}\bigr)^{n/2}.
\end{equation}
Then
\begin{equation}
h_n'(r)
=
2+r^{2/n-1}\bigl(1-r^{2/n}\bigr)^{n/2-1}>0
\qquad (r\in(0,1)),
\end{equation}
so $h_n$ is strictly increasing. Its unique zero is determined by
$2r=(1-r^{2/n})^{n/2}$: setting $x:=r^{2/n}$ yields
$2^{2/n}x=1-x$, hence
\begin{equation}
\label{eq:r1_boundary}
r_1^{(n)}=\bigl(1+2^{2/n}\bigr)^{-n/2}.
\end{equation}
Monotonicity of $h_n$ gives
\begin{equation}
\gamma_e^{(n)}<\gamma_-^{(n)}
\quad\Longleftrightarrow\quad
r<r_1^{(n)},
\end{equation}
with the inequality reversing for $r>r_1^{(n)}$.

The second boundary follows from $\gamma_e^{(n)}=\gamma_+^{(n)}$:
$r^{2/n}=1-r^{2/n}$ gives
\begin{equation}
\label{eq:r2_boundary}
r_2^{(n)}=2^{-n/2}.
\end{equation}
Since $\gamma_-^{(n)}<\gamma_+^{(n)}$ throughout the re-entrant
branch, the three orderings of Corollary~\ref{cor:phase} follow:
\begin{equation}
\begin{aligned}
&r<r_1^{(n)}:
&& \gamma_e^{(n)}<\gamma_-^{(n)}<\gamma_+^{(n)},\\
&r_1^{(n)}<r<r_2^{(n)}:
&& \gamma_-^{(n)}<\gamma_e^{(n)}<\gamma_+^{(n)},\\
&r_2^{(n)}<r<1:
&& \gamma_-^{(n)}<\gamma_+^{(n)}<\gamma_e^{(n)}.
\end{aligned}
\end{equation}
The corresponding amplitude boundaries are
\begin{equation}
\label{eq:alpha_boundaries}
\alpha_i^{(n)}=\frac{r_i^{(n)}}{\sqrt{1+\bigl(r_i^{(n)}\bigr)^2}},
\qquad i=1,2,
\end{equation}
reducing at $n=2$ to $\alpha_1^{(2)}=1/\sqrt{10}$ and
$\alpha_2^{(2)}=1/\sqrt 5$. Both $r_1^{(n)}$ and $r_2^{(n)}$ share
the leading exponential rate $2^{-n/2}$, with
\begin{equation}
\frac{r_1^{(n)}}{r_2^{(n)}}
=
\left(\frac{2}{1+2^{2/n}}\right)^{n/2}
\longrightarrow
2^{-1/2}
\qquad (n\to\infty).
\end{equation}
Thus, for any fixed $0<r<1$, sufficiently large $n$ places the
trajectory in Regime~III.

\paragraph{Two-qubit dephased amplitude damping.}
Amplitude damping composed with a fixed-strength phase-flip
dephasing layer $\mathcal D_p(\rho)=(1-p)\rho+pZ\rho Z$ rescales 
the GHZ coherence by $(1-2p)^n$ and leaves the populations unchanged
(Appendix~\ref{sec:preliminaries}). Writing
\begin{equation}
\eta:=(1-2p)^2\in[0,1],
\end{equation}
the $n=2$ trajectory has
\begin{equation}
\label{eq:dAD_n2_quantities}
P_0=\alpha^2+\beta^2\gamma^2,
\qquad
P_2=\beta^2(1-\gamma)^2,
\qquad
c_\eta=\eta\,\alpha\beta(1-\gamma).
\end{equation}
The three thresholds follow directly.

\emph{$\gamma_e(\eta)$.} The single-bipartition partial-transpose
$2\times 2$ block on $\{\ket{10},\ket{01}\}$ has diagonal
$\beta^2\gamma(1-\gamma)$ and off-diagonal $c_\eta$, so its
determinant is
\begin{equation}
\beta^4\gamma^2(1-\gamma)^2-c_\eta^2
=
\beta^2(1-\gamma)^2\bigl(\beta^2\gamma^2-\eta^2\alpha^2\bigr).
\end{equation}
Vanishing gives
\begin{equation}
\label{eq:dAD_ge}
\gamma_e(\eta)=\eta r.
\end{equation}

\emph{$\gamma_+(\eta)$.} The condition $P_2=c_\eta$, i.e.\
$\beta^2(1-\gamma)^2=\eta\alpha\beta(1-\gamma)$, gives for
$\gamma<1$
\begin{equation}
\label{eq:dAD_gp}
\gamma_+(\eta)=1-\eta r.
\end{equation}

\emph{$\gamma_-(\eta)$.} The condition $P_0=c_\eta$ gives the
quadratic
\begin{equation}
\label{eq:dAD_quad}
\beta^2\gamma^2+\eta\alpha\beta\,\gamma+\alpha^2-\eta\alpha\beta=0,
\end{equation}
with physical root
\begin{equation}
\label{eq:dAD_gm}
\gamma_-(\eta)
=
\frac{\sqrt{\alpha\bigl[4\eta\beta-(4-\eta^2)\alpha\bigr]}
      -\eta\alpha}{2\beta}.
\end{equation}
Define
\begin{equation}
f_\eta(\gamma)
:=
P_0-c_\eta
=
\alpha^2+\beta^2\gamma^2-\eta\alpha\beta(1-\gamma),
\end{equation}
so $f_\eta'(\gamma)=2\beta^2\gamma+\eta\alpha\beta>0$ on $[0,1]$
and
\begin{equation}
f_\eta(0)=\alpha\beta(r-\eta),
\qquad
f_\eta(1)=1.
\end{equation}
A unique root $\gamma_-(\eta)\in(0,1)$ therefore exists if and only
if
\begin{equation}
\label{eq:dAD_reentrant_n2}
r<\eta,
\end{equation}
with $\gamma_-(\eta)\to 0$ as $r\to\eta$. Combining
Eqs.~\eqref{eq:dAD_ge} and~\eqref{eq:dAD_gp},
\begin{equation}
\gamma_e(\eta)+\gamma_+(\eta)=\eta r+(1-\eta r)=1
\end{equation}
whenever the re-entrant branch is present.

\begin{figure*}[b]
\centering
\includegraphics[width=18cm]{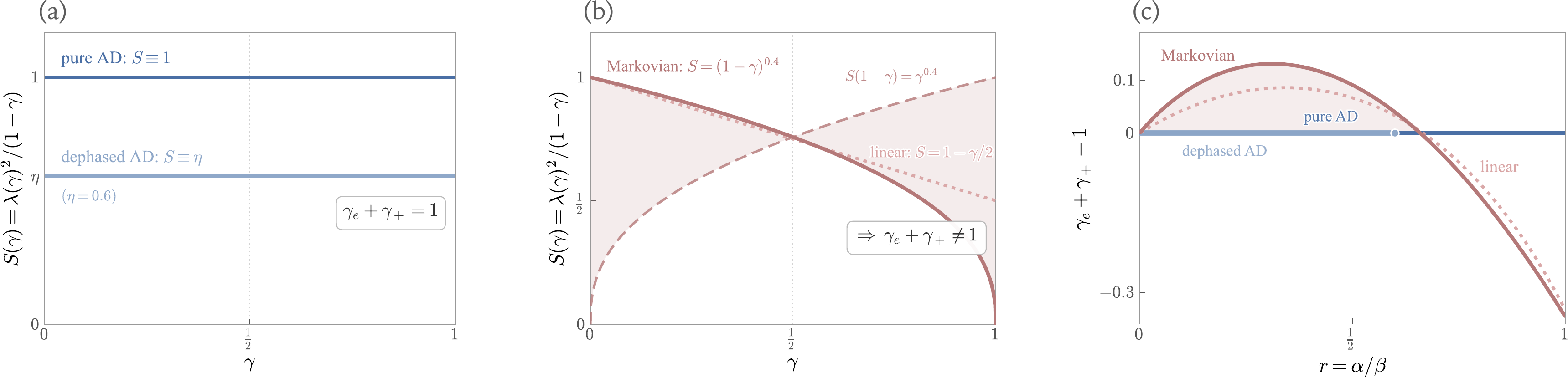}
\caption{%
\textbf{Classification of real phase-covariant profiles by the
reflected-threshold identity.}
Within the re-entrant branch,
Proposition~\ref{prop:phasecovariant} of the main text gives
$\gamma_e^{(n)}+\gamma_+^{(n)}=1$ if and only if
$S(\gamma_e^{(n)})=S(1-\gamma_e^{(n)})$.
\textbf{(a)}~Constant real profiles are reflection-symmetric:
pure amplitude damping has $S\equiv 1$, and fixed-strength dephased
amplitude damping has $S\equiv \eta$ (shown at $\eta=0.6$). Both
preserve the reflected-threshold identity on their re-entrant domains.
\textbf{(b)}~Real but asymmetric profiles break the reflection
criterion. Concurrent Markovian amplitude damping and dephasing,
$\lambda(\gamma)=(1-\gamma)^a$ with $a=0.7$, gives
$S(\gamma)=(1-\gamma)^{0.4}$. The dashed curve shows its reflection
$S(1-\gamma)=\gamma^{0.4}$, and the shaded region marks the
reflection mismatch. The linear profile $S=1-\gamma/2$ is a
CP-admissible secondary example.
\textbf{(c)}~Threshold-sum deviation $\gamma_e^{(2)}+\gamma_+^{(2)}-1$
versus $r=\alpha/\beta$ for all four profiles. Pure AD (thin dark
blue) is identically zero throughout $(0,1)$. Fixed-strength dephased
AD (thick light blue) is identically zero on its re-entrant domain
$r<\eta$, whose endpoint is marked. The two asymmetric profiles give sign-varying deviation that grows toward $r\to 1$. Zero-crossings are accidental and do not restore the profile-level reflection symmetry 
$S(\gamma)=S(1-\gamma)$.}
\label{fig:Sprofiles}
\end{figure*}

\paragraph{General $n$.}
For general $n$, the coherence magnitude entering the real GHZ-$X$
criterion is
\begin{equation}
c_{n,\eta}(\gamma)=\eta^{n/2}\alpha\beta(1-\gamma)^{n/2},
\end{equation}
and the populations are unchanged. The partial-transpose block
determinant on $\{\ket{1^A 0^B},\ket{0^A 1^B}\}$ for any
bipartition $|A|=m$ evaluates to
\begin{equation}
\beta^4[\gamma(1-\gamma)]^n-c_{n,\eta}^2
=
\beta^2(1-\gamma)^n\bigl[\beta^2\gamma^n-\eta^n\alpha^2\bigr],
\end{equation}
which is independent of $m$ and vanishes at
\begin{equation}
\label{eq:dAD_general_ge}
\gamma_e^{(n)}(\eta)=\eta\,r^{2/n}.
\end{equation}
The endpoint crossing $P_n=c_{n,\eta}$, i.e.\
$(1-\gamma)^{n/2}=\eta^{n/2}r$, occurs at
\begin{equation}
\label{eq:dAD_general_gp}
\gamma_+^{(n)}(\eta)=1-\eta\,r^{2/n}.
\end{equation}
Together with Eq.~\eqref{eq:dAD_general_ge}, this gives the
algebraic reflection identity
\begin{equation}
\label{eq:dAD_complementarity}
\gamma_e^{(n)}(\eta)+\gamma_+^{(n)}(\eta)=1.
\end{equation}
In the re-entrant domain identified below, this endpoint crossing is
the magic-rebirth threshold. This is the direct algebraic
manifestation of the reflection identity $S(\gamma)=S(1-\gamma)$
with constant profile $S(\gamma)\equiv\eta$
(Proposition~\ref{prop:phasecovariant} of the main text).

To locate the magic-death threshold, define
\begin{equation}
\label{eq:dAD_general_fn}
f_{n,\eta}(\gamma)
:=
P_0-c_{n,\eta}
=
\alpha^2+\beta^2\gamma^n-\eta^{n/2}\alpha\beta(1-\gamma)^{n/2}.
\end{equation}
Then
\begin{equation}
f_{n,\eta}'(\gamma)
=
n\beta^2\gamma^{n-1}
+\frac n2\,\eta^{n/2}\alpha\beta(1-\gamma)^{n/2-1}>0
\qquad (0<\gamma<1),
\end{equation}
and
\begin{equation}
f_{n,\eta}(0)=\alpha\beta\bigl(r-\eta^{n/2}\bigr).
\end{equation}
The re-entrant condition is therefore
\begin{equation}
\label{eq:dAD_general_reentrant}
r<\eta^{n/2}.
\end{equation}
Under Eq.~\eqref{eq:dAD_general_reentrant}, evaluating at
$\gamma_+^{(n)}(\eta)$ using
$1-\gamma_+^{(n)}(\eta)=\eta r^{2/n}$ and $\alpha^2=\beta^2r^2$,
\begin{equation}
\begin{aligned}
f_{n,\eta}\bigl(\gamma_+^{(n)}(\eta)\bigr)
&=
P_0-P_n
\\&=
\alpha^2(1-\eta^n)+\beta^2\bigl[\gamma_+^{(n)}(\eta)\bigr]^n
>0,
\end{aligned}
\end{equation}
so $f_{n,\eta}$ has a unique root
$\gamma_-^{(n)}(\eta)\in(0,\gamma_+^{(n)}(\eta))$. At the boundary $r=\eta^{n/2}$ the root collapses to
$\gamma_-^{(n)}(\eta)=0$. For $0<\eta\le 1$ in the small-amplitude
sector $0<r<1$, the $\gamma=0$ stabilizer condition reduces to the
single inequality $\eta^{n/2}\le r$. Thus the trajectory starts
inside $\mathcal S$ if and only if $r\ge\eta^{n/2}$. Within
$\eta^{n/2}<r<1$, any later exit at
$\gamma_+^{(n)}(\eta)=1-\eta r^{2/n}\in(0,1)$ corresponds to magic
generation from an initially stabilizer state rather than death and
rebirth.

The fully dephasing case $\eta=0$ is degenerate: all GHZ coherences
are erased, so the trajectory is computational-basis diagonal and
belongs to $\mathcal S$ for every $\gamma$. The formal values
$\gamma_e^{(n)}(0)=0$ and $\gamma_+^{(n)}(0)=1$ are then not
finite death, rebirth, or exit thresholds. Thus, for $0<\eta\le1$,
fixed-strength dephasing preserves the reflected-threshold identity
while shrinking the re-entrant domain from $r<1$ to
$r<\eta^{n/2}$. Fig.~\ref{fig:Sprofiles} contrasts this
constant-profile case with real but asymmetric profiles, where the
threshold reflection fails by a finite deviation.

\section{Nonuniform local amplitude damping}
\label{sec:nonuniform}

For independent local damping with qubit-dependent parameters, the 
homogeneous threshold identity $\gamma_e^{(n)}+\gamma_+^{(n)}=1$ 
becomes a surface identity in geometric-mean coordinates.

Let
\begin{equation}
\mathcal E_{\boldsymbol\gamma}
=
\bigotimes_{i=1}^n \mathcal E_{\gamma_i}
\end{equation}
act on
$\ket{\psi_n}=\alpha\ket{0^n}+\beta\ket{1^n}$ with
$r=\alpha/\beta\in(0,1)$ and
$\boldsymbol\gamma=(\gamma_1,\ldots,\gamma_n)\in[0,1)^n$. The 
output remains in the real GHZ-$X$ manifold, though its diagonal 
part is no longer permutation-symmetric.

\begin{proposition}[Nonuniform AD complementarity]
\label{prop:nonuniform_AD}
For
$\rho_{\boldsymbol\gamma}:=\mathcal E_{\boldsymbol\gamma}(\ket{\psi_n}\!\bra{\psi_n})$,
the endpoint populations and GHZ coherence are
\begin{equation}
\label{eq:nonuniform_endpoints}
p_{\mathbf 0}=\alpha^2+\beta^2\prod_{i=1}^n\gamma_i,
\qquad
p_{\mathbf 1}=\beta^2\prod_{i=1}^n(1-\gamma_i),
\qquad
c=\alpha\beta\sqrt{\prod_{i=1}^n(1-\gamma_i)},
\end{equation}
so the real GHZ-$X$ stabilizer criterion of
Theorem~\ref{thm:GHZX_cross_section} reads
$\rho_{\boldsymbol\gamma}\in\mathcal S\Leftrightarrow
p_{\mathbf 0}\ge c\text{ and }p_{\mathbf 1}\ge c$. The
magic-rebirth facet $p_{\mathbf 1}=c$ defines the surface
\begin{equation}
\label{eq:nonuniform_rebirth_surface}
\prod_{i=1}^n(1-\gamma_i)=r^2,
\end{equation}
on which the other stabilizer inequality is automatic, since 
$p_{\mathbf 0}-c=\beta^2\prod_i\gamma_i\ge 0$. For $\boldsymbol\gamma\in[0,1)^n$, the common bipartite-negativity
death surface across all bipartitions is
\begin{equation}
\label{eq:nonuniform_entdeath_surface}
\prod_{i=1}^n\gamma_i=r^2.
\end{equation}
The two hypersurfaces are exchanged by the involution 
$\gamma_i\mapsto 1-\gamma_i$ on the closed cube, with boundary 
cases obtained by continuity. Equivalently, the geometric-mean damping coordinates
\begin{equation}
\bar\gamma_e(\boldsymbol\gamma)
:=
\Bigl(\prod_{i=1}^n\gamma_i\Bigr)^{1/n},
\qquad
\bar\gamma_+(\boldsymbol\gamma)
:=
1-\Bigl(\prod_{i=1}^n(1-\gamma_i)\Bigr)^{1/n}
\end{equation}
both reduce to $\gamma$ on the homogeneous slice 
$\gamma_i\equiv\gamma$. Thus the nonuniform statement is a relation 
between the geometric-mean coordinates of two dual hypersurfaces: 
the death surface gives $\bar\gamma_e=r^{2/n}$, while the rebirth 
surface gives $\bar\gamma_+=1-r^{2/n}$. These two values sum to one. 
On the homogeneous slice, the two hypersurfaces reduce to the two 
threshold points, recovering 
$\gamma_e^{(n)}+\gamma_+^{(n)}=1$.
\end{proposition}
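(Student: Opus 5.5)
The plan is to compute the output of $\mathcal E_{\boldsymbol\gamma}$ on $\ket{\psi_n}\!\bra{\psi_n}$ term by term, using the single-qubit action~\eqref{eq:AD_action} factor by factor. The diagonal term $\alpha^2\ket{0^n}\!\bra{0^n}$ is fixed. The term $\beta^2\ket{1^n}\!\bra{1^n}$ maps to $\beta^2\bigotimes_i[(1-\gamma_i)\ket1\!\bra1+\gamma_i\ket0\!\bra0]$, which is diagonal. Its $0^n$ weight is $\beta^2\prod_i\gamma_i$ and its $1^n$ weight is $\beta^2\prod_i(1-\gamma_i)$. The coherence $\alpha\beta\ket{0^n}\!\bra{1^n}$ picks up $\prod_i\sqrt{1-\gamma_i}$ and stays real and nonnegative. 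The output is therefore in the real GHZ-$X$ manifold with the endpoint data~\eqref{eq:nonuniform_endpoints}. Theorem~\ref{thm:GHZX_cross_section} then gives the stated membership criterion directly, since $c\ge0$.

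For the rebirth facet, I would solve $p_{\mathbf 1}=c$. Dividing by $\beta\sqrt{\prod_i(1-\gamma_i)}$, which is nonzero on $[0,1)^n$, gives $\beta\sqrt{\prod_i(1-\gamma_i)}=\alpha$. Squaring yields~\eqref{eq:nonuniform_rebirth_surface}. On this surface $c=\alpha\beta r=\alpha^2$, so $p_{\mathbf 0}-c=\beta^2\prod_i\gamma_i\ge0$, and the second inequality is automatic.

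For entanglement death, I would repeat the argument of Proposition~\ref{prop:alln_complementarity}(i). For a bipartition $A|B$, partial transposition moves $c$ into the $2\times2$ block on $\{\ket{1^A0^B},\ket{0^A1^B}\}$. The diagonal entries are $\beta^2\prod_{i\in A}(1-\gamma_i)\prod_{i\in B}\gamma_i$ and $\beta^2\prod_{i\in A}\gamma_i\prod_{i\in B}(1-\gamma_i)$. Their product is $\beta^4\prod_i\gamma_i(1-\gamma_i)$, independent of $A$, matching~\eqref{eq:nonuniform_PT_block}. The determinant is $\beta^2\prod_i(1-\gamma_i)\,[\beta^2\prod_i\gamma_i-\alpha^2]$.

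To conclude that negativity is governed only by this block, I would check that every other element of the partial transpose is diagonal and nonnegative. This holds because the only off-diagonal entries of $\rho_{\boldsymbol\gamma}$ are $\pm c$ on the GHZ pair. The $\ket{1^A0^B},\ket{0^A1^B}$ entries are populations of strings other than $0^n,1^n$, so they are unaffected. Negativity is therefore positive if and only if the determinant is negative, that is iff $\prod_i\gamma_i<r^2$, with the prefactor $\prod_i(1-\gamma_i)>0$ on $[0,1)^n$. This gives the common death surface~\eqref{eq:nonuniform_entdeath_surface}.

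The remaining claims are bookkeeping. The involution $\gamma_i\mapsto1-\gamma_i$ visibly swaps $\prod_i\gamma_i=r^2$ and $\prod_i(1-\gamma_i)=r^2$. Taking $n$th roots, the two surfaces read $\bar\gamma_e=r^{2/n}$ and $\bar\gamma_+=1-r^{2/n}$, so the two values sum to one. Both coordinates reduce to $\gamma$ when $\gamma_i\equiv\gamma$, which recovers Proposition~\ref{prop:alln_complementarity}(iii).

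The only step needing care is the boundary of the cube where some $\gamma_i=1$. There the divisions above fail, and I would handle it by continuity as the statement indicates. I do not expect a substantive obstacle; the main point to verify is the $A$-independence of the diagonal product, which the computation above settles.
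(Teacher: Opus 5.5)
Your proposal is correct and follows the paper's proof essentially step for step. You compute the endpoint data factor by factor, read off the rebirth facet from $p_{\mathbf 1}=c$, and reduce entanglement death to the cut-independent $2\times 2$ partial-transpose determinant $\beta^2\prod_i(1-\gamma_i)\,(\beta^2\prod_i\gamma_i-\alpha^2)$, using $\prod_i(1-\gamma_i)>0$ on $[0,1)^n$. Your explicit checks that $p_{\mathbf 0}-c=\beta^2\prod_i\gamma_i$ on the rebirth surface, and that the rest of $\rho_{\boldsymbol\gamma}^{T_A}$ is nonnegative diagonal, fill in points the paper leaves implicit. On the face $\prod_i(1-\gamma_i)=0$, the paper simply observes that the output is diagonal, so all negativities vanish there; no continuity argument is needed.
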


\begin{proof}
For a computational-basis string $x$, the population descending from
$\ket{1^n}$ is
$\beta^2\prod_{i:x_i=1}(1-\gamma_i)\prod_{i:x_i=0}\gamma_i$, giving
the endpoint values in
Eq.~\eqref{eq:nonuniform_endpoints}. The rebirth facet
$p_{\mathbf 1}=c$ reads 
$\beta^2\prod_i(1-\gamma_i)=\alpha\beta\sqrt{\prod_i(1-\gamma_i)}$,
i.e.\ $\prod_i(1-\gamma_i)=r^2$, proving
Eq.~\eqref{eq:nonuniform_rebirth_surface}.

For a bipartition $A|B$, partial transposition routes the GHZ
coherence into the subspace
$\mathrm{span}\{\ket{1^A0^B},\ket{0^A1^B}\}$. In this ordered basis,
the relevant $2\times2$ block is
\begin{equation}
M_{A|B}^{\boldsymbol\gamma}
=
\begin{pmatrix}
\beta^2\prod_{i\in A}(1-\gamma_i)\prod_{j\in B}\gamma_j
&
\alpha\beta\sqrt{\prod_i(1-\gamma_i)}
\\
\alpha\beta\sqrt{\prod_i(1-\gamma_i)}
&
\beta^2\prod_{i\in A}\gamma_i\prod_{j\in B}(1-\gamma_j)
\end{pmatrix}.
\end{equation}
Its determinant is
\begin{equation}
\begin{aligned}
\det M_{A|B}^{\boldsymbol\gamma}
&=
\beta^4
\Bigl[\prod_{i\in A}\gamma_i(1-\gamma_i)\Bigr]
\Bigl[\prod_{j\in B}\gamma_j(1-\gamma_j)\Bigr]
-
\alpha^2\beta^2\prod_i(1-\gamma_i)
\\
&=
\beta^4
\prod_{\ell=1}^n\gamma_\ell(1-\gamma_\ell)
-
\alpha^2\beta^2\prod_i(1-\gamma_i)
\\
&=
\beta^2
\prod_i(1-\gamma_i)
\Bigl(
\beta^2\prod_i\gamma_i-\alpha^2
\Bigr),
\end{aligned}
\label{eq:nonuniform_det}
\end{equation}
independent of the cut $A|B$. For $\boldsymbol\gamma\in[0,1)^n$ one has 
$\prod_i(1-\gamma_i)>0$, so the determinant vanishes only at
$\prod_i\gamma_i=r^2$, proving 
Eq.~\eqref{eq:nonuniform_entdeath_surface} on the open cube. On 
the omitted boundary $\prod_i(1-\gamma_i)=0$, at least one site is 
completely damped, the GHZ coherence vanishes, and the output is 
diagonal. All bipartite negativities then vanish trivially. This is 
a complete-damping boundary rather than part of the open-cube death 
surface.
\end{proof}

\begin{remark}[Full separability at the nonuniform death surface]
\label{rem:nonuniform_full_sep}
At any $\boldsymbol\gamma$ on $\prod_i\gamma_i=r^2$,
$\rho_{\boldsymbol\gamma}$ is fully separable. With
$\theta_n:=-\sum_{j=1}^{n-1}\theta_j$, define
\begin{equation}
\tau_{\boldsymbol\gamma}
:=
\int_{[0,2\pi)^{n-1}}\!\frac{d\theta_1\cdots d\theta_{n-1}}{(2\pi)^{n-1}}
\bigotimes_{i=1}^n
\bigl(\sqrt{\gamma_i}\ket0+e^{i\theta_i}\sqrt{1-\gamma_i}\ket1\bigr)
\bigl(\sqrt{\gamma_i}\bra0+e^{-i\theta_i}\sqrt{1-\gamma_i}\bra1\bigr).
\end{equation}
Phase averaging as in
Proposition~\ref{prop:alln_complementarity} kills every off-diagonal
entry except the extremal GHZ coherence pair, with diagonal
$\prod_{i:x_i=0}\gamma_i\prod_{i:x_i=1}(1-\gamma_i)$ and extremal
coherence $\sqrt{\prod_i\gamma_i(1-\gamma_i)}$. At 
$\prod_i\gamma_i=r^2$, $\alpha^2=\beta^2\prod_i\gamma_i$ and 
direct comparison gives
\begin{equation}
\rho_{\boldsymbol\gamma}\big|_{\prod_i\gamma_i=r^2}
=
\beta^2\tau_{\boldsymbol\gamma}+\alpha^2\ket{0^n}\!\bra{0^n},
\end{equation}
a convex combination of products.
\end{remark}

\section{Sharpness of the real-coherence hypothesis}
\label{sec:complex_twist_sharpness}
\label{sec:pt}

This section relaxes the real-coherence convention of
Appendix~\ref{sec:preliminaries} and treats a complex GHZ endpoint 
coherence. Fix $\phi\in\mathbb R$ and consider the 
ground-state-preserving phase-covariant deformation of amplitude 
damping
\begin{equation}
\label{eq:pt_channel}
\ket{1}\!\bra{1}\mapsto
(1-\gamma)\ket{1}\!\bra{1}+\gamma\ket{0}\!\bra{0},
\qquad
\ket{0}\!\bra{1}\mapsto
e^{-i\phi}\sqrt{1-\gamma}\,\ket{0}\!\bra{1},
\end{equation}
with $\lambda(\gamma)=e^{-i\phi}\sqrt{1-\gamma}$ complex unless
$\phi\in\pi\mathbb Z$, hence outside the real-$\lambda$ class of
Proposition~\ref{prop:phasecovariant}. Applied to
$\ket{\psi_n}=\alpha\ket{0^n}+\beta\ket{1^n}$, the populations are
unchanged from pure AD and
\begin{equation}
\label{eq:pt_coherence}
c_\phi(\gamma)
=
\alpha\beta\,e^{-in\phi}(1-\gamma)^{n/2},
\qquad
|c_\phi(\gamma)|=\alpha\beta(1-\gamma)^{n/2}.
\end{equation}
Equivalently, this is ordinary AD applied to a complex-phase GHZ
input $\alpha\ket{0^n}+\beta e^{+in\phi}\ket{1^n}$, since only 
the total phase of the endpoint coherence enters the analysis below.

The pure stabilizer states supported on $\{\ket{0^n},\ket{1^n}\}$
are $(\ket{0^n}+\omega\ket{1^n})/\sqrt 2$ with
$\omega\in\{1,-1,i,-i\}$, whose endpoint coherences are the four
axis points $\omega^*/2$. The convex hull is the diamond of
Remark~\ref{rem:complex_coherence},
\begin{equation}
\label{eq:diamond}
|\mathrm{Re}\,c|+|\mathrm{Im}\,c|
\le
\min(p_{\mathbf 0},p_{\mathbf 1}),
\end{equation}
which along the phase-twisted trajectory becomes, with
$F_n(\phi):=|\cos n\phi|+|\sin n\phi|\in[1,\sqrt 2]$,
\begin{equation}
\label{eq:pt_membership}
F_n(\phi)\,\alpha\beta(1-\gamma)^{n/2}
\le
\min(P_0,P_n).
\end{equation}
The partial-transpose block depends on the coherence only through
$|c_\phi|^2$, so the entanglement-death threshold is unchanged,
\begin{equation}
\label{eq:pt_ge}
\gamma_e^{(n)}=r^{2/n},
\end{equation}
while the $P_n$ crossing
$F_n(\phi)\alpha\beta(1-\gamma)^{n/2}=\beta^2(1-\gamma)^n$ gives, for
$\gamma<1$,
\begin{equation}
\label{eq:pt_gp}
(1-\gamma)^{n/2}=rF_n(\phi),
\qquad
\gamma_+^{(n)}(\phi)
=
1-\bigl[rF_n(\phi)\bigr]^{2/n}.
\end{equation}

This is a genuine magic-rebirth threshold only when $rF_n(\phi)<1$
and the $P_0$ constraint is strictly satisfied at the $P_n$ crossing,
\begin{equation}
\label{eq:pt_genuine}
\bigl[1-[rF_n(\phi)]^{2/n}\bigr]^n
>
r^2\bigl[F_n(\phi)^2-1\bigr];
\end{equation}
otherwise the trajectory has no stabilizer window and the
reflected-threshold statement is vacuous. Under Eq.~\eqref{eq:pt_genuine}, the magic-death threshold
is the unique solution of
$\alpha^2+\beta^2\gamma^n=F_n(\phi)\alpha\beta(1-\gamma)^{n/2}$
(strictly increasing in $\gamma$), and
\begin{equation}
\begin{aligned}
\gamma_e^{(n)}+\gamma_+^{(n)}(\phi)
&=
r^{2/n}+1-\bigl[rF_n(\phi)\bigr]^{2/n}
\\
&=
1+r^{2/n}\bigl[1-F_n(\phi)^{2/n}\bigr]
\\
&=
1-\delta_n(\phi),
\end{aligned}
\label{eq:pt_sum}
\end{equation}
with
\begin{equation}
\label{eq:delta_def}
\delta_n(\phi)
:=
r^{2/n}\bigl[F_n(\phi)^{2/n}-1\bigr].
\end{equation}
The deficit $\delta_n(\phi)$ vanishes only on the diamond axes
$n\phi\in\frac{\pi}{2}\mathbb Z$, where
Eq.~\eqref{eq:diamond_criterion} reduces to the same modulus 
inequality as the real GHZ-$X$ 
criterion~\eqref{eq:GHZX_membership} up to a Clifford phase; for generic $\phi$,
$F_n(\phi)>1$ and complementarity is strictly broken. The phase twist
thus moves the magic boundary while leaving the entanglement
boundary fixed: $\gamma_e$ depends only on $|c_\phi|^2$, but
stabilizer membership is controlled by the $\ell^1$ diamond norm
$|\mathrm{Re}\,c|+|\mathrm{Im}\,c|$, phase-sensitive off the axes.

\begin{figure}[b]
\centering
\includegraphics[width=0.8\textwidth]{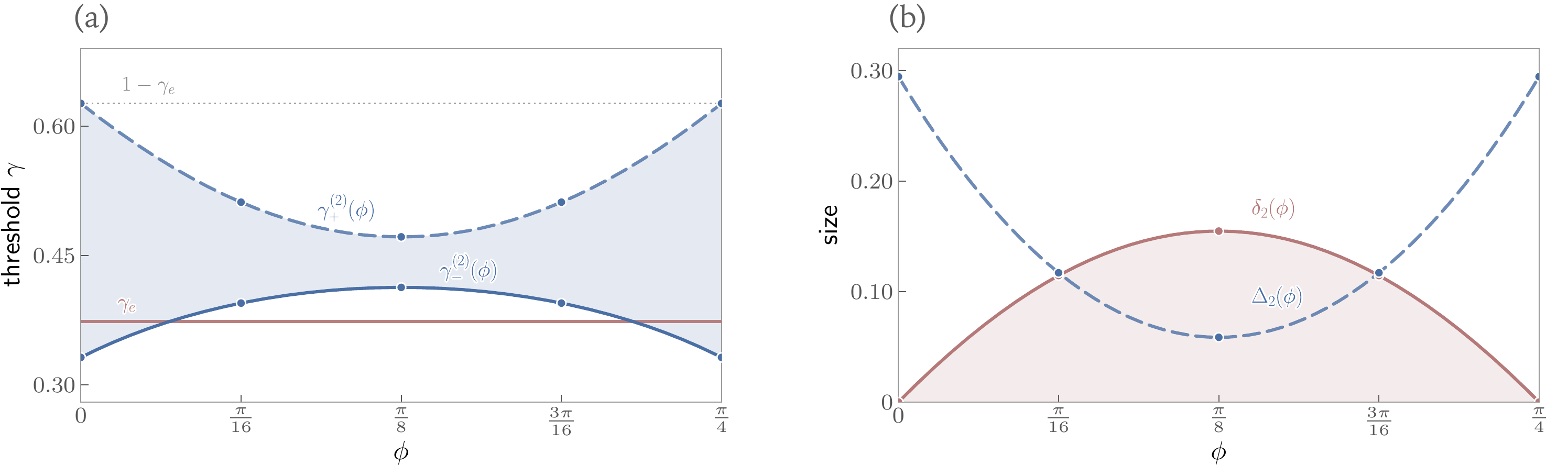}
\caption{%
\textbf{Phase-twisted deformation of the threshold relation
($\alpha=0.35$, $n=2$, $r\simeq0.374$).}
\textbf{(a)}~Thresholds over the symmetry segment
$\phi\in[0,\pi/4]$ of
$F_2(\phi)=|\cos2\phi|+|\sin2\phi|$. Red: phase-blind
$\gamma_e^{(2)}=r$. Gray dashed: reflected target
$1-\gamma_e^{(2)}$. Blue: $\gamma_-^{(2)}(\phi)$ and
$\gamma_+^{(2)}(\phi)$, with stabilizer window
$[\gamma_-^{(2)},\gamma_+^{(2)}]$ shaded. The diamond-axis phases
$\phi\in\{0,\pi/4\}$ have $F_2=1$ and
$\gamma_+^{(2)}=1-\gamma_e^{(2)}$. At $\phi=\pi/8$, $F_2=\sqrt2$
and the rebirth threshold is shifted downward.
\textbf{(b)}~Deficit
$\delta_2(\phi)=r[F_2(\phi)-1]$ and stabilizer-window width
$\Delta_2(\phi)$. The deficit vanishes on the diamond axes 
$\phi\in\{0,\pi/4\}$, where the window is widest. Away from these 
axes the window is compressed, maximally at $\phi=\pi/8$. Markers indicate $\phi=0,\pi/16,\pi/8,3\pi/16,\pi/4$.}
\label{fig:pt}
\end{figure}

For $n=2$, the thresholds and deficit reduce to
\begin{equation}
\label{eq:pt_n2_thresholds}
\gamma_e^{(2)}=r,
\qquad
\gamma_+^{(2)}(\phi)=1-rF_2(\phi),
\qquad
\delta_2(\phi)=r\bigl[F_2(\phi)-1\bigr],
\end{equation}
\begin{equation}
\label{eq:pt_n2_gm}
\gamma_-^{(2)}(\phi)
=
\frac{
\sqrt{\alpha\bigl[4F_2(\phi)\beta
-\bigl(4-F_2(\phi)^2\bigr)\alpha\bigr]}
-F_2(\phi)\alpha
}{2\beta},
\end{equation}
which at $F_2=1$ recovers the pure-AD expression
$[\sqrt{\alpha(4\beta-3\alpha)}-\alpha]/(2\beta)$. At $F_2=\sqrt 2$
the twist is maximal; whenever the finite stabilizer window
persists, $\gamma_-$ is pushed up and $\gamma_+$ pulled down,
compressing the window from both sides.

At $\alpha=0.35$ ($r\simeq 0.374$), Eq.~\eqref{eq:pt_genuine}
holds throughout $\phi\in[0,\pi/4]$. The diamond-axis phases
$\phi\in\{0,\pi/4\}$ give $F_2=1$, $\delta_2=0$, and stabilizer-window
width $\Delta_2:=\gamma_+^{(2)}-\gamma_-^{(2)}\simeq 0.295$. The
maximally twisted point $\phi=\pi/8$ gives $F_2=\sqrt2$,
$\gamma_-^{(2)}\simeq 0.413$, $\gamma_+^{(2)}\simeq 0.472$,
$\delta_2\simeq 0.155$, $\Delta_2\simeq 0.059$, illustrating the 
deficit $\delta_2(\phi)$ and the stabilizer-window response away from the diamond axes (Fig.~\ref{fig:pt}).

\section{Channel conditions for magic rebirth}
\label{sec:channel_conditions}

This section proves Theorem~\ref{thm:channels} of the main text.

\emph{Concurrent Markovian dephasing.} The zero-temperature
semigroup with relaxation rate $\kappa$ and pure dephasing rate
$\Gamma_\phi$ has $\lambda=q^{\,a}$ with $q=e^{-\kappa t}$ and
$a=\tfrac12+\Gamma_\phi/\kappa$, with populations as in pure
amplitude damping. On the trajectory $c=\alpha\beta q^{\,an}$ and
$P_n=\beta^2 q^{\,n}$, so
\begin{equation}
\label{eq:SM_cPn}
\frac{c}{P_n}=r\,q^{-(1-a)n},
\end{equation}
which increases monotonically from $r<1$ when $a<1$, crossing $1$
at $q_+^{(n)}=r^{1/[(1-a)n]}$, is constant for $a=1$, and
decreases for $a>1$. The rate dictionary $1/T_1=\kappa$ and
$1/T_2=1/(2T_1)+\Gamma_\phi$ turns $a<1$ into $T_2>T_1$, and
$\Gamma_\phi\ge0$ gives the ceiling $T_2\le 2T_1$.

\emph{Unital channels.} Every unital phase-covariant qubit channel
with a real coherence factor and no Hamiltonian rotation is a
Pauli channel with $\lambda_x=\lambda_y=:\lambda\in\mathbb R$ and
$\lambda_z$, and complete positivity forces
$|\lambda|\le(1+\lambda_z)/2=:a_z$. The populations evolve by the
classical channel that keeps a bit with probability $a_z$, so
$p_{1^n}\ge\beta^2 a_z^{\,n}\ge\alpha\beta\,a_z^{\,n}\ge|c|$ and the
rebirth facet is never violated. A Hamiltonian $Z$ rotation makes
$c$ complex and produces unitary crossings of the
complex-coherence boundary, analyzed as a controlled deformation
in Appendix~\ref{sec:complex_twist_sharpness}.

\emph{Finite temperature.} Thermal relaxation at total rate
$\Gamma_1$ toward the stationary excited population
$\vartheta\in(0,1/2]$, with pure dephasing $\Gamma_\phi$, sets
$q=e^{-\Gamma_1 t}$ and $a=\tfrac12+\Gamma_\phi/\Gamma_1$, and
acts as
$\ket0\!\bra0\mapsto A\ket0\!\bra0+U\ket1\!\bra1$,
$\ket1\!\bra1\mapsto D\ket0\!\bra0+B\ket1\!\bra1$,
$\ket0\!\bra1\mapsto q^{\,a}\ket0\!\bra1$, with
\begin{equation}
\label{eq:SM_GAD}
A=1-\vartheta(1-q),
\quad
U=\vartheta(1-q),
\quad
D=(1-\vartheta)(1-q),
\quad
B=\vartheta+(1-\vartheta)q .
\end{equation}
The channel is phase covariant with a real coherence factor, so
the trajectory stays on the real GHZ-$X$ manifold with
$p_{0^n}=\alpha^2A^n+\beta^2D^n$,
$p_{1^n}=\alpha^2U^n+\beta^2B^n$,
$c=\alpha\beta q^{\,an}$, and the membership criterion of
Theorem~\ref{thm:GHZ} applies verbatim. For the rebirth margin
$h_T=c-p_{1^n}$,
\begin{equation}
\label{eq:SM_hT}
h_T(q{=}0)=-\vartheta^{\,n}<0,
\qquad
h_T(q{=}1)=\beta(\alpha-\beta)<0,
\end{equation}
so the violated set is bounded away from both ends of the
trajectory and any reborn branch is a finite island, nonempty
if and only if $\max_{q\in(0,1)}h_T>0$. For $1/2\le a<1$, the positive set of $h_T$ is a single interval.
Indeed, writing $x=-\log q$, one has
\begin{equation*}
\frac{p_{1^n}}{c}
=
r\vartheta^n\bigl(e^{ax}-e^{(a-1)x}\bigr)^n
+
r^{-1}\bigl[\vartheta e^{ax}
+(1-\vartheta)e^{(a-1)x}\bigr]^n .
\end{equation*}
This expression is strictly convex in $x$, so its strict sublevel
set below one, equivalently $\{q:h_T(q)>0\}$, is a single interval. Since
$\partial_\vartheta h_T=-n(1-q)[\alpha^2U^{\,n-1}
+\beta^2B^{\,n-1}]<0$, this maximum decreases strictly with
$\vartheta$. For $a<1$ it is positive at $\vartheta=0$, which
defines the critical population
$\vartheta_c(a,r,n)=\sup\{\vartheta:\max_q h_T>0\}$. For $a\ge1$,
$q^{\,a}\le q\le B$ gives
$c/p_{1^n}\le r\,(q^{\,a}/B)^n\le r<1$ at every temperature.
Numerically, $\vartheta_c=0.120$ at $(a,n,\alpha)=(1/2,2,0.4)$ and
$\vartheta_c=0.208$ at $(1/2,4,0.4)$.

\emph{Uniqueness of the magic death.} Throughout the thermal
family, with $p_{0^n}=\alpha^2A^n+\beta^2D^n$,
\begin{equation}
\label{eq:SM_death_unique}
\frac{d}{dq}\log\frac{c}{p_{0^n}}
\;\ge\;
\frac{an}{q}-\frac{n\vartheta\,\alpha^2A^{\,n-1}}{p_{0^n}}
\;\ge\;
\frac{n}{qA}\,\bigl(aA-\vartheta q\bigr),
\end{equation}
using $dA/dq=\vartheta$, $dD/dq=-(1-\vartheta)$, and
$p_{0^n}\ge\alpha^2A^n$. The bracket
$aA-\vartheta q=a(1-\vartheta)+\vartheta q\,(a-1)$ is nonnegative
for every $a\ge\vartheta$, at worst $a-\vartheta$ for $a<1$ and
manifestly positive for $a\ge1$, and $\vartheta\le1/2\le a$ always
holds. The discarded term $\beta^2(1-\vartheta)D^{\,n-1}$ makes
the bound strict for $q<1$, so $c/p_{0^n}$ decreases strictly in
physical time from $1/r>1$ to $0$ and the $p_{0^n}$ facet is
crossed exactly once. Consequently the trajectory is death-only whenever $a\ge1$ or
$\vartheta\ge\vartheta_c$. Whenever $h_T>0$, positivity of the
endpoint block gives
$c^2\le p_{0^n}p_{1^n}<p_{0^n}c$, hence $p_{0^n}>c$; the island
therefore lies after the unique magic-death crossing. Thus, for
$a<1$ and $0<\vartheta<\vartheta_c$, the trajectory realizes death,
rebirth, and a second death.

\section{Lossless single-qubit extraction}
\label{sec:extraction}

The $n-1$ parity stabilizers $Z_iZ_{i+1}$, $i=1,\dots,n-1$, share the
common $+1$ eigenspace
$\mathcal V:=\mathrm{span}\{\ket{0^n},\ket{1^n}\}$ with projector
$\Pi_{\mathcal V}=\ket{0^n}\!\bra{0^n}+\ket{1^n}\!\bra{1^n}$. Every
weight-$k$ component of $\rho_n(\gamma)$ with $1\le k\le n-1$ has zero
overlap with $\mathcal V$, so
\begin{equation}
\Pi_{\mathcal V}\rho_n(\gamma)\Pi_{\mathcal V}
=
P_0\ket{0^n}\!\bra{0^n}
+P_n\ket{1^n}\!\bra{1^n}
+c\bigl(\ket{0^n}\!\bra{1^n}+\mathrm{h.c.}\bigr),
\end{equation}
with postselection probability
\begin{equation}
\label{eq:success_prob}
\mathbb{P}(+1\cdots+1)
=
P_0+P_n
\ge
P_0
=
\alpha^2+\beta^2\gamma^n
\ge
\alpha^2
\end{equation}
uniformly in $n$ and $\gamma$. A CNOT cascade with qubit~1 as control implements the decoding
$\ket{0^n}\mapsto\ket{0}\ket{0^{n-1}}$ and
$\ket{1^n}\mapsto\ket{1}\ket{0^{n-1}}$. Discarding the $n-1$ spectator
qubits, deterministically in $\ket{0}$, leaves
\begin{equation}
\tilde\rho(\gamma)
=
\frac{1}{P_0+P_n}
\begin{pmatrix}
P_0 & c\\
c & P_n
\end{pmatrix},
\qquad
\langle X\rangle=\frac{2c}{P_0+P_n},
\quad
\langle Y\rangle=0,
\quad
\langle Z\rangle=\frac{P_0-P_n}{P_0+P_n},
\label{eq:bloch_coords}
\end{equation}
matching Eq.~\eqref{eq:decoded}.

Since $c\ge 0$, the octahedron condition
$|\langle X\rangle|+|\langle Z\rangle|\le 1$ becomes
$2c+|P_0-P_n|\le P_0+P_n$, i.e.\ $c\le\min(P_0,P_n)$, which is the
$n$-qubit GHZ-$X$ membership criterion of
Corollary~\ref{cor:GHZ_AD}. This is Eq.~\eqref{eq:extraction}.
Moreover, with the signed-decomposition normalization used here,
any single-qubit state satisfies
\begin{equation}
\label{eq:singleq_RoM_norm}
\mathcal R(\rho)
=
1+\max\{0,\,|\langle X\rangle|+|\langle Y\rangle|+|\langle Z\rangle|-1\}.
\end{equation}
Using Eq.~\eqref{eq:bloch_coords} and $\langle Y\rangle=0$,
\begin{equation}
\label{eq:extraction_lossless_expanded}
\begin{aligned}
(P_0+P_n)\bigl[\mathcal R(\tilde\rho)-1\bigr]
&=
\max\{0,\,2c+|P_0-P_n|-(P_0+P_n)\}
\\
&=
2\max\{0,\,c-\min(P_0,P_n)\}
\\
&=
2\max\{0,\,c-P_0,\,c-P_n\}
\\
&=
\mathcal R(\rho_n(\gamma))-1,
\end{aligned}
\end{equation}
the last line by Eq.~\eqref{eq:GHZ_AD_RoM}. This proves the 
lossless identity. Both sides vanish on the 
stabilizer window and coincide outside it. At the endpoint 
$\gamma=1$ they vanish again because 
$\rho_n(1)=\ket{0^n}\!\bra{0^n}\in\mathcal S$. A second fixed-amplitude size scan of the decoded trajectories is 
shown in Fig.~\ref{fig:multin_extraction}, complementing the 
main-text Fig.~\ref{fig:extraction}.

The decoded Bloch vector $(x,0,z)$ lies in the $XZ$ plane with
$x=2c/(P_0+P_n)\ge 0$ since $c\ge 0$, while
$z=(P_0-P_n)/(P_0+P_n)$ changes sign across the trajectory. A
Pauli $X$ correction conditioned on $z<0$ (identity otherwise) maps 
the Bloch vector to $(x,0,|z|)$ in the first $XZ$-quadrant.
The Hadamard twirl
$\mathcal T_H(\sigma)=(\sigma+H\sigma H)/2$ then sends
\begin{equation}
(|x|,0,|z|)
\;\longmapsto\;
\left(\frac{|x|+|z|}{2},\,0,\,\frac{|x|+|z|}{2}\right),
\end{equation}
i.e.\ a Bloch vector along the $H$ direction
$(X+Z)/\sqrt2$ with positive polarization
$h(\gamma)=(|x|+|z|)/\sqrt 2$. The single-qubit stabilizer
octahedron $|X|+|Z|\le 1$ has $H$-axis half-width $1/\sqrt 2$, so
\begin{equation}
\tilde\rho(\gamma)\notin\mathcal O
\;\Longleftrightarrow\;
|x|+|z|>1
\;\Longleftrightarrow\;
h(\gamma)>1/\sqrt 2,
\end{equation}
matching Theorem~\ref{thm:extract}.

The same coordinate determines when the decoded state enters the 
standard Bravyi--Kitaev distillation windows. The Hadamard twirl maps the sign-corrected Bloch vector $(|x|,0,|z|)$ to the positive $H$ axis with polarization
\begin{equation}
\label{eq:H_polarization}
h=\frac{|x|+|z|}{\sqrt 2}.
\end{equation}
The Bravyi--Kitaev $\ket{H}$-type 15-to-1 
protocol~\cite{Bravyi2005Universal} distills the convex mixture 
$(1-\epsilon)\ket{H}\!\bra{H}+\epsilon\ket{H_\perp}\!\bra{H_\perp}$ 
with $h=1-2\epsilon$ to $\ket{H}$ whenever 
$\epsilon<\epsilon_H\simeq 0.141$~\cite{Bravyi2005Universal}, 
equivalently $h>1-2\epsilon_H\simeq 0.718$. In Bloch coordinates this 
is
\begin{equation}
\label{eq:H_threshold}
|x|+|z|>\sqrt 2(1-2\epsilon_H)\simeq 1.015,
\end{equation}
the $\ket{H}$-type sufficient threshold marked in
Figs.~\ref{fig:extraction} and~\ref{fig:multin_extraction}.
Improved $H$-direction protocols~\cite{Reichardt2005Quantum} extend
distillability further toward the octahedron edge $h=1/\sqrt 2$,
closing the residual sliver $h\in(1/\sqrt 2,\,1-2\epsilon_H]$ near
the boundary.

A cyclic Clifford twirl over $X\to Y\to Z\to X$ sends
\begin{equation}
(|x|,0,|z|)
\;\longmapsto\;
\left(\frac{|x|+|z|}{3},\,\frac{|x|+|z|}{3},\,\frac{|x|+|z|}{3}\right),
\end{equation}
mapping the state to the tetrahedral axis $(X+Y+Z)/\sqrt 3$ with 
polarization
\begin{equation}
\label{eq:T_polarization}
t=\frac{|x|+|z|}{\sqrt 3}.
\end{equation}
The Bravyi--Kitaev $\ket{T}$-type 5-to-1
protocol~\cite{Bravyi2005Universal} distills whenever
$t>\sqrt{3/7}$, giving in Bloch coordinates
\begin{equation}
\label{eq:T_threshold}
|x|+|z|>\frac{3}{\sqrt 7}\simeq 1.134,
\end{equation}
the $\ket{T}$-type sufficient threshold marked in
Figs.~\ref{fig:extraction} and~\ref{fig:multin_extraction}.
The $\ket{T}$-distillability region is a proper subset of the
$\ket{H}$-distillability region, since
$3/\sqrt 7>\sqrt 2(1-2\epsilon_H)$.

\begin{figure}[t]
\centering
\includegraphics[width=0.6\textwidth]{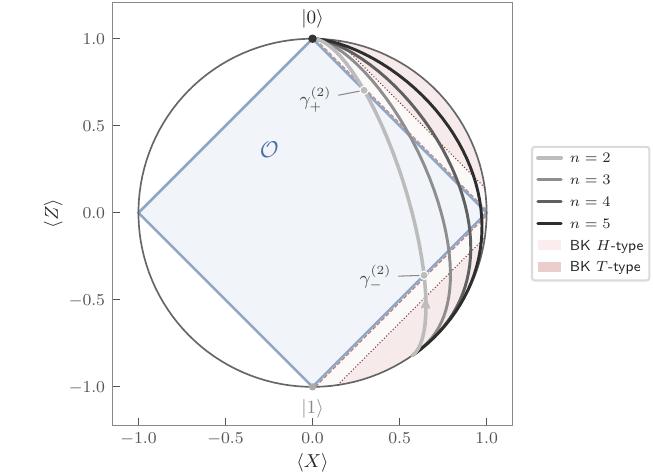}
\caption{%
\textbf{Size scan of the extracted qubit.}
Decoded trajectories $\tilde\rho_n(\gamma)$ in the Bloch $XZ$ 
plane at fixed $\alpha=0.30$ for $n=2,3,4,5$, complementing the 
main-text Fig.~\ref{fig:extraction} at $\alpha=0.20$. The unit 
circle is the Bloch-sphere cross-section, and the rotated square 
is the single-qubit stabilizer octahedron $\mathcal O$. Light 
caps mark the Bravyi--Kitaev $\ket{H}$-type 15-to-1 sufficient 
threshold $|x|{+}|z|\gtrsim 1.015$~\cite{Bravyi2005Universal}, and 
darker nested caps mark the $\ket{T}$-type 5-to-1 sufficient 
threshold $|x|{+}|z|>3/\sqrt 7\simeq 1.134$ after Clifford twirling  
to the $(X{+}Y{+}Z)/\sqrt 3$ axis. Each trajectory enters 
$\mathcal O$ at $\gamma_-^{(n)}$, exits at 
$\gamma_+^{(n)}=1-r^{2/n}$, remains outside for 
$\gamma_+^{(n)}<\gamma<1$, and returns to the stabilizer vertex 
$\ket{0}$ at $\gamma=1$. Entry and exit markers are shown for 
$n=2$. At this larger input amplitude, the $n=4$ and $n=5$ reborn 
branches already enter the $\ket{T}$-type sufficient region, 
while the $n=2$ and $n=3$ reborn branches remain in the 
$\ket{H}$-type region. This is the finite-size version of the 
large-$n$ behavior in Eq.~\eqref{eq:large_n_dist_coord}.}
\label{fig:multin_extraction}
\end{figure}

The increasing depth of the reborn decoded branch in the distillable region admits a closed-form description in the large-$n$ limit. Fix $0<\alpha<1/\sqrt 2$, set $r=\alpha/\beta$ and $\tau=\ln(\beta/\alpha)$, and adopt the natural scale 
$s=n\gamma=O(1)$. As $n\to\infty$,
\begin{equation}
\label{eq:large_n_limits}
P_0\to\alpha^2,
\qquad
P_n\to\beta^2 e^{-s},
\qquad
c\to\alpha\beta\, e^{-s/2}.
\end{equation}
The decoded state of Eq.~\eqref{eq:bloch_coords} therefore 
approaches the pure single-qubit limit
\begin{equation}
\label{eq:large_n_state}
\tilde\rho_\infty(s)
=
\frac{1}{\alpha^2+\beta^2 e^{-s}}
\begin{pmatrix}
\alpha^2 & \alpha\beta\, e^{-s/2}\\
\alpha\beta\, e^{-s/2} & \beta^2 e^{-s}
\end{pmatrix}.
\end{equation}
The rebirth point becomes $s=2\tau$ in this scaling, and on the 
reborn side we set
\begin{equation}
\label{eq:u_variable}
u:=r\, e^{s/2}\ge 1.
\end{equation}
The decoded Bloch coordinates simplify to 
$\langle X\rangle=2u/(1+u^2)$ and 
$\langle Z\rangle=(u^2-1)/(1+u^2)$, both non-negative for 
$u\ge 1$, so the sign-corrected distillation coordinate is
\begin{equation}
\label{eq:large_n_dist_coord}
x+|z|
\;\longrightarrow\;
\frac{2u+u^2-1}{1+u^2}.
\end{equation}
This quantity equals $1$ at the rebirth point $u=1$, attains its 
maximum value $\sqrt 2$ at $u^\star=1+\sqrt 2$, and returns to 
$1$ as $u\to\infty$. Every sufficient threshold below $\sqrt 2$, 
including both the Bravyi--Kitaev $\ket{H}$-type threshold 
$\simeq 1.015$ of Eq.~\eqref{eq:H_threshold} and the 
$\ket{T}$-type threshold $3/\sqrt 7\simeq 1.134$ of 
Eq.~\eqref{eq:T_threshold}, is therefore crossed by the 
successful decoded branch for all sufficiently large $n$. The 
finite-$n$ trajectories of Figs.~\ref{fig:extraction} 
and~\ref{fig:multin_extraction} are the approach to this 
limit.

At the critical stabilizer input $r=1$, the same extraction yields 
an explicit asymptotic $\ket{H}$-state injection. With 
$\ket{\mathrm{GHZ}_n^+}$ as in Eq.~\eqref{eq:GHZpm}, set 
$a:=\sqrt 2-1$ and 
\begin{equation}
\label{eq:catinj_gamma}
\gamma_n^\star:=1-a^{2/n},
\end{equation}
so that $(1-\gamma_n^\star)^{n/2}=a$. The amplitude-damped GHZ 
quantities at this tuning are 
$P_0=(1+\epsilon_n)/2$, $P_n=a^2/2$, $c=a/2$, with 
$\epsilon_n:=(\gamma_n^\star)^n$. The decoded state of 
Eq.~\eqref{eq:bloch_coords} reads
\begin{equation}
\label{eq:catinj_state}
\tilde\rho_n^\star
=
\frac{1}{1+a^2+\epsilon_n}
\begin{pmatrix}
1+\epsilon_n & a\\
a & a^2
\end{pmatrix}.
\end{equation}
For $a=\sqrt 2-1$,
\begin{equation}
\label{eq:catinj_aidentity}
\frac{2a}{1+a^2}
=
\frac{1-a^2}{1+a^2}
=
\frac{1}{\sqrt 2},
\end{equation}
so $\tilde\rho_n^\star\to\ket{H}\!\bra{H}$ as $\epsilon_n\to 0$, 
where $\ket{H}$ has Bloch vector $(X+Z)/\sqrt 2$. The success 
probability is
\begin{equation}
\label{eq:catinj_psucc}
p_{\rm succ}
=P_0+P_n
=\frac{1+a^2+\epsilon_n}{2}
=2-\sqrt 2+\frac{\epsilon_n}{2}.
\end{equation}
Moreover
\begin{equation}
\label{eq:catinj_asymp}
\gamma_n^\star=\frac{2\ln a^{-1}}{n}+O(n^{-2}),
\qquad
\epsilon_n=O\!\left[\left(\frac{2\ln a^{-1}}{n}\right)^n\right],
\end{equation}
with $a^{-1}=\sqrt 2+1$ and $2\ln(\sqrt 2+1)\simeq 1.763$. The ideal decoded state 
therefore approaches $\ket{H}$ super-exponentially in the cat 
length, while the postselection probability tends to the constant 
$2-\sqrt 2\simeq 0.586$. This realizes the cat-state injection 
primitive of the main text within the lossless-extraction 
framework of Theorem~\ref{thm:extract}.

\subsection{Yield of the extraction}
\label{sec:yield_sm}

On the reborn branch $P_n=\beta^2q^n<\beta^2r^2=\alpha^2\le P_0$
with $q=1-\gamma$, so by losslessness
$Y=(P_0+P_n)[\mathcal R(\tilde\rho)-1]=\mathcal R(\rho_n)-1
=2(c-P_n)=2(\alpha\beta q^{n/2}-\beta^2q^{n})$, which is
Eq.~\eqref{eq:yield} of the main text in the collapsed variable
$s=q^{n/2}$. The parabola $2s(\alpha\beta-\beta^2 s)$ is positive
on $0<s<r$ and maximal at $s_*=\alpha/(2\beta)=r/2$, where
$Y=\alpha^2/2$, independently of $n$. Under the dephased Markovian
dynamics with coherence exponent $a$ of the main text,
$Y=2(\alpha\beta q^{\,an}-\beta^2q^{\,n})$ optimizes at
$q_*^{(1-a)n}=ar$ with
$Y^{\max}=2(1-a)\,\alpha\beta\,(ar)^{a/(1-a)}$, recovering
$\alpha^2/2$ at $a=1/2$.

\subsection{Syndrome-readout tolerance}
\label{sec:tolerance_sm}

This subsection proves the tolerance bound quoted in the main
text. Model each of the $n-1$ recorded parity bits as
independently flipped with probability $\epsilon\le1/2$, with
acceptance on the recorded trivial syndrome. Gate errors and
measurement backaction are not included. Write
$[x]_+:=\max\{x,0\}$.

The trivial-syndrome component of $\rho_n(\gamma)$, of weight
$P_0+P_n$, is recorded trivial with probability
$(1-\epsilon)^{n-1}$ and decodes to the noiseless output
$\tilde\rho$. A component in a syndrome sector with $k\ge1$
negative bits is falsely accepted with probability
\begin{equation}
\epsilon^k(1-\epsilon)^{n-1-k}
\;\le\;
\epsilon\,(1-\epsilon)^{n-2},
\qquad k\ge 1,\ \epsilon\le\tfrac12 .
\end{equation}
Every nontrivial sector is diagonal in the computational basis,
and the CNOT cascade maps computational states to computational
states, so the falsely accepted weight decodes to a diagonal
single-qubit state with Bloch vector $(0,0,z_{\rm j})$,
$|z_{\rm j}|\le1$. Writing
$A_\epsilon:=(1-\epsilon)^{n-1}(P_0+P_n)$ for the accepted signal
weight and $J$ for the accepted junk weight,
\begin{equation}
J\;\le\;J_{\max}:=\epsilon\,(1-\epsilon)^{n-2}\,(1-P_0-P_n),
\end{equation}
the accepted decoded state is the mixture
$\tilde\rho_\epsilon
=(A_\epsilon\tilde\rho+J\,\sigma_{\rm diag})/(A_\epsilon+J)$ with
$\sigma_{\rm diag}$ diagonal. Its Bloch components obey
$|x_\epsilon|=A_\epsilon|x|/(A_\epsilon+J)$ and
$|z_\epsilon|\ge(A_\epsilon|z|-J)/(A_\epsilon+J)$, so with the
single-qubit formula $\mathcal R-1=[\,|x|+|z|-1\,]_+$ at $y=0$,
\begin{equation}
\bigl(A_\epsilon+J\bigr)
\bigl[\mathcal R(\tilde\rho_\epsilon)-1\bigr]
\;\ge\;
\Bigl[A_\epsilon\bigl[\mathcal R(\tilde\rho)-1\bigr]-2J\Bigr]_+ .
\end{equation}
On the reborn branch, losslessness gives
$A_\epsilon[\mathcal R(\tilde\rho)-1]
=(1-\epsilon)^{n-1}\,2(c-P_n)$. The right-hand side divided by
$A_\epsilon+J$ is strictly decreasing in $J$, so the worst case is
$J=J_{\max}$, and dividing numerator and denominator by
$(1-\epsilon)^{n-2}$ yields
\begin{equation}
\label{eq:SM_tolerance}
\mathcal R(\tilde\rho_\epsilon)-1
\;\ge\;
\frac{2\bigl[(1-\epsilon)(c-P_n)
-\epsilon\,(1-P_0-P_n)\bigr]_+}
{(1-\epsilon)(P_0+P_n)+\epsilon\,(1-P_0-P_n)} .
\end{equation}
The numerator is positive if and only if
\begin{equation}
\epsilon<\epsilon_c
:=\frac{c-P_n}{\,c-P_n+1-P_0-P_n\,}.
\end{equation}
Two limits are worth recording. At fixed $s=q^{n/2}$ and large
$n$, one has $P_0\to\alpha^2$, $P_n=\beta^2s^2$, $c=\alpha\beta s$,
and $1-P_0-P_n\to\beta^2(1-s^2)$, so
\begin{equation}
\epsilon_c
\;\longrightarrow\;
\frac{s(\alpha-\beta s)}{\,s(\alpha-\beta s)+\beta(1-s^2)\,},
\end{equation}
an $n$-independent curve. At the yield optimum $s=r/2$ the
numerator is $\beta r(\alpha-\beta r/2)/2=\alpha^2/4=\beta^2r^2/4$
and the denominator is $\beta^2r^2/4+\beta^2(1-r^2/4)=\beta^2$,
hence
\begin{equation}
\epsilon_c\bigl(s_*\bigr)\;\longrightarrow\;\frac{r^2}{4},
\end{equation}
about $4.8\%$ at $\alpha=0.4$. The tolerance therefore does not
degrade with register size.

\subsection{Comparison with one-qubit amplitude damping}
\label{sec:naive_sm}

This subsection derives the one-qubit ceiling quoted in the main
text and the collective quality curve that surpasses it.

\emph{One-qubit ceiling.} Consider one-qubit amplitude damping
acting on an arbitrary stabilizer input. By convexity it suffices
to consider the six octahedron vertices. The poles $\ket0,\ket1$
map to diagonal states, stabilizer for every $\gamma$. The four
equatorial vertices are phase-covariance equivalent to $\ket+$,
whose image has Bloch vector $(\sqrt q,0,1-q)$ with $q=1-\gamma$,
so
\begin{equation}
\mathcal R_{\rm naive}(q)-1
=\bigl[\sqrt q+1-q-1\bigr]_+
=\sqrt q-q\;\le\;\frac14,
\qquad
x+z=\sqrt q+1-q\;\le\;\frac54,
\end{equation}
both maxima at $q=1/4$. In the quality--yield plane
$(Q,Y):=(x+|z|,\ \mathcal R-1)$ the deterministic one-qubit family
is therefore the segment of the line $Y=Q-1$ with
$1\le Q\le 5/4$. Since $5/4>3/\sqrt7$, the one-qubit route already
enters the standard $\ket T$-type sufficient window, and the
collective advantage is not the onset of distillability but a
strictly higher quality at matched expected robustness yield.

\emph{Collective route at the stabilizer boundary $r=1$.} Both
routes then start from stabilizer inputs and use only damping and
stabilizer operations. Set $u:=q^{-n/2}\ge1$, which in the scaling
limit $n\gamma=O(1)$ equals $e^{n\gamma/2}$ up to vanishing
corrections. At $\alpha=\beta=1/\sqrt2$ the yield is
\begin{equation}
Y_{\rm coll}
=\mathcal R(\rho_n)-1
=q^{n/2}-q^{\,n}
=\frac{u-1}{u^2},
\end{equation}
by losslessness, and the decoded quality is, up to corrections of
order $\gamma^n$ that vanish super-exponentially at fixed $u$,
\begin{equation}
Q_{\rm coll}
=x+|z|
=\frac{2u+u^2-1}{1+u^2}.
\end{equation}
The yield is maximal at $u=2$, where $Y_{\rm coll}=1/4$ equals the
one-qubit maximal yield while $Q_{\rm coll}=7/5$ exceeds the
one-qubit quality ceiling $5/4$. The quality is maximal at
$u=1+\sqrt2$, where $Q_{\rm coll}=\sqrt2$,
$Y_{\rm coll}=3\sqrt2-4\simeq0.243$, and
$p_{\rm succ}=(u^2+1)/(2u^2)=2-\sqrt2$. The quality maximum
coincides with the cat-injection tuning
$(1-\gamma_n^\star)^{n/2}=\sqrt2-1$ of the main text, so the same
working point delivers a near-pure $\ket H$ input at essentially
the one-qubit maximal yield.

\emph{Fixed-yield dominance for general $r\le1$.} With
$u:=r\,q^{-n/2}$ on the reborn branch, the same computation gives,
in the scaling limit,
\begin{equation}
Y_{\rm coll}(u)=\frac{2\alpha^2(u-1)}{u^2},
\qquad
Q_{\rm coll}(u)=\frac{2u+u^2-1}{1+u^2},
\end{equation}
with $Q_{\rm coll}$ independent of $\alpha$. Comparing with the
one-qubit line $Q_{\rm naive}=1+Y$ at equal yield,
\begin{equation}
Q_{\rm coll}-Q_{\rm naive}
=
\frac{2(u-1)\bigl[u^2-\alpha^2(u^2+1)\bigr]}{u^2(u^2+1)}
\;>\;0
\qquad\text{for } u>r ,
\end{equation}
which holds automatically on the reborn branch, where $u\ge1>r$.
At every expected yield the collective route attains, its output
quality strictly exceeds the one-qubit quality at the same yield.
In particular, within amplitude damping of stabilizer inputs,
every quality above the one-qubit ceiling $5/4$, including the
near-$\ket H$ limit $\sqrt2$, requires the collective route.
Yields are counted per run of the respective protocol, one damped
register against one damped qubit.

\section{Generalized-\texorpdfstring{$W$}{W} membership criterion}
\label{sec:W_membership}

\begin{lemma}[Real single-excitation criterion]
\label{lem:W_row_dominance}
Consider a density operator supported on the zero- and
single-excitation sectors,
\begin{equation}
\label{eq:W_sector_state}
\rho
=
p_0\ket{0^n}\!\bra{0^n}
+
\sum_{i,j=1}^n B_{ij}\ket{e_i}\!\bra{e_j},
\qquad
B=B^T\succeq 0,\quad
B_{ij}\in\mathbb R,\quad
p_0+\Tr B=1 .
\end{equation}
Then
\begin{equation}
\label{eq:W_row_dominance}
\rho\in\mathcal S
\quad\Longleftrightarrow\quad
B_{ii}\ge\sum_{j\ne i}|B_{ij}|
\qquad
\text{for every } i .
\end{equation}
\end{lemma}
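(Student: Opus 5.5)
Necessity comes from Lemma~\ref{lem:pair_obstruction} together with the support structure of pure stabilizer states; sufficiency comes from an explicit convex decomposition, in the spirit of Theorem~\ref{thm:GHZX_cross_section}.

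\emph{Necessity.} Take a stabilizer decomposition $\rho=\sum_a\mu_a\ket{\phi_a}\!\bra{\phi_a}$. Since $\rho$ is positive and has no weight outside $\mathcal V_{\le1}:=\mathrm{span}\{\ket{0^n},\ket{e_1},\dots,\ket{e_n}\}$, every $\ket{\phi_a}$ lies in $\mathcal V_{\le1}$. Its computational support $A_a$ is therefore an affine subspace contained in $\{0^n,e_1,\dots,e_n\}$.

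An affine subspace there has size $1$ or $2$. Size $4$ is impossible: $\{x,y,z,x\oplus y\oplus z\}$ with $x,y,z$ drawn from weight-$\le1$ strings forces $x\oplus y\oplus z$ to have weight $\ge2$ (for example, $0\oplus e_i\oplus e_j$ has weight $2$, and $e_i\oplus e_j\oplus e_k$ has weight $1$ or $3$, and weight $1$ would require a repeated index). So each $\ket{\phi_a}$ is either a basis state or $(\ket{x}+\omega\ket{y})/\sqrt2$ with $\omega\in\{\pm1,\pm i\}$.

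Fix a row $i$ and let $u_a=\langle e_i|\phi_a\rangle$. The key observation is that $\ket{\phi_a}$ has at most one partner $j\ne i$ with nonzero amplitude, and when it does, $|u_a\bar v_{a,j}|=|u_a|^2$. Hence
\[
\sum_{j\ne i}|B_{ij}|\le\sum_a\mu_a\sum_{j\ne i}|u_a v_{a,j}^*|\le\sum_a\mu_a|u_a|^2=B_{ii}.
\]
This sharpens the pairwise obstruction to a row sum.

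\emph{Sufficiency.} Assume row dominance. For each pair $i<j$ with $B_{ij}\ne0$, put weight $2|B_{ij}|$ on $(\ket{e_i}+\sgn(B_{ij})\ket{e_j})/\sqrt2$, which is a stabilizer state by item (iii) of Appendix~\ref{sec:preliminaries}. These terms contribute $|B_{ij}|$ to each of the diagonals $ii$ and $jj$, and exactly $B_{ij}$ to the off-diagonal. Fill the residual diagonal entries with $B_{ii}-\sum_{j\ne i}|B_{ij}|\ge0$ on $\ket{e_i}\!\bra{e_i}$, and put $p_0$ on $\ket{0^n}\!\bra{0^n}$. The coefficients are nonnegative and sum to $p_0+\Tr B=1$, so $\rho\in\mathcal S$. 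This is the standard diagonally dominant decomposition.

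\emph{Main obstacle.} The delicate step is the support argument in the necessity direction: that each component has at most one off-diagonal partner within the sector. I would check the affine-subspace claim carefully, including possible supports such as $\{0^n,e_i\}$. I would also confirm that realness is not needed for necessity, because absolute values are taken throughout. Realness is used only in sufficiency, where the sign choice $\omega=\pm1$ reproduces $B_{ij}$ exactly.
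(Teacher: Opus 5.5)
Your proposal is correct and follows the paper's proof. For necessity you use the same classification of affine supports inside $\{0^n,e_1,\dots,e_n\}$, which have size $1$ or $2$ only; your per-component bound $\sum_{j\ne i}|u_a v_{a,j}^*|\le|u_a|^2$ is equivalent to the paper's aggregation over the total weight $\omega_{ij}$ of the $\{e_i,e_j\}$-supported components. For sufficiency you use the identical diagonally dominant decomposition into $(\ket{e_i}\pm\ket{e_j})/\sqrt2$, $\ket{e_i}$, and $\ket{0^n}$, and your remark that realness of $B$ is needed only in that direction is accurate.
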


\begin{proof}
Suppose first that $\rho\in\mathcal S$, with pure-stabilizer
decomposition
\begin{equation}
\rho=\sum_a\mu_a\ket{\phi_a}\!\bra{\phi_a},
\qquad
\mu_a\ge0,\qquad
\sum_a\mu_a=1 .
\end{equation}
Since $\rho$ is supported on the zero- and single-excitation sectors
and has zero diagonal weight on every basis state of Hamming weight
at least two, each $\ket{\phi_a}$ must itself be supported on the
zero- and single-excitation sectors. The computational-basis support
of a pure stabilizer state is an affine subspace. Hence the only
allowed supports are
\begin{equation}
\{0^n\},\qquad
\{e_i\},\qquad
\{0^n,e_i\},\qquad
\{e_i,e_j\}\quad (i\ne j).
\end{equation}
Indeed, an affine subspace containing $0^n,e_i,e_j$ would also
contain $e_i\oplus e_j$, of weight two; an affine subspace
containing three distinct single-excitation strings would contain a
weight-three string. Thus the allowed pure stabilizer states are
\begin{align}
&\ket{0^n},\quad \ket{e_i},\quad
\frac{\ket{0^n}+s\ket{e_i}}{\sqrt2},
\qquad
\frac{\ket{e_i}+s\ket{e_j}}{\sqrt2},
\qquad
s\in\{\pm1,\pm i\}.
\end{align}

Only the last class contributes to off-diagonal entries
$B_{ij}$ with $i\ne j$. Let $\omega_{ij}\ge0$ denote the total
decomposition weight of all pure stabilizer states supported on
$\{e_i,e_j\}$. Each such state contributes an off-diagonal matrix
element of modulus $\mu_a/2$, and therefore
\begin{equation}
|B_{ij}|\le \frac{\omega_{ij}}{2}.
\end{equation}
On the other hand, the same pair-supported states contribute
$\omega_{ij}/2$ to each of $B_{ii}$ and $B_{jj}$, while all other
contributions to the diagonal are non-negative. Hence
\begin{equation}
B_{ii}
\ge
\sum_{j\ne i}\frac{\omega_{ij}}{2}
\ge
\sum_{j\ne i}|B_{ij}|,
\end{equation}
which proves necessity.

Conversely, assume the row-dominance inequalities
\eqref{eq:W_row_dominance}. For every pair $i<j$ with
$B_{ij}\ne0$, let $s_{ij}=\sgn(B_{ij})\in\{\pm1\}$ and
\begin{equation}
\ket{G_{ij}}
=
\frac{\ket{e_i}+s_{ij}\ket{e_j}}{\sqrt2}.
\end{equation}
Then
\begin{equation}
\label{eq:W_row_decomp}
\rho
=
p_0\ket{0^n}\!\bra{0^n}
+
\sum_{i<j}2|B_{ij}|\,\ket{G_{ij}}\!\bra{G_{ij}}
+
\sum_i
\left(
B_{ii}-\sum_{j\ne i}|B_{ij}|
\right)
\ket{e_i}\!\bra{e_i}.
\end{equation}
All coefficients are non-negative by assumption and sum to
$p_0+\Tr B=1$. Every state in the decomposition is a pure stabilizer
state, so $\rho\in\mathcal S$.
\end{proof}

\smallskip\noindent
For the $W$-class input
\begin{equation}
\ket{\psi_W}=\sum_{i=1}^n w_i\ket{e_i},
\qquad
w_i\ge0,\qquad
\sum_i w_i^2=1,
\end{equation}
homogeneous local AD gives
\begin{equation}
\rho_W(\gamma)
=
\gamma\ket{0^n}\!\bra{0^n}
+
(1-\gamma)
\sum_{i,j=1}^n
w_iw_j\ket{e_i}\!\bra{e_j}.
\end{equation}
Thus $B_{ij}=(1-\gamma)w_iw_j\ge0$. For $0\le\gamma<1$, the common 
factor $(1-\gamma)$ cancels from the row-dominance inequalities. 
Setting $S:=\sum_j w_j$, for any $i$ with $w_i>0$ the condition 
becomes $w_i\ge S-w_i$, i.e.\ $2w_i\ge S$. Summing over the 
$M\ge 1$ nonzero indices gives $2S\ge MS$, forcing $M\le 2$. 
Combined with $\sum_iw_i^2=1$, the stabilizer cases are
$\ket{e_i}$ ($M=1$) and 
$(\ket{e_i}+\ket{e_j})/\sqrt 2$ ($M=2$, where the symmetric 
inequalities force $w_i=w_j$). Therefore every nonstabilizer
generalized-$W$ state remains outside $\mathcal S$ for every
$\gamma\in[0,1)$, while the stabilizer cases remain inside
throughout the trajectory.

The single-excitation stabilizer states with Clifford phases have
fixed Hamming-weight support and therefore remain stabilizer under
homogeneous AD by the classification in
Appendix~\ref{sec:AD_generators}.

\paragraph{Family-B robustness: primal and dual.}
For
\begin{equation}
\ket{\psi_B}
=
\alpha\ket{01}+\beta\ket{10},
\qquad
\alpha,\beta>0,
\end{equation}
it suffices to treat $0<\alpha\le\beta$. For this two-dimensional single-excitation subspace, we relabel $\ket{01}$ and $\ket{10}$ as $\ket{e_1}$ and $\ket{e_2}$ so that $\ket{e_1}$ is the smaller-amplitude component (the case $\beta<\alpha$ follows by exchanging the 
labels). The AD-evolved state has
\begin{equation}
B_{11}=(1-\gamma)\alpha^2,\qquad
B_{22}=(1-\gamma)\beta^2,\qquad
B_{12}=(1-\gamma)\alpha\beta,
\qquad
p_0=\gamma .
\end{equation}
Set
\begin{equation}
t
:=
\max\{0,B_{12}-\min(B_{11},B_{22})\}
=
(1-\gamma)\alpha(\beta-\alpha).
\end{equation}
We show that
\begin{equation}
\label{eq:familyB_RoM}
\mathcal R_B(\gamma)=1+2t
=
1+2(1-\gamma)
\bigl[\alpha\beta-\min(\alpha^2,\beta^2)\bigr].
\end{equation}

For the primal upper bound, if $t=0$, the row-dominance criterion
gives $\rho\in\mathcal S$ and hence $\mathcal R_B=1$. If $t>0$, let
\begin{equation}
\ket{G_{12}^-}
=
\frac{\ket{e_1}-\ket{e_2}}{\sqrt2},
\qquad
\sigma_-:=\ket{G_{12}^-}\!\bra{G_{12}^-},
\qquad
\omega:=\frac{\rho+t\sigma_-}{1+t}.
\end{equation}
Since $t=B_{12}-B_{11}$ in the present ordering, the matrix
elements of $\omega=(\rho+t\sigma_-)/(1+t)$ are
\begin{equation}
\begin{aligned}
B_{11}^{(\omega)}
&=
\frac{B_{11}+t/2}{1+t}
=
\frac{B_{11}+(B_{12}-B_{11})/2}{1+t}
=
\frac{B_{11}+B_{12}}{2(1+t)},
\\
B_{12}^{(\omega)}
&=
\frac{B_{12}-t/2}{1+t}
=
\frac{B_{12}-(B_{12}-B_{11})/2}{1+t}
=
\frac{B_{11}+B_{12}}{2(1+t)},
\\
B_{22}^{(\omega)}
&=
\frac{B_{22}+t/2}{1+t}
=
\frac{2B_{22}+B_{12}-B_{11}}{2(1+t)}.
\end{aligned}
\end{equation}
Hence $B_{11}^{(\omega)}=B_{12}^{(\omega)}$, and
\begin{equation}
\begin{aligned}
B_{22}^{(\omega)}-B_{12}^{(\omega)}
&=
\frac{2B_{22}+B_{12}-B_{11}-(B_{11}+B_{12})}{2(1+t)}
\\
&=
\frac{B_{22}-B_{11}}{1+t}
\ge 0.
\end{aligned}
\end{equation}
Thus $\omega$ satisfies the row-dominance criterion and belongs to
$\mathcal S$.
The signed decomposition
\begin{equation}
\rho=(1+t)\omega-t\sigma_-
\end{equation}
has cost $1+2t$.

For the dual lower bound, choose $j$ to be the smaller-diagonal
index, so $j=1$ in the present ordering, and define
\begin{equation}
W
=
\id
+
\bigl(\ket{e_1}\!\bra{e_2}+\ket{e_2}\!\bra{e_1}\bigr)
-
2\ket{e_j}\!\bra{e_j}.
\end{equation}
For a pure stabilizer state $\ket{\phi}$ with
$u=\langle e_1|\phi\rangle$ and $v=\langle e_2|\phi\rangle$,
\begin{equation}
\Tr(W\ket{\phi}\!\bra{\phi})
=
1+2\,\mathrm{Re}(u^*v)-2|u|^2.
\end{equation}
By the equal-modulus stabilizer support property, the only cases to
check are: both endpoints absent; only $\ket{e_1}$ present; only
$\ket{e_2}$ present; and both endpoints present with
$|u|=|v|=:a\le 1/\sqrt2$. In the last case,
\begin{equation}
1+2\,\mathrm{Re}(u^*v)-2|u|^2
\in [\,1-4a^2,1\,]\subseteq[-1,1].
\end{equation}
Thus $W$ is dual feasible. Its value on the Family-B trajectory is
\begin{equation}
\Tr(W\rho)
=
1+2B_{12}-2B_{11}
=
1+2t.
\end{equation}
This matches the primal upper bound and proves
Eq.~\eqref{eq:familyB_RoM}. The prefactor
$\alpha\beta-\min(\alpha^2,\beta^2)$ is non-negative and vanishes
if and only if $\alpha=\beta$ or $\alpha\beta=0$.

\smallskip\noindent
At $\alpha=\beta=1/\sqrt2$, under homogeneous AD, the two canonical
Bell states evolve along inequivalent AD trajectories.

For
\begin{equation}
\ket{\Psi^+}
=
\frac{\ket{01}+\ket{10}}{\sqrt2},
\end{equation}
the Family-B formula gives
\begin{equation}
\mathcal R_{\Psi^+}(\gamma)
=
1+2(1-\gamma)\left(\frac12-\frac12\right)
=
1.
\end{equation}
Thus the trajectory remains stabilizer for every
$\gamma\in[0,1]$.

For
\begin{equation}
\ket{\Phi^+}
=
\frac{\ket{00}+\ket{11}}{\sqrt2},
\end{equation}
Corollary~\ref{cor:GHZ_AD} gives
\begin{equation}
P_0=\frac{1+\gamma^2}{2},
\qquad
P_n=\frac{(1-\gamma)^2}{2},
\qquad
c=\frac{1-\gamma}{2}.
\end{equation}
Computing the magic-rebirth obstruction,
\begin{equation}
\begin{aligned}
c-P_n
&=
\frac{1-\gamma}{2}-\frac{(1-\gamma)^2}{2}
\\
&=
\frac{(1-\gamma)\bigl[1-(1-\gamma)\bigr]}{2}
\\
&=
\frac{\gamma(1-\gamma)}{2}
>0
\qquad
(0<\gamma<1),
\end{aligned}
\end{equation}
so the trajectory exits $\mathcal S$ immediately. By 
Corollary~\ref{cor:GHZ_AD},
\begin{equation}
\begin{aligned}
\mathcal R_{\Phi^+}(\gamma)-1
&=
2\max\{0,\,c-P_0,\,c-P_n\}
\\
&=
2\cdot\frac{\gamma(1-\gamma)}{2}
\\
&=
\gamma(1-\gamma).
\end{aligned}
\end{equation}
Thus $\ket{\Psi^+}$ and $\ket{\Phi^+}$ are the two-qubit
representatives of the magic-insulator and magic-generator classes
identified in Appendix~\ref{sec:AD_generators}.

\begin{remark}[Two-term basis cats]
\label{rem:two_term_basis_cats}
The Family-A/Family-B contrast extends to arbitrary pure two-term
computational-basis cats and rules out additional finite
death--rebirth windows. Let
\begin{equation}
\label{eq:two_term_input}
\ket\psi=\alpha\ket{x}+\beta\ket{y},
\qquad
\alpha,\beta>0,
\qquad
\alpha^2+\beta^2=1,
\end{equation}
with distinct computational-basis strings $x,y\in\{0,1\}^n$. Write 
$x\le y$ if $x_i\le y_i$ for every coordinate $i$.

If $x$ and $y$ are bitwise comparable, without loss of generality 
$x\le y$, then on the partition 
$\{1,\dots,n\}=S_{=}\sqcup S_{\Delta}$ with 
$S_{=}:=\{i:x_i=y_i\}$ and 
$S_{\Delta}:=\{i:x_i\ne y_i\}$, the input factorizes as
\begin{equation}
\label{eq:two_term_factor}
\ket\psi
=
\ket{x|_{S_{=}}}\otimes
\bigl(\alpha\ket{0^d}+\beta\ket{1^d}\bigr),
\qquad
d:=|S_{\Delta}|.
\end{equation}
Here $\ket{x|_{S_{=}}}$ is a computational-basis stabilizer state. 
Under homogeneous local amplitude damping the tensor structure is 
preserved: the $S_{=}$ factor evolves into a diagonal mixture of 
computational-basis states and hence remains in $\mathcal S$ for 
every $\gamma$, while the $S_{\Delta}$ factor follows the GHZ 
trajectory of Eq.~\eqref{eq:rho_n} with $n$ replaced by $d$. 
Stabilizer membership of the product state is therefore 
equivalent to stabilizer membership of the GHZ-$d$ factor: the 
forward implication uses closure of $\mathcal S$ under partial 
trace, and the reverse uses closure under tensor products. The 
same reduction holds for the robustness $\mathcal R$, by tensoring 
an optimal signed stabilizer decomposition with the stabilizer 
factor and by tracing that factor out.

If $x$ and $y$ are not bitwise comparable, write $q:=1-\gamma$ and 
$\rho_\gamma:=\mathcal E_\gamma^{\otimes n}(\ket\psi\!\bra\psi)$. 
For $0\le\gamma<1$, the matrix element $(\rho_\gamma)_{xy}$ 
receives contributions only from the global no-jump branch 
$E_0^{\otimes n}$. Indeed, the Kraus operator $K_J$ with jumps on 
$J\subseteq\{1,\dots,n\}$ sends $\ket v$ to a state proportional 
to $\ket{v\setminus J}$ if $J\subseteq\mathrm{supp}(v)$ and to zero 
otherwise. Hence $\bra w K_J\ket v\ne 0$ requires $w\le v$ and 
$J=\mathrm{supp}(v)\setminus\mathrm{supp}(w)$. For incomparable $x,y$, neither $x\le y$ nor $y\le x$. The
non-vanishing input pairs $(v,w)\in\{x,y\}^2$ are then $(x,x)$ and
$(y,y)$ for the output diagonals $(\rho_\gamma)_{xx},(\rho_\gamma)_{yy}$,
and $(x,y)$ for the off-diagonal $(\rho_\gamma)_{xy}$. All three
force $J=\emptyset$. Therefore
\begin{equation}
\label{eq:two_term_entries}
(\rho_\gamma)_{xx}=\alpha^2 q^{|x|},
\qquad
(\rho_\gamma)_{yy}=\beta^2 q^{|y|},
\qquad
|(\rho_\gamma)_{xy}|=\alpha\beta\,q^{(|x|+|y|)/2},
\end{equation}
giving
\begin{equation}
\label{eq:two_term_saturation}
(\rho_\gamma)_{xx}\,(\rho_\gamma)_{yy}
=
|(\rho_\gamma)_{xy}|^2
\end{equation}
for every $\gamma\in[0,1)$. Every stabilizer mixture satisfies the pair-coherence obstruction
of Lemma~\ref{lem:pair_obstruction}. Combined with 
Eq.~\eqref{eq:two_term_saturation}, this forces
\begin{equation}
\label{eq:two_term_equal_diag}
\alpha^2 q^{|x|}
=
\beta^2 q^{|y|}
\end{equation}
as a necessary condition for $\rho_\gamma\in\mathcal S$. Except in 
the equal-weight equal-amplitude case $|x|=|y|$ and $\alpha=\beta$, 
Eq.~\eqref{eq:two_term_equal_diag} has at most one solution in 
$q\in(0,1]$. Thus a non-comparable two-term cat with $\alpha\ne\beta$ 
or $|x|\ne|y|$ cannot exhibit a finite stabilizer window bounded by 
magic branches; it can meet $\mathcal S$ at most at one isolated 
value of $\gamma$ in $[0,1)$. In the remaining case $|x|=|y|$ and 
$\alpha=\beta$, the input is a constant-weight pure stabilizer 
state and is a magic-insulator by 
Proposition~\ref{prop:AD_generator_classification}; 
$\ket{\Psi^+}$ is the smallest instance.
\end{remark}

\section{Dicke trajectories and a non-GHZ-\texorpdfstring{$X$}{X} re-entrant slice}
\label{sec:Dicke}

For $0\le k\le n$, write
\begin{equation}
\label{eq:Dicke_AD_state}
\rho_{n,k}(\gamma)
:=
\mathcal E_\gamma^{\otimes n}
\bigl(\ket{D_n^k}\!\bra{D_n^k}\bigr).
\end{equation}
The case $k=1$ is the symmetric member of the generalized-$W$
family and is covered by Appendix~\ref{sec:W_membership}. The
interior cases reduce to that one by postselection.

\begin{lemma}[Dicke postselection reduction]
\label{lem:Dicke_postselection}
For $1\le s\le k$ and $\gamma\in[0,1)$, projecting the first $s$
qubits of $\rho_{n,k}(\gamma)$ onto $\ket{1^s}$ and renormalizing
gives
\begin{equation}
\label{eq:Dicke_postselection}
\frac{
(\bra{1^s}\otimes\id)\,\rho_{n,k}(\gamma)\,(\ket{1^s}\otimes\id)
}{
\Tr\!\left[
(\ket{1^s}\!\bra{1^s}\otimes\id)\,\rho_{n,k}(\gamma)
\right]
}
=
\rho_{n-s,\,k-s}(\gamma),
\end{equation}
with postselection probability
$(1-\gamma)^s\binom{n-s}{k-s}/\binom{n}{k}$.
\end{lemma}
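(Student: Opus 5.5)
The plan is to work directly with the Kraus-branch expansion $\mathcal E_\gamma^{\otimes n}(\rho)=\sum_J K_J\rho K_J^\dagger$ of Eq.~\eqref{eq:KJ_action} and use the fact that amplitude damping only lowers bits. The first step is to note that the projector $\ket{1^s}\!\bra{1^s}\otimes\id$ annihilates every branch in which some jump hits one of the first $s$ qubits, since a jump there resets that bit to $0$. So only branches $K_J$ with $J\subseteq\{s+1,\dots,n\}$ survive. On each such branch, the first $s$ qubits undergo the no-jump operator $E_0$, which acts on $\ket1$ as multiplication by $\sqrt{1-\gamma}$. It follows that $(\bra{1^s}\otimes\id)K_J=\sqrt{1-\gamma}^{\,s}\,K'_J\,(\bra{1^s}\otimes\id)$, where $K'_J$ is the corresponding $(n-s)$-qubit Kraus branch acting on the remaining qubits.

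The second step expands the input. Write $\ket{D_n^k}=\binom nk^{-1/2}\sum_{|x|=k}\ket x$ and split the strings as $x=(x_{1..s},x')$. Projecting onto $\ket{1^s}$ keeps exactly the strings with $x_{1..s}=1^s$, so $x'$ ranges over the weight-$(k-s)$ strings on $n-s$ qubits. This gives
\begin{equation}
(\bra{1^s}\otimes\id)\ket{D_n^k}=\binom nk^{-1/2}\binom{n-s}{k-s}^{1/2}\ket{D_{n-s}^{k-s}}.
\end{equation}
Because $\ket{1^s}$ is invariant under $E_0^{\otimes s}$ up to the scalar $(1-\gamma)^{s/2}$, the projection commutes with the surviving branches up to that scalar, as shown in the first step.

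Summing over $J\subseteq\{s+1,\dots,n\}$, the unnormalized postselected state is
\begin{equation}
(1-\gamma)^s\binom{n-s}{k-s}\binom nk^{-1}\,\mathcal E_\gamma^{\otimes(n-s)}\bigl(\ket{D_{n-s}^{k-s}}\!\bra{D_{n-s}^{k-s}}\bigr).
\end{equation}
Its trace is the stated probability, because $\mathcal E_\gamma^{\otimes(n-s)}$ is trace preserving. This probability is nonzero for $\gamma<1$. Renormalizing then yields $\rho_{n-s,k-s}(\gamma)$.

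The only delicate point is the commutation identity in the first step. It must hold branch by branch, including the cross terms $K_J\ket x\!\bra y K_J^\dagger$. This is guaranteed by Eq.~\eqref{eq:KJ_action}: the factorization $K_J=\bigotimes_i E_{J\ni i}$ is a tensor product, so the first-$s$-qubit factor acts as $E_0^{\otimes s}$, and $\bra{1^s}E_0^{\otimes s}=(1-\gamma)^{s/2}\bra{1^s}$. This holds because $E_0$ is diagonal.
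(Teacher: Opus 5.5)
Your proposal is correct and follows the same route as the paper. Jump branches on the first $s$ qubits are annihilated by $\bra{1^s}$, and the no-jump factor contributes $(1-\gamma)^s$. The partial inner product gives $\sqrt{\binom{n-s}{k-s}/\binom{n}{k}}\,\ket{D_{n-s}^{k-s}}$, and the remaining branches reassemble into $\mathcal E_\gamma^{\otimes(n-s)}$; your branch-level identity $(\bra{1^s}\otimes\id)K_J=(1-\gamma)^{s/2}K'_J(\bra{1^s}\otimes\id)$ just makes explicit what the paper states in words.
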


\begin{proof}
Amplitude damping factorizes across qubits. In the Kraus
decomposition on the first $s$ qubits, contraction with
$\bra{1^s}$ on both sides retains only the no-jump branch, since any
$E_1$ produces a $\ket 0$ component orthogonal to $\bra 1$. This
branch contributes the scalar factor $(1-\gamma)^s$. The partial
inner product
\begin{equation}
\label{eq:Dicke_partial_inner}
(\bra{1^s}\otimes\id)\ket{D_n^k}
=
\sqrt{\binom{n-s}{k-s}\bigg/\binom{n}{k}}\;\ket{D_{n-s}^{k-s}}
\end{equation}
isolates weight-$k$ strings whose first $s$ bits equal $1$. Acting
with $\mathcal E_\gamma^{\otimes(n-s)}$ on the residual qubits and
normalizing by the postselection probability yields
Eq.~\eqref{eq:Dicke_postselection}.
\end{proof}

\begin{proposition}[Interior-Dicke obstruction]
\label{prop:Dicke_obstruction}
For every $n\ge 4$ and $2\le k\le n-2$,
\begin{equation}
\label{eq:Dicke_outside}
\rho_{n,k}(\gamma)\notin\mathcal S
\qquad
\text{for all } 0\le\gamma<1,
\end{equation}
with $\rho_{n,k}(1)=\ket{0^n}\!\bra{0^n}\in\mathcal S$.
\end{proposition}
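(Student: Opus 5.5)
The plan is to reduce the interior Dicke trajectory to a symmetric $W$ trajectory using Lemma~\ref{lem:Dicke_postselection}, and then apply the row-dominance criterion of Lemma~\ref{lem:W_row_dominance}.

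Fix $n\ge4$, $2\le k\le n-2$ and $\gamma\in[0,1)$. I will postselect the first $s:=k-1\ge1$ qubits onto $\ket{1^{s}}$. By Lemma~\ref{lem:Dicke_postselection}, this yields the normalized state $\rho_{n-k+1,\,1}(\gamma)$ with probability $(1-\gamma)^{k-1}\binom{n-k+1}{1}/\binom{n}{k}>0$. The resulting state is the amplitude-damped symmetric $W$ state on $m:=n-k+1$ qubits, and $k\le n-2$ gives $m\ge3$. It is the generalized-$W$ trajectory with all weights $w_i=1/\sqrt m$. The appendix analysis of that family then applies directly: for $\gamma<1$ the row-dominance condition reads $2w_i\ge S=\sum_j w_j=m/\sqrt m$, i.e.\ $2\ge m$. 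This fails for $m\ge3$, so $\rho_{m,1}(\gamma)\notin\mathcal S$.

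To transfer this conclusion back, I will show that the postselection map sends $\mathcal S$ into $\mathcal S$ whenever the success probability is nonzero. Take a decomposition $\rho=\sum_a\mu_a\ket{\phi_a}\!\bra{\phi_a}$ into pure stabilizer states. Each component is projected by $\Pi=\ket{1^s}\!\bra{1^s}\otimes\id$. Since $\Pi$ is the projector onto the joint $-1$ eigenspace of the Paulis $Z_1,\dots,Z_s$, each $\Pi\ket{\phi_a}$ is either zero or proportional to a pure stabilizer state, by the standard stabilizer update under Pauli measurement. Tracing out the measured qubits, which are then in the product state $\ket{1^s}$, keeps it a pure stabilizer state on the remaining qubits. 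The normalized output is therefore a convex combination of these states, with weights $\mu_a\|\Pi\phi_a\|^2/p$, where $p>0$ is the total success probability. Hence if $\rho_{n,k}(\gamma)\in\mathcal S$, then $\rho_{n-k+1,1}(\gamma)\in\mathcal S$, which contradicts the previous paragraph. This proves Eq.~\eqref{eq:Dicke_outside} for every $0\le\gamma<1$.

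The endpoint claim is immediate: $\mathcal E_1$ maps every input to $\ket0\!\bra0$, so $\rho_{n,k}(1)=\ket{0^n}\!\bra{0^n}$, which is a stabilizer vertex.

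The only step that needs care is the closure of $\mathcal S$ under computational-basis postselection with renormalization. I would make it explicit via the affine-support form~\eqref{eq:affine_form}. Fixing the first $s$ bits to $1$ intersects the affine support $A$ with an affine subspace, which gives another affine subspace or the empty set. On that support, the restricted phases $i^{\ell}(-1)^{q}$ are still linear and quadratic forms, so the restricted vector is again a stabilizer state up to normalization. Everything else follows directly from the results already established.
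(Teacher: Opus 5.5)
Your proof is correct and follows the paper's argument: you postselect $k-1$ qubits onto $\ket{1^{k-1}}$ via Lemma~\ref{lem:Dicke_postselection} to reach the damped symmetric $W$ trajectory on $n-k+1\ge 3$ qubits, and then the row-dominance criterion of Lemma~\ref{lem:W_row_dominance} rules out membership in $\mathcal S$. The differences are cosmetic: you handle $\gamma=0$ within the same argument, whereas the paper treats it separately, and you spell out the closure of $\mathcal S$ under nonzero-probability computational-basis postselection, which the paper only asserts.
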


\begin{proof}
For $0<\gamma<1$, suppose $\rho_{n,k}(\gamma)\in\mathcal S$.
Stabilizer membership is preserved under nonzero-probability 
stabilizer postselection and under tracing out a computational-basis 
ancilla.
Lemma~\ref{lem:Dicke_postselection} with $s=k-1$ then yields
$\rho_{n',1}(\gamma)\in\mathcal S$ for $n':=n-k+1\ge 3$. The reduced
state is the amplitude-damped equal-amplitude $W_{n'}$ trajectory,
for which $B_{ij}=(1-\gamma)/n'$ on the single-excitation block. By
Lemma~\ref{lem:W_row_dominance}, membership requires the
row-dominance condition
\begin{equation}
\frac{1}{n'}
\ge
\frac{n'-1}{n'},
\end{equation}
which fails for $n'\ge 3$. At $\gamma=0$, the same partial inner
product Eq.~\eqref{eq:Dicke_partial_inner} maps $\ket{D_n^k}$ to
$\ket{D_{n'}^{1}}$ with $n'\ge 3$, a nonstabilizer state, so
$\ket{D_n^k}\!\bra{D_n^k}$ is itself outside $\mathcal S$.
\end{proof}

The anti-$W$ edge $k=n-1$ is not settled by 
Proposition~\ref{prop:Dicke_obstruction}: the same postselection 
reduction ends at $\ket{D_2^1}=\ket{\Psi^+}$, a stabilizer state. The smallest nonstabilizer anti-$W$ instance, $\ket{D_3^2}$, has 
a finite magic-death threshold and no rebirth.

\begin{proposition}[Anti-$W$ sudden death without rebirth]
\label{prop:antiW_three}
For
$\ket{D_3^2}=(\ket{011}+\ket{101}+\ket{110})/\sqrt 3$,
\begin{equation}
\label{eq:antiW_gamma}
\rho_{3,2}(\gamma)\in\mathcal S
\quad\Longleftrightarrow\quad
\gamma\ge\gamma_*,
\qquad
\gamma_*=\frac{\sqrt 3-1}{2}.
\end{equation}
\end{proposition}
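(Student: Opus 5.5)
The plan is to compute $\rho_{3,2}(\gamma)$ explicitly, write down a convex stabilizer decomposition for $\gamma\ge\gamma_*$, and certify non-membership for $\gamma<\gamma_*$ with a single linear witness.

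\emph{Step 1: the damped state.} Write $q=1-\gamma$ and use the branch operators $K_J$ of Eq.~\eqref{eq:KJ_action}. A branch with $|J|$ jumps maps the weight-2 support of $\ket{D_3^2}$ into the weight-$(2-|J|)$ sector. Different $J$ enter incoherently, so there are no coherences between sectors. I expect the following.
\begin{itemize}
\item[(a)] The no-jump branch gives $\tfrac{q^2}{3}\,\mathbf J_3$ on $\{011,101,110\}$, where $\mathbf J_3$ is the all-ones matrix.
\item[(b)] The single-jump branch $K_{\{j\}}$ gives $\tfrac{\gamma q}{3}$ times the projector onto $\ket{e_a}+\ket{e_b}$ with $\{a,b\}=\{1,2,3\}\setminus\{j\}$. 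Summed over $j$, the weight-1 block has diagonal $2\gamma q/3$ and off-diagonal entries $\gamma q/3$.
\item[(c)] The two-jump branches give $p_{000}=\gamma^2$.
\end{itemize}
The weight-1 block satisfies row dominance with equality, $2\gamma q/3=2\cdot\gamma q/3$. As in Eq.~\eqref{eq:W_row_decomp}, it is therefore a nonnegative combination of the pair stabilizer states $(\ket{e_a}+\ket{e_b})/\sqrt2$. This reduces the problem to the weight-0 and weight-2 sectors.

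\emph{Step 2: sufficiency.} The weight-2 block $\tfrac{q^2}{3}\mathbf J_3$ is the damped analogue of a $W$ block and violates row dominance. I will repair it using the size-4 affine support $\{000,011,101,110\}$. Take the equal mixture of the two stabilizer states $\tfrac12(\pm\ket{000}+\ket{011}+\ket{101}+\ket{110})$. The $000$ cross terms cancel, leaving $\tfrac14\ket{000}\!\bra{000}+\tfrac14\mathbf J_3$. With weight $t=4q^2/3$ this reproduces the whole weight-2 block. It consumes population $t/4=q^2/3$ from $\ket{000}$. The decomposition is valid exactly when $q^2/3\le\gamma^2$, i.e.\ $\gamma\ge q/\sqrt3$, which rearranges to $\gamma\ge 1/(1+\sqrt3)=(\sqrt3-1)/2$.

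\emph{Step 3: necessity.} Define the linear functional
\[
L(\sigma)=\sum_{\{x,y\}}\mathrm{Re}\,\sigma_{xy}-\tfrac12\sum_{|x|=2}\sigma_{xx}-\tfrac32\,\sigma_{000,000},
\]
where the first sum runs over the three unordered pairs of weight-2 strings. I claim $L\le0$ on every pure stabilizer state, and hence on all of $\mathcal S$. Let $2^{-m}$ be the common squared modulus on the support, and let $k$ be the number of weight-2 strings in the support. I will split into cases by $k$.
\begin{itemize}
\item[(i)] If $k\le1$, the off-diagonal sum vanishes and $L\le0$ trivially.
\item[(ii)] If $k=2$, the off-diagonal sum is at most $2^{-m}$, which equals $\tfrac12\cdot 2\cdot2^{-m}$, so $L\le0$.
\item[(iii)] If $k=3$, affine closure forces $011\oplus101\oplus110=000$ into the support. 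The off-diagonal sum is then at most $3\cdot2^{-m}=\tfrac32 2^{-m}+\tfrac32 2^{-m}$, so again $L\le0$.
\end{itemize}
On the trajectory, $L(\rho_{3,2}(\gamma))=q^2-\tfrac12q^2-\tfrac32\gamma^2$. This is positive exactly when $\gamma<q/\sqrt3$, so the state lies outside $\mathcal S$ there. The bound matches Step 2, which proves the stated equivalence.

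The main obstacle is guessing the witness coefficients in Step 3. They are fixed by requiring tightness on the two extremal families used in Step 2, namely the pair states and the size-4 affine states. The case analysis then only needs the equal-modulus and affine-support facts (i)--(ii) of Appendix~\ref{sec:preliminaries}. Everything else is routine bookkeeping of the Kraus branches.
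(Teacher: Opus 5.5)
Your proof is correct. Step~2 is the paper's sufficiency argument: the same pair $\ket{S_\pm}$ and the same residual coefficient $\gamma^2-q^2/3$ on $\ket{000}$. The genuine difference is in necessity.

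\emph{The paper's route.} It argues structurally. The weight-2 block $q^2\ket{D_3^2}\!\bra{D_3^2}$ is rank one, so every stabilizer component with weight-2 mass must contain all three weight-2 strings. Affine closure then adds $000$, and equal moduli make that component's vacuum population one third of its weight-2 mass. Summing over components gives $\gamma^2\ge q^2/3$. The argument relies on the rank-one structure of this particular state. In exchange, it carries over verbatim to the punctured affine-plane slice (Proposition~\ref{prop:ap_reentry}), where it also yields the coherence facet $c_L\le p_L/\sqrt3$.

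\emph{Your route.} You exhibit one fixed hyperplane with $L\le 0$ on all of $\mathcal S$, in the spirit of the paper's $W_E$ witness for $\ket{D_4^3}$. It certifies non-membership for any state with $L>0$, not only this trajectory. It also gives more than you used. Positivity of $\bra{w}\sigma\ket{w}$ with $\ket{w}=\ket{011}+\ket{101}+\ket{110}$ gives $L(\sigma)\ge-\sum_{|x|=2}\sigma_{xx}-\tfrac32\sigma_{000,000}\ge-\tfrac32$ for every density operator. Writing $L=\Tr(W_L\,\cdot\,)$, the operator $\id+\tfrac43W_L$ is therefore dual feasible in Eq.~\eqref{eq:RoM_dual}. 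Evaluated on the trajectory it gives $\mathcal R(\rho_{3,2}(\gamma))-1\ge\tfrac23[(1-\gamma)^2-3\gamma^2]$. This matches the upper bound~\eqref{eq:antiW_RoM_excess}, so it proves the tightness that the paper only supports with LP numerics.

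Two minor fixes. First, the single-jump term is $\tfrac{\gamma q}{3}$ times the unnormalized outer product $(\ket{e_a}+\ket{e_b})(\bra{e_a}+\bra{e_b})$, not times a projector; your block entries are already correct. Second, you should justify that $\tfrac12(-\ket{000}+\ket{011}+\ket{101}+\ket{110})$ is a stabilizer state, for example by noting it is stabilized by $Z_1Z_2Z_3$, $Y_1Y_2$, and $Y_2Y_3$.
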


\begin{proof}
Set $q:=1-\gamma$ and
\begin{equation}
\label{eq:antiW_T}
\ket{T_{ij}}
:=
\frac{\ket{e_i}+\ket{e_j}}{\sqrt 2}.
\end{equation}
We use the branch notation $K_J$ of Eq.~\eqref{eq:KJ_action}; here 
$J$ is the set of sites at which the jump Kraus operator $E_1$ 
acts, and branches omitted from the list below vanish on 
$\ket{D_3^2}$. The nonvanishing branches are
\begin{equation}
\label{eq:antiW_Kraus_branches}
\begin{aligned}
K_{\varnothing}\ket{D_3^2}
&=
q\ket{D_3^2},
\\
K_{\{1\}}\ket{D_3^2}
&=
\sqrt{\frac{\gamma q}{3}}\bigl(\ket{001}+\ket{010}\bigr)
=
\sqrt{\frac{2\gamma q}{3}}\,\ket{T_{23}},
\\
K_{\{2\}}\ket{D_3^2}
&=
\sqrt{\frac{\gamma q}{3}}\bigl(\ket{001}+\ket{100}\bigr)
=
\sqrt{\frac{2\gamma q}{3}}\,\ket{T_{13}},
\\
K_{\{3\}}\ket{D_3^2}
&=
\sqrt{\frac{\gamma q}{3}}\bigl(\ket{010}+\ket{100}\bigr)
=
\sqrt{\frac{2\gamma q}{3}}\,\ket{T_{12}},
\\
K_{\{1,2\}}\ket{D_3^2}
&=
\frac{\gamma}{\sqrt 3}\ket{000},
\\
K_{\{1,3\}}\ket{D_3^2}
&=
\frac{\gamma}{\sqrt 3}\ket{000},
\\
K_{\{2,3\}}\ket{D_3^2}
&=
\frac{\gamma}{\sqrt 3}\ket{000}.
\end{aligned}
\end{equation}
Summing the corresponding branch projectors gives
\begin{equation}
\label{eq:antiW_blocks}
\begin{aligned}
\rho_{3,2}(\gamma)
&=
K_{\varnothing}\ket{D_3^2}\!\bra{D_3^2}K_{\varnothing}^{\dagger}
+
\sum_{i=1}^3
K_{\{i\}}\ket{D_3^2}\!\bra{D_3^2}K_{\{i\}}^{\dagger}
+
\sum_{1\le i<j\le 3}
K_{\{i,j\}}\ket{D_3^2}\!\bra{D_3^2}K_{\{i,j\}}^{\dagger}
\\
&=
q^2\ket{D_3^2}\!\bra{D_3^2}
+
\frac{2\gamma q}{3}
\sum_{1\le i<j\le 3}
\ket{T_{ij}}\!\bra{T_{ij}}
+
3\cdot\frac{\gamma^2}{3}\ket{0^3}\!\bra{0^3}
\\
&=
q^2\ket{D_3^2}\!\bra{D_3^2}
+
\frac{2\gamma q}{3}
\sum_{1\le i<j\le 3}
\ket{T_{ij}}\!\bra{T_{ij}}
+
\gamma^2\ket{0^3}\!\bra{0^3}.
\end{aligned}
\end{equation}
Each $\ket{T_{ij}}$ is a pure stabilizer state, and the total trace
is $q^2+2\gamma q+\gamma^2=1$.

\emph{Necessity.} The endpoint $\gamma=1$ gives 
$\rho_{3,2}(1)=\ket{0^3}\!\bra{0^3}\in\mathcal S$ trivially, so 
assume $0\le\gamma<1$ and suppose $\rho_{3,2}(\gamma)\in\mathcal S$.
Eq.~\eqref{eq:antiW_blocks} has zero diagonal weight on
$\ket{1^3}$, so every stabilizer component in any convex
decomposition has support contained in Hamming weights $\{0,1,2\}$.
Let $\Pi_2$ project onto the weight-$2$ subspace. With $q=1-\gamma>0$,
the weight-$2$ block is rank one,
\begin{equation}
\label{eq:antiW_top_block}
\Pi_2\,\rho_{3,2}(\gamma)\,\Pi_2
=
q^2\ket{D_3^2}\!\bra{D_3^2}.
\end{equation}
Since a sum of positive semidefinite operators has rank-one range
only if every nonzero summand has range contained in the same
one-dimensional subspace, each stabilizer component
$\mu_a\ket{\phi_a}\!\bra{\phi_a}$ with nonzero weight-$2$ projection
must have that projection proportional to $\ket{D_3^2}$. The support
$A_a$ of $\ket{\phi_a}$ therefore contains all three weight-$2$
strings $\{011,101,110\}$. Since a pure stabilizer state has affine
support, $A_a$ also contains the binary sum
$011\oplus 101\oplus 110=000$. By the equal-modulus property, every
basis state in $A_a$ carries the same population $\mu_a/|A_a|$.
The weight-$2$ block of this component contributes
$3\mu_a/|A_a|$, while its $\ket{0^3}$ population is $\mu_a/|A_a|$,
exactly one third as much. Summing over all components,
\begin{equation}
\label{eq:antiW_necessity}
\bra{0^3}\rho_{3,2}(\gamma)\ket{0^3}
\ge
\frac{1}{3}\,\Tr\!\left[\Pi_2\,\rho_{3,2}(\gamma)\,\Pi_2\right]
=
\frac{q^2}{3},
\end{equation}
using that components without weight-$2$ projection only add
non-negative weight to $\ket{0^3}$. The left-hand side equals
$\gamma^2$, giving $3\gamma^2\ge q^2$, equivalent to
$\gamma\ge\gamma_*$.

\emph{Sufficiency.} Define the parity-even pair
\begin{equation}
\label{eq:antiW_S_states}
\ket{S_\pm}
:=
\frac{\pm\ket{0^3}+\ket{011}+\ket{101}+\ket{110}}{2}.
\end{equation}
Their support $\{0^3,011,101,110\}$ is an affine subspace. 
Explicitly, $\ket{S_+}$ is stabilized by 
$Z_1Z_2Z_3$, $X_1X_2$, and $X_2X_3$, while $\ket{S_-}$ is stabilized 
by $Z_1Z_2Z_3$, $Y_1Y_2$, and $Y_2Y_3$. Hence both are pure 
stabilizer states. Writing 
$\ket{S_\pm}=\frac{1}{2}(\pm\ket{0^3}+\sqrt3\,\ket{D_3^2})$ and 
expanding,
\begin{equation}
\begin{aligned}
\ket{S_\pm}\!\bra{S_\pm}
&=
\frac{1}{4}
\bigl(
\ket{0^3}\!\bra{0^3}
\pm\sqrt 3\,\ket{0^3}\!\bra{D_3^2}
\pm\sqrt 3\,\ket{D_3^2}\!\bra{0^3}
+
3\,\ket{D_3^2}\!\bra{D_3^2}
\bigr).
\end{aligned}
\end{equation}
The cross terms $\pm\sqrt 3\,\ket{0^3}\!\bra{D_3^2}$ cancel in 
$\ket{S_+}\!\bra{S_+}+\ket{S_-}\!\bra{S_-}$, leaving
\begin{equation}
\ket{S_+}\!\bra{S_+}+\ket{S_-}\!\bra{S_-}
=
\frac{1}{2}\ket{0^3}\!\bra{0^3}
+
\frac{3}{2}\ket{D_3^2}\!\bra{D_3^2}.
\end{equation}
Multiplying by $2q^2/3$,
\begin{equation}
\label{eq:antiW_even_identity}
\frac{2q^2}{3}
\bigl(
\ket{S_+}\!\bra{S_+}+\ket{S_-}\!\bra{S_-}
\bigr)
=
q^2\ket{D_3^2}\!\bra{D_3^2}
+
\frac{q^2}{3}\ket{0^3}\!\bra{0^3}.
\end{equation}
Subtracting Eq.~\eqref{eq:antiW_even_identity} from
Eq.~\eqref{eq:antiW_blocks} yields
\begin{equation}
\label{eq:antiW_decomp}
\rho_{3,2}(\gamma)
=
\frac{2\gamma q}{3}
\!\!\sum_{1\le i<j\le 3}\!\!
\ket{T_{ij}}\!\bra{T_{ij}}
+
\frac{2q^2}{3}
\bigl(
\ket{S_+}\!\bra{S_+}+\ket{S_-}\!\bra{S_-}
\bigr)
+
\Bigl(\gamma^2-\frac{q^2}{3}\Bigr)\ket{0^3}\!\bra{0^3}.
\end{equation}
The coefficients sum to
$2\gamma q+4q^2/3+\gamma^2-q^2/3=(\gamma+q)^2=1$. They are all
non-negative if and only if $\gamma^2\ge q^2/3$, i.e.\ $\gamma\ge\gamma_*$.
For such $\gamma$, $\rho_{3,2}(\gamma)$ is a convex combination of
pure stabilizer states.
\end{proof}

Fig.~\ref{fig:Dicke_robustness} illustrates the smallest anti-$W$
death-only threshold and the endpoint-only behavior of the smallest
interior Dicke state. The explicit decomposition 
Eq.~\eqref{eq:antiW_decomp} gives the analytical upper bound
\begin{equation}
\label{eq:antiW_RoM_excess}
\mathcal R(\rho_{3,2}(\gamma))-1
\le
\frac{2}{3}\max\{0,(1-\gamma)^2-3\gamma^2\},
\end{equation}
which vanishes at $\gamma_*=(\sqrt3-1)/2$ consistent with 
Proposition~\ref{prop:antiW_three}. Stabilizer-LP evaluations at sampled $\gamma$ are consistent with this bound, suggesting tightness. For the smallest interior Dicke state $\ket{D_4^2}$, Proposition~\ref{prop:Dicke_obstruction} implies 
nonstabilizerness for every $\gamma<1$; the plotted LP values are consistent with endpoint-only behavior 
and are well described by the numerical guide 
$\frac23(1-\gamma^2)$. Thus $\ket{D_4^2}$ is the 
smallest endpoint-only interior Dicke example, whereas 
$\ket{D_3^2}$ is the smallest anti-$W$ example with finite death and 
no rebirth.

\begin{figure}[t]
\centering
\includegraphics[width=0.45\textwidth]{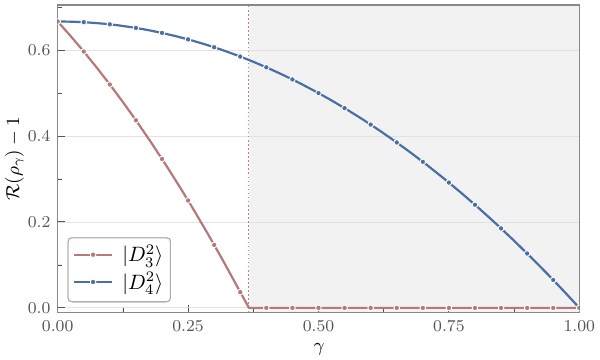}
\caption{%
\textbf{Dicke trajectories: finite death versus endpoint-only.}
Robustness excess under local amplitude damping for the smallest 
nonstabilizer anti-$W$ state $\ket{D_3^2}$ and the smallest interior 
Dicke state $\ket{D_4^2}$. Markers are stabilizer-LP evaluations 
after permutation twirling. For $\ket{D_3^2}$, 
the solid curve is the analytical upper bound 
$\frac{2}{3}\max\{0,(1-\gamma)^2-3\gamma^2\}$, vanishing at 
$\gamma_*=(\sqrt3-1)/2$ (vertical dotted line); LP markers agree with this bound within numerical precision, suggesting tightness. For 
$\ket{D_4^2}$, the solid curve is a numerical guide 
$\frac23(1-\gamma^2)$, well described by the LP markers. The contrast illustrates finite magic death 
without rebirth for $\ket{D_3^2}$ and endpoint-only behavior for 
$\ket{D_4^2}$; the former threshold is proved in 
Proposition~\ref{prop:antiW_three}, while the latter behavior 
follows from Proposition~\ref{prop:Dicke_obstruction}.}
\label{fig:Dicke_robustness}
\end{figure}

\begin{proposition}[Four-qubit anti-$W$ threshold]
\label{prop:antiW_four}
For
\begin{equation}
\rho_{4,3}(\gamma)
:=
\mathcal E_\gamma^{\otimes 4}
\bigl(\ket{D_4^3}\!\bra{D_4^3}\bigr),
\end{equation}
one has
\begin{equation}
\label{eq:antiW_four_threshold}
\rho_{4,3}(\gamma)\in\mathcal S
\quad\Longleftrightarrow\quad
\gamma\ge\frac12 .
\end{equation}
Thus $\ket{D_4^3}$ has finite magic death without rebirth.
\end{proposition}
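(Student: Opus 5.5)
The plan follows the proof of Proposition~\ref{prop:antiW_three}: write $\rho_{4,3}(\gamma)$ explicitly as a sum over Kraus branches, get necessity from a support and affine-hull counting argument, and get sufficiency from an explicit convex decomposition. Set $q=1-\gamma$ and use the branch notation $K_J$ of Eq.~\eqref{eq:KJ_action}.

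\emph{Branch decomposition.} The no-jump branch gives $q^{3/2}\ket{D_4^3}$. A single jump at site $i$ acts on the three weight-$3$ strings that contain $i$. Resetting $i$ leaves the state $\sqrt{3\gamma q^2/4}\,\ket{0}_i\ket{D_3^2}_{\bar i}$, where $\bar i$ denotes the other three sites. A double jump at $\{i,j\}$ gives $\sqrt{\gamma^2q/2}\,\ket{T_{kl}}$, with $\{k,l\}$ the complementary pair. A triple jump gives $(\gamma^{3/2}/2)\ket{0^4}$. Hence
\begin{equation*}
\rho_{4,3}(\gamma)=q^3\ket{D_4^3}\!\bra{D_4^3}
+\frac{3\gamma q^2}{4}\sum_{i}\ket{0_iD^2_{\bar i}}\!\bra{0_iD^2_{\bar i}}
+\frac{\gamma^2q}{2}\sum_{k<l}\ket{T_{kl}}\!\bra{T_{kl}}
+\gamma^3\ket{0^4}\!\bra{0^4},
\end{equation*}
and the weights add up to $(q+\gamma)^3=1$.

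\emph{Sufficiency for $\gamma\ge 1/2$.} Each weight-$2$ term is a three-qubit anti-$W$ state on the sites $\bar i$, with site $i$ fixed to $0$. I will absorb it exactly as in Eq.~\eqref{eq:antiW_even_identity}, using the pair $\ket{0}_i\ket{S_\pm}_{\bar i}$. This consumes $\gamma q^2/4$ of $\ket{0^4}$ population per site, so $\gamma q^2$ in total. The leftover weight $\gamma^3-\gamma q^2$ is non-negative exactly when $\gamma\ge q$, that is, $\gamma\ge1/2$. For the top term $q^3\ket{D_4^3}\!\bra{D_4^3}$ I will use the eight-element odd-weight affine subspace, which is the affine hull of the four weight-$3$ strings. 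On it I will average stabilizer states $\tfrac{1}{\sqrt8}\bigl(\sqrt{4}\ket{D_4^3}\pm(\text{phased weight-}1\text{ part})\bigr)$ over signs. This cancels the cross terms and leaves $q^3\ket{D_4^3}\!\bra{D_4^3}$ plus $q^3$ worth of weight-$1$ mass, which must be taken from the $\ket{T_{kl}}$ block. The available weight-$1$ mass is $3\gamma^2q$, and the demand $q^3\le 3\gamma^2 q$ holds for $\gamma\ge1/2$. The delicate point is that the $T_{kl}$ block is not diagonal. I will therefore choose the phases of the odd-weight stabilizer states so that their weight-$1$ blocks reproduce a suitable fraction of $\tfrac{\gamma^2 q}{2}\sum_{k<l}\ket{T_{kl}}\!\bra{T_{kl}}$. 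The remainder must still satisfy the row-dominance criterion of Lemma~\ref{lem:W_row_dominance}. If the sign averaging cannot be arranged to do this, I will fall back on a small linear program over the permutation-symmetric stabilizer states, solved at $\gamma=1/2$ with the weights fixed by hand.

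\emph{Necessity.} For $\gamma<1/2$ the binding resource is the $\ket{0^4}$ population, not the weight-$1$ population. The same counting as in Eq.~\eqref{eq:antiW_necessity}, applied to the top block, only yields $3\gamma^2\ge q^2$, which is too weak. I will instead postselect a single qubit, say qubit $1$, onto $\ket{0}$. Projecting one site onto a computational state is a stabilizer operation, so it preserves membership in $\mathcal S$. The postselected state of qubits $2$--$4$ is again an anti-$W$-type mixture: $\tfrac{3\gamma q^2}{4}\ket{D_3^2}\!\bra{D_3^2}$, plus weight-$2$ contributions inherited from the other single-jump terms and from $q^3\ket{0111}$, plus $\ket{T_{kl}}$ terms, plus $\gamma^3\ket{0^3}\!\bra{0^3}$. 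I will then rerun the rank and affine-hull argument of Proposition~\ref{prop:antiW_three}. Any stabilizer component that carries the coherent $\ket{D_3^2}$ direction must contain $000$ and must place one third of its weight-$2$ mass there. This should give $\gamma^3\ge\gamma q^2$ once the incoherent weight-$2$ diagonal contributions are shown not to help.

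This last step is the main obstacle. After postselection the weight-$2$ block has rank larger than one, so the clean rank-one argument fails. I would handle it by splitting the weight-$2$ block into its $\ket{D_3^2}$ component and its orthogonal complement, and then using a dual witness in the style of Eq.~\eqref{eq:GHZX_witness}. The candidate witness is built from $\ket{0^4}\!\bra{0^4}$ and the projectors onto $\ket{0_iD^2_{\bar i}}$, with a margin proportional to $\gamma q^2-\gamma^3$. Checking its dual feasibility on all pure stabilizer states with odd or even affine support is where I expect most of the work to lie.
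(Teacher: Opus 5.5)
Your branch decomposition and sufficiency plan follow the paper's route. The even sector is absorbed by $\ket{0}_i\ket{S_\pm}_{\bar i}$ exactly as you describe. The odd sector is handled by real equal-modulus states $\ket{\chi_s}=(2\ket{D_4^3}+\sum_i s_i\ket{e_i})/\sqrt8$ on the odd-parity hyperplane. Two things remain to be supplied there:
\begin{itemize}
\item the stabilizer check: these states are stabilizer when $\prod_i s_i=1$, since the phase is then quadratic;
\item a concrete ensemble. The paper takes $\mathbb E[s_i]=0$, $\mathbb E[s_is_j]=1/3$, which reproduces the $\ket{T_{kl}}$ block exactly. Even the pair $(\ket{D_4^3}\pm\ket{D_4^1})/\sqrt2$ leaves a row-dominant weight-$1$ remainder for $\gamma\ge1/2$, so no linear program is needed.
\end{itemize}

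\textbf{The genuine gap is necessity: postselecting qubit $1$ onto $\ket0$ destroys the obstruction completely.} Because $\rho_{4,3}(\gamma)$ is block diagonal in Hamming weight, the postselected operator on qubits $2$--$4$ is the direct sum of four blocks:
\begin{itemize}
\item $\gamma^3\ket{000}\!\bra{000}$;
\item a positive combination of $\ket{T_{kl}}$ and $\ket{e_l}$ projectors;
\item $\tfrac{q^3}{4}\ket{111}\!\bra{111}$ (this is where $q^3\ket{0111}$ goes, not into weight $2$);
\item the weight-$2$ block $\tfrac{\gamma q^2}{4}\bigl(3\ket{D_3^2}\!\bra{D_3^2}+\Pi_2\bigr)$, with $\Pi_2$ the weight-$2$ projector.
\end{itemize}
The last block has diagonal entries $\gamma q^2/2$ and off-diagonal entries $\gamma q^2/4$. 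It equals $\tfrac{\gamma q^2}{2}\sum_{x<y}\ket{G_{xy}}\!\bra{G_{xy}}$ with $\ket{G_{xy}}=(\ket{x}+\ket{y})/\sqrt2$, summed over pairs of weight-$2$ strings. So the incoherent weight-$2$ diagonals, which you hoped to show do not help, let the $\ket{D_3^2}$ coherence be absorbed by two-point stabilizer states with no vacuum weight at all. The postselected state is therefore a stabilizer mixture for every $\gamma$, and no rank or witness argument applied to it can give $\gamma\ge1/2$. Postselecting onto $\ket1$ instead yields $\rho_{3,2}(\gamma)$ and only $\gamma\ge(\sqrt3-1)/2$.

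\textbf{Your fallback witness fails for a related reason.} It is built from $\ket{0^4}\!\bra{0^4}$, $\id$, and $P_i:=\ket{0_iD^2_{\bar i}}\!\bra{0_iD^2_{\bar i}}$. Let $\bar R$ be the uniform mixture of the three all-plus ``rectangle'' states, e.g.\ $\tfrac12(\ket{1100}+\ket{1010}+\ket{0101}+\ket{0011})$. These are pure stabilizer states on affine planes that avoid $0^4$. Let $\Omega_2:=\tfrac14\sum_iP_i$ be the normalized weight-$2$ part of $\rho_{4,3}$. Then $\bar R-\Omega_2=\tfrac16\sum_{\mathrm{opp}}(\ket{x}\!\bra{y}+\ket{y}\!\bra{x})$, summed over weight-$2$ pairs with disjoint supports. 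This difference is Hilbert--Schmidt orthogonal to $\ket{0^4}\!\bra{0^4}$, $\id$, and every $P_i$. Hence any witness of your form gives the even block $\gamma^3\ket{0^4}\!\bra{0^4}+3\gamma q^2\Omega_2$ the same value as the stabilizer mixture $\gamma^3\ket{0^4}\!\bra{0^4}+3\gamma q^2\bar R$. It cannot detect the vacuum deficit.

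\textbf{The missing ingredient} is a witness on the full even support $E=\{0^4\}\cup\{|x|=2\}$, reached by postselecting the even outcome of $Z_1Z_2Z_3Z_4$. It must carry a \emph{positive} coefficient on the opposite-pair coherences, which are exactly what distinguish $\Omega_2$ from the rectangles. The paper's witness is
$W_E=\id_E+2\ket{0^4}\!\bra{0^4}-\sum_{\mathrm{adj}}(\ket{x}\!\bra{y}+\ket{y}\!\bra{x})+\sum_{\mathrm{opp}}(\ket{x}\!\bra{y}+\ket{y}\!\bra{x})$.
It is nonnegative on every stabilizer state supported in $E$: points, lines, the four planes through $0^4$, and the three rectangles. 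On a rectangle it reduces to $\tfrac14|\omega_1-\omega_2-\omega_3+\omega_4|^2$ in the suitably ordered phases. On the even block it takes the value $3\gamma(\gamma^2-q^2)$, which is negative exactly for $0<\gamma<1/2$.
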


\begin{proof}
Set $q:=1-\gamma$. Writing $\ket{D_L^j}$ for the Dicke state of 
weight $j$ on qubits in $L\subseteq\{1,2,3,4\}$, with all other 
qubits in $\ket0$, the Kraus branches grouped by the number of jumps give
\begin{equation}
\label{eq:D43_branches}
\begin{aligned}
\rho_{4,3}(\gamma)
={}&
q^3\ket{D_4^3}\!\bra{D_4^3}
+\frac{3\gamma q^2}{4}\sum_{|L|=3}\ket{D_L^2}\!\bra{D_L^2}
+\frac{\gamma^2 q}{2}\sum_{|L|=2}\ket{D_L^1}\!\bra{D_L^1}
+\gamma^3\ket{0^4}\!\bra{0^4}.
\end{aligned}
\end{equation}
The states $\ket{D_L^1}$ with $|L|=2$ are two-point equal-modulus 
stabilizer states.

We first isolate the even-sector cone condition. Define
\begin{equation}
\Omega_2:=\frac14\sum_{|L|=3}\ket{D_L^2}\!\bra{D_L^2}.
\end{equation}
We claim that, for $v,s\ge0$,
\begin{equation}
\label{eq:Omega2_cone_condition}
v\ket{0^4}\!\bra{0^4}+s\Omega_2\in\mathrm{cone}(\mathcal S)
\quad\Longleftrightarrow\quad
v\ge\frac{s}{3}.
\end{equation}

For sufficiency, define
\begin{equation}
\ket{S_{L,\pm}}
:=
\frac{\pm\ket{0^4}+\sqrt3\,\ket{D_L^2}}{2},
\qquad |L|=3 .
\end{equation}
Each $\ket{S_{L,\pm}}$ is a pure stabilizer state, since its support
is an affine plane with one spectator qubit fixed to $\ket0$. Moreover,
\begin{equation}
\label{eq:S_L_identity}
\ket{D_L^2}\!\bra{D_L^2}
+\frac13\ket{0^4}\!\bra{0^4}
=
\frac23\left(
\ket{S_{L,+}}\!\bra{S_{L,+}}
+
\ket{S_{L,-}}\!\bra{S_{L,-}}
\right).
\end{equation}
Averaging Eq.~\eqref{eq:S_L_identity} over the four choices of $L$
shows that $s\Omega_2+(s/3)\ket{0^4}\!\bra{0^4}$ belongs to
$\mathrm{cone}(\mathcal S)$, and adding
$(v-s/3)\ket{0^4}\!\bra{0^4}$ proves sufficiency when $v\ge s/3$.

For necessity, let
\begin{equation}
E:=\{0^4\}\cup\{x\in\{0,1\}^4:|x|=2\}.
\end{equation}
On $\mathrm{span}\{\ket x:x\in E\}$ define
\begin{equation}
\label{eq:WE_witness}
W_E
=
\id_E+2\ket{0^4}\!\bra{0^4}
-\sum_{\{x,y\}_{\rm adj}}
(\ket x\!\bra y+\ket y\!\bra x)
+\sum_{\{x,y\}_{\rm opp}}
(\ket x\!\bra y+\ket y\!\bra x),
\end{equation}
where the sums are over unordered pairs of distinct weight-$2$
strings; adjacent means that the two supports share one site, and
opposite means that they are disjoint. We show that
\begin{equation}
\label{eq:WE_positive}
\Tr(W_E\sigma)\ge0
\end{equation}
for every subnormalized stabilizer mixture supported on $E$.

Indeed, in any positive stabilizer decomposition of such a $\sigma$,
every pure component must have zero amplitude outside $E$: each
diagonal element outside $E$ is a nonnegative sum of component
diagonal weights and is equal to zero. Hence it suffices to check
pure stabilizer states whose affine support is contained in $E$.
The only affine supports contained in $E$ are points, two-point
lines, the four planes $\{0^4,ab,ac,bc\}$ (a two-dimensional affine 
subspace through $0^4$ is spanned by two weight-$2$ strings sharing 
one site), and the three weight-$2$ rectangles, namely the complements of perfect matchings on the four sites (e.g.\ $\{1100,1010,0101,0011\}$). A two-dimensional affine subspace 
not containing $0^4$ takes the form $x_0+V$ with linear $V$, and 
the only such subspaces inside $E$ are these three rectangles. 
No higher-dimensional affine subspace fits in $E$. Points are immediate. For two-point lines, the expectation is $2$ for a vacuum--weight-$2$ line, and $1-\mathrm{Re}\,\omega$ or $1+\mathrm{Re}\,\omega$ for adjacent or opposite weight-$2$ pairs with relative phase $\omega$; hence it is nonnegative.

For a plane $\{0^4,ab,ac,bc\}$, fix the vacuum phase to $1$ and
write the other three phases as $u,v,w$, with
$|u|=|v|=|w|=1$. Then
\begin{equation}
\Tr(W_E\ket\phi\!\bra\phi)
=
\frac32-\frac12\mathrm{Re}(u\bar v+u\bar w+v\bar w)\ge0,
\end{equation}
using $\mathrm{Re}(u\bar v+u\bar w+v\bar w)\le3$. For a rectangle,
write its four phases as $a,b,c,d$, with
$|a|=|b|=|c|=|d|=1$, ordered so that the side pairs contribute
$C=a\bar b+a\bar c+d\bar b+d\bar c$ and the two opposite pairs
contribute $D=a\bar d+b\bar c$. Then
\begin{equation}
\Tr(W_E\ket\phi\!\bra\phi)
=
1-\frac12\mathrm{Re}\,C+\frac12\mathrm{Re}\,D
=
\frac14|a-b-c+d|^2\ge0 .
\end{equation}
The last equality is an algebraic identity for arbitrary unit phases;
the Clifford phase constraint is not needed here. Thus
Eq.~\eqref{eq:WE_positive} holds.

Since
\begin{equation}
\Tr\!\left[
W_E\bigl(v\ket{0^4}\!\bra{0^4}+s\Omega_2\bigr)
\right]
=
3v-s,
\end{equation}
cone membership implies $v\ge s/3$. This proves
Eq.~\eqref{eq:Omega2_cone_condition}.

We now prove necessity of Eq.~\eqref{eq:antiW_four_threshold}. The case
$\gamma=0$ is outside $\mathcal S$, since the support of
$\ket{D_4^3}$ is not affine. Assume $0<\gamma<1$ and suppose
$\rho_{4,3}(\gamma)\in\mathcal S$. Postselecting the even-parity
outcome of $Z_1Z_2Z_3Z_4$ gives a subnormalized stabilizer mixture.
By Eq.~\eqref{eq:D43_branches}, this even block is
\begin{equation}
\gamma^3\ket{0^4}\!\bra{0^4}+3\gamma q^2\,\Omega_2 .
\end{equation}
Using Eq.~\eqref{eq:Omega2_cone_condition} with
$v=\gamma^3$ and $s=3\gamma q^2$ gives
\begin{equation}
\gamma^3\ge \gamma q^2 .
\end{equation}
For $0<\gamma<1$, this is equivalent to $\gamma\ge q$, hence
$\gamma\ge1/2$.

It remains to prove sufficiency for $\gamma\ge1/2$. The even block is
in $\mathrm{cone}(\mathcal S)$ by Eq.~\eqref{eq:Omega2_cone_condition}.
For the odd block, let $t_i$ be the weight-$3$ string with the unique
zero at site $i$, and let $e_i$ be the weight-$1$ string with the
unique one at site $i$. For sign vectors
$s=(s_1,s_2,s_3,s_4)$ with $\prod_i s_i=1$, define
\begin{equation}
\ket{\chi_s}
=
\frac{1}{\sqrt8}
\left(
\sum_{i=1}^4\ket{t_i}
+
\sum_{i=1}^4s_i\ket{e_i}
\right).
\end{equation}
These are pure stabilizer states. To see this explicitly, identify
the odd-parity affine hyperplane with $\mathbb F_2^3$ and write the
real sign pattern as $(-1)^f$. A real equal-modulus state on an
affine support is stabilizer whenever $f$ has Boolean degree at most
two. On $\mathbb F_2^3$, the coefficient of the cubic monomial is
$\sum_x f(x)$ modulo two. Here $f(t_i)=0$, while $f(e_i)=1$ exactly
when $s_i=-1$; the condition $\prod_i s_i=1$ says that this number
is even. Hence the cubic coefficient vanishes and the phase is
quadratic on the odd-parity affine space.

Averaging $\ket{\chi_s}\!\bra{\chi_s}$ with weights $1/4$ on the two
all-equal sign vectors and weights $1/12$ on the six sign vectors
with exactly two minus signs gives
$\mathbb E[s_i]=0$ and $\mathbb E[s_is_j]=1/3$ for $i\ne j$. Direct
expansion gives
\begin{equation}
\label{eq:odd_identity}
\ket{D_4^3}\!\bra{D_4^3}
+\frac16\sum_{|L|=2}\ket{D_L^1}\!\bra{D_L^1}
=
2\,\mathbb E_s\bigl[\ket{\chi_s}\!\bra{\chi_s}\bigr]
\in\mathrm{cone}(\mathcal S).
\end{equation}
The odd block in Eq.~\eqref{eq:D43_branches} is therefore
\begin{equation}
\begin{aligned}
&q^3\ket{D_4^3}\!\bra{D_4^3}
+\frac{\gamma^2q}{2}
\sum_{|L|=2}\ket{D_L^1}\!\bra{D_L^1}
\\
&\quad =
q^3\left(
\ket{D_4^3}\!\bra{D_4^3}
+\frac16\sum_{|L|=2}\ket{D_L^1}\!\bra{D_L^1}
\right)
+
\frac{q}{6}(3\gamma^2-q^2)
\sum_{|L|=2}\ket{D_L^1}\!\bra{D_L^1}.
\end{aligned}
\end{equation}
For $\gamma\ge1/2$, the last coefficient is nonnegative. Hence the
odd block is also in $\mathrm{cone}(\mathcal S)$. Combining the even
and odd blocks proves $\rho_{4,3}(\gamma)\in\mathcal S$ for every
$\gamma\ge1/2$.
\end{proof}

\begin{corollary}[Higher anti-$W$ lower bound]
\label{cor:higher_antiW_half}
For every $n\ge4$,
\begin{equation}
\label{eq:higher_antiW_half}
\rho_{n,n-1}(\gamma)\notin\mathcal S
\qquad
\text{for all }0\le\gamma<\frac12 .
\end{equation}
\end{corollary}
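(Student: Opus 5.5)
The plan is to reduce to the four-qubit case, Proposition~\ref{prop:antiW_four}, by postselection, in the same way the interior-Dicke obstruction reduces to the $W$ case. Fix $n\ge 4$ and $0\le\gamma<1/2$, and suppose for contradiction that $\rho_{n,n-1}(\gamma)\in\mathcal S$. Lemma~\ref{lem:Dicke_postselection} applies with $k=n-1$ and $s=n-4$, which is admissible since $1\le s\le k$ for $n\ge 5$; the case $n=4$ is Proposition~\ref{prop:antiW_four} itself. Projecting the first $n-4$ qubits onto $\ket{1^{n-4}}$ and renormalizing gives exactly $\rho_{4,3}(\gamma)$. The postselection probability is $(1-\gamma)^{n-4}\binom{4}{3}/\binom{n}{n-1}=4(1-\gamma)^{n-4}/n$, which is strictly positive for $\gamma<1$.

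The key step is that stabilizer membership survives this operation. Projecting a qubit onto the computational-basis state $\ket1$ is a Pauli-$Z$ measurement with postselection. Applied to a pure stabilizer state, it either annihilates the state or returns (after renormalization) a pure stabilizer state on the remaining qubits. The reason is that $(\bra{1}\otimes\id)\ket\phi$ restricts the affine support to an affine slice and keeps equal moduli and the quadratic phase structure; equivalently, this is the standard fact that Pauli measurements map stabilizer states to stabilizer states. By linearity, a convex decomposition $\rho=\sum_a\mu_a\ket{\phi_a}\!\bra{\phi_a}$ therefore maps to a nonnegative combination of pure stabilizer states, and after normalization by the nonzero success probability this is an element of $\mathcal S$. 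The same closure fact is already invoked in the proof of Proposition~\ref{prop:Dicke_obstruction}, so it can be cited rather than reproved. It follows that $\rho_{4,3}(\gamma)\in\mathcal S$.

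This contradicts Proposition~\ref{prop:antiW_four}, which says $\rho_{4,3}(\gamma)\in\mathcal S$ only when $\gamma\ge1/2$. That proposition also covers $\gamma=0$, since $\ket{D_4^3}$ has non-affine support. Hence $\rho_{n,n-1}(\gamma)\notin\mathcal S$ on $[0,1/2)$.

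The only point needing care is that the postselected output is exactly $\rho_{4,3}(\gamma)$ with the same $\gamma$, and not a reweighted mixture. This is handled by Lemma~\ref{lem:Dicke_postselection}: contraction with $\bra{1}$ retains only the no-jump branch on the measured qubits, and damping factorizes across qubits. I do not expect a real obstacle. The argument gives only the lower bound $1/2$, not a sharp threshold, which is why the statement is phrased as a corollary.
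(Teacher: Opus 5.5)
Your proposal is correct and matches the paper's proof: for $n>4$ you postselect $\ket{1^{n-4}}$ via Lemma~\ref{lem:Dicke_postselection} (nonzero probability for $\gamma<1$), use closure of $\mathcal S$ under stabilizer postselection, and reach a contradiction with Proposition~\ref{prop:antiW_four}. For $n=4$ that proposition applies directly, exactly as you say.
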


\begin{proof}
The case $n=4$ is Proposition~\ref{prop:antiW_four}. For $n>4$,
postselecting $\ket{1^{n-4}}$ on the first $n-4$ qubits has nonzero
probability for $\gamma<1$, and Lemma~\ref{lem:Dicke_postselection}
yields $\rho_{4,3}(\gamma)$ on the remaining four qubits. Stabilizer
membership is preserved under nonzero-probability stabilizer
postselection. Thus $\rho_{n,n-1}(\gamma)\in\mathcal S$ would imply
$\rho_{4,3}(\gamma)\in\mathcal S$, contradicting
Proposition~\ref{prop:antiW_four} when $\gamma<1/2$.
\end{proof}

Completing the anti-$W$ obstruction by the missing affine point 
gives a non-GHZ-$X$ re-entrant family. Let 
$L=\{0,u,v,u+v\}\subseteq\mathbb F_2^n$ be a two-dimensional linear 
subspace whose three nonzero words have a common Hamming weight $k$ 
(necessarily even), and define
\begin{equation}
\label{eq:DL_state}
\ket{D_L}
:=
\frac{1}{\sqrt3}
\sum_{x\in L\setminus\{0\}}\ket{x},
\qquad
\ket{\psi_L}
=
\alpha\ket{0^n}+\beta\ket{D_L},
\end{equation}
with $\alpha,\beta>0$, $\alpha^2+\beta^2=1$, $r:=\alpha/\beta$. The 
minimal case $n=3$, $k=2$, $L=\{000,011,101,110\}$ recovers 
$\ket{D_L}=\ket{D_3^2}$ and the vacuum--anti-$W$ slice.

\begin{proposition}[Punctured affine-plane re-entry]
\label{prop:ap_reentry}
Set $\rho_L(\gamma):=\mathcal E_\gamma^{\otimes n}(\ket{\psi_L}\!\bra{\psi_L})$
and $q:=1-\gamma$. Then
\begin{equation}
\label{eq:ap_decomp}
\rho_L(\gamma)
=
p_{\mathbf 0}\ket{0^n}\!\bra{0^n}
+
p_L\ket{D_L}\!\bra{D_L}
+
c_L\bigl(\ket{0^n}\!\bra{D_L}+\mathrm{h.c.}\bigr)
+
\Omega_L(\gamma),
\end{equation}
where
\begin{equation}
\label{eq:ap_coeffs}
p_{\mathbf 0}=\alpha^2+\beta^2\gamma^k,
\quad
p_L=\beta^2 q^k,
\quad
c_L=\alpha\beta\,q^{k/2},
\end{equation}
and $\Omega_L(\gamma)\in\mathrm{cone}(\mathcal S)$ is a subnormalized
stabilizer mixture supported on Hamming weights $1,\dots,k-1$. Moreover,
\begin{equation}
\label{eq:ap_membership}
\rho_L(\gamma)\in\mathcal S
\quad\Longleftrightarrow\quad
p_{\mathbf 0}\ge p_L/3
\quad\text{and}\quad
c_L\le p_L/\sqrt3.
\end{equation}
For $0<r<1/\sqrt3$, the trajectory has a finite stabilizer window 
$\rho_L(\gamma)\in\mathcal S\Leftrightarrow\gamma\in[\gamma_-,\gamma_+]$
or $\gamma=1$, with
\begin{equation}
\label{eq:ap_thresholds}
\gamma_+=1-(\sqrt3\,r)^{2/k},
\end{equation}
and $\gamma_-$ the unique root in $(0,\gamma_+)$ of
\begin{equation}
\label{eq:ap_gm_root}
r^2+\gamma^k=(1-\gamma)^k/3.
\end{equation}
\end{proposition}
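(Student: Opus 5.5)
The plan is to compute the trajectory branch by branch, exactly as in Proposition~\ref{prop:antiW_three}, and then to prove membership with a pair of witnesses modeled on the $\ket{D_3^2}$ argument. I begin with the geometry of $L$. Let $u,v,u\oplus v$ all have weight $k$. Then $|u\cap v|=k/2$, so the three supports split into disjoint blocks $A=u\cap v$, $B=u\setminus v$ and $C=v\setminus u$, each of size $k/2$, with $\mathrm{supp}(u)=A\cup B$, $\mathrm{supp}(v)=A\cup C$ and $\mathrm{supp}(u\oplus v)=B\cup C$.

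\emph{Branch structure.} Using $K_J$ from Eq.~\eqref{eq:KJ_action}, $\ket{0^n}$ survives only the no-jump branch $J=\varnothing$. That branch therefore carries the whole coherence $c_L=\alpha\beta q^{k/2}$ and the weight-$k$ block $p_L=\beta^2q^k$. For $J\ne\varnothing$, $K_J\ket{D_L}$ picks out those $x\in L\setminus\{0\}$ with $J\subseteq\mathrm{supp}(x)$. Because the three supports overlap pairwise in single blocks, a nonempty $J$ lies in at most two supports. If $J$ lies in exactly one support, the branch is proportional to a single basis state, and this state is $\ket{0^n}$ exactly when $J=\mathrm{supp}(x)$. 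If $J$ lies in two supports (for example $J\subseteq A$), the branch is an equal-amplitude real superposition of two distinct strings $\ket{u\setminus J}$ and $\ket{v\setminus J}$. Such a state is a pure stabilizer state by fact~(iii) of Appendix~\ref{sec:preliminaries}, and both strings have weight between $1$ and $k-1$. The three full-support jumps give $3\cdot\beta^2\gamma^k/3$ on $\ket{0^n}$, which yields $p_{\mathbf 0}$. Everything else is collected into $\Omega_L(\gamma)\in\mathrm{cone}(\mathcal S)$, supported on weights $1,\dots,k-1$. This establishes Eqs.~\eqref{eq:ap_decomp}--\eqref{eq:ap_coeffs}.

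\emph{Sufficiency.} Set $\ket{S_\pm}=(\pm\ket{0^n}+\sqrt3\ket{D_L})/2$. Each is a pure stabilizer state, because its support $L$ is a linear plane and its sign pattern is quadratic; this is the $\ket{D_3^2}$ case transported by a CNOT/permutation Clifford map. I then write the non-$\Omega_L$ part as $w_+\ket{S_+}\!\bra{S_+}+w_-\ket{S_-}\!\bra{S_-}+(p_{\mathbf 0}-p_L/3)\ket{0^n}\!\bra{0^n}$, with $\tfrac34(w_++w_-)=p_L$ and $\tfrac{\sqrt3}{4}(w_+-w_-)=c_L$. This gives $w_\pm=\tfrac23 p_L\pm\tfrac{2}{\sqrt3}c_L$, and all coefficients are nonnegative exactly under the two stated inequalities.

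\emph{Necessity (main obstacle).} I follow the rank-one argument of Proposition~\ref{prop:antiW_three}. The only weight-$k$ population of $\rho_L$ sits on $L\setminus\{0\}$, and the weight-$k$ block equals $p_L\ket{D_L}\!\bra{D_L}$, which is rank one. Hence any stabilizer component $\mu_a\ket{\phi_a}\!\bra{\phi_a}$ touching weight $k$ must restrict there to a multiple of $\ket{D_L}$. Its affine support then contains $u$, $v$ and $u\oplus v$, and therefore also $0^n$. By the equal-modulus property it places $\mu_a/|A_a|$ on $\ket{0^n}$ and $3\mu_a/|A_a|$ on $L\setminus\{0\}$. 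Its $0^n$--$D_L$ coherence therefore has modulus $\sqrt3\,\mu_a/|A_a|$. The delicate point is that the restriction of the phase to $L$ must be equal on the three nonzero words. I expect this to be the hardest step to make watertight, and I would settle it with the form~\eqref{eq:affine_form}. There, equality of $i^{\ell}$ on $u$, $v$ and $u\oplus v$ forces $\ell|_L\equiv0$, so the relative phase is $\pm1$; in any case only modulus bounds are needed. Summing over components gives $p_{\mathbf 0}\ge p_L/3$ and $c_L\le p_L/\sqrt3$. Components not touching weight $k$ add only nonnegative weight to $\ket{0^n}$ and no coherence.

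\emph{Thresholds.} The inequality $c_L\le p_L/\sqrt3$ reads $q^{k/2}\ge\sqrt3\,r$, i.e.\ $\gamma\le\gamma_+$, and $\gamma_+>0$ exactly when $r<1/\sqrt3$. The inequality $p_{\mathbf 0}\ge p_L/3$ reads $r^2+\gamma^k\ge(1-\gamma)^k/3$. Here the left side increases and the right side decreases in $\gamma$, the inequality fails at $\gamma=0$ (since $r^2<1/3$), and it holds strictly at $\gamma_+$, since there $r^2+\gamma_+^k>r^2=(1-\gamma_+)^k/3$. This yields the unique root $\gamma_-\in(0,\gamma_+)$ of Eq.~\eqref{eq:ap_gm_root}. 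Finally, $\gamma=1$ gives the vertex $\ket{0^n}\!\bra{0^n}$.
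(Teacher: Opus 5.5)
Your proposal is correct and follows the paper's proof essentially step for step: the same Kraus-branch decomposition (your blocks $A,B,C$ are the paper's coordinate patterns $(u_i,v_i)=11,10,01$, each of multiplicity $k/2$), the same $\ket{S_\pm^L}$ decomposition with weights $\tfrac23p_L\pm\tfrac{2}{\sqrt3}c_L$, the same rank-one and affine-closure necessity argument, and the same monotonicity argument for $\gamma_\pm$. The step you flag as delicate is in fact immediate: equal phases on $u$, $v$, $u\oplus v$ follow directly from $\Pi_k\ket{\phi_a}\propto\ket{D_L}$, and, as you note, the triangle inequality alone already gives the needed coherence bound $\sqrt3\,\mu_a/|A_a|$.
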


\begin{proof}
By the $\mathbb F_2^2$ structure of $L$, every coordinate is $1$ in
exactly two of $\{u,v,u+v\}$ or in none, since $u_i+v_i=(u+v)_i$
forbids the pattern $(1,1,1)$. Under the common-weight assumption,
the three legal patterns $10,01,11$ occur with equal multiplicities
$a=b=c=k/2$ (so $k$ is even). Let $K_J$ denote the
amplitude-damping Kraus branch with jumps on $J$, and write
$x\setminus J$ for the bit string obtained from $x$ by setting the
bits in $J$ to zero. For nonempty $J$,
\begin{equation}
\label{eq:ap_KJ}
K_J\ket{D_L}
=
\frac{1}{\sqrt3}
\sum_{\substack{x\in L\setminus\{0\}\\ J\subseteq\mathrm{supp}(x)}}
\gamma^{|J|/2}q^{(k-|J|)/2}\ket{x\setminus J}.
\end{equation}
The sum is empty when no nonzero word of $L$ contains $J$; otherwise 
$J\subseteq\mathrm{supp}(x)$ forces $|J|\le k$, so the damping factor 
is well defined. For $J\ne\varnothing$ the sum contains at most two 
basis states, and it contains exactly one vacuum term when 
$J=\mathrm{supp}(x)$ for one of the three nonzero words. Indeed, 
since the three nonzero codewords share the common weight $k$, 
$\mathrm{supp}(x)\subseteq\mathrm{supp}(y)$ for two distinct nonzero 
words would force $x=y$; hence a full-jump branch 
$J=\mathrm{supp}(x)$ has exactly one surviving codeword and 
contributes only the vacuum. Define the intermediate-jump 
contribution
\begin{equation}
\label{eq:OmegaL_def}
\Omega_L(\gamma)
:=
\!\!\sum_{0<|J|<k}\!\!
K_J\ket{\psi_L}\!\bra{\psi_L}K_J^\dagger,
\end{equation}
collecting the branches with at least one but fewer than $k$ jumps. 
The branch decomposition then reads
\begin{equation}
\label{eq:ap_branch_sum}
\begin{aligned}
\rho_L(\gamma)
&=
K_{\varnothing}\ket{\psi_L}\!\bra{\psi_L}K_{\varnothing}^{\dagger}
+
\sum_{x\in L\setminus\{0\}}
K_{\mathrm{supp}(x)}
\ket{\psi_L}\!\bra{\psi_L}
K_{\mathrm{supp}(x)}^{\dagger}
+
\Omega_L(\gamma)
\\
&=
\bigl(\alpha\ket{0^n}+\beta q^{k/2}\ket{D_L}\bigr)
\bigl(\alpha\bra{0^n}+\beta q^{k/2}\bra{D_L}\bigr)
+
\sum_{x\in L\setminus\{0\}}
\frac{\beta^2}{3}\gamma^k\ket{0^n}\!\bra{0^n}
+
\Omega_L(\gamma)
\\
&=
\alpha^2\ket{0^n}\!\bra{0^n}
+
\beta^2 q^k\ket{D_L}\!\bra{D_L}
+
\alpha\beta\, q^{k/2}
\bigl(
\ket{0^n}\!\bra{D_L}+\ket{D_L}\!\bra{0^n}
\bigr)
+
3\cdot\frac{\beta^2}{3}\gamma^k\ket{0^n}\!\bra{0^n}
+
\Omega_L(\gamma)
\\
&=
\bigl(\alpha^2+\beta^2\gamma^k\bigr)\ket{0^n}\!\bra{0^n}
+
\beta^2 q^k\ket{D_L}\!\bra{D_L}
+
\alpha\beta\, q^{k/2}
\bigl(
\ket{0^n}\!\bra{D_L}+\ket{D_L}\!\bra{0^n}
\bigr)
+
\Omega_L(\gamma).
\end{aligned}
\end{equation}
Each branch contributing to $\Omega_L(\gamma)$ has $0<|J|<k$. Since 
$J\neq\varnothing$ kills the vacuum component $\alpha\ket{0^n}$, 
and since $J$ is the support of no nonzero word of $L$, every 
surviving branch comes from $\ket{D_L}$ alone, sits at Hamming 
weight $k-|J|\in\{1,\dots,k-1\}$, and by Eq.~\eqref{eq:ap_KJ} has 
support of size at most two with equal moduli. By
Appendix~\ref{sec:preliminaries}(iii) it is a pure stabilizer branch. This 
proves Eqs.~\eqref{eq:ap_decomp}--\eqref{eq:ap_coeffs}.

For sufficiency, set
\begin{equation}
\label{eq:ap_Spm}
\ket{S_\pm^L}
=
\frac{\pm\ket{0^n}+\sqrt3\,\ket{D_L}}{2}.
\end{equation}
$\ket{S_+^L}$ is the CSS code state on $L$, and $\ket{S_-^L}$ 
differs by the quadratic Clifford phase $1+s+t+st$ on 
$L\simeq\mathbb F_2^2$ with $0\mapsto(0,0)$, $u\mapsto(1,0)$, 
$v\mapsto(0,1)$, $u+v\mapsto(1,1)$. Hence both are pure stabilizer 
states. If $p_{\mathbf 0}\ge p_L/3$ and $c_L\le p_L/\sqrt3$, then 
$\lambda_\pm:=2p_L/3\pm 2c_L/\sqrt3$ are non-negative, and
\begin{equation}
\label{eq:ap_stab_decomp}
\rho_L(\gamma)
=
\lambda_+\ket{S_+^L}\!\bra{S_+^L}
+
\lambda_-\ket{S_-^L}\!\bra{S_-^L}
+
(p_{\mathbf 0}-p_L/3)\ket{0^n}\!\bra{0^n}
+
\Omega_L(\gamma)
\end{equation}
is a convex stabilizer decomposition.

For necessity, suppose $\rho_L(\gamma)\in\mathcal S$ and $\gamma<1$. 
The weight-$k$ block satisfies
\begin{equation}
\label{eq:ap_rank_one}
\Pi_k\,\rho_L(\gamma)\,\Pi_k=p_L\ket{D_L}\!\bra{D_L},
\end{equation}
which is rank one. Since a sum of positive semidefinite matrices has
rank-one range only if every nonzero summand has range inside the
same one-dimensional subspace, every stabilizer component with
nonzero weight-$k$ projection has
$\Pi_k\ket{\phi_a}\parallel\ket{D_L}$. Hence its support contains
$\{u,v,u+v\}$. Affine closure then adjoins 
$u\oplus v\oplus(u+v)=0^n$. For such a component, let
$\delta_a:=\mu_a/|A_a|$ be its common computational-basis population
in the convex decomposition. The support may contain additional
lower-weight strings, but no additional weight-$k$ string, since the
weight-$k$ projection is proportional to $\ket{D_L}$. Thus the
component contributes $3\delta_a$ to the weight-$k$ mass,
$\delta_a=(3\delta_a)/3$ to the vacuum population, and at most
$\sqrt{\delta_a}\sqrt{3\delta_a}=(3\delta_a)/\sqrt3$ in magnitude to
the vacuum-to-$\ket{D_L}$ coherence. Summing over all components with
nonzero weight-$k$ projection, while the remaining components add
only nonnegative diagonal weight, gives
$p_{\mathbf 0}\ge p_L/3$ and $c_L\le p_L/\sqrt 3$.

On the open trajectory $q>0$, substituting Eq.~\eqref{eq:ap_coeffs}, 
$c_L\le p_L/\sqrt3$ becomes $q^{k/2}\ge\sqrt3\,r$, giving  
$\gamma_+=1-(\sqrt3\,r)^{2/k}$; $p_{\mathbf 0}\ge p_L/3$ is 
Eq.~\eqref{eq:ap_gm_root}, whose left-hand side minus 
right-hand side is strictly increasing in $\gamma$, so $\gamma_-$ 
is unique with $0<\gamma_-<\gamma_+$. At $\gamma=1$, 
$\rho_L(1)=\ket{0^n}\!\bra{0^n}\in\mathcal S$.
\end{proof}

The minimal case $n=3$, $k=2$ gives 
$\ket{S_\pm^L}=\ket{S_\pm}$ as in 
Proposition~\ref{prop:antiW_three}, and
\begin{equation}
\label{eq:vac_antiW_thresholds_minimal}
\gamma_-=\frac{\sqrt{3-6r^2}-1}{2},
\qquad
\gamma_+=1-\sqrt3\,r.
\end{equation}
The limit $\alpha\to 0$ recovers Proposition~\ref{prop:antiW_three}:
$\gamma_+\to 1$ and the rebirth branch collapses to the endpoint. 
At $r=1/\sqrt3$, equivalently $\alpha=1/2$, the input is the pure 
stabilizer state $(\ket{0^3}+\sqrt3\,\ket{D_3^2})/2=\ket{S_+}$, and 
the finite window collapses to $\gamma_-=\gamma_+=0$.

\paragraph{Hamiltonian realization of the minimal re-entrant slice.}
The minimal $n=3$, $k=2$ instance of the punctured affine-plane 
slice arises as the unique ground state of a simple two-local 
parity-preserving Hamiltonian, so the re-entrant trajectory can 
be realized from a two-local ground state.

\begin{proposition}[Two-local pairing Hamiltonian realization]
\label{prop:pairing_groundstate}
Let
$n_i:=(\id-Z_i)/2$, $\hat N:=\sum_{i=1}^3 n_i$,
$\sigma_i^+:=\ket1_i\!\bra0_i$, $\sigma_i^-:=\ket0_i\!\bra1_i$,
and consider the three-qubit two-local parity-preserving Hamiltonian
\begin{equation}
\label{eq:pairing_H}
H_{\mu,g}
=
\mu\,(\hat N-2)^2
-
g\sum_{1\le i<j\le 3}
\bigl(\sigma_i^+\sigma_j^+ + \sigma_i^-\sigma_j^-\bigr),
\qquad \mu>0,\ g>0.
\end{equation}
Equivalently,
\begin{equation}
\label{eq:pairing_two_local}
(\hat N-2)^2
=
\id+\frac12\sum_i Z_i+\frac12\sum_{i<j}Z_iZ_j,
\qquad
\sigma_i^+\sigma_j^+ + \sigma_i^-\sigma_j^-
=
\frac12(X_iX_j-Y_iY_j),
\end{equation}
so $H_{\mu,g}$ is two-local with at most $ZZ$ and pair-exchange 
$XX-YY$ interactions, equivalent in the symmetric collective-spin 
language to a two-axis countertwisting interaction with quadratic 
level splitting. For 
$0<\xi:=g/\mu<\sqrt 3/2$, the unique ground state is the 
vacuum--anti-$W$ slice state
\begin{equation}
\label{eq:pairing_GS}
\ket{\psi_0(\xi)}
=
\alpha(\xi)\ket{000}+\beta(\xi)\ket{D_3^2},
\qquad
r(\xi):=\frac{\alpha(\xi)}{\beta(\xi)}
=
\frac{\sqrt 3\,\xi}{2+\sqrt{4+3\xi^2}}
\in\!\bigl(0,\frac{1}{3}\bigr).
\end{equation}
Under local amplitude damping, for every $0\le\gamma<1$,
\begin{equation}
\label{eq:pairing_membership}
\mathcal E_\gamma^{\otimes 3}
\bigl(\ket{\psi_0(\xi)}\!\bra{\psi_0(\xi)}\bigr)\in\mathcal S
\quad\Longleftrightarrow\quad
\gamma_-\le\gamma\le\gamma_+,
\end{equation}
with closed-form thresholds
\begin{equation}
\label{eq:pairing_thresholds}
\gamma_-=\frac{\sqrt{3-6r^2}-1}{2},
\qquad
\gamma_+=1-\sqrt 3\,r,
\qquad r=r(\xi),
\end{equation}
and trajectory endpoint 
$\rho(1)=\ket{000}\!\bra{000}\in\mathcal S$. Thus this two-local 
pairing ground state gives a physical non-GHZ-$X$ trajectory with 
finite magic death and rebirth.
\end{proposition}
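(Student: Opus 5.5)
The plan is to solve the ground-state problem by sector decomposition and then read off the damped trajectory from Proposition~\ref{prop:ap_reentry}.

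First, I will verify the two-local rewriting~\eqref{eq:pairing_two_local} directly. Use $n_i=(\id-Z_i)/2$, so that $\hat N-2=-\tfrac12(\id+\sum_iZ_i)$, and square it using $Z_i^2=\id$. Then use $\sigma^\pm=(X\mp iY)/2$ to get $\sigma_i^+\sigma_j^++\sigma_i^-\sigma_j^-=\tfrac12(X_iX_j-Y_iY_j)$.

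$H_{\mu,g}$ commutes with the parity $Z_1Z_2Z_3$, because the pairing term changes $\hat N$ by $\pm2$. I therefore split the Hilbert space into its even and odd sectors.

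\emph{Even sector} $\{000\}\cup\{|x|=2\}$.
\begin{itemize}
\item $(\hat N-2)^2$ equals $4$ on $\ket{000}$ and $0$ on weight $2$.
\item The pairing term maps $\ket{000}\mapsto\sqrt3\,\ket{D_3^2}$.
\item It annihilates the two weight-$2$ states orthogonal to $\ket{D_3^2}$. Lowering sends them to $\ket{000}$ with amplitude equal to the sum of their coefficients, which is zero. Raising needs two zeros, which weight-$2$ strings do not have.
\end{itemize}
So those two states have energy $0$, and on $\mathrm{span}\{\ket{000},\ket{D_3^2}\}$ we get the block
\begin{equation*}
\begin{pmatrix}4\mu & -\sqrt3 g\\ -\sqrt3 g & 0\end{pmatrix},
\qquad
E_0=2\mu-\sqrt{4\mu^2+3g^2}<0 .
\end{equation*}
Solving $(4\mu-E_0)\alpha=\sqrt3 g\beta$ gives $r=\sqrt3\xi/(2+\sqrt{4+3\xi^2})$, which is~\eqref{eq:pairing_GS}. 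Both amplitudes are positive.

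\emph{Odd sector} $\{|x|=1\}\cup\{111\}$.
\begin{itemize}
\item $(\hat N-2)^2\equiv\id$ on this sector.
\item The pairing term couples only $\ket{W}=\ket{D_3^1}$ and $\ket{111}$, with matrix element $-\sqrt3 g$.
\end{itemize}
The energies are therefore $\mu\pm\sqrt3 g$, together with $\mu$ (twice) on the weight-$1$ states orthogonal to $\ket W$.

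Uniqueness of the ground state then reduces to $E_0<\min(0,\mu-\sqrt3 g)$. The inequality $E_0<0$ is automatic. The condition $2\mu-\sqrt{4\mu^2+3g^2}<\mu-\sqrt3 g$ is trivial when $\mu-\sqrt3 g\ge0$. When $\mu<\sqrt3 g$, square $\mu+\sqrt3 g<\sqrt{4\mu^2+3g^2}$ to get $2\sqrt3\,\mu g<3\mu^2$, i.e.\ $\xi<\sqrt3/2$. This level-crossing comparison is the step I expect to need the most care, mainly checking that no other even or odd level falls below $E_0$. The bookkeeping above does that.

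Next, I will show $r(\xi)\in(0,1/3)$. The derivative of $\xi\mapsto\xi/(2+\sqrt{4+3\xi^2})$ has the sign of $4+3\xi^2+2\sqrt{4+3\xi^2}-3\xi^2>0$, so the map is increasing. At $\xi=\sqrt3/2$ it equals $\tfrac32/(2+\tfrac52)=1/3$.

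In particular $r<1/\sqrt3$. The ground state is precisely the minimal punctured affine-plane input with $n=3$, $k=2$, and $L=\{000,011,101,110\}$. So Proposition~\ref{prop:ap_reentry} applies verbatim. It gives the membership window $[\gamma_-,\gamma_+]$ on $[0,1)$, with $\gamma_+=1-\sqrt3 r$ and $\gamma_-$ the root of $r^2+\gamma^2=(1-\gamma)^2/3$. That root is $\gamma_-=(\sqrt{3-6r^2}-1)/2$, as recorded in~\eqref{eq:vac_antiW_thresholds_minimal}. Finally, $\rho(1)=\ket{000}\!\bra{000}$ follows from the same proposition, which completes the plan.
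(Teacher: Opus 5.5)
Your proposal is correct and follows essentially the paper's proof: the same parity-sector decomposition into the $2\times2$ blocks on $\mathrm{span}\{\ket{000},\ket{D_3^2}\}$ and $\mathrm{span}\{\ket{D_3^1},\ket{111}\}$, the same level comparison $2\mu-\sqrt{4\mu^2+3g^2}<\mu-\sqrt3\,g\Leftrightarrow\xi<\sqrt3/2$, and then specialization of Proposition~\ref{prop:ap_reentry} with $n=3$, $k=2$. Beyond what the paper writes out, you also verify the two-local rewriting and the monotonicity of $r(\xi)$, which the paper only asserts.
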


\begin{proof}
$H_{\mu,g}$ preserves excitation parity and permutation symmetry. 
In the symmetric even-parity subspace 
$\mathrm{span}\{\ket{000},\ket{D_3^2}\}$, the diagonal contributions 
come from $(\hat N-2)^2$, with 
$\hat N\ket{000}=0$ and $\hat N\ket{D_3^2}=2\ket{D_3^2}$, giving 
$(\hat N-2)^2\ket{000}=4\ket{000}$ and 
$(\hat N-2)^2\ket{D_3^2}=0$. The off-diagonal comes from the 
pair-creation term:
\begin{equation}
\begin{aligned}
\sum_{1\le i<j\le 3}\sigma_i^+\sigma_j^+\ket{000}
&=
\ket{110}+\ket{101}+\ket{011}
\\&=
\sqrt 3\,\ket{D_3^2},
\end{aligned}
\end{equation}
while $\sigma_i^-\sigma_j^-\ket{000}=0$, giving 
$\bra{D_3^2}\sum_{i<j}(\sigma_i^+\sigma_j^+ + \sigma_i^-\sigma_j^-)\ket{000}=\sqrt 3$. 
Therefore
\begin{equation}
\label{eq:pairing_even_block}
H_{\rm even}
=
\begin{pmatrix}
4\mu & -\sqrt 3\,g\\
-\sqrt 3\,g & 0
\end{pmatrix},
\qquad
E_{\rm even}=2\mu-\sqrt{4\mu^2+3g^2},
\end{equation}
with ground eigenvector ratio as in Eq.~\eqref{eq:pairing_GS}. 
In the symmetric odd-parity subspace 
$\mathrm{span}\{\ket{D_3^1},\ket{111}\}$ the lowest energy is 
$E_{\rm odd}=\mu-\sqrt 3\,g$. The two even states orthogonal to 
$\ket{D_3^2}$ in the nonsymmetric block have energy $0$, and the 
two odd states orthogonal to $\ket{D_3^1}$ have energy $\mu$. Since 
$E_{\rm even}<0<\mu$ for every $g>0$, the only nontrivial comparison 
is with the symmetric odd sector. Writing $\xi=g/\mu$, one obtains
\begin{equation}
\label{eq:pairing_energy_comparison}
\begin{aligned}
E_{\rm even}<E_{\rm odd}
&\Longleftrightarrow
2\mu-\sqrt{4\mu^2+3g^2}
<
\mu-\sqrt 3\, g
\\
&\Longleftrightarrow
\mu+\sqrt 3\, g
<
\sqrt{4\mu^2+3g^2}
\\
&\Longleftrightarrow
\mu^2+2\sqrt 3\,\mu g+3g^2
<
4\mu^2+3g^2
\\
&\Longleftrightarrow
2\sqrt 3\,\mu g
<
3\mu^2
\\
&\Longleftrightarrow
0<\frac{g}{\mu}<
\frac{\sqrt 3}{2}.
\end{aligned}
\end{equation}
On this interval the even-sector ground state is the unique global
ground state. The ratio $r(\xi)$ is strictly increasing on 
$(0,\sqrt 3/2)$ from $0$ to $1/3$, hence lies in 
$(0,1/\sqrt 3)$, the regime of 
Proposition~\ref{prop:ap_reentry} with $n=3$, $k=2$.
Specializing Eq.~\eqref{eq:vac_antiW_thresholds_minimal} to 
this ratio gives Eq.~\eqref{eq:pairing_thresholds}.
\end{proof}

The range $(0,1/3)$ of $r(\xi)$ is an open subset of the full 
re-entrant interval $(0,1/\sqrt 3)$ of 
Proposition~\ref{prop:ap_reentry}, giving a physical
realization of a non-GHZ-$X$ death-and-rebirth
trajectory on the vacuum--anti-$W$ slice.

Beyond this $\ell=2$ case, the affine-completion construction is 
geometrically constrained. A hypothetical extension to an 
$\ell$-dimensional affine code $L'\simeq\mathbb F_2^\ell$ would 
require stabilizer completions of the form 
$(\pm\ket{0^n}+\sqrt{|L'|-1}\,\ket{D_{L'}})/\sqrt{|L'|}$, i.e.\ a 
Clifford phase on $L'$ that flips only the vacuum sign. The 
indicator of the origin has Boolean degree $\ell$, whereas 
stabilizer phases are at most quadratic. Hence the direct 
vacuum-sign-flip completion used here stops at $\ell=2$: $\ell=1$ 
is the GHZ-$X$ endpoint pair, and $\ell=2$ is the punctured 
affine-plane slice above.

Higher-$n$ Hamiltonian realizations on the affine-completion 
family, and sharp thresholds for the anti-$W$ line 
$\ket{D_n^{n-1}}$ for $n\ge 5$ beyond the universal lower bound of
Corollary~\ref{cor:higher_antiW_half}, are left open.

\section{Magic-generators and magic-insulators under homogeneous amplitude damping}
\label{sec:AD_generators}

Throughout this section the channel is $\mathcal E_\gamma^{\otimes n}$ 
with homogeneous damping; the support-only classification below
relies on this homogeneity. Qubit-dependent rates are outside this
classification, while Appendix~\ref{sec:nonuniform} treats nonuniform
GHZ complementarity.
A pure stabilizer state $\ket{\phi}$ is a \emph{magic-insulator} if
\begin{equation}
\mathcal E_\gamma^{\otimes n}(\ket{\phi}\!\bra{\phi})\in\mathcal S
\qquad
\text{for all } \gamma\in[0,1],
\end{equation}
and a \emph{magic-generator} if
\begin{equation}
\mathcal E_\gamma^{\otimes n}(\ket{\phi}\!\bra{\phi})\notin\mathcal S
\qquad
\text{for every } 0<\gamma<1 .
\end{equation}
The endpoint $\gamma=1$ is excluded from the second condition because
all trajectories end at the stabilizer vertex $\ket{0^n}$. The
homogeneous assumption is necessary: under unequal two-qubit damping
with $q_i:=1-\gamma_i$, the single-excitation block of
$(\mathcal E_{\gamma_1}\otimes\mathcal E_{\gamma_2})
(\ket{\Psi^+}\!\bra{\Psi^+})$ has diagonal entries $q_2/2,q_1/2$
and off-diagonal entry $\sqrt{q_1q_2}/2$. For $q_1,q_2>0$, the
row-dominance criterion of Lemma~\ref{lem:W_row_dominance} holds
only when $q_1=q_2$.

Throughout we use the pair-coherence obstruction of
Lemma~\ref{lem:pair_obstruction}.

\begin{proposition}[Pure-stabilizer magic-generators and magic-insulators]
\label{prop:AD_generator_classification}
Let $\ket{\phi}$ be a pure $n$-qubit stabilizer state, and let
$A:=\Supp(\ket{\phi})$ be its computational-basis support. Under
homogeneous local amplitude damping, $\ket{\phi}$ is a magic-insulator
if and only if all strings in $A$ have the same Hamming weight:
\begin{equation}
\label{eq:fixed_weight_condition}
|x|=|y|
\qquad
\text{for all } x,y\in A .
\end{equation}
If Eq.~\eqref{eq:fixed_weight_condition} fails, then
$\ket{\phi}$ is a magic-generator.
\end{proposition}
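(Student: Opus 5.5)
The proof has two directions: a fixed-weight support gives an insulator, and a mixed-weight support gives a generator. Both use only the affine-support and equal-modulus structure in Eq.~\eqref{eq:affine_form}, the Kraus-branch action Eq.~\eqref{eq:KJ_action}, and the pair-coherence obstruction of Lemma~\ref{lem:pair_obstruction}.

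\emph{Fixed weight $\Rightarrow$ insulator.} Suppose every string in $A$ has weight $w$. We expand $\mathcal E_\gamma^{\otimes n}(\ket\phi\!\bra\phi)=\sum_J K_J\ket\phi\!\bra\phi K_J^\dagger$ and show that each nonzero unnormalized branch is proportional to a pure stabilizer state. By Eq.~\eqref{eq:KJ_action}, every $x\in A$ with $J\subseteq\mathrm{supp}(x)$ acquires the same scalar $\sqrt{\gamma^{|J|}(1-\gamma)^{w-|J|}}$, because all weights equal $w$. Hence the branch is a uniform rescaling of the state obtained in two steps. First, project $\ket\phi$ onto $\{x: x_j=1\ \forall j\in J\}$; this is a computational-basis postselection on the affine support, so the result is again a stabilizer state (an affine subspace intersected with a coordinate subspace is affine, and the phases restrict to a quadratic form). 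Second, apply $X_j$ on $J$, which is Pauli and hence Clifford. The one point to check is that this map is injective on the surviving strings, so that no amplitudes add and cancel. That holds because all surviving strings agree on $J$. Thus every branch is a nonnegative multiple of a pure stabilizer projector, the output is a convex stabilizer mixture for all $\gamma\in[0,1]$, and $\ket\phi$ is an insulator.

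\emph{Mixed weight $\Rightarrow$ generator.} Choose $y\in A$ of maximal weight and $x\in A$ with $|x|<|y|$. Since $|y|$ is maximal, for $J\neq\varnothing$ the only strings $v\in A$ with $K_J\ket v\propto\ket y$ would have $v\supsetneq y$ bitwise, and these do not exist. The entry $(\rho_\gamma)_{yy}$ therefore receives only the no-jump contribution $2^{-m}q^{|y|}$, where $q=1-\gamma$ and $|A|=2^m$. For the coherence, a branch $K_J$ contributes to $(\rho_\gamma)_{xy}$ only if some $v\in A$ maps to $y$, and that forces $J=\varnothing$. So only $K_\varnothing$ contributes, giving $|(\rho_\gamma)_{xy}|=2^{-m}q^{(|x|+|y|)/2}$. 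The ratio $|(\rho_\gamma)_{xy}|/(\rho_\gamma)_{yy}=q^{-(|y|-|x|)/2}$ exceeds $1$ for every $0<\gamma<1$, which violates Lemma~\ref{lem:pair_obstruction}. Hence the output lies outside $\mathcal S$ for all $\gamma\in(0,1)$, so the state is a generator.

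The converse of the ``iff'' follows at once: a mixed-weight state is a generator, so it is not an insulator. The step needing most care is the branch argument in the first part, in particular the injectivity of the jump map on the surviving strings and the stabilizerness of the postselected restriction. I would handle the latter by noting that postselecting a computational-basis stabilizer measurement on a stabilizer state yields a stabilizer state.
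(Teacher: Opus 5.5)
Your proposal is correct and follows essentially the same route as the paper. In the fixed-weight case you show that each nonzero Kraus branch is a uniform rescaling of a pure stabilizer state; you phrase this as $Z$-basis postselection on $J$ followed by the Pauli $X_J$, while the paper phrases it as an injective affine relabeling $x\mapsto x\setminus J$ of $A_J$ that carries the quadratic phase along. In the mixed-weight case you take a maximal-weight $y$, show that only the no-jump branch feeds $(\rho_\gamma)_{yy}$ and $(\rho_\gamma)_{xy}$, and violate Lemma~\ref{lem:pair_obstruction} through the ratio $q^{-(|y|-|x|)/2}>1$, exactly as the paper does.
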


\begin{proof}
Write the stabilizer state in the standard affine-support form
\begin{equation}
\label{eq:generator_stab_form}
\ket{\phi}
=
2^{-m/2}
\sum_{x\in A}
\omega_x \ket{x},
\qquad
|\omega_x|=1,
\end{equation}
where $A$ is an affine subspace and the phases $\omega_x$ are
Clifford quadratic phases as in Eq.~\eqref{eq:affine_form}.

First assume that all strings in $A$ have the same Hamming weight,
say $w$. Let $K_J$ be the amplitude-damping Kraus operator in which
the qubits in $J\subseteq\{1,\ldots,n\}$ undergo the jump $E_1$ and
the remaining qubits undergo $E_0$. The branch $K_J\ket{\phi}$ is
zero unless some strings in $A$ have ones on all sites in $J$. On the
nonzero branch, let
\begin{equation}
A_J:=\{x\in A:\ x_j=1\ \text{for every }j\in J\}.
\end{equation}
The set $A_J$ is the intersection of the affine subspace $A$ with
affine coordinate constraints, hence is affine. The branch maps
each $x\in A_J$ to the bit string obtained from $x$ by setting the
bits in $J$ to zero. This map is affine and injective on $A_J$.
Because all $x\in A$ have weight $w$, every surviving term in the
branch carries the same damping factor
\begin{equation}
\gamma^{|J|/2}(1-\gamma)^{(w-|J|)/2}.
\end{equation}
Thus, after normalization, every nonzero Kraus branch is again a
pure stabilizer state: its support is the affine image of $A_J$ 
under $x\mapsto x\setminus J$, and the original Clifford phase on 
$A_J$ transports along this affine bijection (quadratic forms 
remain quadratic). Therefore
\begin{equation}
\mathcal E_\gamma^{\otimes n}(\ket{\phi}\!\bra{\phi})
=
\sum_J K_J\ket{\phi}\!\bra{\phi}K_J^\dagger
\end{equation}
is a convex mixture of pure stabilizer states. Hence
$\ket{\phi}$ is a magic-insulator.

Conversely suppose that the support $A$ contains strings of
different Hamming weights. Choose $y\in A$ with maximal Hamming
weight, and choose $x\in A$ with $|x|<|y|$. Let
\begin{equation}
\rho_\gamma
:=
\mathcal E_\gamma^{\otimes n}(\ket{\phi}\!\bra{\phi}),
\qquad
q:=1-\gamma .
\end{equation}
Since $y$ has maximal Hamming weight in $A$, no string
$z\in A\setminus\{y\}$ can satisfy $z_i\ge y_i$ for all $i$.
Therefore the output diagonal element at $y$ receives contribution
only from the input basis state $y$, and
\begin{equation}
\label{eq:generator_diag}
(\rho_\gamma)_{yy}
=
2^{-m}q^{|y|}.
\end{equation}
Indeed, any nonempty jump set contributing to an output matrix
element with final string $y$ would require an input string
coordinate-wise above $y$, hence of strictly larger Hamming weight,
contradicting the maximality of $y$. The same Kraus operator
$K_J$ acts on both sides of $K_J\ket{\phi}\!\bra{\phi}K_J^\dagger$,
so the $y$-side argument fixing $J=\emptyset$ also forces no jump
on the $x$-side. Thus the only contributing jump set is the empty
one, and only the input coherence $\ket{x}\!\bra{y}$ contributes,
giving
\begin{equation}
\label{eq:generator_coh}
|(\rho_\gamma)_{xy}|
=
2^{-m}q^{(|x|+|y|)/2}.
\end{equation}
For $0<\gamma<1$, one has $0<q<1$ and $|x|<|y|$. The
coherence-to-diagonal ratio at the maximal-weight string $y$ is then
\begin{equation}
\label{eq:generator_ratio}
\begin{aligned}
\frac{|(\rho_\gamma)_{xy}|}{(\rho_\gamma)_{yy}}
&=
\frac{2^{-m}q^{(|x|+|y|)/2}}{2^{-m}q^{|y|}}
\\
&=
q^{(|x|+|y|)/2-|y|}
\\
&=
q^{-(|y|-|x|)/2}
>
1.
\end{aligned}
\end{equation}
Thus $|(\rho_\gamma)_{xy}|>(\rho_\gamma)_{yy}$, violating the
pair-coherence obstruction~\eqref{eq:pair_obstruction}. Therefore
$\rho_\gamma\notin\mathcal S$ for every $0<\gamma<1$, and
$\ket{\phi}$ is a magic-generator.
\end{proof}

\begin{corollary}[CSS stabilizer states under amplitude damping]
\label{cor:CSS_classification}
Let $\ket{\phi}$ be a CSS stabilizer state with real $\pm 1$ amplitudes 
on an affine coset support
\begin{equation}
\label{eq:CSS_coset}
A=a+C\subseteq\mathbb F_2^n,
\end{equation}
where $C$ is a linear subspace and $a\in\mathbb F_2^n$. Under
homogeneous local amplitude damping, $\ket{\phi}$ is a magic-insulator
if and only if all elements of $a+C$ have the same Hamming weight,
\begin{equation}
\label{eq:CSS_constant_weight}
|x|=|y|
\qquad
\text{for all } x,y\in a+C ,
\end{equation}
and is a magic-generator otherwise. Thus the CSS case is just the real-amplitude specialization of the 
support-only classification of 
Proposition~\ref{prop:AD_generator_classification}; the Clifford 
phase structure is irrelevant.
\end{corollary}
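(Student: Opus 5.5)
This corollary follows directly from Proposition~\ref{prop:AD_generator_classification}. That proposition classifies every pure stabilizer state using only its computational-basis support $A$; the phases $\omega_x$ never enter. The plan has two steps: first confirm that a CSS state with real $\pm1$ amplitudes on $a+C$ meets the hypotheses, then read off the dichotomy.

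\emph{Hypotheses.} By assumption $\ket{\phi}$ is a pure stabilizer state, with support exactly $A=a+C$ and equal-modulus amplitudes $\pm|C|^{-1/2}$. In the language of Eq.~\eqref{eq:affine_form}, $m=\dim C$ and $\omega_x=(-1)^{q(x)}$ with $\ell\equiv0$. So $\ket{\phi}$ is written in the standard affine-support form~\eqref{eq:generator_stab_form}, and $\Supp(\ket{\phi})=a+C$. No new property of CSS states is needed.

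\emph{Insulator direction.} Suppose every string in $a+C$ has the same weight $w$. We reuse the branch argument of Proposition~\ref{prop:AD_generator_classification}. Each nonzero Kraus branch $K_J\ket{\phi}$ is supported on the affine image of $A_J=\{x\in a+C:x_j=1\ \forall j\in J\}$ under $x\mapsto x\setminus J$. All of its terms carry the common factor $\gamma^{|J|/2}(1-\gamma)^{(w-|J|)/2}$, and the $\pm1$ phases are carried over by the affine bijection. Each branch is therefore a (real) stabilizer state, and the output is a stabilizer mixture for every $\gamma\in[0,1]$.

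\emph{Generator direction.} Suppose $a+C$ contains two different weights. Pick $y$ of maximal weight and $x$ of strictly lower weight. Equations~\eqref{eq:generator_diag}--\eqref{eq:generator_ratio} give $|(\rho_\gamma)_{xy}|/(\rho_\gamma)_{yy}=q^{-(|y|-|x|)/2}>1$ for $0<\gamma<1$, which contradicts Lemma~\ref{lem:pair_obstruction}. These formulas use only moduli, so the sign pattern is irrelevant. This also justifies the closing sentence of the corollary: the criterion is the real specialization and the phases play no role.

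\emph{Main obstacle.} There is essentially none, beyond one point that should be stated explicitly. The proposition's conditions on $A$, namely affine support and equal moduli, hold verbatim here. The only step that looks phase-dependent is the transport of the Clifford phase under $x\mapsto x\setminus J$. That map is an affine bijection from $A_J$ onto its image, and composing a quadratic form with an affine map gives a quadratic form, so the transported sign function is still quadratic.
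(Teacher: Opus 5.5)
Your proposal is correct and follows the paper's route. The paper's proof is a one-line reduction: a CSS state with real $\pm1$ amplitudes has affine-coset support and equal-modulus amplitudes, which is all that Proposition~\ref{prop:AD_generator_classification} uses. You make the same reduction and then re-run that proposition's branch argument and its pair-coherence argument (via Lemma~\ref{lem:pair_obstruction}) explicitly; simply citing the proposition would already suffice.
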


\begin{proof}
The hypothesis matches Proposition~\ref{prop:AD_generator_classification}, 
whose proof uses only the affine-coset support and the equal-modulus 
amplitudes. The constant-weight criterion follows directly.
\end{proof}

For the uniform code state
\begin{equation}
\label{eq:CSS_state}
\ket{C}:=|C|^{-1/2}\sum_{c\in C}\ket{c}
\end{equation}
of any nonzero linear code $C$, the support contains both $0^n$ and
at least one nonzero codeword. Such a support has mixed Hamming
weight, so $\ket{C}$ is a magic-generator. The smallest nontrivial CSS insulator
is $\ket{\Psi^+}=(\ket{01}+\ket{10})/\sqrt 2$ on the constant-weight
coset $\{01,10\}$. The CSS classification reduces the
generator--insulator question to the classical constant-weight test
on $a+C$.

\begin{corollary}[No finite death--rebirth for pure stabilizer inputs]
\label{cor:no_stab_death_rebirth}
Let $\ket{\phi}$ be a pure stabilizer state and define
\begin{equation}
\rho_\gamma := \mathcal E_\gamma^{\otimes n}(\ket{\phi}\!\bra{\phi}).
\end{equation}
Under homogeneous local amplitude damping, the stabilizer-membership set
\begin{equation}
\label{eq:stab_membership_set}
M(\ket{\phi}) := \{\gamma \in [0,1] \mid \rho_\gamma \in \mathcal{S}\}
\end{equation}
is either $[0,1]$ or $\{0,1\}$. The first case is the magic-insulator
class of fixed-Hamming-weight support, and the second is the
magic-generator class of mixed-Hamming-weight support. Pure stabilizer
inputs therefore do not exhibit finite magic-death or magic-rebirth
thresholds. They are either stabilizer throughout $[0,1]$, or they
generate magic at $\gamma = 0^+$ and lose it only at the endpoint
$\gamma = 1$.
\end{corollary}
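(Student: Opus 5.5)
The corollary follows almost immediately from Proposition~\ref{prop:AD_generator_classification}. The plan is to split on whether the support $A=\Supp(\ket{\phi})$ has constant Hamming weight, and then read off $M(\ket{\phi})$ in each case. The only additional facts needed are the two endpoint memberships, which hold for every pure stabilizer input.

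\emph{Endpoints.} At $\gamma=0$ the channel $\mathcal E_0^{\otimes n}$ is the identity, so $\rho_0=\ket{\phi}\!\bra{\phi}\in\mathcal S$. At $\gamma=1$ every Kraus branch resets each qubit to $\ket0$, so $\rho_1=\ket{0^n}\!\bra{0^n}\in\mathcal S$. Hence $\{0,1\}\subseteq M(\ket{\phi})$ always.

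\emph{Case 1: constant weight.} If all strings of $A$ share a common Hamming weight, the first half of the proof of Proposition~\ref{prop:AD_generator_classification} applies. Every nonzero branch $K_J\ket{\phi}$ is, after normalization, a pure stabilizer state. The point to check is that this holds for every $\gamma\in[0,1]$, including the endpoints where some damping factors vanish: a branch with a vanishing factor simply drops out. So $\rho_\gamma$ is a convex stabilizer mixture, and $M(\ket{\phi})=[0,1]$.

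\emph{Case 2: mixed weight.} If $A$ contains two different Hamming weights, the second half of the same proof gives $\rho_\gamma\notin\mathcal S$ for every $0<\gamma<1$. That argument picks a maximal-weight $y\in A$ and a lower-weight $x\in A$, and the ratio $|(\rho_\gamma)_{xy}|/(\rho_\gamma)_{yy}=q^{-(|y|-|x|)/2}>1$ violates Lemma~\ref{lem:pair_obstruction}. Combined with the endpoint step, this gives $M(\ket{\phi})=\{0,1\}$.

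\emph{Conclusion.} The two cases are exhaustive and mutually exclusive, so exactly one of $[0,1]$ and $\{0,1\}$ occurs. The final sentence of the corollary, that there are no finite death or rebirth thresholds, is a restatement of this dichotomy. In the first case the trajectory never leaves $\mathcal S$. In the second case it leaves immediately at $\gamma=0^+$ and re-enters only at $\gamma=1$, so no interior entry or exit point exists.

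I do not expect any substantive obstacle. The only point requiring care is that the insulator argument is uniform in $\gamma$ on the closed interval, which holds as noted in Case 1.
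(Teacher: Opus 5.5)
Your proposal is correct and follows the paper's proof: you split on whether $\Supp(\ket{\phi})$ has constant Hamming weight, invoke Proposition~\ref{prop:AD_generator_classification} in each case, and add the endpoint memberships $\rho_0=\ket{\phi}\!\bra{\phi}$ and $\rho_1=\ket{0^n}\!\bra{0^n}$. Your extra check that the insulator decomposition holds on the closed interval $[0,1]$ is harmless, since the paper's definition of magic-insulator already requires membership for all $\gamma\in[0,1]$.
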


\begin{proof}
Let $A:=\Supp(\ket\phi)$. If $A$ has fixed Hamming weight,
Proposition~\ref{prop:AD_generator_classification} gives
$\rho_\gamma \in \mathcal{S}$ for every $\gamma \in [0,1]$. If
$A$ contains different Hamming weights,
Proposition~\ref{prop:AD_generator_classification} gives
$\rho_\gamma \notin \mathcal{S}$ for every $0 < \gamma < 1$. The
endpoints $\rho_0 = \ket{\phi}\!\bra{\phi}$ and
$\rho_1 = \ket{0^n}\!\bra{0^n}$ are both stabilizer.
\end{proof}

\begin{corollary}[Two- and three-qubit counts]
\label{cor:AD_generator_counts}
Under homogeneous local AD, the $60$ pure two-qubit stabilizer
states split into
\begin{equation}
8\ \text{magic-insulators}
\qquad\text{and}\qquad
52\ \text{magic-generators}.
\end{equation}
The $1080$ pure three-qubit stabilizer states split into
\begin{equation}
32\ \text{magic-insulators}
\qquad\text{and}\qquad
1048\ \text{magic-generators}.
\end{equation}
\end{corollary}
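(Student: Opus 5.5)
By Proposition~\ref{prop:AD_generator_classification}, a pure stabilizer state is a magic-insulator exactly when its computational-basis support $A$ is a constant-Hamming-weight affine subspace, and a magic-generator otherwise. The plan is therefore a counting argument in three steps. First, fix an affine subspace $A$ of dimension $m$ and count the pure stabilizer states with support exactly $A$. Second, list the constant-weight affine subspaces for $n=2,3$. Third, subtract the insulator count from the known totals $60$ and $1080$.

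\emph{Step 1: states per support.} I would use the affine form Eq.~\eqref{eq:affine_form}. Identify $A\simeq\mathbb F_2^m$ through an affine bijection $s\mapsto x_0+Gs$. The phase $i^{\ell(x)}(-1)^{q(x)}$, taken modulo global phase, is then fixed by three independent choices. Each of the $m$ coordinates carries a linear $i$-power and a linear sign, giving $4$ choices per coordinate. Each of the $m(m-1)/2$ pairs of coordinates carries a quadratic cross term $s_is_j$ with coefficient in $\{0,1\}$.

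Distinct choices give distinct amplitude vectors. Conversely, every such phase pattern on an affine support is a stabilizer state, by the Dehaene--De~Moor characterization cited in Appendix~\ref{sec:preliminaries}. Hence the number of states with support $A$ is
\[
N(m)=4^m\,2^{m(m-1)/2}=2^{m(m+3)/2}.
\]

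As a consistency check, I would recover the totals. For $n=2$: the $4$ points give $4$ states, the $6$ lines give $6\cdot 4=24$, and the single plane gives $32$, for a total of $60$. For $n=3$ the analogous sum over $8$ points, $28$ lines, $14$ planes and one $3$-dimensional space gives $8+112+14\cdot32+1024=1592$. This does not match $1080$, so the enumeration of supports needs refinement.

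I expect this to be the main obstacle. For $m\ge2$ the $4^m$ choices of linear $i$-powers overcount: $i^{\ell(x)}$ is a genuine $\mathbb Z_4$-valued linear form, and products $i^{s_i}i^{s_j}$ interact with the quadratic cross terms. The correct count is the standard $N(m)=2^m\cdot 2^{m(m+1)/2}$, which gives $2,\,8,\,64$ for $m=0,1,2$ in the normalization where global phase is fixed by the first amplitude; I would re-derive this carefully by comparison with $n=1$ and $n=2$. The next paragraph shows that the insulator count needs only $N(0)=1$ and $N(1)=4$. The value $N(1)=4$ is unambiguous from item~(iii) of Appendix~\ref{sec:preliminaries}: the relative phase lies in $\{\pm1,\pm i\}$.

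\emph{Step 2: constant-weight supports.} Any point is constant-weight and contributes $N(0)=1$ state. A line $\{x,y\}$ is constant-weight iff $|x|=|y|$. For $n=2$ the only such line is $\{01,10\}$. For $n=3$ the equal-weight pairs are the $3$ weight-$1$ pairs and the $3$ weight-$2$ pairs, giving $6$ lines.

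No constant-weight affine plane exists for $n\le3$. For $n=2$ the only plane is all of $\mathbb F_2^2$, which has mixed weights. For $n=3$ a plane has four elements of a common weight $w$; but there are only three strings of weight $1$ or weight $2$, and only one of weight $0$ or weight $3$. The whole cube has mixed weight as well.

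\emph{Step 3: counts.} The insulator totals are $4\cdot1+1\cdot4=8$ for $n=2$ and $8\cdot1+6\cdot4=32$ for $n=3$. Subtracting from the standard totals $60$ and $1080$ of pure stabilizer states gives $52$ and $1048$ magic-generators. The $n=2$ total can be cross-checked directly via Step~1; for $n=3$ I would quote the standard value $1080=2^3\prod_{k=1}^{3}(2^k+1)$ rather than re-summing over supports.
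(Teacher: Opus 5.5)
Your Steps~2--3 are correct and are exactly the paper's proof: for $n\le 3$ the only constant-weight affine supports are points and equal-weight pairs ($4+1$ of them for $n=2$, $8+6$ for $n=3$), carrying $1$ and $4$ stabilizer states respectively, and subtracting $8$ and $32$ from the standard totals $60$ and $1080$ gives $52$ and $1048$. The Step~1 ``obstacle'' is only an arithmetic slip and should be deleted rather than repaired: your original $N(m)=2^{m(m+3)/2}$ is correct, and its $m=3$ term is $2^9=512$, not $1024$, so $8+112+448+512=1080$ checks out exactly. Your ``corrected'' formula $2^m\cdot 2^{m(m+1)/2}$ is the same expression and evaluates to $1,4,32$, not $2,8,64$, so there is no overcounting to fix.
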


\begin{proof}
The total number of pure $n$-qubit stabilizer states is
$2^n\prod_{j=1}^n(2^j+1)$, giving $60$ for $n=2$ and $1080$ for
$n=3$.

By Proposition~\ref{prop:AD_generator_classification}, the
magic-insulators are the pure stabilizer states whose
computational-basis support has fixed Hamming weight. For $n=2$,
these consist of the four computational-basis states and the four
same-weight Bell states
\begin{equation}
\frac{\ket{01}+\omega\ket{10}}{\sqrt2},
\qquad
\omega\in\{1,-1,i,-i\}.
\end{equation}
Thus there are $8$ magic-insulators and $60-8=52$ magic-generators.

For $n=3$, the fixed-weight supports are as follows. Weight $0$ and
weight $3$ contribute the two computational-basis states
$\ket{000}$ and $\ket{111}$. The weight-$1$ sector contains three
basis states and three unordered pairs; each pair supports four
relative Clifford phases. Hence it contributes
$3+3\times4=15$ magic-insulators. The weight-$2$ sector contributes
the same number. Altogether the number of magic-insulators is
\begin{equation}
1+15+15+1=32,
\end{equation}
and the remaining $1080-32=1048$ pure stabilizer states are
magic-generators.
\end{proof}

The same classification extends to arbitrary $n$. Direct enumeration 
of pure stabilizer states whose computational-basis support is 
contained in a single Hamming-weight layer gives $220$ insulators 
among $36{,}720$ pure four-qubit stabilizer states, $1432$ among 
$2{,}423{,}520$ at $n=5$, and $17{,}624$ among $315{,}057{,}600$ at 
$n=6$. The following proposition gives a super-exponential upper bound on the insulator fraction, consistent with this numerical trend.

\begin{proposition}[Super-exponential rarity of magic-insulators]
\label{prop:insulator_rarity}
Let $N_{\rm ins}(n)$ denote the number of pure $n$-qubit
magic-insulators under homogeneous local amplitude damping, and
$N_{\rm stab}(n)=2^n\prod_{j=1}^n(2^j+1)=2^{n^2/2+O(n)}$ the total
number of pure $n$-qubit stabilizer states. Then
\begin{equation}
\label{eq:insulator_rarity}
\frac{N_{\rm ins}(n)}{N_{\rm stab}(n)}
\le 2^{-n^2/8+O(n)} .
\end{equation}
\end{proposition}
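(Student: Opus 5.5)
The plan is a direct count. By Proposition~\ref{prop:AD_generator_classification}, $N_{\rm ins}(n)$ is the number of pure stabilizer states whose support $A$ lies in a single Hamming-weight layer. I will bound this number by first restricting the dimension of such supports, then counting supports and phases separately.

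\emph{Step 1: structural constraint on the support.} Write $A=a+C$ with $C$ linear of dimension $m$, and set $S:=\mathrm{supp}(a)$. For every $c\in C$ we have $|a\oplus c|=|a|+|c|-2|S\cap\mathrm{supp}(c)|$. The constant-weight condition therefore forces $|c|=2|S\cap \mathrm{supp}(c)|$. Equivalently, every codeword has exactly half of its support inside $S$ and half inside $S^c$.

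Consider the coordinate projections $C\to\mathbb F_2^{S}$ and $C\to\mathbb F_2^{S^c}$. Both are injective, since if $c$ vanishes on $S$ (or on $S^c$), the half-half rule gives $|c|=0$. Hence
\begin{equation}
m\le\min(|S|,n-|S|)\le n/2 .
\end{equation}
This is the step I expect to carry the whole argument. The naive count over all $m\le n$ gives no saving, so everything hinges on this dimension cap. Note that the cap is derived using only the layer condition, not the phases.

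\emph{Step 2: counting supports and phases.} The number of $m$-dimensional affine subspaces of $\mathbb F_2^n$ is at most $2^n\binom{n}{m}_2\le 2^{m(n-m)+O(n)}$. By Eq.~\eqref{eq:affine_form}, the phases on a fixed support of dimension $m$ are specified by a quadratic form plus a linear form on $\mathbb F_2^m$. This gives at most $2^{m(m+1)/2+m}=2^{m^2/2+O(m)}$ choices. Thus
\begin{equation}
N_{\rm ins}(n)\le\sum_{m=0}^{\lfloor n/2\rfloor}2^{\,mn-m^2/2+O(n)} .
\end{equation}
The exponent $mn-m^2/2$ is increasing on $[0,n]$, so it is maximized at $m=n/2$, where it equals $3n^2/8$. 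Since the sum has only $O(n)$ terms, this yields $N_{\rm ins}(n)\le 2^{3n^2/8+O(n)}$.

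\emph{Step 3: conclusion.} Dividing by $N_{\rm stab}(n)=2^{n^2/2+O(n)}$ gives the ratio $2^{-n^2/8+O(n)}$, which is Eq.~\eqref{eq:insulator_rarity}. The only care needed is in keeping all lower-order terms inside the $O(n)$: the $2^n$ translates, the Gaussian-binomial constant, the linear-phase factor, and the number of summands.
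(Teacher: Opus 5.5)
Your proof is correct. The counting half matches the paper's: at most $2^{m(n-m)+O(n)}$ affine supports of dimension $m$, at most $2^{m(m+3)/2}$ quadratic-plus-linear phases on each, and exponent $mn-m^2/2$ increasing and maximized at $m=n/2$. The difference is in how you obtain the dimension cap $m\le n/2$.

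The paper parametrizes $A$ as $x_i(u)=a_i+v_i\cdot u$ with $u\in\mathbb F_2^m$. It rewrites the constant-weight condition as $\sum_{v\ne0}t_v\,\mathbf 1_{v\cdot u=1}=0$ with $t_v=N_v^{(0)}-N_v^{(1)}$, and uses Fourier inversion on $\mathbb F_2^m$ to conclude $t_v=0$. Since the nonzero columns $v_i$ span $\mathbb F_2^m$ and each occurring value appears at least twice, $n\ge 2m$.

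You instead use $|a\oplus c|=|a|+|c|-2|S\cap\mathrm{supp}(c)|$ with $S=\mathrm{supp}(a)$ to get the half-in, half-out rule for every codeword. You then use injectivity of the projections $C\to\mathbb F_2^{S}$ and $C\to\mathbb F_2^{S^c}$. This avoids the Fourier step and the column bookkeeping. It also gives the slightly sharper $m\le\min(w,n-w)$ for the common weight $w$ with no extra work. The paper's balance condition implies the same bound, since it forces the nonzero columns on $S$ alone to span.

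What the paper's route buys is an explicit structure theorem. From $t_v=0$, each non-constant coordinate in $S$ can be paired with a coordinate outside $S$ carrying the same column. On $A$ the non-constant coordinates therefore split into complementary pairs $x_j=1\oplus x_i$. That is a complete description of constant-weight affine supports, which is more than the rarity bound needs.
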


\begin{proof}
By Proposition~\ref{prop:AD_generator_classification}, 
magic-insulators are the pure stabilizer states whose
computational-basis support is an affine subspace contained in a 
single Hamming-weight layer. We first show that such a support has 
dimension at most $n/2$.

Let $A=a+C\subseteq\mathbb F_2^n$ be an affine subspace of dimension
$m$. Parametrize it by $u\in\mathbb F_2^m$ as
\begin{equation}
x_i(u)=a_i+v_i\cdot u,
\qquad v_i\in\mathbb F_2^m .
\end{equation}
For each nonzero $v\in\mathbb F_2^m$, let
\begin{equation}
N_v^{(b)}:=\#\{i:\ v_i=v,\ a_i=b\},
\qquad b\in\{0,1\},
\end{equation}
and set $t_v:=N_v^{(0)}-N_v^{(1)}$. Subtracting the weight at
$u=0$, the condition that $|x(u)|$ be independent of $u$ becomes
\begin{equation}
\label{eq:constant_weight_fourier}
\sum_{v\ne0} t_v\,\mathbf 1_{v\cdot u=1}=0
\qquad
\text{for all }u\in\mathbb F_2^m .
\end{equation}
Using $\mathbf 1_{v\cdot u=1}=(1-(-1)^{v\cdot u})/2$, Fourier
inversion on $\mathbb F_2^m$ gives $t_v=0$ for every $v\ne0$.
Thus whenever a nonzero value $v$ appears among the $v_i$, it appears
equally often with $a_i=0$ and with $a_i=1$, hence contributes at
least two coordinates. Since the nonzero $v_i$ span
$\mathbb F_2^m$, at least $m$ distinct nonzero values appear, so
$n\ge 2m$ and
\begin{equation}
\label{eq:constant_weight_dim_bound}
m\le \frac n2 .
\end{equation}

It remains to count. The number of affine subspaces of dimension $m$
is $2^{n-m}{n\brack m}_2$, since each $m$-dimensional linear
subspace has $2^{n-m}$ distinct cosets. Using the standard
Gaussian-binomial bound ${n\brack m}_2\le 2^{m(n-m+1)}$ gives
\begin{equation}
2^{n-m}{n\brack m}_2
\le
2^{\,n-m+m(n-m+1)}
=
2^{\,n+m(n-m)} .
\end{equation}
For each fixed affine support of dimension $m$, the allowed
stabilizer phases are bounded by the number of quadratic Clifford
phases on $\mathbb F_2^m$, namely at most
$2^{m(m+3)/2+O(1)}$. Therefore
\begin{equation}
N_{\rm ins}(n)
\le
\sum_{m\le n/2}
2^{\,n+m(n-m)+m(m+3)/2+O(1)} .
\end{equation}
The exponent is increasing for $m\le n/2$ and is maximized at
$m=n/2$, giving 
\begin{equation}
N_{\rm ins}(n)
\le
2^{3n^2/8+O(n)} .
\end{equation}
Dividing by $N_{\rm stab}(n)$ yields Eq.~\eqref{eq:insulator_rarity}.
\end{proof}

The Bell-state splitting in the main text is the smallest
nontrivial instance of this classification. $\ket{\Psi^+}$ has
fixed-weight support $\{01,10\}$ and is a magic-insulator, whereas
$\ket{\Phi^+}$ has support $\{00,11\}$ spanning weights $0$ and $2$
and is a magic-generator. The closed robustness trajectories are
computed in Appendix~\ref{sec:W_membership}. Main-text
Fig.~\ref{fig:generator_variety} illustrates that the generator
class is not dynamically uniform: different mixed-weight stabilizer
supports produce distinct closed-form magic-generation profiles.

\bibliography{bibliography}

@ARTICLE{Shor2011High,
  author={Shor, Peter W. and Smith, Graeme and Smolin, John A. and Zeng, Bei},
  journal={IEEE Transactions on Information Theory}, 
  title={High Performance Single-Error-Correcting Quantum Codes for Amplitude Damping}, 
  year={2011},
  volume={57},
  number={10},
  pages={7180-7188},
  doi={10.1109/TIT.2011.2165149}}

@misc{Wei2024Noise,
      title={Noise robustness and threshold of many-body quantum magic}, 
      author={Fuchuan Wei and Zi-Wen Liu},
      year={2024},
      eprint={2410.21215},
      archivePrefix={arXiv},
      primaryClass={quant-ph},
      url={https://arxiv.org/abs/2410.21215}, 
}

@article{Yu2004Finite,
  title = {Finite-Time Disentanglement Via Spontaneous Emission},
  author = {Yu, Ting and Eberly, J. H.},
  journal = {Phys. Rev. Lett.},
  volume = {93},
  issue = {14},
  pages = {140404},
  numpages = {4},
  year = {2004},
  month = {Sep},
  publisher = {American Physical Society},
  doi = {10.1103/PhysRevLett.93.140404},
  url = {https://link.aps.org/doi/10.1103/PhysRevLett.93.140404}
}

@article{Yu2006Quantum,
  title = {Quantum Open System Theory: Bipartite Aspects},
  author = {Yu, T. and Eberly, J. H.},
  journal = {Phys. Rev. Lett.},
  volume = {97},
  issue = {14},
  pages = {140403},
  numpages = {4},
  year = {2006},
  month = {Oct},
  publisher = {American Physical Society},
  doi = {10.1103/PhysRevLett.97.140403},
  url = {https://link.aps.org/doi/10.1103/PhysRevLett.97.140403}
}

@article{Yu2009Sudden,
author = {Ting Yu  and J. H. Eberly },
title = {Sudden Death of Entanglement},
journal = {Science},
volume = {323},
number = {5914},
pages = {598-601},
year = {2009},
doi = {10.1126/science.1167343},
URL = {https://www.science.org/doi/abs/10.1126/science.1167343}}

@article{Bravyi2005Universal,
  title = {Universal quantum computation with ideal {C}lifford gates and noisy ancillas},
  author = {Bravyi, Sergey and Kitaev, Alexei},
  journal = {Phys. Rev. A},
  volume = {71},
  issue = {2},
  pages = {022316},
  numpages = {14},
  year = {2005},
  month = {Feb},
  publisher = {American Physical Society},
  doi = {10.1103/PhysRevA.71.022316},
  url = {https://link.aps.org/doi/10.1103/PhysRevA.71.022316}
}

@article{Veitch2014The,
doi = {10.1088/1367-2630/16/1/013009},
url = {https://doi.org/10.1088/1367-2630/16/1/013009},
year = {2014},
month = {jan},
publisher = {IOP Publishing},
volume = {16},
number = {1},
pages = {013009},
author = {Veitch, Victor and Hamed Mousavian, S A and Gottesman, Daniel and Emerson, Joseph},
title = {The resource theory of stabilizer quantum computation},
journal = {New Journal of Physics}
}

@article{Howard2017Application,
  title = {Application of a Resource Theory for Magic States to Fault-Tolerant Quantum Computing},
  author = {Howard, Mark and Campbell, Earl},
  journal = {Phys. Rev. Lett.},
  volume = {118},
  issue = {9},
  pages = {090501},
  numpages = {6},
  year = {2017},
  month = {Mar},
  publisher = {American Physical Society},
  doi = {10.1103/PhysRevLett.118.090501},
  url = {https://link.aps.org/doi/10.1103/PhysRevLett.118.090501}
}

@article{Heinrich2019Robustness,
  doi = {10.22331/q-2019-04-08-132},
  url = {https://doi.org/10.22331/q-2019-04-08-132},
  title = {Robustness of {M}agic and {S}ymmetries of the {S}tabiliser {P}olytope},
  author = {Heinrich, Markus and Gross, David},
  journal = {{Quantum}},
  issn = {2521-327X},
  publisher = {{Verein zur F{\"{o}}rderung des Open Access Publizierens in den Quantenwissenschaften}},
  volume = {3},
  pages = {132},
  month = apr,
  year = {2019}
}

@misc{Trigueros2025Nonstabilizerness,
      title={Nonstabilizerness and Error Resilience in Noisy Quantum Circuits}, 
      author={Fabian Ballar Trigueros and José Antonio Marín Guzmán},
      year={2025},
      eprint={2506.18976},
      archivePrefix={arXiv},
      primaryClass={quant-ph},
      url={https://arxiv.org/abs/2506.18976}, 
}

@misc{Leone2026Unbearable,
      title={The unbearable hardness of deciding about magic}, 
      author={Lorenzo Leone and Jens Eisert and Salvatore F. E. Oliviero},
      year={2026},
      eprint={2602.22330},
      archivePrefix={arXiv},
      primaryClass={quant-ph},
      url={https://arxiv.org/abs/2602.22330}, 
}

@article{Aaronson2004Improved,
  title = {Improved simulation of stabilizer circuits},
  author = {Aaronson, Scott and Gottesman, Daniel},
  journal = {Phys. Rev. A},
  volume = {70},
  issue = {5},
  pages = {052328},
  numpages = {14},
  year = {2004},
  month = {Nov},
  publisher = {American Physical Society},
  doi = {10.1103/PhysRevA.70.052328},
  url = {https://link.aps.org/doi/10.1103/PhysRevA.70.052328}
}

@article{Wootters1998Entanglement,
  title = {Entanglement of Formation of an Arbitrary State of Two Qubits},
  author = {Wootters, William K.},
  journal = {Phys. Rev. Lett.},
  volume = {80},
  issue = {10},
  pages = {2245--2248},
  numpages = {0},
  year = {1998},
  month = {Mar},
  publisher = {American Physical Society},
  doi = {10.1103/PhysRevLett.80.2245},
  url = {https://link.aps.org/doi/10.1103/PhysRevLett.80.2245}
}

@article{Almeida2007Environment,
author = {M. P. Almeida  and F. de Melo  and M. Hor-Meyll  and A. Salles  and S. P. Walborn  and P. H. Souto Ribeiro  and L. Davidovich },
title = {Environment-Induced Sudden Death of Entanglement},
journal = {Science},
volume = {316},
number = {5824},
pages = {579-582},
year = {2007},
doi = {10.1126/science.1139892},
URL = {https://www.science.org/doi/abs/10.1126/science.1139892}}

@inproceedings{Gottesman1999The,
  title = {The {H}eisenberg Representation of Quantum Computers},
  author = {Gottesman, Daniel},
  booktitle = {Group22: Proceedings of the XXII International Colloquium on Group Theoretical Methods in Physics},
  pages = {32--43},
  year = {1999},
  publisher = {International Press},
  note = {arXiv:quant-ph/9807006}
}

@article{Dehaene2003Clifford,
  title = {Clifford group, stabilizer states, and linear and quadratic operations over {G}{F}(2)},
  author = {Dehaene, Jeroen and De Moor, Bart},
  journal = {Phys. Rev. A},
  volume = {68},
  issue = {4},
  pages = {042318},
  numpages = {10},
  year = {2003},
  month = {Oct},
  publisher = {American Physical Society},
  doi = {10.1103/PhysRevA.68.042318},
  url = {https://link.aps.org/doi/10.1103/PhysRevA.68.042318}
}

@article{Leone2022Stabilizer,
  title = {Stabilizer R\'enyi Entropy},
  author = {Leone, Lorenzo and Oliviero, Salvatore F. E. and Hamma, Alioscia},
  journal = {Phys. Rev. Lett.},
  volume = {128},
  issue = {5},
  pages = {050402},
  numpages = {5},
  year = {2022},
  month = {Feb},
  publisher = {American Physical Society},
  doi = {10.1103/PhysRevLett.128.050402},
  url = {https://link.aps.org/doi/10.1103/PhysRevLett.128.050402}
}

@article{Liu2022Many,
  title = {Many-Body Quantum Magic},
  author = {Liu, Zi-Wen and Winter, Andreas},
  journal = {PRX Quantum},
  volume = {3},
  issue = {2},
  pages = {020333},
  numpages = {18},
  year = {2022},
  month = {May},
  publisher = {American Physical Society},
  doi = {10.1103/PRXQuantum.3.020333},
  url = {https://link.aps.org/doi/10.1103/PRXQuantum.3.020333}
}

@article{Sticlet2025Nonstabilizerness,
  title = {Nonstabilizerness in open {XXZ} spin chains: Universal scaling and dynamics},
  author={Sticlet, Doru and Dóra, Balázs and Szombathy, Dominik and Zaránd, Gergely and Moca, Cătălin Paşcu},
  journal = {Phys. Rev. Res.},
  volume = {7},
  issue = {4},
  pages = {043130},
  numpages = {7},
  year = {2025},
  month = {Nov},
  publisher = {American Physical Society},
  doi = {10.1103/96bk-xf8p},
  url = {https://link.aps.org/doi/10.1103/96bk-xf8p}
}

@article{Niroula2024Phase,
	author = {Niroula, Pradeep and White, Christopher David and Wang, Qingfeng and Johri, Sonika and Zhu, Daiwei and Monroe, Christopher and Noel, Crystal and Gullans, Michael J.},
	date = {2024/11/01},
	doi = {10.1038/s41567-024-02637-3},
	id = {Niroula2024},
	isbn = {1745-2481},
	journal = {Nature Physics},
	number = {11},
	pages = {1786--1792},
	title = {Phase transition in magic with random quantum circuits},
	url = {https://doi.org/10.1038/s41567-024-02637-3},
	volume = {20},
	year = {2024}}

@article{Bravyi2016Improved,
  title = {Improved Classical Simulation of Quantum Circuits Dominated by {C}lifford Gates},
  author = {Bravyi, Sergey and Gosset, David},
  journal = {Phys. Rev. Lett.},
  volume = {116},
  issue = {25},
  pages = {250501},
  numpages = {5},
  year = {2016},
  month = {Jun},
  publisher = {American Physical Society},
  doi = {10.1103/PhysRevLett.116.250501},
  url = {https://link.aps.org/doi/10.1103/PhysRevLett.116.250501}
}

@article{Leone2024Stabilizer,
  title = {Stabilizer entropies are monotones for magic-state resource theory},
  author = {Leone, Lorenzo and Bittel, Lennart},
  journal = {Phys. Rev. A},
  volume = {110},
  issue = {4},
  pages = {L040403},
  numpages = {6},
  year = {2024},
  month = {Oct},
  publisher = {American Physical Society},
  doi = {10.1103/PhysRevA.110.L040403},
  url = {https://link.aps.org/doi/10.1103/PhysRevA.110.L040403}
}

@article{Warmuz2025Magic,
  title = {Magic Monotone for Faithful Detection of Nonstabilizerness in Mixed States},
  author = {Warmuz, Krzysztof and Dokudowiec, Ernest and Radhakrishnan, Chandrashekar and Byrnes, Tim},
  journal = {Phys. Rev. Lett.},
  volume = {135},
  issue = {1},
  pages = {010203},
  numpages = {6},
  year = {2025},
  month = {Jul},
  publisher = {American Physical Society},
  doi = {10.1103/2s3j-t22p},
  url = {https://link.aps.org/doi/10.1103/2s3j-t22p}
}

@article{Reichardt2005Quantum,
	author = {Reichardt, Ben W. },
	date = {2005/08/01},
	doi = {10.1007/s11128-005-7654-8},
	id = {Reichardt2005},
	isbn = {1573-1332},
	journal = {Quantum Information Processing},
	number = {3},
	pages = {251--264},
	title = {Quantum Universality from Magic States Distillation Applied to {CSS} Codes},
	url = {https://doi.org/10.1007/s11128-005-7654-8},
	volume = {4},
	year = {2005}}

@misc{Liu2025A,
      title={Triangle {C}riterion: a mixed-state magic criterion with applications in distillation and detection}, 
      author={Zhenhuan Liu and Tobias Haug and Qi Ye and Zi-Wen Liu and Ingo Roth},
      year={2026},
      eprint={2512.16777},
      archivePrefix={arXiv},
      primaryClass={quant-ph},
      url={https://arxiv.org/abs/2512.16777}, 
}

@article{Seddon2021Quantifying,
  title = {Quantifying Quantum Speedups: Improved Classical Simulation From Tighter Magic Monotones},
  author = {Seddon, James R. and Regula, Bartosz and Pashayan, Hakop and Ouyang, Yingkai and Campbell, Earl T.},
  journal = {PRX Quantum},
  volume = {2},
  issue = {1},
  pages = {010345},
  numpages = {42},
  year = {2021},
  month = {Mar},
  publisher = {American Physical Society},
  doi = {10.1103/PRXQuantum.2.010345},
  url = {https://link.aps.org/doi/10.1103/PRXQuantum.2.010345}
}

@article{Haug2023Stabilizer,
  doi = {10.22331/q-2023-08-28-1092},
  url = {https://doi.org/10.22331/q-2023-08-28-1092},
  title = {Stabilizer entropies and nonstabilizerness monotones},
  author = {Haug, Tobias and Piroli, Lorenzo},
  journal = {{Quantum}},
  issn = {2521-327X},
  publisher = {{Verein zur F{\"{o}}rderung des Open Access Publizierens in den Quantenwissenschaften}},
  volume = {7},
  pages = {1092},
  month = aug,
  year = {2023}
}

@article{Howard2014Contextuality,
	author = {Howard, Mark and Wallman, Joel and Veitch, Victor and Emerson, Joseph},
	date = {2014/06/01},
	doi = {10.1038/nature13460},
	id = {Howard2014},
	isbn = {1476-4687},
	journal = {Nature},
	number = {7505},
	pages = {351--355},
	title = {Contextuality supplies the `magic' for quantum computation},
	url = {https://doi.org/10.1038/nature13460},
	volume = {510},
	year = {2014}}

@article{Aolita2008Scaling,
  title = {Scaling Laws for the Decay of Multiqubit Entanglement},
  author = {Aolita, L. and Chaves, R. and Cavalcanti, D. and Ac\'{\i}n, A. and Davidovich, L.},
  journal = {Phys. Rev. Lett.},
  volume = {100},
  issue = {8},
  pages = {080501},
  numpages = {4},
  year = {2008},
  month = {Feb},
  publisher = {American Physical Society},
  doi = {10.1103/PhysRevLett.100.080501},
  url = {https://link.aps.org/doi/10.1103/PhysRevLett.100.080501}
}

@article{Korbany2025Long,
  title = {Long-Range Nonstabilizerness and Phases of Matter},
  author = {Korbany, David Aram and Gullans, Michael J. and Piroli, Lorenzo},
  journal = {Phys. Rev. Lett.},
  volume = {135},
  issue = {16},
  pages = {160404},
  numpages = {7},
  year = {2025},
  month = {Oct},
  publisher = {American Physical Society},
  doi = {10.1103/1hlj-h6t9},
  url = {https://link.aps.org/doi/10.1103/1hlj-h6t9}
}

@article{Ellison2021Symmetry,
  doi = {10.22331/q-2021-12-28-612},
  url = {https://doi.org/10.22331/q-2021-12-28-612},
  title = {Symmetry-protected sign problem and magic in quantum phases of matter},
  author = {Ellison, Tyler D. and Kato, Kohtaro and Liu, Zi-Wen and Hsieh, Timothy H.},
  journal = {{Quantum}},
  issn = {2521-327X},
  publisher = {{Verein zur F{\"{o}}rderung des Open Access Publizierens in den Quantenwissenschaften}},
  volume = {5},
  pages = {612},
  month = dec,
  year = {2021}
}

@misc{Wei2026Long,
      title={Long-range nonstabilizerness and quantum codes, phases, and complexity}, 
      author={Fuchuan Wei and Zi-Wen Liu},
      year={2026},
      eprint={2503.04566},
      archivePrefix={arXiv},
      primaryClass={quant-ph},
      url={https://arxiv.org/abs/2503.04566}, 
}

@article{Rafsanjani2012Genuinely,
  title = {Genuinely multipartite concurrence of $N$-qubit $X$ matrices},
  author = {Hashemi Rafsanjani, S. M. and Huber, M. and Broadbent, C. J. and Eberly, J. H.},
  journal = {Phys. Rev. A},
  volume = {86},
  issue = {6},
  pages = {062303},
  numpages = {6},
  year = {2012},
  month = {Dec},
  publisher = {American Physical Society},
  doi = {10.1103/PhysRevA.86.062303},
  url = {https://link.aps.org/doi/10.1103/PhysRevA.86.062303}
}

@article{Mi2024Stable,
author = {Mi, X. and others},
title = {Stable quantum-correlated many-body states through engineered dissipation},
journal = {Science},
volume = {383},
number = {6689},
pages = {1332-1337},
year = {2024},
doi = {10.1126/science.adh9932},
URL = {https://www.science.org/doi/abs/10.1126/science.adh9932}}

@article{Aolita2015Open,
doi = {10.1088/0034-4885/78/4/042001},
url = {https://doi.org/10.1088/0034-4885/78/4/042001},
year = {2015},
month = {mar},
publisher = {IOP Publishing},
volume = {78},
number = {4},
pages = {042001},
author = {Aolita, Leandro and de Melo, Fernando and Davidovich, Luiz},
title = {Open-system dynamics of entanglement: a key issues review},
journal = {Reports on Progress in Physics}
}

@article{Carvalho2004Decoherence,
  title = {Decoherence and Multipartite Entanglement},
  author = {Carvalho, Andr\'e R. R. and Mintert, Florian and Buchleitner, Andreas},
  journal = {Phys. Rev. Lett.},
  volume = {93},
  issue = {23},
  pages = {230501},
  numpages = {4},
  year = {2004},
  month = {Dec},
  publisher = {American Physical Society},
  doi = {10.1103/PhysRevLett.93.230501},
  url = {https://link.aps.org/doi/10.1103/PhysRevLett.93.230501}
}

@article{Seddon2019Quantifying,
    author = {Seddon, James R. and Campbell, Earl T.},
    title = {Quantifying magic for multi-qubit operations},
    journal = {Proceedings of the Royal Society A: Mathematical, Physical and Engineering Sciences},
    volume = {475},
    number = {2227},
    pages = {20190251},
    year = {2019},
    month = {07},
    issn = {1364-5021},
    doi = {10.1098/rspa.2019.0251},
    url = {https://doi.org/10.1098/rspa.2019.0251},
}

@article{Tarabunga2025Magic,
	author = {Tarabunga, Poetri Sonya and Tirrito, Emanuele},
	date = {2025/10/21},
	doi = {10.1038/s41534-025-01104-y},
	id = {Tarabunga2025},
	isbn = {2056-6387},
	journal = {npj Quantum Information},
	number = {1},
	pages = {166},
	title = {Magic transition in measurement-only circuits},
	url = {https://doi.org/10.1038/s41534-025-01104-y},
	volume = {11},
	year = {2025}}

@article{Greenberger1990Bell,
    author = {Greenberger, Daniel M. and Horne, Michael A. and Shimony, Abner and Zeilinger, Anton},
    title = {Bell’s theorem without inequalities},
    journal = {American Journal of Physics},
    volume = {58},
    number = {12},
    pages = {1131-1143},
    year = {1990},
    month = {12},
    issn = {0002-9505},
    doi = {10.1119/1.16243},
    url = {https://doi.org/10.1119/1.16243},
}

@article{Dur2000Three,
  title = {Three qubits can be entangled in two inequivalent ways},
  author = {D\"ur, W. and Vidal, G. and Cirac, J. I.},
  journal = {Phys. Rev. A},
  volume = {62},
  issue = {6},
  pages = {062314},
  numpages = {12},
  year = {2000},
  month = {Nov},
  publisher = {American Physical Society},
  doi = {10.1103/PhysRevA.62.062314},
  url = {https://link.aps.org/doi/10.1103/PhysRevA.62.062314}
}

@article{Zyczkowski2001Dynamics,
  title = {Dynamics of quantum entanglement},
  author = {\ifmmode \dot{Z}\else \.{Z}\fi{}yczkowski, Karol and Horodecki, Pawe\l{} and Horodecki, Micha\l{} and Horodecki, Ryszard},
  journal = {Phys. Rev. A},
  volume = {65},
  issue = {1},
  pages = {012101},
  numpages = {9},
  year = {2001},
  month = {Dec},
  publisher = {American Physical Society},
  doi = {10.1103/PhysRevA.65.012101},
  url = {https://link.aps.org/doi/10.1103/PhysRevA.65.012101}
}

@article{Salles2008Experimental,
  title = {Experimental investigation of the dynamics of entanglement: Sudden death, complementarity, and continuous monitoring of the environment},
  author = {Salles, A. and de Melo, F. and Almeida, M. P. and Hor-Meyll, M. and Walborn, S. P. and Souto Ribeiro, P. H. and Davidovich, L.},
  journal = {Phys. Rev. A},
  volume = {78},
  issue = {2},
  pages = {022322},
  numpages = {15},
  year = {2008},
  month = {Aug},
  publisher = {American Physical Society},
  doi = {10.1103/PhysRevA.78.022322},
  url = {https://link.aps.org/doi/10.1103/PhysRevA.78.022322}
}

@article{Verstraete2009Quantum,
	author = {Verstraete, Frank and Wolf, Michael M. and Ignacio Cirac, J.},
	date = {2009/09/01},
	doi = {10.1038/nphys1342},
	id = {Verstraete2009},
	isbn = {1745-2481},
	journal = {Nature Physics},
	number = {9},
	pages = {633--636},
	title = {Quantum computation and quantum-state engineering driven by dissipation},
	url = {https://doi.org/10.1038/nphys1342},
	volume = {5},
	year = {2009}}

@article{Pashayan2015Estimating,
  title = {Estimating Outcome Probabilities of Quantum Circuits Using Quasiprobabilities},
  author = {Pashayan, Hakop and Wallman, Joel J. and Bartlett, Stephen D.},
  journal = {Phys. Rev. Lett.},
  volume = {115},
  issue = {7},
  pages = {070501},
  numpages = {5},
  year = {2015},
  month = {Aug},
  publisher = {American Physical Society},
  doi = {10.1103/PhysRevLett.115.070501},
  url = {https://link.aps.org/doi/10.1103/PhysRevLett.115.070501}
}

@article{Wang2019Quantifying,
doi = {10.1088/1367-2630/ab451d},
url = {https://doi.org/10.1088/1367-2630/ab451d},
year = {2019},
month = {oct},
publisher = {IOP Publishing},
volume = {21},
number = {10},
pages = {103002},
author = {Wang, Xin and Wilde, Mark M and Su, Yuan},
title = {Quantifying the magic of quantum channels},
journal = {New Journal of Physics}
}

@article{Hein2005Entanglement,
  title = {Entanglement properties of multipartite entangled states under the influence of decoherence},
  author = {Hein, M. and D\"ur, W. and Briegel, H.-J.},
  journal = {Phys. Rev. A},
  volume = {71},
  issue = {3},
  pages = {032350},
  numpages = {26},
  year = {2005},
  month = {Mar},
  publisher = {American Physical Society},
  doi = {10.1103/PhysRevA.71.032350},
  url = {https://link.aps.org/doi/10.1103/PhysRevA.71.032350}
}

@article{Dur2000Classification,
  title = {Classification of multiqubit mixed states: Separability and distillability properties},
  author = {D\"ur, W. and Cirac, J. I.},
  journal = {Phys. Rev. A},
  volume = {61},
  issue = {4},
  pages = {042314},
  numpages = {11},
  year = {2000},
  month = {Mar},
  publisher = {American Physical Society},
  doi = {10.1103/PhysRevA.61.042314},
  url = {https://link.aps.org/doi/10.1103/PhysRevA.61.042314}
}

@article{Kay2011Optimal,
  title   = {Optimal detection of entanglement in {G}reenberger-{H}orne-{Z}eilinger states},
  author = {Kay, Alastair},
  journal = {Phys. Rev. A},
  volume = {83},
  issue = {2},
  pages = {020303(R)},
  numpages = {4},
  year = {2011},
  month = {Feb},
  publisher = {American Physical Society},
  doi = {10.1103/PhysRevA.83.020303},
  url = {https://link.aps.org/doi/10.1103/PhysRevA.83.020303}
}

@misc{IyerLiang2024Tolerant,
      title={Tolerant Testing of Stabilizer States with Mixed State Inputs}, 
      author={Vishnu Iyer and Daniel Liang},
      year={2025},
      eprint={2411.08765},
      archivePrefix={arXiv},
      primaryClass={quant-ph},
      url={https://arxiv.org/abs/2411.08765}, 
}

@article{Qian2025Quantum,
  title = {Quantum nonlocal nonstabilizerness},
  author = {Qian, Dongheng and Wang, Jing},
  journal = {Phys. Rev. A},
  volume = {111},
  issue = {5},
  pages = {052443},
  numpages = {9},
  year = {2025},
  month = {May},
  publisher = {American Physical Society},
  doi = {10.1103/PhysRevA.111.052443},
  url = {https://link.aps.org/doi/10.1103/PhysRevA.111.052443}
}

@article{Dowling2025Bridging,
  title = {Bridging Entanglement and Magic Resources within Operator Space},
  author = {Dowling, Neil and Modi, Kavan and White, Gregory A. L.},
  journal = {Phys. Rev. Lett.},
  volume = {135},
  issue = {16},
  pages = {160201},
  numpages = {9},
  year = {2025},
  month = {Oct},
  publisher = {American Physical Society},
  doi = {10.1103/c7k1-xcwy},
  url = {https://link.aps.org/doi/10.1103/c7k1-xcwy}
}

@article{Bellomo2007NonMarkovian,
  title = {Non-{M}arkovian Effects on the Dynamics of Entanglement},
  author = {Bellomo, B. and Lo Franco, R. and Compagno, G.},
  journal = {Phys. Rev. Lett.},
  volume = {99},
  issue = {16},
  pages = {160502},
  numpages = {4},
  year = {2007},
  month = {Oct},
  publisher = {American Physical Society},
  doi = {10.1103/PhysRevLett.99.160502},
  url = {https://link.aps.org/doi/10.1103/PhysRevLett.99.160502}
}

@article{Lopez2008Sudden,
  title = {Sudden Birth versus Sudden Death of Entanglement in Multipartite Systems},
  author = {L\'opez, C. E. and Romero, G. and Lastra, F. and Solano, E. and Retamal, J. C.},
  journal = {Phys. Rev. Lett.},
  volume = {101},
  issue = {8},
  pages = {080503},
  numpages = {4},
  year = {2008},
  month = {Aug},
  publisher = {American Physical Society},
  doi = {10.1103/PhysRevLett.101.080503},
  url = {https://link.aps.org/doi/10.1103/PhysRevLett.101.080503}
}

@article{Vidal2002Negativity,
  title = {Computable measure of entanglement},
  author = {Vidal, G. and Werner, R. F.},
  journal = {Phys. Rev. A},
  volume = {65},
  issue = {3},
  pages = {032314},
  numpages = {11},
  year = {2002},
  month = {Feb},
  publisher = {American Physical Society},
  doi = {10.1103/PhysRevA.65.032314},
  url = {https://link.aps.org/doi/10.1103/PhysRevA.65.032314}
}

@article{Litinski2019A,
  doi = {10.22331/q-2019-03-05-128},
  url = {https://doi.org/10.22331/q-2019-03-05-128},
  title = {A {G}ame of {S}urface {C}odes: {L}arge-{S}cale {Q}uantum {C}omputing with {L}attice {S}urgery},
  author = {Litinski, Daniel},
  journal = {{Quantum}},
  issn = {2521-327X},
  publisher = {{Verein zur F{\"{o}}rderung des Open Access Publizierens in den Quantenwissenschaften}},
  volume = {3},
  pages = {128},
  month = mar,
  year = {2019}
}

@article{Meier2013Magic,
author = {Meier, Adam M. and Eastin, Bryan and Knill, Emanuel}, title = {Magic-state distillation with the four-qubit code}, year = {2013}, issue_date = {March 2013}, publisher = {Rinton Press, Incorporated}, address = {Paramus, NJ}, volume = {13}, number = {3–4}, issn = {1533-7146}, journal = {Quantum Info. Comput.}, month = mar, pages = {195–209}, numpages = {15} }

@article{Leung1997Approximate,
  title = {Approximate quantum error correction can lead to better codes},
  author = {Leung, Debbie W. and Nielsen, M. A. and Chuang, Isaac L. and Yamamoto, Yoshihisa},
  journal = {Phys. Rev. A},
  volume = {56},
  issue = {4},
  pages = {2567--2573},
  numpages = {0},
  year = {1997},
  month = {Oct},
  publisher = {American Physical Society},
  doi = {10.1103/PhysRevA.56.2567},
  url = {https://link.aps.org/doi/10.1103/PhysRevA.56.2567}
}

@article{Fletcher2008Channel,
  author={Fletcher, Andrew S. and Shor, Peter W. and Win, Moe Z.},
  journal={IEEE Transactions on Information Theory}, 
  title={Channel-Adapted Quantum Error Correction for the Amplitude Damping Channel}, 
  year={2008},
  volume={54},
  number={12},
  pages={5705-5718},
  doi={10.1109/TIT.2008.2006458}}

@book{Nielsen2010Quantum,
  title={Quantum computation and quantum information},
  author={Nielsen, Michael A and Chuang, Isaac L},
  year={2010},
  publisher={Cambridge university press}
}

@article{Scocco2026Rise,
  title = {Rise and fall of nonstabilizerness via random measurements},
  author = {Scocco, Annarita and Mok, Wai-Keong and Aolita, Leandro and Collura, Mario and Haug, Tobias},
  journal = {Phys. Rev. Res.},
  volume = {8},
  issue = {1},
  pages = {013217},
  numpages = {18},
  year = {2026},
  month = {Feb},
  publisher = {American Physical Society},
  doi = {10.1103/31sq-k4m3},
  url = {https://link.aps.org/doi/10.1103/31sq-k4m3}
}

@article{Bittel2026Operational,
 doi = {10.22331/q-2026-04-15-2069},
  url = {https://doi.org/10.22331/q-2026-04-15-2069},
  title = {Operational interpretation of the {S}tabilizer {E}ntropy},
  author = {Bittel, Lennart and Leone, Lorenzo},
  journal = {{Quantum}},
  issn = {2521-327X},
  publisher = {{Verein zur F{\"{o}}rderung des Open Access Publizierens in den Quantenwissenschaften}},
  volume = {10},
  pages = {2069},
  month = apr,
  year = {2026}
}

@article{Bravyi2019Simulation,
  doi = {10.22331/q-2019-09-02-181},
  url = {https://doi.org/10.22331/q-2019-09-02-181},
  title = {Simulation of quantum circuits by low-rank stabilizer decompositions},
  author = {Bravyi, Sergey and Browne, Dan and Calpin, Padraic and Campbell, Earl and Gosset, David and Howard, Mark},
  journal = {{Quantum}},
  issn = {2521-327X},
  publisher = {{Verein zur F{\"{o}}rderung des Open Access Publizierens in den Quantenwissenschaften}},
  volume = {3},
  pages = {181},
  month = sep,
  year = {2019}
}

@article{Tirrito2024Quantifying,
  title = {Quantifying nonstabilizerness through entanglement spectrum flatness},
  author = {Tirrito, Emanuele and Tarabunga, Poetri Sonya and Lami, Gugliemo and Chanda, Titas and Leone, Lorenzo and Oliviero, Salvatore F. E. and Dalmonte, Marcello and Collura, Mario and Hamma, Alioscia},
  journal = {Phys. Rev. A},
  volume = {109},
  issue = {4},
  pages = {L040401},
  numpages = {6},
  year = {2024},
  month = {Apr},
  publisher = {American Physical Society},
  doi = {10.1103/PhysRevA.109.L040401},
  url = {https://link.aps.org/doi/10.1103/PhysRevA.109.L040401}
}

@article{Turkeshi2025Magic,
	author = {Turkeshi, Xhek and Tirrito, Emanuele and Sierant, Piotr},
	date = {2025/03/15},
	doi = {10.1038/s41467-025-57704-x},
	id = {Turkeshi2025},
	isbn = {2041-1723},
	journal = {Nature Communications},
	number = {1},
	pages = {2575},
	title = {Magic spreading in random quantum circuits},
	url = {https://doi.org/10.1038/s41467-025-57704-x},
	volume = {16},
	year = {2025}}

@article{Junior2026Trading,
  title = {Trading athermality for nonstabilizerness},
  author = {de Oliveira Junior, A. and Mac\^edo, Rafael A. and Czartowski, Jakub and Brask, Jonatan Bohr and Chaves, Rafael},
  journal = {Phys. Rev. A},
  volume = {113},
  issue = {5},
  pages = {052421},
  numpages = {14},
  year = {2026},
  month = {May},
  publisher = {American Physical Society},
  doi = {10.1103/hwfq-ytkb},
  url = {https://link.aps.org/doi/10.1103/hwfq-ytkb}
}

@misc{Varela2026Predicting,
      title={Predicting Magic from Very Few Measurements}, 
      author={J. M. Varela and L. L. Keller and A. de Oliveira Junior and D. A. Moreira and R. Chaves and R. A. Macêdo},
      year={2026},
      eprint={2602.18939},
      archivePrefix={arXiv},
      primaryClass={quant-ph},
      url={https://arxiv.org/abs/2602.18939}, 
}

@article{Place2021New,
	author = {Place, Alexander P. M. and Rodgers, Lila V. H. and Mundada, Pranav and Smitham, Basil M. and Fitzpatrick, Mattias and Leng, Zhaoqi and Premkumar, Anjali and Bryon, Jacob and Vrajitoarea, Andrei and Sussman, Sara and Cheng, Guangming and Madhavan, Trisha and Babla, Harshvardhan K. and Le, Xuan Hoang and Gang, Youqi and J{\"a}ck, Berthold and Gyenis, Andr{\'a}s and Yao, Nan and Cava, Robert J. and de Leon, Nathalie P. and Houck, Andrew A.},
	date = {2021/03/19},
	doi = {10.1038/s41467-021-22030-5},
	id = {Place2021},
	isbn = {2041-1723},
	journal = {Nature Communications},
	number = {1},
	pages = {1779},
	title = {New material platform for superconducting transmon qubits with coherence times exceeding 0.3 milliseconds},
	url = {https://doi.org/10.1038/s41467-021-22030-5},
	volume = {12},
	year = {2021}}

@article{Acharya2023Suppressing,
	author = {Acharya, Rajeev and others},
	date = {2023/02/01},
	doi = {10.1038/s41586-022-05434-1},
	id = {Acharya2023},
	isbn = {1476-4687},
	journal = {Nature},
	number = {7949},
	pages = {676--681},
	title = {Suppressing quantum errors by scaling a surface code logical qubit},
	url = {https://doi.org/10.1038/s41586-022-05434-1},
	volume = {614},
	year = {2023}}

@article{GoogleQEC2025Below,
	author = {{Google Quantum AI and Collaborators}},
	date = {2025/02/01},
	doi = {10.1038/s41586-024-08449-y},
	id = {Acharya2025},
	isbn = {1476-4687},
	journal = {Nature},
	number = {8052},
	pages = {920--926},
	title = {Quantum error correction below the surface code threshold},
	url = {https://doi.org/10.1038/s41586-024-08449-y},
	volume = {638},
	year = {2025}}

@misc{Gidney2024Magic,
      title={Magic state cultivation: growing T states as cheap as CNOT gates}, 
      author={Craig Gidney and Noah Shutty and Cody Jones},
      year={2024},
      eprint={2409.17595},
      archivePrefix={arXiv},
      primaryClass={quant-ph},
      url={https://arxiv.org/abs/2409.17595}, 
}

@CONTROL{REVTEX42Control}

@CONTROL{apsrev42Control,author="08",editor="1",pages="1",title="0",year="1"}

\end{document}